\documentclass[sigconf, nonacm]{acmart}

\usepackage{subcaption}
\usepackage[export]{adjustbox}
\usepackage{multirow}
\usepackage{xspace}
\usepackage{algorithmic}
\algsetup{linenosize=\small}
\usepackage[ruled,lined,linesnumbered,noend]{algorithm2e}
\usepackage{enumitem}
\usepackage{amsmath}
\usepackage{fancyvrb}
\usepackage{balance}
\usepackage{graphicx}
\usepackage[colorinlistoftodos, textsize=small]{todonotes}

\newcommand{\safeincludegraphics}[2][]{%
  \IfFileExists{#2}{%
    \includegraphics[#1]{#2}%
  }{%
    \fbox{\texttt{Missing image}}%
  }%
}

\DeclareMathOperator*{\argmax}{arg\,max}
\def\assign{ \text{:--} }

\newcommand{\db}[1]{\mathsf{db}(#1)}

\newcommand{\mi}[1]{\mathit{#1}}
\newcommand{\ins}[1]{\mathbf{#1}}
\newcommand{\adom}[1]{\mathsf{dom}(#1)}
\newcommand{\ra}{\rightarrow}

\newcommand{\dep}{\Sigma}
\newcommand{\sch}[1]{\mathsf{sch}(#1)}
\newcommand{\esch}[1]{\mathsf{edb}(#1)}
\newcommand{\isch}[1]{\mathsf{idb}(#1)}

\newcommand{\body}[1]{\mathsf{body}(#1)}
\newcommand{\head}[1]{\mathsf{head}(#1)}

\newcommand{\class}[1]{\mathsf{#1}}

\newcommand{\base}[1]{\mathsf{base}(#1)}

\newcommand{\diff}[2]{\Delta(#1,#2)}
\newcommand{\dist}[2]{d(#1,#2)}
\newcommand{\diversity}[1]{\delta(#1)}
\newcommand{\diversitytup}[1]{\delta(#1)}

\newcommand{\am}{\mathcal{AM}}

\newcommand{\support}[1]{\mathsf{support}(#1)}

\newcommand{\indegree}[2]{\mathsf{in}_{#2}(#1)}
\newcommand{\outdegree}[2]{\mathsf{out}_{#2}(#1)}
\newcommand{\inflow}[1]{\mathcal{F}^{\mathsf{in}}_{#1}}
\newcommand{\outflow}[1]{\mathcal{F}^{\mathsf{out}}_{#1}}

\newcommand{\arity}[1]{\mathsf{ar}(#1)}

\newcommand{\why}[3]{\mathsf{why}(#1,#2,#3)}
\newcommand{\whyk}[4]{\mathsf{why}^{#1}(#2,#3,#4)}

\newcommand{\starwhy}[3]{\mathsf{why}_{\star}(#1,#2,#3)}
\newcommand{\starwhyk}[4]{\mathsf{why}_{\star}^{#1}(#2,#3,#4)}
\newcommand{\allwhy}[3]{\mathsf{why}_{\mathsf{ALL}}(#1,#2,#3)}
\newcommand{\unwhy}[3]{\mathsf{why}_{\mathsf{UN}}(#1,#2,#3)}

\newcommand{\OMIT}[1]{}

\newcommand{\multiset}[1]{\{\!\{ #1 \}\!\}}
\newcommand{\card}[2]{|#2(#1)|}
\newcommand{\nodes}[1]{V_{#1}}
\newcommand{\hypergraphof}[1]{\mathcal{H}_{#1}}

\newcommand{\cliques}[2]{\mathsf{cliques}^{#1}(#2)}

\newcommand{\ilphypergraphfunc}{\mathsf{hyper\text{-}ILP}_{\mathsf{ALL}}}
\newcommand{\ilphypergraph}[3]{\ilphypergraphfunc(#1,#2,#3)}
\newcommand{\ilpunhypergraphfunc}{\mathsf{hyper\text{-}ILP}_{\mathsf{UN}}}
\newcommand{\ilpunhypergraph}[3]{\ilpunhypergraphfunc(#1,#2,#3)}

\newcommand{\ilpproofdagfunc}{\mathsf{graph\text{-}ILP}_{\mathsf{ALL}}}
\newcommand{\ilpproofdag}[3]{\ilpproofdagfunc(#1,#2,#3)}
\newcommand{\ilpunproofdagfunc}{\mathsf{graph\text{-}ILP}_{\mathsf{UN}}}
\newcommand{\ilpunproofdag}[3]{\ilpunproofdagfunc(#1,#2,#3)}

\newcommand{\ilpstargammafunc}{\mathsf{\gamma\text{-}ILP}_{\star}}
\newcommand{\ilpstargamma}[3]{\ilpstargammafunc(#1,#2,#3)}

\newcommand{\ComputeDiverse}[1]{\mathsf{CompDiverse}[#1]}
\newcommand{\ConstructILPEncoding}{\mathsf{BuildILPEnc}}
	
\newcommand{\boundall}[3]{\mathsf{bound}_{\mathsf{ALL}}(#1,#2,#3)}
\newcommand{\boundun}[3]{\mathsf{bound}_{\mathsf{UN}}(#1,#2,#3)}

\newcommand{\sol}[1]{\mathsf{sol}(#1)}
\newcommand{\opt}[1]{\mathsf{opt}(#1)}

\newcommand{\copyof}[2]{#1^{(#2)}}
\newcommand{\auxvar}{\mathsf{aux}}
\newcommand{\hammd}[2]{\mathsf{hd}(#1,#2)}
\newcommand{\ansk}[3]{\mathsf{ans}^{#1}(#2,#3)}

\newcommand{\DAT}{\class{Dat}}

\def\eqtree{\approx}
\newcommand{\downof}[2]{#1_{\downarrow#2}}
\newcommand{\gri}[2]{\mathsf{gri}(#1,#2)}
\newcommand{\downc}[3]{\mathsf{down}(#1,#2,#3)}

\def\qed{\hfill{\qedboxempty}      % qed with empty box
  \ifdim\lastskip<\medskipamount \removelastskip\penalty55\medskip\fi}

\def\qedboxempty{\vbox{\hrule\hbox{\vrule\kern3pt
                 \vbox{\kern3pt\kern3pt}\kern3pt\vrule}\hrule}}

\def\qedfull{\hfill{\qedboxfull}   % qed with full box
  \ifdim\lastskip<\medskipamount \removelastskip\penalty55\medskip\fi}

\def\qedboxfull{\vrule height 4pt width 4pt depth 0pt}

\newcommand{\markfull}{\qedboxfull}

\newcommand{\plotlabel}[1]{\texttt{#1}\xspace}

\newcommand{\csda}{\plotlabel{csda}}
\newcommand{\galen}{\plotlabel{galen}}
\newcommand{\doctors}{\plotlabel{doctors}}
\newcommand{\graphanalytics}{\plotlabel{graphanalytics}}

\newcommand{\httpd}{\plotlabel{httpd}}
\newcommand{\postgres}{\plotlabel{postgresql}}
\newcommand{\linux}{\plotlabel{linux}}
\newcommand{\facebook}{\plotlabel{facebook}}
\newcommand{\bitcoin}{\plotlabel{bitcoin}}
\newcommand{\factsk}{\plotlabel{100k}}
\newcommand{\fifteen}{\plotlabel{D15}}
\newcommand{\twentyfive}{\plotlabel{D25}}
\newcommand{\forty}{\plotlabel{D40}}
\newcommand{\fifty}{\plotlabel{D50}}

\begin{document}
\title{Diverse Why-Provenance for Datalog Queries}

%%
%% The "author" command and its associated commands are used to define the authors and their affiliations.
\author{Marco Calautti}
\affiliation{%
	\institution{University of Milan}
	\country{Italy}
}
\email{marco.calautti@unimi.it}

\author{Mostafa Milani}
\affiliation{%
	\institution{University of Western Ontario}
	\country{Canada}
}
\email{mostafa.milani@uwo.ca}

\author{Andreas Pieris}
\affiliation{%
	\institution{University of Cyprus}
	\country{Cyprus}
}
\affiliation{%
	\institution{University of Edinburgh}
	\country{United Kingdom}
}
\email{apieris@inf.ed.ac.uk}

\newtheorem{observation}{Observation}
%%
%% The abstract is a short summary of the work to be presented in the
%% article.
\begin{abstract}
Datalog is a powerful rule-based language that allows us to express complex recursive queries and has found numerous applications over the years. Explaining why a result to a Datalog query is obtained is an essential task towards explainable and transparent data-intensive applications that rely on Datalog. A standard way of explaining a query result is the so-called why-provenance, which provides information about the witnesses to
a query result in the form of subsets of the input database that as a whole can be used to derive that result.
The state-of-the-art approach for efficiently computing the why-provenance explanations of a certain query result aims at enumerating, without repetitions, the relevant explanations by relying on off-the-shelf SAT solvers.
However, whenever the set of explanations is very large, outputting the entire set might be prohibitively expensive. Moreover, it is questionable the practical usefulness of flooding the user with a very large number of explanations, many of them possibly differing only marginally. 
In this case, one can argue that it would be practically more useful to compute a small, sufficiently diverse subset of the set of why-provenance explanations.
The present work initiates the study of the latter problem. In particular, we formalize the problem of interest as an optimization problem, and, after showing that it is highly intractable (it is NP-hard even in data complexity), we investigate whether ILP optimization solvers can be employed towards practical solutions.
Our main finding is that this is achievable as long as we keep the requested set of explanations small, or we employ approximation algorithms (with error guarantees).
\end{abstract}

\maketitle

%%% do not modify the following VLDB block %%
%%% VLDB block start %%%
% \pagestyle{\vldbpagestyle}
% \begingroup\small\noindent\raggedright\textbf{PVLDB Reference Format:}\\
% \vldbauthors. \vldbtitle. PVLDB, \vldbvolume(\vldbissue): \vldbpages, \vldbyear.\\
% \href{https://doi.org/\vldbdoi}{doi:\vldbdoi}
% \endgroup
% \begingroup
% \renewcommand\thefootnote{}\footnote{\noindent
% This work is licensed under the Creative Commons BY-NC-ND 4.0 International License. Visit \url{https://creativecommons.org/licenses/by-nc-nd/4.0/} to view a copy of this license. For any use beyond those covered by this license, obtain permission by emailing \href{mailto:info@vldb.org}{info@vldb.org}. Copyright is held by the owner/author(s). Publication rights licensed to the VLDB Endowment. \\
% \raggedright Proceedings of the VLDB Endowment, Vol. \vldbvolume, No. \vldbissue\ %
% ISSN 2150-8097. \\
% \href{https://doi.org/\vldbdoi}{doi:\vldbdoi} \\
% }\addtocounter{footnote}{-1}\endgroup
%%% VLDB block end %%%

%%% do not modify the following VLDB block %%
%%% VLDB block start %%%
% \ifdefempty{\vldbavailabilityurl}{}{
% \vspace{.3cm}
% \begingroup\small\noindent\raggedright\textbf{PVLDB Artifact Availability:}\\
% \textcolor{red}{The source code, data, and/or other artifacts have been made available at \url{\vldbavailabilityurl}.}
% \endgroup
% }
%%% VLDB block end %%%
\section{Introduction}\label{sec:introduction}

Datalog has emerged in the 1980s as a logic-based query language from Logic Programming and has been extensively
studied since then~\cite{AbHV95}. It essentially extends the language of unions of conjunctive queries, which corresponds to the select-project-join-union fragment of relational algebra, with the important feature of recursion, much needed to express some natural queries.
As for any other query language, explaining why a result to a Datalog query is obtained is crucial
towards explainable and transparent data-intensive applications. A standard way for providing such explanations to query answers is, among others, the so-called why-provenance~\cite{BuKT01}. Its essence
is to collect all the subsets of the input database that as a whole can be used to derive a certain answer. More precisely, in the case of Datalog queries, the why-provenance of an answer tuple $\bar t$ is obtained by considering all the possible proof trees $T$ of the fact $\text{\rm Ans}(\bar t)$, with $\text{\rm Ans}$ being the
answer predicate of the Datalog query in question, and then collecting all the database facts that label the leaves of $T$. Recall that a proof tree of a fact $\alpha$ w.r.t.~a database $D$ and a set $\dep$ of Datalog rules forms a tree-like representation of a derivation process that, starting from $D$ and executing the rules occurring in $\dep$, produces $\alpha$~\cite{AbHV95}.

Despite its wide acceptance in theory with numerous theoretical studies (see, e.g.,~\cite{CLPS24,CalauttiLPS24b,DaAA13,DMRT14,Khamis0PSW22}), why-provenance for Datalog queries suffers from a crucial weaknesses, namely it is computationally very expensive. This is revealed by the fact that until 2020, although why-provenance for Datalog queries has been proposed decades ago, only a couple of works have considered implementing it for recursive queries~\cite{ZhSS20,EsLS14}.
The more recent work~\cite{ElKM22} by Elhalawati et al. tried to make further progress by focusing on the more practical setting of computing the why-provenance of a given query answer (a.k.a. on-demand why-provenance), instead of computing the why-provenance for all the query
answers as done in~\cite{ZhSS20,EsLS14}.
Inspired by~\cite{ElKM22} and the idea of on-demand why-provenance, the authors of~\cite{CLPS24aaai} pushed the state of affairs even further by focusing on the task of enumerating, without repetitions, the members of why-provenance of a given answer by relying on off-the-shelf SAT solvers. In other words, the approach of~\cite{CLPS24aaai} builds the why-provenance of a query answer in an incremental fashion, i.e., one explanation at a time, which is much more useful in practice than the one-shot computation of the whole set as done in~\cite{ElKM22}.
Furthermore, the authors of~\cite{CLPS24aaai} attacked another weakness of why-provenanance for Datalog queries, namely it might provide counterintuitive explanations. Indeed, as it has been observed in the literature, there are proof trees that correspond to unnatural derivation processes, e.g., derivations where an atom is
derived from itself~\cite{BBPT22}, and it is fair to say that an explanation witnessed via such an unnatural proof tree can be classified as a counterintuitive one as it does not correspond to an intuitive derivation process that can be extracted from the proof tree; this is further discussed in Section~\ref{sec:why-provenance}. Towards conceptually meaningful explanations, the authors of~\cite{CLPS24aaai} introduced the class of unambiguous proof trees that, unlike arbitrary proof trees, correspond to conceptually meaningful derivation processes, which in turn ensure the conceptual relevance of the obtained explanations. The essence underlying unambiguous proof trees is that they eliminate any ambiguity on how a certain fact $\alpha$ is derived, namely $\alpha$ is derived in a unique way that can be exctracted from its unambiguous proof tree.

To the best of our knowledge, the approach of~\cite{CLPS24aaai} forms the state-of-the-art concerning the effective compution of on-demand why-provenance, with the experimental evaluation of~\cite{CLPS24aaai} providing very encouraging results. Having said that, it remains questionable the practical usefuleness of flooding the user with an extremely large number of why-provenance explanations of the answer of interest, many of them possibly differing only marginally. One can argue that whenever the set of why-provenance explanations of an answer gets very large, it would be practically more useful to provide an overview by outputting a ``representative'' subset of the set of explanations. A general discussion on this matter about arbitary problems (not only about on-demand why-provenance) can be found in~\cite{MePS25}. Two ways for computing a ``representative'' subset of the set of why-provenance explanations are the following:

\begin{description}
\item[Sampling Approach.] Randomly select explanations, similarly to what was done for the problem of computing answers to conjunctive queries~\cite{ACJR21,CZBCKS22}.

\item[Diversity Approach.] Conpute a small ``diverse'' subset of the set of explanations, similarly to what was done for the problem of computing answers to conjunctive queries~\cite{AMPR25,MePS25}.
\end{description}

The present work initiates the study of the problem of computing a ``representative'' subset of the set of why-provenance explanations by adopting the diversity approach. Let us stress that the diversity approach has been applied to numerous problems related to databases (see, e.g.,~\cite{AMPR25,DeFa14,FHXZ24,MePS25}) and symbolic AI (see, e.g.,~\cite{EEEF13,HHOW05,IBST20,PeTr15}).
Following what was done in the literature in other contexts~\cite{IBST20,MePS25}, to properly define our problem of interest, we first need to fix the notion of {\em distance} between any two why-provenance explanations and the notion of {\em diversity measure} that acts as an aggregator that combines pairwise distances.
Having those two notions in place, our problem is naturally defined as an optimization problem that aims to compute a set $\{D_1,\ldots,D_k\}$ of $k$ why-provenance explanations (if such a set exists), for some $k > 0$ specifed by the user, that maximizes the value obtained by aggregating, using the diversty measure, the pairwise distances of the computed explanations $D_1,\ldots,D_k$.

Concerning the distance between explanations, we adopt the symmetric difference distance, that is, the number of facts on which two explanations $D'$ and $D''$ differ, denoted by $d(D',D'')$. Recall that a why-provenance explanation is a subset of the given database (and thus, a set of facts), which means that we can refer to the symmetric difference of $D'$ and $D''$. In other words, $d(D',D'')$ is the cardinality of the set of facts $(D' \setminus D'') \cup (D'' \setminus D')$.
When it comes to the diversity measure, in general, one can adopt any computable function $f$ of the form $\bigcup_{k > 0} \mathbb{N}^{\frac{k(k-1)}{2}}\ra \mathbb{R}$, that is, a function that, for each $k>0$, assigns to each tuple of integers $\left(d_1,\ldots,d_{\frac{k(k-1)}{2}}\right)$, which are essentially the pairwise distances of $k$ why-provenance explanations, a real number. Of course, towards practical solutions to our problem of interest, the function $f$ should be conceptually relevant and efficiently computable (i.e., computable in polynomial time). Concrete such functions that have been considered in the literature are, for example, the sum and min functions that assign to a tuple of integers  $\left(d_1,\ldots,d_{\frac{k(k-1)}{2}}\right)$ the integer $d_1 + \cdots +d_{\frac{k(k-1)}{2}}$ and $\min \left\{d_1,\ldots,d_{\frac{k(k-1)}{2}}\right\}$, respectively~\cite{MePS25}. In this work, we adopt the sum function as one of the most commonly used diversity measures.
We can now give a semi-formal definition of the optimization problem of interest; the formal definition is given in Section~\ref{sec:diverse-why-provenance}:

\medskip

\begin{center}
	\fbox{\begin{tabular}{ll}
			{\small PROBLEM} : & $\mathsf{DiverseWhyProvenance}$
			\\[2mm]
			{\small INPUT} : & A Datalog query $Q$, a database $D$,\\
			& an answer ${\bar t}$ to $Q$ over $D$, and an integer $k > 0$\\[2mm]
			{\small OUTPUT} : & A set $\{D_1,\ldots,D_k\}$ of why-provenance\\ & explanations of $\bar t$ w.r.t.~$D$ and $Q$ that maximizes\\
			& $\sum_{1 \leq i < j \leq k} d(D_i,D_j)$ if it exists; otherwise \textbf{fail} 
	\end{tabular}}
\end{center}

\medskip

The goal of the present work is to study the above problem with the aim of providing effective solutions that can be used in practice. For the sake of completeness, we are interested in both settings where the explanations are obtained from arbitary or unambiguous proof trees. Recall that unambiguous proof trees have been introduced in~\cite{CLPS24aaai} and correspond to conceptually meaningful derivations processes, which in turn lead to conceptually relevant explanations.
The first research question that comes up concerns the complexity of $\mathsf{DiverseWhyProvenance}$:

\medskip

\noindent \textit{\textbf{Question 1.} Is $\mathsf{DiverseWhyProvenance}$ tractable or intractable?}

\medskip

Our theoretical analysis reveals that we are dealing with a highly intractable optimization problem. In particular, we show that our problem is NP-hard, no matter which class of proof trees we consider (arbitrary or unambiguous), even if the Datalog query is fixed (i.e., in data complexity) and consists of a single Datalog rule.
This suggests that for providing practical solutions to our problem we need to exploit existing optimization methods that are able to deal with NP-hard problems. A natural choice is Integer Linear Programming (ILP), which leads to our second research question:

\medskip

\noindent \textit{\textbf{Question 2.} Can ILP optimization solvers lead to practical solutions for $\mathsf{DiverseWhyProvenance}$?}

\medskip

The key idea towards an answer to the above question is to cast $\mathsf{DiverseWhyProvenance}$ as an ILP optimization problem $\Pi$ in such a way that each diverse set of interest can be exctracted from an optimal solution of $\Pi$.
Since there are mutliple ways of achieveing this, we define a general framework that allows us to devise such ILP optimization problems in a uniform way, and present a generic algorithm that exploits this framework for solving our problem.
The main technical ingredient of this framework is the notion of ILP encoding of the why-provenance explanations of a query answer, which is essentially a system $T$ of ILP constraints such that each why-provenance explanation of the answer of interest can be extracted from a solution of $T$.
Given an ILP encoding $T$ of some query answer and an integer $k > 0$, we then define the ILP optimization problem $\Pi[T,k]$ such that each diverse set of $k$ why-provenance explanations of the answer to be explained can be extracted from an optimal solution of $\Pi[T,k]$.
Thus, any pair $(T,k)$, where $T$ is an ILP encoding of some answer and $k > 0$ some  integer, immediately leads to the ILP optimization problem $\Pi[T,k]$.
Now, once we fix our preferred way of building the ILP encoding, we immediately get an algorithm that solves $\mathsf{DiverseWhyProvenance}$. In particular, given a Datalog query $Q$, a database $D$, an answer $\bar t$ to $Q$ over $D$, and an integer $k$, the algorithm 
(i) constructs the ILP optimization problem $\Pi[T,k]$, where $T$ is build according to the fixed preferred way, (ii) finds an optimal solution $\mu$ of $\Pi[T,k]$ if it exists (otherwise, it fails) by calling an off-the-shelf ILP optimization solver, and (iii) returns the diverse set of $k$ why-provenance explanations of $\bar t$ that can be exctracted from the solution $\mu$ of $\Pi[T,k]$.

After experimenting with different ILP encodings, we have observed that providing practical solutions to $\mathsf{DiverseWhyProvenance}$ is attainable, no matter whether the explanations are obtained from arbitary or unambiguous proof trees, as long as we ask for small diverse sets of explanations (i.e., when the diversity parameter $k$ is at most three). 
On the other hand, all the considered ILP encodings lead to algorithms that exhibit poor scalability w.r.t.~the diversity parameter $k$, and thus, quickly become infeasible for computing larger diverse sets of explanations. This brings us to our last research question whether we can devise alternative approaches that trade optimality of the computed diverse sets for performamce:

\medskip

\noindent \textit{\textbf{Question 3.} Can we devise approximation algorithms (with error guarantees) for $\mathsf{DiverseWhyProvenance}$ that perform much faster than the exact algorithms and can deal with larger diverse sets?}

\medskip

Towards this direction, we target $2$-approximation algorithms for $\mathsf{DiverseWhyProvenance}$ that accept as input a Datalog query $Q$, a database $D$, an answer $\bar t$ to $Q$ over $D$, and an integer $k>0$, and produce a set $\{D'_1,\ldots,D'_k\}$ of why-provenance explanations of $\bar t$ (assuming it exists; otherwise, the algorithm fails) such that
\[
\sum_{1 \leq i < j \leq k} d(D'_i,D'_j)\ \geq\ \frac{1}{2} \cdot \sum_{1 \leq i < j \leq k} d(D_i,D_j),
\]
where $\{D_1,\ldots,D_k\}$ is a set of explanations with maximal diversity, i.e., it maximizes $\sum_{1 \leq i < j \leq k} d(D_i,D_j)$.
Similarly to the exact solutions discussed above, we use the generic framework for constructing ILP optimization problems, which in turn are based on the notion of ILP encoding of the why-provenance explanations of a query answer, to devise a generic $2$-approximation algorithm for $\mathsf{DiverseWhyProvenance}$.
After experimenting with the various ILP encodings that underly our exact algorithms, we have concluded that the approximation algorithms are indeed much faster than the exact ones with the runtime being reduced by several orders of magnitude. Even as the diversity parameter $k$ increases, runtimes remain low, with the vast majority of runs completing well below one second, which is a significant improvement over the exact algorithms.
We have also assessed the quality of the approximation algorithms and observed that the approximation ratios remain quite close to one. This tells us that the approximation ratios achieved in practice are generally much better than the theoretical ones.

\medskip

\noindent 
\textbf{Roadmap.} After introducing some preliminary notions in Section~\ref{sec:preliminaries} and the basics of why-provenace for Datalog in Section~\ref{sec:why-provenance}, we formally define the optimization problem of interest and analyze its data complexity in Section~\ref{sec:diverse-why-provenance}. The general framework for devising ILP optimization problems, as well as the generic algorithm for $\mathsf{DiverseWhyProvenance}$, are presented in Section~\ref{sec:exact-encoding}. The various ILP encodings of why-provenance are given in Section~\ref{sec:characterizations}. The experimental evaluation of the exact algorithms is performed and discussed in Section~\ref{sec:exact-experiments}. The generic $2$-approximation algorithm for $\mathsf{DiverseWhyProvenance}$ is presented in Section~\ref{sec:approximations}. The experimental evaluation of the approximation algorithms is performed and discussed in Section~\ref{sec:approximate-experiments}. We finally conclude in Section~\ref{sec:conclusions}.

\medskip

\noindent \emph{The source code and datasets used in our experimental evaluation can be found at https://github.com/mostafamilani/datalog-div-prov.}
\section{Preliminaries}\label{sec:preliminaries}

We consider the disjoint countably infinite sets $\ins{C}$ and $\ins{V}$ of {\em constants} and {\em variables}, respectively. We may refer to constants and variables as {\em terms}. For an integer $n > 0$, we write $[n]$ for the set of integers $\{1,\ldots,n\}$. In what follows, we adopt the terminology and notation from the recent work~\cite{CLPS24}, where the data complexity of why-provenance for Datalog queries has been studied.

\medskip

\noindent 
\textbf{Relational Databases.} A {\em schema} $\ins{S}$ is a finite set of relation names (or predicates) with associated arity. We write $R/n$ to say that $R$ has arity $n \ge 0$; we may also write $\arity{R}$ for $n$.
A {\em (relational) atom} $\alpha$ over $\ins{S}$ is an expression of the form $R(\bar t)$, where $R/n \in \ins{S}$ and $\bar t$ is an $n$-tuple of terms. By abuse of notation, we may treat tuples as the \emph{set} of their elements. A {\em fact} is an atom with only constants.
A {\em database} over $\ins{S}$ is a finite set of facts over $\ins{S}$. The {\em active domain} of a database $D$, denoted $\adom{D}$, is the set of constants in $D$.

\medskip
\noindent
\textbf{Syntax of Datalog Programs.} A {\em (Datalog) rule} $\sigma$ over a schema $\ins{S}$ is an expression of the form
\[
R_0(\bar x_0)\,\,\, \assign\,\,\, R_1(\bar x_1),\ldots,R_n(\bar x_n)
\]
for $n \geq 1$, where $R_i(\bar x_i)$ is a (constant-free) relational atom over $\ins{S}$ for $i \in \{0,\ldots,n\}$, and each variable in $\bar x_0$ occurs in $\bar x_k$ for some $k \in [n]$.
We refer to $R_0(\bar x_0)$ as the {\em head} of $\sigma$, denoted $\head{\sigma}$, and to the expression $R_1(\bar x_1),\ldots,R_n(\bar x_n)$ as the {\em body} of $\sigma$, denoted $\body{\sigma}$, which we may treat as the set of its atoms.
A {\em Datalog program} over a schema $\ins{S}$ is a finite set $\dep$ of Datalog rules over $\ins{S}$. A predicate $R$ occurring in $\dep$ is called {\em extensional} if there is no rule in $\dep$ having $R$ in its head, and {\em intentional} if there is at least one rule in $\dep$ with $R$ in its head.
The {\em extensional (database) schema} of $\dep$, denoted $\esch{\dep}$, is the set of all extensional predicates in $\dep$, while the {\em intentional
schema} of $\dep$, denoted $\isch{\dep}$, is the set of all intensional predicates in $\dep$. Note that, by definition, $\esch{\dep} \cap \isch{\dep} = \emptyset$. The {\em schema} of $\dep$, denoted $\sch{\dep}$, is the set $\esch{\dep} \cup \isch{\dep}$, which is in general a subset of $\ins{S}$ since some predicates of $\ins{S}$ may not appear in $\dep$.

\medskip
\noindent
\textbf{Semantics of Datalog Programs.} An interesting property of Datalog programs is that their semantics can be equivalently defined in three different ways: model-theoretically, via a fixpoint procedure, and proof-theoretically; see, e.g.,~\cite{AbHV95}. We proceed to recall the proof-theoretic semantics of Datalog programs. To this end, we need the key notion of proof tree of a fact, which will anyway play a crucial role in our work.
For a database $D$ and a Datalog program $\dep$, let $\base{D,\dep}$ be the set that collects 
%= \{R(\bar t) \mid R \in \sch{\dep} \text{ and } \bar t \in \adom{D}^{\arity{R}}\}$ that collects 
all the facts that can be formed using predicates of $\sch{\dep}$ and constants of $\adom{D}$.

\begin{definition}[\textbf{Proof Tree}]\label{def:proof-tree}
	Consider a Datalog program $\dep$, a database $D$ over $\esch{\dep}$, and a fact $\alpha$ over $\sch{\dep}$. 
	A {\em proof tree of $\alpha$ w.r.t.~$D$ and $\dep$} is a finite labeled rooted tree $T=(V,E,\lambda)$, with $\lambda : V \ra \base{D,\dep}$, such that the following hold:
	\begin{enumerate}
		\item If $v \in V$ is the root, then $\lambda(v) = \alpha$.
		
		\item If $v \in V$ is a leaf, then $\lambda(v) \in D$.
		
		\item If $v \in V$ has $n\ge1$ children $u_1,\ldots,u_n$, then there exists a rule $R_0(\bar x_0)\ \assign\ R_1(\bar x_1),\ldots,R_n(\bar x_n) \in \dep$ and a function $h : \bigcup_{i \in [n]} \bar x_i \ra \ins{C}$ such that $\lambda(v) = R_0(h(\bar x_0))$ and, for each $i \in [n]$, $\lambda(u_i) = R_i(h(\bar x_i))$. \hfill\markfull
	\end{enumerate} 
\end{definition}

Essentially, a proof tree of a fact $\alpha$ w.r.t.~$D$ and $\dep$ indicates that we can prove $\alpha$ using $D$ and $\dep$, i.e., we can derive $\alpha$ starting from $D$ end executing the rules of $\dep$. Here is a simple example from~\cite{CLPS24}.

\begin{example}\label{exa:proof-tree}
	Consider the Datalog program $\dep$ consisting of
	\[
	A(x)\ \assign\ S(x) \quad \text{ and } \quad A(x)\ \assign\ A(y), A(z),T(y,z,x)
	\]
	that encodes the {\em path accessibility problem}~\cite{Cook74}. The predicate $S$ represents source nodes, $A$ represents nodes that are accessible from the source nodes, and $T$ represents accessibility conditions, that is, $T(y,z,x)$ means that if both $y$ and $z$ are accessible from the source nodes, then so is $x$.
	Let also
	\[
	D\ =\ \{S(a), T(a,a,b),T(a,a,c), T(a,a,d), T(b,c,a)\}.
	\]
	A proof tree of the fact $A(d)$ w.r.t.~$D$ and $\dep$ follows:
	
	\centerline{\includegraphics[width=.24\textwidth]{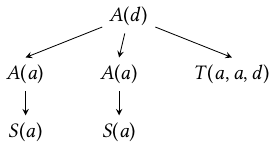}}
	
	\noindent The following is another, slightly more complex, proof tree of the fact $A(d)$ w.r.t.~$D$ and $\dep$:
	
	\centerline{
	\includegraphics[width=.46\textwidth]{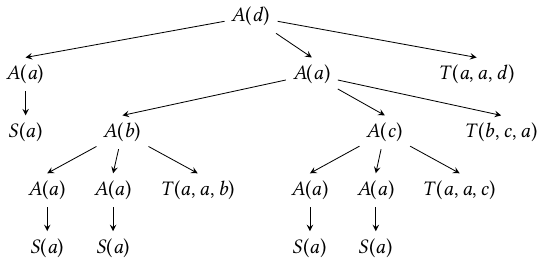}}

	\noindent Note that the above are only two out of the many proof trees of $A(d)$ w.r.t.~$D$ and $\dep$. In fact, there are infinitely many as one can build larger and larger such proof trees: whenever we encounter a node labeled by $A(a)$, we can choose to apply the recursive rule $A(x)\ \assign\ A(y), A(z),T(y,z,x)$. \hfill\markfull
\end{example}

Now, given a Datalog program $\dep$ and a database $D$ over $\sch{\dep}$, the {\em semantics of $\dep$ on $D$} is the set
\[
\dep(D)\ =\ \{\alpha \mid \text{ there is a proof tree of } \alpha \text{ w.r.t. } D \text{ and } \dep\},
\]
that is, the set of facts that can be proven using $D$ and $\dep$.

\medskip
\noindent
\textbf{Datalog Queries.} Having the syntax and the semantics of Datalog programs in place, it is now straightforward to recall the syntax and the semantics of Datalog queries.
A {\em Datalog query} is a pair $Q = (\dep,R)$, where $\dep$ is a Datalog program and $R$ a predicate of $\isch{\dep}$.
Now, for a database $D$ over $\esch{\dep}$, the {\em answer} to $Q$ over $D$ is defined as the set of tuples
\[
Q(D)\ =\ \left\{\bar t \in \adom{D}^{\arity{R}} \mid R(\bar t) \in \dep(D)\right\},
\]
i.e., all those tuples $\bar t$ such that the fact $R(\bar t)$ can be proven using $D$ and $\dep$. The class of all the Datalog queries is denoted $\DAT$. 
\section{Why-Provenance for Datalog}\label{sec:why-provenance}

Why-provenance is a standard way of explaining why a query result is obtained. It essentially collects all the subsets of the database that can be used as a whole to prove (or derive) a query result. We proceed to recall the details following~\cite{CLPS24}.
Given a proof tree $T = (V,E,\lambda)$ (of some fact w.r.t.~some database and Datalog program), the {\em support} of $T$ is the family of sets of atoms
\[
\support{T}\ =\ \left\{\lambda(v) \mid v \in V \text{ is a leaf of } T\right\},
\] 
i.e., the set of facts that label the leaves of the proof tree $T$. Note that $\support{T}$ is a subset of the underlying database since, by definition, the leaves of a proof tree are labeled with database atoms.
The formal definition of why-provenance follows.

\begin{definition}[\textbf{Why-Provenance for Datalog}]\label{def:why-provenance}
	Consider a Datalog query  $Q = (\dep,R)$, a database $D$ over $\esch{\dep}$, and a tuple $\bar t \in \adom{D}^{\arity{R}}$. The {\em why-provenance of $\bar t$ w.r.t.~$D$ and $Q$} is defined as the family of sets of facts 
	\[
	\{\support{T} \mid T \text{ is a proof tree of } R(\bar t) \text{ w.r.t. } D \text{ and } \dep\},
	\]
	which we denote by $\why{\bar t}{D}{Q}$. \hfill\markfull
\end{definition}

Intuitively, a set of facts $D' \subseteq D$ from $\why{\bar t}{D}{Q}$ should be understood as a ``real'' reason why the tuple $\bar t$ is an answer to the query $Q$ over $D$, i.e., $D'$ explains why $\bar t \in Q(D)$. By ``real'' we mean that all the facts of $D'$ are really used in order to derive the tuple $\bar t$ as an answer. Here is a simple example of why-provenance.

\begin{example}\label{exa:why-provenance}
	Let $Q = (\dep,A)$, where $\dep$ is the program that encodes the path accessibility problem as in Example~\ref{exa:proof-tree}, and let $D$ be the database from Example~\ref{exa:proof-tree}. 
	It is easy to verify that the why-provenance of the unary tuple $(d)$ w.r.t.~$D$ and $Q$ consists of $\{S(a), T(a,a,d)\}$ and the database $D$ itself. The former set is actually the support of the first proof tree given in Example~\ref{exa:proof-tree}, while $D$ is the support of the second proof tree. \hfill\markfull
\end{example}

\medskip
\noindent
\textbf{Refined Why-Provenance.} Although the notion of proof tree introduced above is the standard one used to define the notion of why-provennace for Datalog queries, it has been observed in the literature (see, e.g.,~\cite{BBPT22,CLPS24,CLPS24aaai}) that there are counterintuitive proof trees, which in turn lead to counterintuitive explanations based on why-provenance. Such a proof tree is the second one in Example~\ref{exa:proof-tree} that suffers from ``ambiguity''. In particular, there are two nodes labeled with $A(a)$, but their subtrees differ, which means that, according to the proof tree in question, it is ambiguous how $A(a)$ is derived starting from the database $D$. The above observation led to the class of unambiguous proof trees, which essentially forces a fact to be consistently proven via a certain derivation~\cite{CLPS24aaai}; 
see, e.g., the first proof tree in Example~\ref{exa:proof-tree}. To formalize this simple idea we need the notion of isomorphism among rooted trees.

Two rooted trees $T=(V,E,\lambda)$ and $T' = (V',E',\lambda')$ are {\em isomorphic}, denoted $T \eqtree T'$, if there is a bijection $h : V \ra V'$ such that, for each $v \in V$, $\lambda(v) = \lambda'(h(v))$, and for each $u,v \in V$, $(u,v) \in E$ iff $(h(u),h(v)) \in E'$. Let $T[v]$ be the subtree of the proof tree $T$ rooted at node $v$. The definition of unambiguous proof trees follows.

\begin{definition}[\textbf{Unambiguous Proof Tree}]\label{def:unambiguous-proof-tree}
	Consider a Datalog program $\dep$, a database $D$ over $\esch{\dep}$, and a fact $\alpha$. An {\em unambiguous proof tree of $\alpha$ w.r.t.~$D$ and $\dep$} is a proof tree $T = (V,E,\lambda)$ of $\alpha$ w.r.t.~$D$ and $\dep$ such that, for all $v,u \in V$, $\lambda(v) = \lambda(u)$ implies $T[v] \eqtree T[u]$. \hfill\markfull
\end{definition}

Having the notion of unambiguous proof tree in place, we can then refine the notion of why-provenance for Datalog queries by considering only unambiguous proof trees; the formal definition is omitted since it is identical to Definition~\ref{def:why-provenance} with the difference that we only collect the supports of unambiguous proof trees.
Given a Datalog query  $Q = (\dep,R)$, a database $D$ over $\esch{\dep}$, and a tuple $\bar t \in \adom{D}^{\arity{R}}$, we refer to the {\em why-provenance of $\bar t$ w.r.t.~$D$ and $Q$ relative to unambiguous proof trees}, denoted $\unwhy{\bar t}{D}{Q}$.
To be able to treat in a uniform way the notion of why-provenance relative to arbitrary proof trees (as given in Definition~\ref{def:why-provenance}) and the notion of why-provenance relative to unambiguous proof trees, in the rest of the paper we will write $\allwhy{\bar t}{D}{Q}$ instead of $\why{\bar t}{D}{Q}$ for the why-provenance of $\bar t$ w.r.t.~$D$ and $Q$.

At this point, let us clarify that why-provenance relative to unambiguous proof trees is indeed a refinement of why-provenance (relative to arbitrary proof trees). Indeed, given a Datalog query $Q = (\dep,R)$, a database $D$ over $\esch{\dep}$, and a tuple $\bar t \in \adom{D}^{\arity{R}}$, we have $\unwhy{\bar t}{D}{Q} \subseteq \allwhy{\bar t}{D}{Q}$.
Let us stress, however, that this is not the case when it comes to the semantics of Datalog. In other words, whenever there is a proof tree of a fact $\alpha$ w.r.t.~$D$ and $\dep$, then there is an unambiguous one. This is a desirable property since it implies the following: if $\bar t \in Q(D)$, then $\unwhy{\bar t}{D}{Q} \neq \emptyset$, which is what we expect from an explainability notion for Datalog.
\section{Diverse Why-Provenance}
\label{sec:diverse-why-provenance}

As discussed in the Introduction, our goal is, given a Datalog query $Q = (\dep,R)$, a database $D$ over $\esch{\dep}$, and a tuple ${\bar t} \in \adom{D}^{\arity{R}}$, to compute a set of $k > 0$ diverse explanations of why $\bar t \in Q(D)$ according to the notion of why-provenance (relative to arbitrary or unambiguous proof trees) introduced in the previous section. 
To formalize this task, we transfer the general diversity framework for query answers from~\cite{AMPR24} to the setting of why-provenance.

Consider a non-empty subset $\mathcal{S}$ of $\starwhy{\bar t}{D}{Q}$, where $\star \in \{\mathsf{ALL},\mathsf{UN}\}$, if it exists. To determine the diversity of $\mathcal{S}$, we first determine how different the pairs of databases in $\mathcal{S}$ are. This is done by using their symmetric difference. For two databases $D_1,D_2 \in \mathcal{S}$, the \emph{symmetric difference between $D_1$ and $D_2$}, denoted $\diff{D_1}{D_2}$, is the set $(D_1 \setminus D_2) \cup (D_2 \setminus D_1)$. We then define the \emph{distance between $D_1$ and $D_2$}, written $\dist{D_1}{D_2}$, as the integer $|\diff{D_1}{D_2}|$. Finally, the \emph{diversity of $\mathcal{S} = \{D_1,\ldots,D_k\}$}, denoted $\diversity{\mathcal{S}}$, is defined as%
\[
\sum\limits_{i,j \in [k] \text{ and }i<j} \dist{D_i}{D_j}.
\]

We are now ready to introduce the main problems of interest. For a Datalog query $Q = (\dep,R)$, a database $D$ over $\esch{\dep}$, a tuple $\bar t \in \adom{D}^{\arity{R}}$, and an integer $k > 0$, we write $\starwhyk{k}{\bar t}{D}{Q}$, for $\star \in \{\mathsf{ALL},\mathsf{UN}\}$, to denote the collection of all sets $\mathcal{S} \subseteq \starwhy{\bar t}{D}{Q}$ with $|\mathcal{S}| = k$, i.e., each element of $\starwhyk{k}{\bar t}{D}{Q}$ contains exactly $k$ members of $\starwhy{\bar t}{D}{Q}$. 
The main problems of interest are optimization problems that essentially ask to compute a set of $\starwhyk{k}{\bar t}{D}{Q}$ with maximal diversity:

% \medskip

\begin{center}
	\fbox{\begin{tabular}{ll}
			{\small PROBLEM} : & $\star\text{-}\mathsf{DiverseWhyProvenance}$
			\\[2mm]
			{\small INPUT} : & A Datalog query $Q = (\dep,R)$, \\
			& a database $D$ over $\esch{\dep}$,\\
			& a tuple ${\bar t} \in \adom{D}^{\arity{R}}$, and\\
			& an integer $k > 0$\\[2mm]
			{\small OUTPUT} : & A member of $\argmax\limits_{\mathcal{S} \in \starwhyk{k}{\bar t}{D}{Q}} \diversity{\mathcal{S}}$ if it exists;\\
			&  otherwise \textbf{fail} 
	\end{tabular}}
\end{center}

\medskip

Recall that $\argmax$ refers to the arguments of the maxima with
\begin{multline*}
\argmax\limits_{\mathcal{S} \in \starwhyk{k}{\bar t}{D}{Q}} \diversity{\mathcal{S}}\ =\ \left\{\mathcal{S} \in \starwhyk{k}{\bar t}{D}{Q} \mid \delta(\mathcal{S}') \leq \delta(\mathcal{S})\right.\\ 
\left.\text{ for each }
S' \in \starwhyk{k}{\bar t}{D}{Q}\right\}.
\end{multline*}
It is clear that there may be several sets $\mathcal{S} \in \starwhyk{k}{\bar t}{D}{Q}$ with maximal diversity. Our problems simply ask to compute one of those sets.
Observe that in the above definition of our problems, all the relevant components are part of the input, but we are also interested in the scenario where the Datalog query is fixed, i.e., for each Datalog query $Q = (\dep,R)$, we are interested in the problem:

\medskip

\begin{center}
	\fbox{\begin{tabular}{ll}
			{\small PROBLEM} : & $\star\text{-}\mathsf{DiverseWhyProvenance}[Q]$
			\\[2mm]
			{\small INPUT} : & A database $D$ over $\esch{\dep}$,\\
			& a tuple ${\bar t} \in \adom{D}^{\arity{R}}$, and\\
			& an integer $k > 0$\\[2mm]
			{\small OUTPUT} : & A member of $\argmax\limits_{\mathcal{S} \in \starwhyk{k}{\bar t}{D}{Q}} \diversity{\mathcal{S}}$ if it exists;\\
			&  otherwise \textbf{fail} 
	\end{tabular}}
\end{center}

\medskip

We proceed to show that we are dealing with intractable optimization problems, no matter the underlying class of proof trees (arbitrary or unambiguous), even if the Datalog query is fixed and consists of a single Datalog rule.

\begin{theorem}\label{the:diversity-complexity}
For each $\star \in \{\mathsf{ALL},\mathsf{UN}\}$, there exists a Datalog query $Q = (\dep,R)$, where $\dep$ consists of a single Datalog rule, such that $\star\text{-}\mathsf{DiverseWhyProvenance}[Q]$ is NP-hard.
\end{theorem}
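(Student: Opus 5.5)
The plan is a polynomial-time reduction from \textsc{Independent Set} on cubic (3-regular) simple graphs, which is NP-complete (Garey and Johnson). A single non-recursive rule suffices, and it handles both values of $\star$ at once.

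\textbf{The query.} Take the fixed query $Q=(\dep,\mathit{Ans})$, with $\mathit{Ans}$ of arity $0$ and $\dep$ consisting of the single rule
\[
\mathit{Ans}()\ \assign\ V(x,y_1,y_2,y_3),\,F(y_1),\,F(y_2),\,F(y_3).
\]

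\textbf{Step 1: proof trees.} Every proof tree of $\mathit{Ans}()$ has depth one, because $V$ and $F$ are extensional. So its leaves are database facts, and each leaf is a one-node subtree determined by its label. Hence every such proof tree is unambiguous in the sense of Definition~\ref{def:unambiguous-proof-tree}, and
\[
\allwhy{()}{D}{Q}=\unwhy{()}{D}{Q}.
\]
It therefore suffices to prove hardness once.

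\textbf{Step 2: the database.} Given a cubic graph $G=(N,E)$ and an integer $k$, fix for each vertex $v$ an enumeration $e^v_1,e^v_2,e^v_3$ of its three incident edges. Let
\[
D=\{V(v,e^v_1,e^v_2,e^v_3)\mid v\in N\}\cup\{F(e)\mid e\in E\},
\]
using a fresh constant for each vertex and each edge. The instance is $(D,(),k)$.

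\textbf{Step 3: the explanations.} A proof tree corresponds to a mapping $h$ that sends $V(x,\bar y)$ to some fact $V(v,e^v_1,e^v_2,e^v_3)$. This choice fixes $h(y_i)=e^v_i$, and the three $F$-atoms then map to $F(e^v_1),F(e^v_2),F(e^v_3)$, which are present in $D$. So the why-provenance is exactly $\{D_v\mid v\in N\}$, where
\[
D_v=\{V(v,e^v_1,e^v_2,e^v_3),F(e^v_1),F(e^v_2),F(e^v_3)\}.
\]
Each $D_v$ has four facts, and the $D_v$ are pairwise distinct.

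\textbf{Step 4: distances.} In a simple graph two distinct vertices share at most one edge, and they share one exactly when they are adjacent. Hence $\dist{D_u}{D_v}=8-2\cdot[\{u,v\}\in E]$. Consequently, for any $S\subseteq N$ with $|S|=k$,
\[
\diversity{\{D_v\mid v\in S\}}=8\binom{k}{2}-2\,|E(G[S])|,
\]
where $E(G[S])$ is the set of edges with both endpoints in $S$.

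\textbf{Step 5: the reduction.} Any algorithm for $\star\text{-}\mathsf{DiverseWhyProvenance}[Q]$ decides \textsc{Independent Set} as follows. If $k>|N|$, the algorithm must fail, and we answer no. Otherwise it returns a set $\mathcal{S}$ of maximal diversity. We accept iff $\diversity{\mathcal{S}}=8\binom{k}{2}$, which is checkable in polynomial time. By Step 4 this holds iff $G$ has an independent set of size $k$. The construction of $D$ is clearly polynomial, so the problem is NP-hard.

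\textbf{Main obstacle.} The delicate point is Step 3: showing that no unintended homomorphisms yield extra explanations. This is where the functional encoding through the $V$-facts matters. Each $V$-fact fixes all of $y_1,y_2,y_3$, and the only $F$-facts in $D$ are edge facts, so every $F$-atom is forced. Choosing the cubic restriction makes all explanations the same size, which keeps the distance formula in Step 4 clean.
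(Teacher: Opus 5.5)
Your proof is correct, but it takes a different route from the paper's.

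\textbf{The paper's route.} The paper does not start from a graph problem. It reduces from the decision version of diversity of \emph{query answers}, namely $\mathsf{DiverseAnswersCheck}[Q^*]$ for the fixed query $R(x_1,\ldots,x_5) \assign P(x_1,\ldots,x_5)$, with Hamming distance and sum aggregation. This problem is NP-complete by \cite{MePS25}, with $k$ given in unary. The paper uses the single rule $\hat R \assign P(x_1,\ldots,x_5),P_1(x_1),\ldots,P_5(x_5)$. Each answer tuple $\bar t$ becomes the explanation $\{P(\bar t),P_1(\bar t[1]),\ldots,P_5(\bar t[5])\}$, and the distance between two explanations is $2+2\cdot\mathsf{hd}(\bar t,\bar u)$. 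Hence the optimal diversity of $k$ explanations is $k(k-1)$ plus twice the optimal answer diversity.

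\textbf{Comparison.} Your construction has the same skeleton. A key atom fixes all variables, unary atoms create the overlaps, and the distance is affine in a similarity count. The difference is that you inline the source of hardness. From $\dist{D_u}{D_v} = 8 - 2\cdot[\{u,v\}\in E]$, maximizing diversity amounts to minimizing induced edges. This gives a self-contained argument from a textbook NP-complete problem. Your Step~1 also makes explicit why $\mathsf{ALL}$ and $\mathsf{UN}$ coincide, which the paper only asserts inside Observation~\ref{obs:why-ans}.

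\textbf{What the paper's route buys.} It makes the link to diversity of query answers explicit. Its rule is also self-join-free (distinct $P_1,\ldots,P_5$), which matches the w.l.o.g.\ assumption of Section~\ref{sec:characterizations}. Your rule repeats $F$. The theorem as stated allows this, but your rule does not become self-join-free for free. Splitting $F$ into $F_1,F_2,F_3$ would require every edge to sit at the same position in both endpoints' enumerations, which is a proper $3$-edge-colouring, and not every cubic graph has one.
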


Theorem~\ref{the:diversity-complexity} suggests that for providing practical solutions to our problems we need to exploit existing optimization methods that are able to solve NP-hard problems. We proceed in the next section to discuss how this can be achieved via Integer Linear Programming.

\section{Diverse Why-provenance via Integer Linear Programming}\label{sec:exact-encoding}

We aim to encode $\star\text{-}\mathsf{DiverseWhyProvenance}$, for $\star \in \{\mathsf{ALL},\mathsf{UN}\}$, as an Integer Linear Programming (ILP) maximization problem $\Pi$ in such a way that each set $\mathcal{S} \in \starwhyk{k}{\bar t}{D}{Q}$, for some Datalog query $Q = (\dep,R)$, database $D$ over $\esch{\dep}$, tuple $\bar t \in \adom{D}^{\arity{R}}$, and integer $k > 0$, that maximizes $\diversity{\mathcal{S}}$ can be extracted from one of the optimal solutions of $\Pi$.
Of course, as we shall see in the next section, there are multiple ways to construct such an encoding, relying on different characterizations of when a database $D'$ is a member of $\starwhy{\bar t}{D}{Q}$.
Hence, in this section, we first define a general framework for devising such ILP maximization problems in a uniform way, and present a generic algorithm that exploits this framework for solving $\star\text{-}\mathsf{DiverseWhyProvenance}$. We will then show how to instantiate such an algorithm with concrete encodings. We start by recalling the basics of ILP, and then introduce the key properties that an ILP encoding should enjoy to allow us to cast $\star\text{-}\mathsf{DiverseWhyProvenance}$ as an ILP maximization problem.

\begin{comment}
In this section, we discuss how to solve the problem $\mathsf{Diversity}$ by exploiting Integer Linear Programming (ILP). In particular,  our goal is to encode $\mathsf{Diversity}$ as an integer linear programming maximization problem, where each set $\mathcal{S} \in \whyk{k}{\bar t}{D}{Q}$ that maximizes $\diversity{\mathcal{S}}$ can be retrieved from one of the optimal solutions of the optimization problem.

As we are going to see, there can be multiple ways to construct this encoding, depending on different characterizations of when a database $D'$ is a member of the why-provenance $\why{\bar t}{D}{Q}$, for some Datalog query $Q = (\dep,R)$, a database $D$ over $\esch{\dep}$, and tuple $\bar t \in \adom{D}^{\arity{R}}$.

Hence, we first define a general framework for devising such maximization problems, and then we will show how to instantiate this framework with concrete characterizations. We start by defining ILP, and then move to define the key properties that an ILP encoding should have in order to allow us to solve the problem $\mathsf{Diversity}$.
\end{comment}

\subsection{Integer Linear Programming}\label{sec:ILP}

%\medskip
%\noindent
%\textbf{Integer Linear Programming (ILP).} 
Consider the disjoint sets $X$ and $B$ of integer variables and binary variables, respectively. 
%
%For a set $Y \subseteq X \cup B$, we write $\proj{\mu}{Y}$ for the restriction of $\mu$ over $Y$.
%Given a set $S$ of assignments for $X$ and $B$, we write $\proj{S}{Y}$ for the set of assignments $\{\proj{\mu}{Y} \mid \mu \in S\}$.
%for the function $\tau : Y \rightarrow \mathbb{Z}$ with $\tau(y) = \mu(y)$, for all $y \in Y$. 
%We extend the notation $\proj{S}{Y}$ to \emph{sets} $S$ of assignments in the natural way. 
An \emph{ILP constraint over $X$ and $B$} is an expression of the form
\[
a_1 x_1 + \cdots + a_n  x_n\ \diamond\ a_0,
\]
where, for each $i \in [n]$, $a_i \in \mathbb{Q}$, $x_i \in X \cup B$, $\diamond \in \{ \ge, \le, =\}$, and $a_0 \in \mathbb{Q}$. 
An \emph{assignment for $X$ and $B$} is a function $\mu : X \cup B \rightarrow \mathbb{Z}$ such that, for each $x \in B$, $\mu(x) \in \{0,1\}$. 
Such an assignment $\mu$ \emph{satisfies} the constraint $a_1 x_1 + \cdots + a_n  x_n \diamond a_0$ over $X$ and $B$ if it holds that
\[
a_1 \mu(x_1) + \cdots + a_n \mu(x_n) \diamond a_0.
\]
A \emph{system of ILP constraints over $X$ and $B$} is a set $T$ of ILP constraints over $X$ and $B$. A \emph{solution of $T$} is an assignment for $X$ and $B$ that satisfies all the constraints in $T$; we write $\sol{T}$ for the set that collects all the solutions of $T$. If $T$ admits a solution, i.e., $\sol{T} \neq \emptyset$, then we say that $T$ is \emph{feasible}; otherwise, $T$ is \emph{unfeasible}.

An \emph{ILP maximization problem over $X$ and $B$} is a pair $(f,T)$, where $f$, called the \emph{objective function}, is an expression of the form
$c_1 z_1 + \cdots + c_m z_m$,
with $c_1,\ldots,c_n \in \mathbb{Q}$ and $z_1,\ldots,z_n \in X \cup B$, and $T$ is a system of ILP constraints over $X$ and $B$. 
An \emph{optimal solution} of $(f,T)$ is a solution $\mu$ of $T$ that maximizes $f$ among all solutions of $T$, i.e., for every $\mu' \in \sol{T}$, it holds that $\mu(f) = c_1 \mu(z_1) + \cdots + c_m \mu(z_m) \geq \mu'(f) = c_1 \mu'(z_1) + \cdots + c_m \mu'(z_m)$. We write $\opt{f,T}$ for the set of all optimal solutions of $(f,T)$.

\medskip
\noindent
\textbf{XOR and Indicator Constraints.} For the sake of technical clarity, apart from linear constraints as defined above, we are also going to use xor constraints and indicator constraints, which we introduce below. It is well-known that xor and indicator constraints can be easily defined by means of linear constraints. Hence, in our technical development, we will directly use xor and indicator constraints to simplify the presentation, and we will refer to linear, xor, and indicator constraints simply as constraints.

An \emph{xor constraint over $B$} is an expression of the form
\[
x \oplus y\ =\ z,
\]
where $x,y,z$ are binary variables from $B$ and $\oplus$ denotes the xor operator. An assignment $\mu$ for $B$ satisfies the above xor constraint if $\mu(x) \oplus \mu(y) = \mu(z)$. 
Now, an \emph{indicator constraint over $X$ and $B$} is an expression of the form
\[
z\ \equiv\ a_1 x_1 + \cdots a_n x_n \diamond a_0,
\]
where $z \in B$ and $a_1 x_1 + \cdots a_n x_n \diamond a_0$ is a linear constraint over $X$ and $B$. An assignment $\mu$ for $X$ and $B$ satisfies the above indicator constraint when $\mu(z) = 1$ iff $\mu$ satisfies the linear constraint.

\begin{comment}
For integer variables $X$ and binary variables $B$, these include two kind of constraints. One is a \emph{xor constraint} of the form:
$$ x \oplus y = z,$$
where $x,y,z$ are binary variables from $B$, and $\oplus$ denotes the xor operator. An assignment $\mu$ satisfies the xor constraint when $\mu(z)$ coincides with the xor of $\mu(x)$ and $\mu(y)$. The other kind of constraint is an \emph{indicator constraint}, which is of the form:
$$z \equiv a_1 x_1 + \cdots a_n x_n \diamond a_0,$$
where $z$ is a binary variable from $B$, while $a_1 x_1 + \cdots a_n x_n \diamond a_0$ is a linear constraint over $X$ and $B$. The above indicator constraint is satisfied by an assignment $\mu$ when $\mu(z) = 1$ iff $\mu$ satisfies the linear constraint.

It is well-known that xor and indicator constraints can be easily defined by means of linear constraints. Hence, in what follows, we will use these additional kind of constraints directly, in order to simplify the presentation, and we will refer to linear constraints, xor constraints and indicator constraints simply as constraints.
\end{comment}

\subsection{Diverse Why-Provenance as an ILP Maximization Problem}
%\medskip
%\noindent
%\textbf{Diversity as an ILP maximization problem.} 

Having the basics of ILP in place, we can now introduce the general framework for casting $\star\text{-}\mathsf{DiverseWhyProvenance}$, where $\star \in \{\mathsf{ALL},\mathsf{UN}\}$, as an ILP maximization problem based on different ILP encodings in a uniform way. To this end, we first introduce the general notion of ILP encoding of why-provenance.

\begin{definition}[\textbf{ILP Encoding}]\label{def:ilp-encoding-why}
Consider a Datalog query $Q = (\dep,R)$, a database $D = \{\alpha_1,\ldots,\alpha_n\}$ over $\esch{\dep}$, for an integer $n>0$, and a tuple $\bar t \in \adom{D}^{\arity{R}}$. An \emph{ILP encoding of $\starwhy{\bar t}{D}{Q}$}, where $\star \in \{\mathsf{ALL},\mathsf{UN}\}$, is a system $T$ of ILP constraints over a set of integer variables $X$ and a set of binary variables $B$, where $B$ contains $n$ distinguished variables $x_{\alpha_1},\ldots,x_{\alpha_n}$, called the \emph{database variables of $T$}, such that, for all databases $D'$ over $\esch{\dep}$, the following statements are equivalent:
\begin{enumerate}
	\item $D' \in \starwhy{\bar t}{D}{Q}$.
	\item $D' = \{\alpha_i \mid i \in [n] \text{ and } \mu(x_{\alpha_i}) = 1\}$ for some $\mu \in \sol{T}$.
    %, for each $i \in [n]$, $\mu(x_{\alpha_i}) = 1$ if $\alpha_i \in D'$ and $\mu(x_{\alpha_i}) = 0$ if $\alpha_i \not\in D'$. 
    \hfill\markfull
\end{enumerate}
\end{definition}

Therefore, given a Datalog query $Q = (\dep,R)$, a database $D$ over $\esch{\dep}$, and a tuple $\bar t \in \adom{D}^{\arity{R}}$, if we are able to construct an ILP encoding $T$ of $\starwhy{\bar t}{D}{Q}$, then we can compute all the members of $\starwhy{\bar t}{D}{Q}$ by computing the solutions of $T$. 
We now proceed explain how to use an ILP encoding for why-provenance as defined above (Definition~\ref{def:ilp-encoding-why}) to solve $\star\text{-}\mathsf{DiverseWhyProvenance}$.

Consider an ILP encoding $T$ of $\starwhy{\bar t}{D}{Q}$ over $X$ and $B$, with $x_{\alpha_1},\ldots,x_{\alpha_n}$ being the database variables of $T$. For an assignment $\mu$ for $X$ and $B$, we denote by $\db{\mu}$ the \emph{database induced by $\mu$}, that is, the database $\{\alpha_i \mid i \in [n] \text{ and } \mu(x_{\alpha_i}) = 1\}$.
%for an assignment $\mu$ whose domain mentions the database variables $x_{\alpha_1},\ldots,x_{\alpha_n}$ of an ILP encoding of $\why{\bar t}{D}{Q}$, we use $\db{\mu}$ to denote the \emph{database induced by $\mu$}, i.e., the database $\{\alpha_i \mid \mu(x_{\alpha_i}) = 1 \text{ and } i \in [n]\}$.
%
%We can now exploit an ILP encoding of $\why{\bar t}{D}{Q}$ to solve $\mathsf{Diversity}$ as follows.
%Consider an ILP encoding $T$ of $\starwhy{\bar t}{D}{Q}$ with database variables $x_{\alpha_1},\ldots,x_{\alpha_n}$. 
For an integer $i > 0$, the \emph{$i$-th copy of $T$}, denoted $\copyof{T}{i}$, is the system of ILP constraints over $\copyof{X}{i} = \{\copyof{x}{i} \mid x \in X\}$ and $\copyof{B}{i} = \{\copyof{x}{i} \mid x \in B\}$ obtained from $T$ by replacing each integer and binary variable $x$ with $\copyof{x}{i}$.
%, where each integer and binary variable $x$ (including the database variables) is replaced with the fresh variable $\copyof{x}{i}$. 
We are now ready to define the ILP maximization problem that allows us to solve $\star\text{-}\mathsf{DiverseWhyProvenance}$.

\begin{definition}[\textbf{Maximization Problem}]\label{def:ilp-maximization-problem}
Consider a Datalog query $Q = (\dep,R)$, a database $D$ over $\esch{\dep}$, a tuple $\bar t \in \adom{D}^{\arity{R}}$, and an integer $k > 0$. Let $T$ be an ILP encoding of $\starwhy{\bar t}{D}{Q}$, for $\star \in \{\mathsf{ALL},\mathsf{UN}\}$, over $X$ and $B$, with $x_{\alpha_1},\ldots,x_{\alpha_n}$ being the database variables of $T$. %
The \emph{$(T,k)$-problem}
%The \emph{$T$-based encoding of $\mathsf{Diversity}$ w.r.t.\ $D$, $Q$, $\bar t$, and $k$}
is the ILP maximization problem $(f,S)$ over $\bigcup_{i \in [k]}\copyof{X}{i}$ and $\bigcup_{i \in [k]}\copyof{B}{i} \cup 
\{\auxvar^{i,j}_\ell \mid 1 \leq \ell \leq n \text{ and } 1 \leq i < j \leq k\}$, where $f$ is defined as the expression
\[
\sum\limits_{1 \leq i < j \leq k} \left(\sum\limits_{1 \leq \ell \leq n} \auxvar^{i,j}_\ell\right),
\]
and
%where $\copyof{X}{i}$ and $\copyof{B}{i}$ are the integer and binary variables of $\copyof{T}{i}$, for each $i \in [k]$, 
$S$ is the system of constraints
\begin{multline*}
\copyof{T}{1} \cup \cdots \cup \copyof{T}{k}\ \cup \\
 \left\{\copyof{x_{\alpha_\ell}}{i} \oplus \copyof{x_{\alpha_\ell}}{j} = \auxvar^{i,j}_\ell \mid 1 \leq \ell \leq n \text{ and } 1 \leq i < j \leq k\right\}\ \cup \\
 \left\{\sum\limits_{1 \leq \ell \leq n} \auxvar^{i,j}_\ell \ge 1 \mid 1 \leq i < j \leq k\right\}.\tag*{\markfull}
\end{multline*}
%and $f$ is the following function:
%\begin{flalign*}
%&& \sum\limits_{i < j \in [k]} \left(\sum\limits^n_{\ell=1} \auxvar^{i,j}_\ell\right).&& \markfull
%\end{flalign*}
%For each $i \in [k]$, we call the variables in the set $\{\copyof{x_{\alpha_\ell}}{i} \mid \ell \in [n]\}$ the \emph{$i$-th database variables of $(f,S)$}.\hfill\markfull
\end{definition}

Intuitively, the $(T,k)$-problem defined above 
%the $T$-based encoding $(f,S)$ of $\mathsf{Diversity}$ w.r.t.\ $D$, $Q$, $\bar t$, and $k$ 
is a maximization problem which enforces the simultaneous satisfaction of $k$ distinct copies $\copyof{T}{1},\ldots,\copyof{T}{k}$ of $T$. This makes a solution $\mu$ of $S$ to be of the form $\mu_1 \cup \cdots \cup \mu_k \cup \mu'$, where $\mu'$ is over the 
%\auxvar$ 
auxiliary variables, while each $\mu_i$ is a solution of $\copyof{T}{i}$. In addition, $S$ forces $\mu_1,\ldots,\mu_k$ to induce pairwise different databases. The latter is guaranteed by the combination of the xor constraints and the additional linear constraints, apart from those of $\copyof{T}{1},\ldots,\copyof{T}{k}$, added to $S$.
Hence, whenever $S$ is feasible, each solution of $S$ induces $k$ different members $D_1,\ldots,D_k$ of $\starwhy{\bar t}{D}{Q}$. The expression $f$ is then used to maximize the diversity of the set $\{D_1,\ldots,D_k\}$ where, for each $i,j \in [k]$ with $i < j$, the inner sum of $f$ encodes the distance between $D_i$ and $D_j$.
The above discussion is formalized by the following result.
%Formally, we can show the following technical result.

\begin{theorem}\label{thm:generic-encoding}
	Consider a Datalog query $Q = (\dep,R)$, a database $D$ over $\esch{\dep}$, a tuple $\bar t \in \adom{D}^{\arity{R}}$, and an integer $k > 0$. Let $T$ be an ILP encoding of $\starwhy{\bar t}{D}{Q}$, for $\star \in \{\mathsf{ALL},\mathsf{UN}\}$, over $X$ and $B$, with $x_{\alpha_1},\ldots,x_{\alpha_n}$ being the database variables of $T$. 
    %With $(f,S)$ being the $(T,k)$-problem,
%$T$-based ILP encoding $(f,S)$ of $\mathsf{Diversity}$ w.r.t.\ $D$, $Q$, $\bar t$, and $k$ is such that, 
For every set $\{D_1,\ldots,D_k\}$ of databases over $\esch{\dep}$, the following are equivalent:
	\begin{enumerate}
		\item $\{D_1,\ldots,D_k\} \in \argmax\limits_{\mathcal{S} \in \starwhyk{k}{\bar t}{D}{Q}} \diversity{\mathcal{S}}$. 
		\item There is an optimal solution $\mu$ of the $(T,k)$-problem such that, for each $i \in [k]$, $D_i = \{\alpha_\ell \mid \ell \in [n] \text{ and } \mu(\copyof{x_{\alpha_\ell}}{i}) = 1\}$.
        %, for each $\ell \in [n]$ and $i \in [k]$, $\mu(\copyof{x_{\alpha_\ell}}{i}) = 1$ iff $\alpha_\ell \in D_i$.
        %, and $\mu(\copyof{x_{\alpha_\ell}}{i}) = 0$ if $\alpha_\ell \not \in D_i$.
	\end{enumerate}
\end{theorem}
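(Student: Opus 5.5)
The plan is to establish a value-preserving correspondence between feasible solutions of the $(T,k)$-problem $(f,S)$ and $k$-element subsets of $\starwhy{\bar t}{D}{Q}$, and then to transfer optimality in both directions. Throughout, for a solution $\mu$ of $S$ and $i \in [k]$, write $D^\mu_i = \{\alpha_\ell \mid \ell \in [n] \text{ and } \mu(\copyof{x_{\alpha_\ell}}{i}) = 1\}$.

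\textbf{Soundness.} Let $\mu \in \sol{S}$. Restricting $\mu$ to $\copyof{X}{i} \cup \copyof{B}{i}$ and renaming each $\copyof{x}{i}$ back to $x$ gives an assignment $\mu_i \in \sol{T}$. This works because $\copyof{T}{i}$ is a variable-renamed copy of $T$, and the copies share no variables. By Definition~\ref{def:ilp-encoding-why} we get $D^\mu_i = \db{\mu_i} \in \starwhy{\bar t}{D}{Q}$.

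Since the variables are binary, the xor constraints force $\mu(\auxvar^{i,j}_\ell) = 1$ exactly when $\alpha_\ell \in \diff{D^\mu_i}{D^\mu_j}$. Hence
\[
\sum_{\ell} \mu(\auxvar^{i,j}_\ell) = \dist{D^\mu_i}{D^\mu_j}.
\]
The constraint $\sum_\ell \auxvar^{i,j}_\ell \geq 1$ then gives $D^\mu_i \neq D^\mu_j$ for $i<j$. So $\mathcal{S}_\mu = \{D^\mu_1,\ldots,D^\mu_k\}$ has exactly $k$ elements, lies in $\starwhyk{k}{\bar t}{D}{Q}$, and satisfies $\mu(f) = \diversity{\mathcal{S}_\mu}$.

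\textbf{Completeness.} Conversely, take any $\{D_1,\ldots,D_k\} \in \starwhyk{k}{\bar t}{D}{Q}$ and fix an enumeration $D_1,\ldots,D_k$ of it. For each $i$, Definition~\ref{def:ilp-encoding-why} gives some $\nu_i \in \sol{T}$ with $\db{\nu_i} = D_i$. Build $\mu$ by setting $\mu(\copyof{x}{i}) = \nu_i(x)$, and set each $\auxvar^{i,j}_\ell$ to the xor of the corresponding database variables. This is well defined because the copies are variable-disjoint. It satisfies every $\copyof{T}{i}$ and all xor constraints, and the $\geq 1$ constraints hold because the $D_i$ are pairwise distinct. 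Moreover $\mu(f) = \diversity{\{D_1,\ldots,D_k\}}$.

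\textbf{Transferring optimality.} Together, the two directions show that the set of values $\{\mu(f) \mid \mu \in \sol{S}\}$ equals $\{\diversity{\mathcal{S}} \mid \mathcal{S} \in \starwhyk{k}{\bar t}{D}{Q}\}$. Both sets are finite: all variables in the objective are binary, and there are finitely many $k$-element subsets. So the two maxima coincide, or both families are empty.

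For (1)$\Rightarrow$(2), apply the completeness construction to an argmax set. The resulting $\mu$ attains the common maximum, so it is optimal, and it induces $D_i$ in copy $i$ as (2) requires.

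For (2)$\Rightarrow$(1), let $\mu$ be an optimal solution. By soundness, $\{D_1,\ldots,D_k\} = \mathcal{S}_\mu$ belongs to $\starwhyk{k}{\bar t}{D}{Q}$ and has diversity $\mu(f)$, which equals the maximum. Hence it lies in the argmax.

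\textbf{Main obstacle.} The delicate point is the bookkeeping around the set notation in (2). The $k$ copies must induce \emph{distinct} databases, otherwise $\{D_1,\ldots,D_k\}$ would collapse to fewer than $k$ elements. This is exactly what the $\geq 1$ constraints guarantee. In the converse direction, one must choose an indexing of the set to match the copies, which is possible because $T$ is symmetric across copies.

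A minor edge case is $k=1$. There are then no auxiliary variables and $f$ is the empty sum, equal to $0$. Every solution is optimal, and every singleton subset of the why-provenance has diversity $0$, so the argument above still goes through.
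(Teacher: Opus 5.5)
Your proof is correct and uses the same ingredients as the paper's. In one direction you assemble a solution of the $(T,k)$-problem from $k$ solutions of $T$ (copies plus xor values, with the $\geq 1$ constraints met by pairwise distinctness); in the other you read off $k$ pairwise distinct why-provenance members from any solution via $\sum_{\ell}\mu(\mathsf{aux}^{i,j}_\ell) = d(D^\mu_i,D^\mu_j)$. The only difference is organizational: you factor these into soundness and completeness and compare the two value sets directly, whereas the paper argues optimality in each direction by contradiction.
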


Observe now that Theorem~\ref{thm:generic-encoding} immediately leads to a procedure for solving $\star\text{-}\mathsf{DiverseWhyProvenance}$, for $\star \in \{\mathsf{ALL}, \mathsf{UN}\}$. In particular, assuming that $\ConstructILPEncoding$ is a procedure that takes as input a triple $(\bar t,D,Q)$, where $Q$ is a Datalog query $Q = (\dep,R)$, $D$ is a database $D$ over $\esch{\dep}$, and $\bar t$ a tuple of $\adom{D}^{\arity{R}}$, and constructs an ILP encoding for $\starwhy{\bar t}{D}{Q}$, 
%for some Datalog query $Q = (\dep,R)$, database $D$ over $\esch{\dep}$, and tuple $\bar t \in \adom{D}^{\arity{R}}$, 
we define the procedure $\ComputeDiverse{\ConstructILPEncoding}$, which is depicted in Algorithm~\ref{alg:compute-diverse}, that solves $\star\text{-}\mathsf{DiverseWhyProvenance}$.
The correctness of the algorithm follows immediately from Theorem~\ref{thm:generic-encoding} and the fact that $\ConstructILPEncoding(\bar t,D,Q)$ is an ILP encoding for $\starwhy{\bar t}{D}{Q}$:

\begin{corollary}\label{cor:generic-algorithm}
 Let $\star \in \{\mathsf{ALL}, \mathsf{UN}\}$, and assume that $\ConstructILPEncoding$ is a procedure that takes as input a triple $(\bar t,D,Q)$, where $Q$ is a Datalog query $Q = (\dep,R)$, $D$ is a database $D$ over $\esch{\dep}$, and $\bar t$ a tuple of $\adom{D}^{\arity{R}}$, and constructs an ILP encoding for $\starwhy{\bar t}{D}{Q}$.
 %given as input $(\bar t,D,Q)$, constructs an ILP encoding for $\starwhy{\bar t}{D}{Q}$, for some Datalog query $Q = (\dep,R)$, database $D$ over $\esch{\dep}$, and tuple $\bar t \in \adom{D}^{\arity{R}}$. 
 Then, %The algorithm
$\ComputeDiverse{\ConstructILPEncoding}$ solves $\star\text{-}\mathsf{DiverseWhyProvenance}$.
\end{corollary}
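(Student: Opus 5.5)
The corollary follows from Theorem~\ref{thm:generic-encoding} once we fix what $\ComputeDiverse{\ConstructILPEncoding}$ does. On input $(\bar t,D,Q,k)$ it calls $\ConstructILPEncoding(\bar t,D,Q)$ to obtain $T$, builds the $(T,k)$-problem $(f,S)$ of Definition~\ref{def:ilp-maximization-problem}, and invokes an ILP solver. If $(f,S)$ has no optimal solution it returns \textbf{fail}. Otherwise it takes an optimal solution $\mu$ and returns $\{D_1,\ldots,D_k\}$ with $D_i=\{\alpha_\ell \mid \mu(\copyof{x_{\alpha_\ell}}{i})=1\}$. We must show this output is always correct, in two cases.

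\textbf{Case 1: the solver returns an optimal $\mu$.} Set $\mathcal{S}=\{D_1,\ldots,D_k\}$ as above. By the direction (2)$\Rightarrow$(1) of Theorem~\ref{thm:generic-encoding}, applied to this $\mathcal{S}$ and witnessed by $\mu$, we get $\mathcal{S}\in\argmax_{\mathcal{S}'\in\starwhyk{k}{\bar t}{D}{Q}}\diversity{\mathcal{S}'}$, which is exactly the required output.

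One subtlety is that the $D_i$ must be pairwise distinct, so that $\mathcal{S}$ really has $k$ elements and the theorem applies. Each $D_i$ is a member of $\starwhy{\bar t}{D}{Q}$ by Definition~\ref{def:ilp-encoding-why}, since $\mu$ restricted to the $i$-th copy solves $\copyof{T}{i}$. Pairwise distinctness is enforced by the xor constraints together with the constraints $\sum_\ell \auxvar^{i,j}_\ell\ge 1$. This is already built into the theorem's statement, so it costs nothing extra here.

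\textbf{Case 2: the algorithm fails.} We must show $\starwhyk{k}{\bar t}{D}{Q}$ is empty. Argue by contraposition: suppose it is nonempty.
\begin{itemize}
\item It is a finite collection, because $\starwhy{\bar t}{D}{Q}\subseteq 2^D$. Hence the argmax is nonempty.
\item Pick any $\mathcal{S}$ in the argmax. By the direction (1)$\Rightarrow$(2) of Theorem~\ref{thm:generic-encoding}, there is an optimal solution of the $(T,k)$-problem.
\item A correct solver therefore does not report absence of an optimum, and the algorithm does not fail.
\end{itemize}
So failure happens exactly when no set of $k$ explanations exists, matching the \textbf{fail} clause of the problem.

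The only delicate point is Case 2: we need that an optimal solution of $(f,S)$ exists whenever $S$ is feasible. Here $f$ is bounded, since every $\auxvar$ variable is binary. The objective takes only finitely many integer values, so it attains its maximum, and the solver's \emph{infeasible} answer coincides with emptiness of $\starwhyk{k}{\bar t}{D}{Q}$. Everything else is a direct instantiation of Theorem~\ref{thm:generic-encoding}.
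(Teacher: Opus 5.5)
Your argument is correct and is the same as the paper's, which states without further detail that the corollary follows immediately from Theorem~\ref{thm:generic-encoding}; you spell this out, using direction (2)$\Rightarrow$(1) for correctness of the returned set and direction (1)$\Rightarrow$(2) to show the algorithm fails only when $\starwhyk{k}{\bar t}{D}{Q}=\emptyset$. Your closing remark on the boundedness of $f$ is true but unnecessary, because direction (1)$\Rightarrow$(2) already guarantees that an optimal solution exists whenever $\starwhyk{k}{\bar t}{D}{Q}$ is non-empty.
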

\begin{algorithm}[t]
	\caption{Algorithm $\ComputeDiverse{\ConstructILPEncoding}$}
	\label{alg:compute-diverse}
	
	\BlankLine
	
	\KwIn{A Datalog query $Q = (\dep,R)$, a database $D = \{\alpha_1,\ldots,\alpha_n\}$ over $\esch{\dep}$, a tuple $\bar t \in \adom{D}^{\arity{R}}$, and an integer $k >0 $.}
	\KwOut{A set consisting of $k$ databases, or \textbf{fail}.}
	
	\BlankLine
	
	$T := \ConstructILPEncoding(\bar t,D,Q)$\;
	Construct the $(T,k)$-problem $(f,S)$\;
	\If{an optimal solution $\mu$ of $(f,S)$ exists}{
		Let $x_{\alpha_1},\ldots,x_{\alpha_n}$ be the database variables of $T$\;
		\For{$i = 1$ \textbf{\text{to}} $k$}{
			$D_i := \{\alpha_\ell \mid \ell \in [n] \text{ and } \mu(\copyof{x_{\alpha_\ell}}{i}) = 1\}$\;
		}
		\textbf{return} $\{D_1,\ldots,D_k\}$\;
	}\Else{\textbf{return} \textbf{fail}\;}
\end{algorithm}
%the problem $\mathsf{Diversity}$ by finding an optimal solution of the $T$-based encoding of $\mathsf{Diversity}$, w.r.t.\ the input database $D$, query $Q$, tuple $\bar t$, and integer $k$, where $T$ is any ILP encoding for $\why{\bar t}{D}{Q}$.

\section{ILP Encodings of Why-Provenance}\label{sec:characterizations}

We have seen that we can compute a solution of the problem $\star\text{-}\mathsf{DiverseWhyProvenance}$, where $\star \in  \{\mathsf{ALL},\mathsf{UN}\}$, on input $Q = (\dep,R)$, $D$, $\bar t$, and $k > 0$, by simply finding an optimal solution of the $(T,k)$-problem, where $T$ is any ILP encoding of $\starwhy{\bar t}{D}{Q}$. Moreover, $\ComputeDiverse{\ConstructILPEncoding}$ is a procedure that precisely does that, given that the sub-procedure $\ConstructILPEncoding$ is able to construct the ILP encoding $T$.
In this section, we describe concrete ILP encodings of $\starwhy{\bar t}{D}{Q}$, which will lead to concrete means to implement the sub-procedure $\ConstructILPEncoding$.
To this end, we present the main characterizations from the literature of when a database $D' \subseteq D$ is a member of $\starwhy{\bar t}{D}{Q}$, or equivalently, when there is an arbitrary (resp., unambiguous) proof tree $T$ of the fact $R(\bar t)$ w.r.t.~$D$ and $\dep$ with $\support{T} = D'$.
For each such characterization, we discuss how it can be used to devise ILP encodings of $\starwhy{\bar t}{D}{Q}$.
Before we proceed further, we need to discuss some basic assumptions and recall some auxiliary notions.

%In this section, we present the main characterizations for the existence of a standard (resp., unambiguous) proof tree of a fact $R(\bar t)$ w.r.t.\ a database $D$ and program $\dep$ with a given support that we inherit from the literature, and show how to exploit such characterizations to devise ILP encodings for $\allwhy{\bar t}{D}{\dep}$ (resp., $\unwhy{\bar t}{D}{\dep})$.
%For this we first need to discuss some basic assumptions, and recall some notions. 

\medskip
\noindent
\textbf{Programs.} In order to simplify the technical definitions, in the rest of this section, for a Datalog program $\dep$,
we assume that for each rule $\sigma \in \dep$ of the form $R_0(\bar x_0) \ \assign\ R_1(\bar x_1),\ldots,R_n(\bar x_n)$, the predicates $R_1,\ldots,R_n$ are all distinct, i.e., rule-bodies are self-join-free. The latter assumption can be done w.l.o.g.~since we can always convert $\dep$ into $\dep'$ that satisfies the desired condition without affecting the why-provenance of $R(\bar t)$ w.r.t.~$D$ and $\dep$ relative to both arbitrary and unambiguous proof trees. In particular, for each rule $\sigma$ as above, and for each $i \in [n]$, let $\text{\rm Aux}^\sigma_i \not\in \sch{\dep}$ be an auxiliary predicate of arity $\arity{R_i}$, replace $R_i$ with $\text{\rm Aux}^\sigma_i$ in the body of $\sigma$, and add to $\dep$ the rule $\text{\rm Aux}^\sigma_i(x_1,\ldots,x_{\arity{R_i}})\ \assign\ R_i(x_1,\ldots,x_{\arity{R_i}})$. Clearly, for $\star \in \{\mathsf{ALL},\mathsf{UN}\}$, $\starwhy{\bar t}{D}{Q} = \starwhy{\bar t}{D}{Q'}$, where $Q'=(\dep',R)$.

\medskip
\noindent
\textbf{Multisets.}
A \emph{multiset} $M$ is a pair $(S,\mu)$, where $S$ is a set and $\mu : S \rightarrow \mathbb{N}$ is a function assigning to each element $a \in S$ a strictly positive integer, called the \emph{multiplicity of $a$ in $S$}. We say that $M$ is \emph{empty} if $S$ is empty; otherwise, it is non-empty. The \emph{cardinality} of $M$, denoted $|M|$, is $\sum\limits_{a \in S} \mu(a)$. By abuse of notation, we may write $a \in M$ meaning that $a \in S$. Moreover, we may write $\card{a}{M}$ for $\mu(a)$, if $a \in S$, and $0$ if $a \not \in S$. 
Finally, for a sequence of (not necessarily distinct) objects $o_1,\ldots,o_n$, we write $\multiset{o_1,\ldots,o_n}$ to denote the multiset $(S,\mu)$, where $S = \{o_1,\ldots,o_n\}$ and $\mu$ assigns to each element of $S$ the number of times the element occurs in the sequence $o_1,\ldots,o_n$.

\medskip

\noindent
\textbf{Downward Closure.} We now introduce the so-called {\em downward closure} of a fact w.r.t.~a database and a Datalog program, taken from~\cite{ElKM22}, that will play a key role in all the characterizations we are going to discuss.
Roughly, the downward closure of a fact $\alpha$ w.r.t.~a database $D$ and a Datalog program $\dep$ is a hypergraph that encodes all possible proof trees of $\alpha$ w.r.t.~$D$ and $\dep$. 
For our purposes, a \emph{(directed) hypergraph} $\mathcal{H}$ is a pair $(V,E)$, where $V$ is the set of its \emph{nodes} and $E$ is the set of its \emph{hyperedges}, i.e., pairs of the form $(u,T)$, where $u \in V$ and $T \subseteq V$ is a non-empty set of nodes.
For $u,v \in V$, we say that \emph{$u$ reaches $v$ in $\mathcal{H}$} if either $u=v$, or there exists a sequence of hyperedges $(u_1,T_1),\ldots,(u_n,T_n)$ with $u_1=u$, $v \in T_n$, and $u_i \in T_{i-1}$ for each $i \in \{2,\ldots,n\}$. For a node $u$, we write $\downof{\mathcal{H}}{u}$ for the hypergraph $(V',E')$ obtained from $\mathcal{H}$, where $V'$ contains $u$ and all nodes reachable from $u$, and $E'$ collects all the hyperedges mentioning only nodes of $V'$.

For a Datalog program $\dep$ and a database $D$ over $\esch{\dep}$, the \emph{graph of rule instances (GRI)} of $D$ and $\dep$ is the hypergraph $\gri{D}{\dep}=(V,E)$, with $V \subseteq \base{D,\dep}$, such that
\begin{enumerate}
	\item $D \subseteq V$, and 
	\item if there exists a rule $R_0(\bar x_0)\ \assign\ R_1(\bar x_1),\ldots,R_n(\bar x_n)$ in $\dep$ and a function $h : \bigcup_{i \in [n]} \bar x_i \ra \ins{C}$ such that $\alpha_i = R_i(h(\bar x_i)) \in V$, for $i \in [n]$, then $\alpha_0 = R_0(h(\bar x_0)) \in V$ and $(\alpha_0,\{\alpha_1,\ldots,\alpha_n\})$ is an hyperedge of $E$.
%	$(\alpha_0,\multiset{\alpha_1,\ldots,\alpha_n})$ is an hyperedge of $E$.
\end{enumerate}
Intuitively, $\gri{D}{\dep}$ encodes all the proof trees of some fact $\alpha$ w.r.t.\ $D$ and $\dep$.
Since we are interested only in the proof trees of a specific fact $\alpha$, we do not need to consider $\gri{D}{\dep}$ in its entirety, but only the subhypergraph of $\gri{D}{\dep}$ containing $\alpha$ and all the nodes reachable from it. This leads to the notion of downward closure, which simply keeps from $\gri{D}{\dep}$ the part that is relevant to derive the fact in question. 

\begin{definition}[\textbf{Downward Closure}]\label{def:downward-closure}
	Consider a Datalog program $\dep$, a database $D$ over $\esch{\dep}$, and a fact $\alpha$ occurring in $\gri{D}{\dep}$. The \emph{downward closure of $\alpha$ w.r.t.~$D$ and $\dep$}, denoted $\downc{\alpha}{D}{\dep}$, is the hypergraph $\downof{\gri{D}{\dep}}{\alpha}$. \hfill\markfull
\end{definition}

%The key property of the downward closure is the following
%
%
%\begin{lemma}[\cite{CLPS24aaai}]\label{lem:dag-in-down}
%	Consider a Datalog program $\dep$, a database $D$ over $\esch{\dep}$, a fact $\alpha$ over $\sch{\dep}$, and a proof tree $T=(V,E,\lambda)$ of $\alpha$ w.r.t.\ $D$ and $\dep$. Then, for every node $u \in V$ with outgoing edges $(u,v_1),\ldots,(u,v_n)$ in $T$, we have that $(\lambda(u),\multiset{\lambda(v_1),\ldots,\lambda(v_n)})$ is a hyperedge of $\downc{\alpha}{D}{\dep}$.
%\end{lemma}

In the rest of the section, whenever we refer to an ILP encoding of $\starwhy{\bar t}{D}{Q}$, for a Datalog query $Q = (\dep,R)$, a database $D$ over $\esch{\dep}$, and a tuple $\bar t \in \adom{D}^{\arity{R}}$, we tacitly assume that $R(\bar t)$ occurs in $\gri{D}{\dep}$, or, equivalently, that $\bar t \in Q(D)$, so that the hypergraph $\downc{R(\bar t)}{D}{\dep}$ is well-defined. This is w.l.o.g.~since, if $R(\bar t)$ does not occur in $\gri{D}{\dep}$, then $\starwhy{\bar t}{D}{Q} = \emptyset$, and thus the unfeasible system consisting of the two constraints $v \ge 1$ and $v \le 0$, where $v$ is one of the database variables, is trivially an ILP encoding of $\starwhy{\bar t}{D}{Q}$.

We are now ready to present the ILP encodings of interest. Those encodings are classified into two main categories, namely the hypergaph- and the graph-based ones, depending on whether the underlying characterization of when a database is a member of $\starwhy{\bar t}{D}{Q}$, for $\star \in  \{\mathsf{ALL},\mathsf{UN}\}$, uses a hypergraph or a (directed acyclic) graph as a way for compactly representing proof trees.

\subsection{Hypergraph-based ILP Encodings}\label{sec:hypergraph-based}

We start by recalling the characterizations of interest that rely on hypergraphs, and then use them to devise the hypergraph-based ILP encodings of why-provenance.

\medskip
\noindent
\underline{\textbf{Hypergraph-based Characterizations}}

\smallskip
\noindent Consider a Datalog query $Q = (\dep,R)$, a database $D$ over $\esch{\dep}$, and a tuple $\bar t \in \adom{D}^{\arity{R}}$.
For a multiset $E = (E',\mu)$ of hyperedges from $\downc{R(\bar t)}{D}{\dep}$, we denote by $\nodes{E}$ the \emph{set} of nodes occurring in the hyperedges of $E$ and by $\hypergraphof{E}$ the hypergraph $(\nodes{E},E')$. For a node $\alpha \in \nodes{E}$, we define the \emph{in-degree of $\alpha$ in $E$}, denoted $\indegree{\alpha}{E}$, as 
the number of (possibly repeated) hyperedges of $E$ in which $\alpha$ appears as a target, namely
\[
\indegree{\alpha}{E} = \sum\limits_{(\beta,T) \in E' \text{ with } \alpha \in T} \mu((\beta,T)).
\]
%the number of times $\alpha$ appears as a target in all (possibly repeated) hyperedges of $E$, i.e.,
%$$\indegree{\alpha}{E} = \sum\limits_{(\beta,T) \in E'} \card{\alpha}{T} \cdot \mu((\beta,T)).$$
%
Moreover, the \emph{out-degree of $\alpha$ in $E$}, denoted $\outdegree{\alpha}{E}$, is the number of (possibly repeated) hyperedges in $E$ having $\alpha$ as the source, i.e.,
$$\outdegree{\alpha}{E} = \sum\limits_{(\alpha,T) \in E'} \mu((\alpha,T)).$$
We are now ready to recall the hypergraph-based characterization of when there is a proof tree of $R(\bar t)$ w.r.t.\ $D$ and $\dep$ with a certain support $D' \subseteq D$. This relies on the existence of a multiset of hyperedges with certain properties that one can extract from $\downc{R(\bar t)}{D}{\dep}$. We proceed to introduce those properties.
%define when a multiset of hyperedges enjoys these properties and then present the characterization.

\begin{definition}[\textbf{ (Unambiguous) Proof Tree Description}]\label{def:hypergraph-properties}
	Consider a Datalog query $Q=(\dep,R)$, a database $D$ over $\esch{\dep}$, a tuple $\bar t \in \adom{D}^{\arity{R}}$, and a  database $D' \subseteq D$. We say that a multiset $E$ of hyperedges from $\downc{R(\bar t)}{D}{\dep}$ is a \emph{description of a proof tree of $R(\bar t)$ w.r.t.~$D$ and $\dep$ with support $D'$} if the following hold:
	\begin{enumerate}
		\item $R(\bar t) \in V_E$ and $\outdegree{R(\bar t)}{E} - \indegree{R(\bar t)}{E} = 1$,
		\item $\outdegree{\alpha}{E} - \indegree{\alpha}{E} = 0$, for each $\alpha \in \nodes{E} \setminus (D \cup \{R(\bar t)\})$,
		\item $\sum\limits_{\alpha \in \nodes{E} \cap D} \indegree{\alpha}{E} \ge 1$,
		\item $R(\bar t)$ reaches all nodes of $\hypergraphof{E}$ in $\hypergraphof{E}$, and
		\item $D' = \nodes{E} \cap D$.
	\end{enumerate}
    We say that $E$ is a \emph{description of an unambiguous proof tree of $R(\bar t)$ w.r.t.~$D$ and $\dep$ with support $D'$} if, in addition to the above conditions, for every pair $(\alpha,T),(\alpha,T') \in E$, it holds that $T = T'$. \hfill\markfull
\end{definition}

The desired hypergraph-based characterizations follows:
%, which we call the \emph{hypergraph-based characterization}, follows:

\begin{theorem}[\cite{CalauttiLPS24b}]\label{thm:arbitrary-hypergraph}
	Consider a Datalog query $Q=(\dep,R)$, a database $D$ over $\esch{\dep}$, a tuple $\bar t \in \adom{D}^{\arity{R}}$, and a database $D' \subseteq D$. The following are equivalent:
	\begin{enumerate}
		\item There is a proof tree (resp., an unambiguous proof tree) of $R(\bar t)$ w.r.t.~$D$ and $\dep$ with support $D'$.
		\item There is a description $E$ of a proof tree (resp., an unambiguous proof tree) of $R(\bar t)$ w.r.t.~$D$ and $\dep$ with support $D'$.
	\end{enumerate}
\end{theorem}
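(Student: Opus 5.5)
We prove the two directions separately, building a proof tree from a description and reading off a description from a proof tree.

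\emph{(1)$\Rightarrow$(2).} Let $T=(V,E_T,\lambda)$ be a proof tree of $R(\bar t)$ with $\support{T}=D'$. For each internal node $v$ with children $u_1,\ldots,u_m$, Definition~\ref{def:proof-tree} gives a rule instance, so $(\lambda(v),\{\lambda(u_1),\ldots,\lambda(u_m)\})$ is a hyperedge of $\gri{D}{\dep}$. Because every label of $T$ is reachable from the root label, it is in fact a hyperedge of $\downc{R(\bar t)}{D}{\dep}$. Let $E$ be the multiset of these hyperedges, one occurrence per internal node.

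We then check the conditions of Definition~\ref{def:hypergraph-properties}. Since rule bodies are self-join-free, the children of a node carry pairwise distinct predicates, hence distinct labels. Therefore $\indegree{\alpha}{E}$ equals the number of non-root nodes of $T$ labelled $\alpha$. Likewise $\outdegree{\alpha}{E}$ equals the number of internal nodes labelled $\alpha$; for $\alpha\notin D$ this is all nodes labelled $\alpha$, since leaves carry facts of $D$ and intensional facts never lie in $D$. This yields conditions (1) and (2). Condition (3) holds because $T$ has at least one leaf below the root, as every rule has a non-empty body. Reachability (4) is inherited from the tree structure, and $\nodes{E}\cap D=\support{T}=D'$ gives (5). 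If $T$ is unambiguous, any two nodes with the same label have isomorphic subtrees, hence the same children labels, so all hyperedges in $E$ with the same source coincide.

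\emph{(2)$\Rightarrow$(1).} This is the main obstacle: unfolding a multiset of hyperedges into a finite tree that uses every occurrence while covering all nodes. I would use an Euler-style token argument. Start with the root $R(\bar t)$ as the only open leaf, and repeatedly expand an open leaf labelled by an intensional fact $\alpha$ using an unused occurrence of a hyperedge with source $\alpha$. The degree balance (conditions 1–2) guarantees an unused outgoing hyperedge whenever $\alpha$ still has unexpanded occurrences: the number of times $\alpha$ is created as a child equals $\indegree{\alpha}{E}$, and $\outdegree{\alpha}{E}$ occurrences are available (one extra for the root). Each expansion consumes an occurrence, so the process terminates with all open leaves being database facts, and it yields a proof tree whose leaves are contained in $\nodes{E}\cap D$.

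The subtle point is that the process might use only part of $E$, for instance when some hyperedges form a cycle disconnected from what was used. To handle this, we splice unused cycles into the tree: by reachability (condition 4) some unused hyperedge has a source already labelling a node of the tree. Because the unused part is itself balanced (in-degree equals out-degree everywhere), we can re-route that node through the unused hyperedges and reattach the original subtree at a later occurrence of the same fact, as in Hierholzer's algorithm. Iterating this, all of $E$ is used, so every fact in $\nodes{E}\cap D=D'$ appears as a leaf, and $\support{T}=D'$. For the unambiguous variant, each fact has a unique outgoing hyperedge. We then build the tree by always expanding $\alpha$ with that hyperedge, which is acyclic-safe because balance forces the hyperedge graph restricted to $E$ to be well-founded from the root. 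This gives an unambiguous tree, since the subtree of a node depends only on its label, and its support is exactly $D'$ by reachability.
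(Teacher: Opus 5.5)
The paper does not prove this theorem; it only cites it. Your argument therefore has to stand on its own, and it has a gap in the splicing step of (2)$\Rightarrow$(1).

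\textbf{What works.} Direction (1)$\Rightarrow$(2) is fine. So is the first phase of (2)$\Rightarrow$(1): the counting shows that an open intensional leaf can always be expanded with an unused occurrence, so you obtain a finite proof tree whose support is contained in $\nodes{E}\cap D$.

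\textbf{The gap.} Hierholzer's argument rests on the fact that, in a balanced digraph, a walk from $v$ along unused edges can only get stuck at $v$. Here balance concerns hyperedge in/out-degrees, and hyperedges branch. Re-routing an arbitrary tree node through unused hyperedges need not ever create another node with the same label.

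For example, let $E$ consist of one occurrence each of $(R(\bar t),\{\gamma,\beta'\})$, $(\gamma,\{d_0\})$, $(\beta',\{d_0\})$, $(\gamma,\{d\})$, $(\beta,\{\gamma,\beta'\})$, $(\beta',\{\beta,d'\})$. Here $\gamma,\beta,\beta'$ are intensional and $d_0,d,d'\in D$; this is realizable with $0$-ary predicates. Conditions (1)--(5) hold with $D'=\{d_0,d,d'\}$.
\begin{itemize}
\item Your first phase may stop with the tree in which $R(\bar t)$ has children $\gamma$ and $\beta'$, each with the single leaf $d_0$. This leaves $(\gamma,\{d\})$, $(\beta,\{\gamma,\beta'\})$ and $(\beta',\{\beta,d'\})$ unused.
\item The unused $(\gamma,\{d\})$ has its source in the tree. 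Re-routing the $\gamma$-node through it yields only the leaf $d$, and no later $\gamma$-node exists at which to reattach the old subtree.
\item Only splicing at $\beta'$ works.
\end{itemize}
So knowing that some unused hyperedge has its source in the tree is not enough, and the re-routing you claim is unjustified.

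\textbf{What is missing.} You need the right choice of splice point and a proof that such a point exists. Let $G_U$ be the digraph whose nodes are the intensional facts that are sources of unused occurrences, with an edge $\sigma\to\delta$ whenever an unused occurrence $(\sigma,S)$ has an intensional $\delta\in S$.
\begin{itemize}
\item The unused part $U$ satisfies $\outdegree{\delta}{U}=\indegree{\delta}{U}$ at every intensional $\delta$, including $R(\bar t)$. Hence every node of $G_U$ has an incoming edge, so every source strongly connected component of $G_U$ contains a cycle.
\item By condition (4), every node of $G_U$ is reachable in $G_U$ from a label of the current tree. Take a path from $R(\bar t)$ in $\hypergraphof{E}$ and cut it at its last tree label. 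Every later hyperedge has a target outside the tree, so none of its occurrences is used.
\item Therefore some tree label $\gamma$ lies on a cycle of $G_U$.
\end{itemize}
Splice at a $\gamma$-node as follows. Follow that cycle, whose sources are distinct and hence whose occurrences are distinct, until a new $\gamma$-child appears, and attach the old subtree there. Then complete the remaining open leaves with unused occurrences. Your counting applies again, since at most $\indegree{\delta}{U}=\outdegree{\delta}{U}$ new $\delta$-nodes need expanding. Each splice strictly increases the number of used occurrences, so eventually all of $E$ is used and the support is $\nodes{E}\cap D=D'$.

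\textbf{Unambiguous case.} Your argument here is correct and needs no splicing. The well-foundedness you assert does need a justification, and your own first phase supplies it. When each source has a unique hyperedge, the first phase is forced, produces exactly the unfolding, and stops after at most $|E|$ expansions, so the unfolding is finite.
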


Note that the version of Theorem~\ref{thm:arbitrary-hypergraph} for unambiguous proof trees has not been explicitly stated in~\cite{CalauttiLPS24b}, but it is an easy adaptation of its version for arbitrary proof trees.
\begin{comment}
One can easily adapt the above characterization to unambiguous proof trees by additionally requiring that no two hyperedges $(\alpha,T)$ and $(\alpha,T')$ with $T \neq T'$ appear in the multiset $E$.

\begin{definition}[\textbf{Unambiguous Proof Tree Description}]\label{def:un-hypergraph-properties}
	Consider a Datalog query $Q=(\dep,R)$, a database $D$ over $\esch{\dep}$, a tuple $\bar t \in \adom{D}^{\arity{R}}$, and a  database $D' \subseteq D$. We say that a multiset $E$ of hyperedges from $\downc{R(\bar t)}{D}{\dep}$ is a \emph{description of an unambiguous proof tree of $R(\bar t)$ w.r.t.~$D$ and $\dep$ with support $D'$} if:
    \begin{enumerate}
    \item $E$ is a description of a proof tree of $R(\bar t)$ w.r.t.~$D$ and $\dep$ with support $D'$, and 
    \item for every pair $(\alpha,T),(\alpha,T') \in E$, it holds that $T = T'$.\hfill\markfull
    \end{enumerate}
\end{definition}

The unambiguous hypergraph-based characterization follows:

\begin{theorem}[{\normalfont Implicit in }\cite{CalauttiLPS24b}]\label{thm:unambiguous-hypergraph}
	Consider a Datalog query $Q=(\dep,R)$, a database $D$ over $\esch{\dep}$, a tuple $\bar t \in \adom{D}^{\arity{R}}$, and a database $D' \subseteq D$. The following are equivalent:
	\begin{enumerate}
		\item There is an unambiguous proof tree of $R(\bar t)$ w.r.t.~$D$ and $\dep$ with support $D'$.
		\item There is a description $E$ of an unambiguous proof tree of $R(\bar t)$ w.r.t.~$D$ and $\dep$ with support $D'$.
	\end{enumerate}
\end{theorem}
\end{comment}
We now show
%, given a Datalog query $Q = (\dep,R)$, a database $D$ over $\esch{\dep}$, and a tuple $\bar t \in \adom{D}^{\arity{R}}$, 
how to exploit Theorem~\ref{thm:arbitrary-hypergraph} to devise an ILP encoding of why-provenance.
%$\starwhy{\bar t}{D}{Q}$, where $\star \in \{\mathsf{ALL},\mathsf{UN}\}$.
%$\allwhy{\bar t}{D}{Q}$ and $\unwhy{\bar t}{D}{Q}$, respectively, 

\medskip
\noindent
\underline{\textbf{Hypergraph-based ILP Encoding}}

\smallskip
\noindent In what follows, fix a Datalog query $Q=(\dep,R)$, a database $D$ over $\esch{\dep}$, a tuple $\bar t \in \adom{D}^{\arity{R}}$, and let $\downc{R(\bar t)}{D}{\dep} = (V,E)$.
We start with the ILP encoding for $\allwhy{\bar t}{D}{Q}$, called the \emph{hypergraph-based ILP encoding of $\allwhy{\bar t}{D}{Q}$} and denoted as $\ilphypergraph{\bar t}{D}{Q}$.
The high-level idea is to devise a set of constraints whose solutions correspond to descriptions of proof trees of $R(\bar t)$ w.r.t.\ $D$ and $\dep$ with \emph{some} support $D'$. This will imply that from each solution of $\ilphypergraph{\bar t}{D}{Q}$ one can extract the support $D'$ of a proof tree of $R(\bar t)$ w.r.t.\ $D$ and $\dep$, and thus $\ilphypergraph{\bar t}{D}{Q}$ would indeed be an ILP encoding for $\allwhy{\bar t}{D}{Q}$.
Formally, $\ilphypergraph{\bar t}{D}{Q}$ is a system of constraints over integer variables $X$ and binary variables $B$ as follows.

\smallskip
\noindent
\textbf{The Variables of $\ilphypergraph{\bar t}{D}{Q}$.} For each hyperedge $(\alpha,T)$ of $\downc{R(\bar t)}{D}{\dep}$, the set $X$ contains one integer variable $x_{(\alpha,T)}$. Intuitively, the value of $x_{(\alpha,T)}$ in a solution of $\ilphypergraph{\bar t}{D}{Q}$ denotes the multiplicity of $(\alpha,T)$ in the underlying multiset that is a description of a proof tree of $R(\bar t)$ w.r.t.\ $D$ and $\dep$ with some support $D'$. Let us denote this multiset as $E^*$. In particular, when $x_{(\alpha,T)}$ is $0$, it means that the hyperedge is not part of $E^*$. 
Moreover, assuming $\hat{E}$ denotes the set of (standard) edges induced by the hyperedges of $\downc{R(\bar t)}{D}{\dep}$, i.e., 
$\hat{E} = \{(\alpha,\beta) \mid (\alpha,\{\ldots,\beta,\ldots\}) \in E\}$,
%$\hat{E} = \{(\alpha,\beta) \mid (\alpha,\multiset{\ldots,\beta,\ldots}) \in E\}$, 
$X$ contains an integer variable 
$y_{(\alpha,\beta)}$, for each edge $(\alpha,\beta) \in \hat{E}$. Such variables will be used in some constraints of $\ilphypergraph{\bar t}{D}{Q}$ to guarantee that $R(\bar t)$ reaches all nodes of $\hypergraphof{E^*}$, and they encode the amount of "flow" going through these edges; more details on this later. 
Finally, $B$ contains one binary variable $z_\alpha$ for each $\alpha \in V \cup D$.

% each node $\alpha \in V$ of $\downc{R(\bar t)}{D}{\dep}$.

\smallskip
\noindent
\textbf{The Constraints of $\ilphypergraph{\bar t}{D}{Q}$.} We now move to the constraints. The system $\ilphypergraph{\bar t}{D}{Q}$ is a set of constraints 
\[
T_{\mathsf{degree}} \cup T_{\mathsf{reach}} \cup T_{\mathsf{bind}} \cup T_{\mathsf{pos}}.
\]
The constraints of $T_{\mathsf{degree}}$ force the in- and out-degree of the nodes of the multiset $E^*$ to enjoy the properties (1)-(3) of Definition~\ref{def:hypergraph-properties}. 
The constraints $T_{\mathsf{reach}}$ enforce property (4) of Definition~\ref{def:hypergraph-properties}, i.e., that $R(\bar t)$ reaches all nodes of $\hypergraphof{E^*}$; the latter exploits the variables of the form $y_{(\alpha,\beta)}$ discussed above.
Finally, the constraints of $T_\mathsf{bind}$ simply force each binary variable $z_\alpha$ with $\alpha \in V$ to be true iff at least one hyperedge mentioning $\alpha$ belongs to $E^*$, and in case $\alpha \not \in V$, i.e., $\alpha \in D \setminus V$, force $z_\alpha$ to be always false; this is needed to enforce property~(5) of Definition~\ref{def:hypergraph-properties}, guaranteeing that only database facts occurring in $V$ can form a member of $\allwhy{\bar t}{D}{Q}$.\footnote{Note that we do not refer to any specific set $D'$ of facts when implementing property~(5) of Definition~\ref{def:hypergraph-properties}. This is because we want the solutions of our system to induce all possible members of $\allwhy{\bar t}{D}{Q}$, rather than a specific database $D'$.} Finally, the constraints of $T_\mathsf{pos}$ state that the integer variables, which are meant to represent multiplicities and flows, take non-negative values. We proceed to formalize the above informal description.

%\footnote{Note that no constraint enforces property (5) of Definition~\ref{def:hypergraph-properties}, since we want the solutions of our system to induce all possible members of $\allwhy{\bar t}{D}{Q}$, rather than a specific one. Such a property will be crucial, however, for proving that $\ilphypergraph{\bar t}{D}{Q}$ is indeed an ILP encoding of $\allwhy{\bar t}{D}{Q}$.}

For each node $\alpha \in V$ of $\downc{R(\bar t)}{D}{\dep}$, 
let 
%$\mathcal{I}_\alpha$ and $\mathcal{O}_\alpha$ be the following expressions:
\[
\mathcal{I}_\alpha = \sum\limits_{(\beta,T) \in E \text{ with } \alpha \in T} x_{(\beta,T)} \qquad \text{and} \qquad 
\mathcal{O}_\alpha = \sum\limits_{(\alpha,T) \in E} x_{(\alpha,T)}.
\]
%$$
%\mathcal{I}_\alpha = \sum\limits_{(\beta,T) \in E} \card{\alpha}{T} \cdot x_{(\beta,T)} \qquad \qquad 
%\mathcal{O}_\alpha = \sum\limits_{(\alpha,T) \in E} x_{(\alpha,T)}.
%$$
The expressions $\mathcal{I}_\alpha$ and $\mathcal{O}_\alpha$ essentially compute the integers $\indegree{\alpha}{E^*}$ and $\outdegree{\alpha}{E^*}$, respectively. With those expressions in place, $T_{\mathsf{degree}}$ and $T_\mathsf{bind}$ are easily defined. In particular, $T_\mathsf{degree}$ is the set
\begin{eqnarray*}
	&& \left\{ \mathcal{O}_{R(\bar t)} - \mathcal{I}_{R(\bar t)} = 1 \right\}\\
	&\cup& \left\{ \mathcal{O}_\alpha - \mathcal{I}_\alpha = 0\mid \alpha  \in V \setminus (D \cup \{R(\bar t)\})\right\}\\
	&\cup&\left\{ \sum\limits_{\alpha \in V \cap D} \mathcal{I}_\alpha \ge 1 \right\},
\end{eqnarray*}
and $T_\mathsf{bind}$ is the set of constraints
\[
\left\{ z_\alpha \equiv \mathcal{I}_\alpha + \mathcal{O}_\alpha \ge 1 \mid \alpha \in V \right\} \cup \left\{z_\alpha = 0 \mid \alpha \in D \setminus V \right\}.
\]
Finally, $T_\mathsf{pos}$ is the set of constraints
\[
\left\{ x_{(\alpha,T)} \ge 0 \mid (\alpha,T) \in E \right\} \cup \left\{y_{(\alpha,\beta)} \ge 0 \mid (\alpha,\beta) \in \hat{E} \right\}.
\]

It remains to explain how the set $T_\mathsf{reach}$ is defined. For this, we exploit the variables of the form $y_{(\alpha,\beta)}$, for each $(\alpha,\beta) \in \hat{E}$. Intuitively, each variable $y_{(\alpha,\beta)}$ denotes the amount of ``flow'' the edge carries when traversing it. The constraints of $T_\mathsf{reach}$ will exploit these variables to solve the reachability problem on the (standard) directed graph induced by the hyperedges of $E^*$. We do this using a flow-based encoding. Informally, $T_\mathsf{reach}$ will enforce three constraints (i) every edge $(\alpha,\beta) \in \hat{E}$ carries zero flow, unless some hyperedge of $E^*$ with source $\alpha$ mentions $\beta$ in its target, in which case it can carry an amount of flow that is enough to be distributed to all other nodes, (ii) the net outgoing flow of node $R(\bar t)$, i.e., the difference between the total outgoing flow and the total incoming flow of $R(\bar t)$, is enough to be distributed \emph{evenly} to all other nodes of $E^*$, and (iii) for every other node $\alpha$ that appears in $E^*$, the total outgoing flow of $\alpha$ must decrease by 1 w.r.t.\ the total incoming flow of $\alpha$ (this is to guarantee that $\alpha$ consumes precisely its share of the total flow).
We now formalize the above intuitive discussion.

For a node $\alpha \in V$, let 
\[
\inflow{\alpha} = \sum\limits_{(\beta,\alpha) \in \hat{E}} y_{(\beta,\alpha)} \qquad \text{and} \qquad \outflow{\alpha} = \sum\limits_{(\alpha,\beta) \in \hat{E}} y_{(\alpha,\beta)},
\]
referring the total incoming and outgoing flow of node $\alpha$, respectively.
Moreover, for an edge $(\alpha,\beta) \in \hat{E}$, let 
\[
\mathcal{C}_{(\alpha,\beta)} = \sum\limits_{(\alpha,T) \in E \text{ with } \beta \in T} x_{(\alpha,T)},
\]
which counts, with multiplicities, the hyperedges of $E^*$ that have $\alpha$ as the source and mention $\beta$ in their target; hence, $\mathcal{C}_{(\alpha,\beta)}$ is greater than zero precisely when $(\alpha,\beta)$ is an edge of the directed graph induced by the hyperedges of $E^*$.
%Let us stress that using $\mathcal{C}_{(\alpha,\beta)}$, instead of simply $z_\alpha$, is crucial: the flow must be allowed to travel only along the edges induced by the hyperedges that actually occur in $E^*$, since what we need to enforce is reachability in $\hypergraphof{E^*}$, and not merely reachability in $\downc{R(\bar t)}{D}{\dep}$ restricted to the nodes of $\hypergraphof{E^*}$.
%
The set $T_\mathsf{reach}$ is defined as\footnote{Note that, for the sake of clarity, the constraints do not strictly follow the syntax defined in Section~\ref{sec:ILP} for linear constraints, since we allow linear expressions on both sides of the (in)equalities. However, they can be easily converted into constraints of the form $a_1 x_1 + \cdots + a_n x_n \diamond a_0$, in the obvious way.}:
\begin{eqnarray*}
	&& \left\{ y_{(\alpha,\beta)} \le (|V| - 1) \cdot \mathcal{C}_{(\alpha,\beta)} \mid (\alpha,\beta) \in \hat{E} \right\}\\
	&\cup& \left\{ \outflow{R(\bar t)} - \inflow{R(\bar t)} = \sum\limits_{\alpha \in V} z_\alpha - 1 \right\}\\
	&\cup& \left\{ \outflow{\alpha} = \inflow{\alpha} - z_\alpha \mid \alpha \in V \setminus \{R(\bar t)\} \right\}.
\end{eqnarray*}
This completes the definition of $\ilphypergraph{\bar t}{D}{Q}$. We proceed to formally show that it is an ILP encoding of why-provenance:
%We now formally show that is complete, and we can prove our crucial result:

\begin{theorem}\label{thm:hyper-correct}
	Consider a Datalog query $Q = (\dep,R)$, a database $D = \{\alpha_1,\ldots,\alpha_n\}$ over $\esch{\dep}$, and a tuple $\bar t \in \adom{D}^{\arity{R}}$. The system of ILP constraints $\ilphypergraph{\bar t}{D}{Q}$ is an ILP encoding of $\allwhy{\bar t}{D}{Q}$ with database variables $z_{\alpha_1},\ldots,z_{\alpha_n}$.
\end{theorem}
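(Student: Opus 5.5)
By Theorem~\ref{thm:arbitrary-hypergraph}, it suffices to show two things for every database $D'$. First, every solution $\mu$ of $\ilphypergraph{\bar t}{D}{Q}$ yields a multiset $E_\mu$ that is a description of a proof tree of $R(\bar t)$ with support $\db{\mu}$. Second, every description $E^*$ of a proof tree with support $D'$ yields a solution $\mu$ with $\db{\mu}=D'$. Throughout, $E_\mu$ is the multiset whose underlying set is $\{(\alpha,T)\in E \mid \mu(x_{(\alpha,T)})>0\}$, with multiplicity $\mu(x_{(\alpha,T)})$; this is well defined by $T_{\mathsf{pos}}$. Note that $\mu(\mathcal{I}_\alpha)=\indegree{\alpha}{E_\mu}$ and $\mu(\mathcal{O}_\alpha)=\outdegree{\alpha}{E_\mu}$.

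\emph{Soundness (solution $\Rightarrow$ description).} Properties (1)--(3) of Definition~\ref{def:hypergraph-properties} are read off directly from $T_{\mathsf{degree}}$. For (1), $\mathcal{O}_{R(\bar t)}\ge 1$ ensures $R(\bar t)\in V_{E_\mu}$. By $T_{\mathsf{bind}}$, $\mu(z_\alpha)=1$ iff $\alpha\in V_{E_\mu}$, and $z_\alpha=0$ for $\alpha\in D\setminus V$. Hence $\db{\mu}=V_{E_\mu}\cap D$, which is (5).

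The main work is (4), reachability, which I would prove by a flow argument. Suppose some set $U$ of nodes of $\hypergraphof{E_\mu}$ is not reached from $R(\bar t)$, and let $N$ be the reached ones. Sum the conservation constraints $\outflow{\alpha}=\inflow{\alpha}-z_\alpha$ over $\alpha\in U$; every $\alpha\in U$ has $z_\alpha=1$ and $R(\bar t)\notin U$. This gives (flow out of $U$) $-$ (flow into $U$) $=-|U|<0$, so some edge $(\beta,\gamma)$ with $\beta\notin U$, $\gamma\in U$ carries positive flow.

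This edge yields a contradiction in both cases. If $\beta\in V_{E_\mu}$, then $\beta\in N$. By the capacity constraint, positive flow forces $\mathcal{C}_{(\beta,\gamma)}>0$, i.e.\ a hyperedge of $E_\mu$ from $\beta$ mentions $\gamma$, so $\gamma$ is reached, contradiction. If $\beta\notin V_{E_\mu}$, then positive flow on its edge again forces $\mathcal{C}_{(\beta,\gamma)}>0$, putting $\beta$ in $V_{E_\mu}$, contradiction.

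\emph{Completeness (description $\Rightarrow$ solution).} Given $E^*$, set $x_{(\alpha,T)}$ to the multiplicities, $z_\alpha=1$ exactly on $V_{E^*}$, and all other $z$ to $0$; then $T_{\mathsf{degree}}$, $T_{\mathsf{bind}}$ and $T_{\mathsf{pos}}$ hold.

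For the flow, take a spanning arborescence of the reachability graph induced by $E^*$, rooted at $R(\bar t)$ (possible by (4)). Give each tree edge into $\gamma$ the size of the subtree of $\gamma$, and every other edge flow $0$. Each non-root node then consumes exactly $1$, and the root emits $|V_{E^*}|-1=\sum_\alpha z_\alpha-1$. Each tree-edge flow is at most $|V|-1$, while the corresponding $\mathcal{C}\ge 1$, so the capacity constraints hold. Nodes outside $V_{E^*}$ have zero flow and $z=0$, so their conservation constraints hold too.

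I expect the reachability step in the soundness direction to be the main obstacle. One must carefully handle nodes outside $V_{E_\mu}$ and cycles of flow, and the cut-summation argument is designed to deal with both.
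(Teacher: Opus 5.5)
Your proposal is correct and follows the paper's proof: completeness uses the same spanning arborescence rooted at $R(\bar t)$ with subtree-size flows, and soundness uses the same cut argument, summing the conservation constraints of $T_{\mathsf{reach}}$ over the unreached nodes. The only difference is cosmetic. The paper takes $U$ to be the nodes of $V$ unreachable in the positive-flow graph, deduces $z_\alpha=0$ on $U$, and then uses the capacity constraints to place positive-flow edges in $\hat{E}^*$. You instead take $U$ inside $V_{E_\mu}$ and extract a positive-flow edge entering $U$, which the capacity constraint turns into a contradiction.
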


By Theorem~\ref{thm:hyper-correct} and Corollary~\ref{cor:generic-algorithm}, we obtain that:

\begin{corollary}\label{cor:hyper-all-algorithm}
$\ComputeDiverse{\ilphypergraphfunc}$ solves the problem $\mathsf{ALL}\text{-}\mathsf{DiverseWhyProvenance}$.
\end{corollary}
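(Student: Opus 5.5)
The statement is Corollary~\ref{cor:hyper-all-algorithm}, so the plan is to instantiate Corollary~\ref{cor:generic-algorithm} with $\star = \mathsf{ALL}$ and $\ConstructILPEncoding := \ilphypergraphfunc$. I will treat Theorem~\ref{thm:hyper-correct} as given. The only real work is to check that $\ilphypergraphfunc$ satisfies the hypothesis of Corollary~\ref{cor:generic-algorithm}: on every input triple $(\bar t,D,Q)$ it must construct an ILP encoding of $\allwhy{\bar t}{D}{Q}$ in the sense of Definition~\ref{def:ilp-encoding-why}.

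\textbf{Case $\bar t \in Q(D)$.} Here $R(\bar t)$ occurs in $\gri{D}{\dep}$, so $\downc{R(\bar t)}{D}{\dep}$ is well defined. The procedure builds $\ilphypergraph{\bar t}{D}{Q}$ from this hypergraph. By Theorem~\ref{thm:hyper-correct}, the result is an ILP encoding of $\allwhy{\bar t}{D}{Q}$ whose database variables are $z_{\alpha_1},\ldots,z_{\alpha_n}$.

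Two small points need stating.
\begin{enumerate}
\item Every $\alpha_i \in D$ has its own variable $z_{\alpha_i}$. This holds because $B$ contains $z_\alpha$ for all $\alpha \in V \cup D$, and facts in $D \setminus V$ are pinned to $0$ by $T_{\mathsf{bind}}$.
\item Algorithm~\ref{alg:compute-diverse} reads the induced databases off these $z$-variables, as Definition~\ref{def:ilp-encoding-why} requires.
\end{enumerate}

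\textbf{Case $\bar t \notin Q(D)$.} The hypergraph encoding is not defined here. The procedure $\ilphypergraphfunc$ is understood, per the convention fixed earlier in Section~\ref{sec:characterizations}, to return the unfeasible system $\{v \ge 1, v \le 0\}$ over a database variable $v$. This system is vacuously an ILP encoding, because $\allwhy{\bar t}{D}{Q} = \emptyset$ and the system has no solutions. Consequently every $(T,k)$-problem built from it is infeasible, and the algorithm returns \textbf{fail}. That is the correct output, since $\allwhyk{k}{\bar t}{D}{Q}$ is then empty.

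With both cases covered, the hypothesis of Corollary~\ref{cor:generic-algorithm} holds on all inputs. Its conclusion, which ultimately rests on Theorem~\ref{thm:generic-encoding}, yields the claim: whenever an optimal solution of the $(T,k)$-problem exists, the $k$ databases extracted from it form a member of $\argmax_{\mathcal{S} \in \allwhyk{k}{\bar t}{D}{Q}} \diversity{\mathcal{S}}$, and otherwise the output is \textbf{fail}.

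There is no real obstacle. The only point needing care is the degenerate case $\bar t \notin Q(D)$, where one must explicitly fall back on the trivially unfeasible encoding so that the procedure is total. Effective constructibility of $\downc{R(\bar t)}{D}{\dep}$ and of the constraint system is routine: it is a finite fixpoint computation over $\base{D,\dep}$.
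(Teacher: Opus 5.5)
Your proposal is correct and follows the paper's own route: the corollary is obtained by instantiating Corollary~\ref{cor:generic-algorithm} with $\star = \mathsf{ALL}$ and the encoding certified by Theorem~\ref{thm:hyper-correct}. Your separate treatment of the case $\bar t \notin Q(D)$, using the trivially unfeasible system, simply makes explicit the convention the paper fixes at the start of Section~\ref{sec:characterizations}.
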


\medskip
\noindent
\underline{\textbf{Unambiguous Hypergraph-based ILP Encoding}}

\smallskip
\noindent In what follows, fix a Datalog query $Q=(\dep,R)$, a database $D$ over $\esch{\dep}$, a tuple $\bar t \in \adom{D}^{\arity{R}}$, and assume $\downc{R(\bar t)}{D}{\dep} = (V,E)$.
We proceed to show how to construct the ILP encoding of $\unwhy{\bar t}{D}{Q}$. Recall that the unambiguous hypergraph-based characterization only requires one additional property compared to the hypergraph-based characterization. Hence, we can easily define the ILP encoding of $\unwhy{\bar t}{D}{Q}$, dubbed $\ilpunhypergraph{\bar t}{D}{Q}$, by simply adding the required constraints to $\ilphypergraph{\bar t}{D}{Q}$.
Let $X$ and $B$ be the sets of integer and binary variables, respectively, of $\ilphypergraph{\bar t}{D}{Q}$. 
Now, the variables of 
%We obtain 
$\ilpunhypergraph{\bar t}{D}{Q}$ are $X \cup B \cup B'$, where $B'$ is the set of binary variables
%by extending $\ilphypergraph{\bar t}{D}{Q}$ with an additional set $B'$ 
%of binary variables of the form
\[
\{\hat{x}_{(\alpha,T)} \mid (\alpha,T) \in E\}.
\]
Intuitively, in a solution of $\ilpunhypergraph{\bar t}{D}{Q}$, which induces a multiset $E^*$ of hyperedges of $\downc{R(\bar t)}{D}{\dep}$ that is a description of an unambiguous proof tree of $R(\bar t)$ w.r.t.\ $D$ and $\dep$ with some support $D'$, each variable $\hat{x}_{(\alpha,T)}$ is true iff the hyperedge $(\alpha,T)$ occurs in $E^*$. 
%Hence, $\ilpunhypergraph{\bar t}{D}{Q}$ is over the integer variables $X$ and the binary variables $B \cup B'$. 
Finally, $\ilpunhypergraph{\bar t}{D}{Q}$ is defined as
%the set of constraints of the form
\[
\ilphypergraph{\bar t}{D}{Q} \cup T_{\mathsf{UN}},
\]
where $T_{\mathsf{UN}}$ is the set of constraints enforcing the additional required condition, that is, no two hyperedges $(\alpha,T),(\alpha,T')$ appear in $E^*$ with $T \neq T'$. This is achieved by defining $T_{\mathsf{UN}}$ as follows:
\[
	\left\{ \hat{x}_{(\alpha,T)} \equiv x_{(\alpha,T)} \ge 1 \mid (\alpha,T) \in E \right\} \cup 
	\left\{ \sum\limits_{(\alpha,T) \in E} \hat{x}_{(\alpha,T)} \le 1 \mid \alpha \in V \right\}.
\]
%\begin{multline*}
%	\left\{ \hat{x}_{(\alpha,T)} \equiv x_{(\alpha,T)} \ge 1 \mid (\alpha,T) \in E \right\} \cup \\ \left\{ \sum\limits_{(\alpha,T) \in E} \hat{x}_{(\alpha,T)} \le 1 \mid \alpha \in V \right\}.
%\end{multline*}

We can now formally show that $\ilpunhypergraph{\bar t}{D}{Q}$ is indeed an ILP encoding of why-provenance:
%The crucial result follows:
\begin{theorem}\label{thm:un-hyper-correct}
	Consider a Datalog query $Q = (\dep,R)$, a database $D = \{\alpha_1,\ldots,\alpha_n\}$ over $\esch{\dep}$, and a tuple $\bar t \in \adom{D}^{\arity{R}}$. The system of ILP constraints $\ilpunhypergraph{\bar t}{D}{Q}$ is an ILP encoding of $\unwhy{\bar t}{D}{Q}$ with database variables $z_{\alpha_1},\ldots,z_{\alpha_n}$.
\end{theorem}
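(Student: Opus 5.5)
The plan is to reduce to Theorem~\ref{thm:hyper-correct} together with the unambiguous version of Theorem~\ref{thm:arbitrary-hypergraph}. We must show, for every database $D'$ over $\esch{\dep}$, that $D' \in \unwhy{\bar t}{D}{Q}$ iff $D' = \{\alpha_i \mid \mu(z_{\alpha_i}) = 1\}$ for some $\mu \in \sol{\ilpunhypergraph{\bar t}{D}{Q}}$.

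The key intermediate claim is a correspondence between descriptions and solutions. It is extracted from the proof of Theorem~\ref{thm:hyper-correct} rather than from its statement, since the statement only speaks about induced databases.

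\textbf{Claim.} For every multiset $E^*$ of hyperedges of $\downc{R(\bar t)}{D}{\dep}$, the following are equivalent:
\begin{enumerate}
\item $E^*$ is a description of a proof tree of $R(\bar t)$ with support $D'$;
\item there is a solution $\mu$ of $\ilphypergraph{\bar t}{D}{Q}$ with $\mu(x_{(\alpha,T)})$ equal to the multiplicity of $(\alpha,T)$ in $E^*$ and $\db{\mu} = D'$.
\end{enumerate}
If that proof is not stated at this granularity, I would reprove it directly.
\begin{itemize}
\item $T_{\mathsf{degree}}$ matches conditions (1)--(3) verbatim.
\item $T_{\mathsf{bind}}$ forces $z_\alpha = 1$ exactly on $\nodes{E^*}$, and $z_\alpha = 0$ on $D \setminus V$; this gives (5).
\item $T_{\mathsf{reach}}$ is equivalent to (4). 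For soundness, every node with $z_\alpha = 1$ other than $R(\bar t)$ has positive inflow. Since flow only travels on edges with $\mathcal{C}_{(\alpha,\beta)} > 0$, tracing flow backwards gives a path from $R(\bar t)$; the standard argument is that a set of nodes not reachable from $R(\bar t)$ receives no flow from outside yet has net consumption $>0$, a contradiction. For completeness, route one unit from $R(\bar t)$ to each node of $\nodes{E^*}$ along a path in the edge graph of $E^*$. Each edge then carries at most $|\nodes{E^*}| - 1 \le |V| - 1 \le (|V|-1)\cdot\mathcal{C}_{(\alpha,\beta)}$, because $\mathcal{C}_{(\alpha,\beta)} \ge 1$ on used edges.
\end{itemize}

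For the direction from $D' \in \unwhy{\bar t}{D}{Q}$ to a solution, apply Theorem~\ref{thm:arbitrary-hypergraph} (unambiguous version) to obtain an unambiguous description $E^*$ with support $D'$. The Claim gives a solution $\mu$ of $\ilphypergraph{\bar t}{D}{Q}$ with $\db{\mu} = D'$ and $x$-values equal to the multiplicities. Extend $\mu$ by setting $\hat{x}_{(\alpha,T)} = 1$ iff $\mu(x_{(\alpha,T)}) \ge 1$. This satisfies the indicator constraints of $T_{\mathsf{UN}}$. The cardinality constraints $\sum_{(\alpha,T)} \hat{x}_{(\alpha,T)} \le 1$ hold because unambiguity says that all hyperedges of $E^*$ with source $\alpha$ share the same target, so at most one distinct hyperedge per source has positive multiplicity.

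For the converse, take a solution $\mu$ of $\ilpunhypergraph{\bar t}{D}{Q}$. It is in particular a solution of $\ilphypergraph{\bar t}{D}{Q}$, so the Claim yields a description $E^*$, the multiset of hyperedges with multiplicities $\mu(x_{(\alpha,T)})$, with support $\db{\mu}$. The constraints of $T_{\mathsf{UN}}$ ensure that for each source $\alpha$ at most one hyperedge $(\alpha,T)$ has positive multiplicity, so $E^*$ is an unambiguous description. Theorem~\ref{thm:arbitrary-hypergraph} then gives $\db{\mu} \in \unwhy{\bar t}{D}{Q}$.

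The main obstacle is the Claim, specifically making the flow-reachability equivalence precise: the cut argument for soundness and the capacity bound for completeness. Everything involving $T_{\mathsf{UN}}$ is bookkeeping on top of it.
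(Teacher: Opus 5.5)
Your proposal is correct and follows the paper's proof essentially step for step. Like the paper, you rely on the \emph{proof} rather than the statement of Theorem~\ref{thm:hyper-correct}: the solution constructed there induces exactly the multiset $E^*$, and any solution induces a description. You then observe, as the paper does, that for assignments satisfying $T_{\mathsf{pos}}$, the constraints $T_{\mathsf{UN}}$ hold iff the $\hat{x}$-variables flag exactly the hyperedges of $E^*$ and no source has two distinct targets in $E^*$. Your direct re-proof of the flow part routes one unit along a simple path to each node other than $R(\bar t)$, where the paper assigns subtree sizes of a spanning tree; this is an equally valid variant of the same argument.
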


By Theorem~\ref{thm:un-hyper-correct} and Corollary~\ref{cor:generic-algorithm}, we obtain that:

\begin{corollary}\label{cor:hyper-un-algorithm}
$\ComputeDiverse{\ilpunhypergraphfunc}$ solves the problem $\mathsf{UN}\text{-}\mathsf{DiverseWhyProvenance}$.
\end{corollary}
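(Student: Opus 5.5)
The statement is Corollary~\ref{cor:hyper-un-algorithm}: $\ComputeDiverse{\ilpunhypergraphfunc}$ solves $\mathsf{UN}\text{-}\mathsf{DiverseWhyProvenance}$. The plan is to instantiate Corollary~\ref{cor:generic-algorithm} with $\star = \mathsf{UN}$ and $\ConstructILPEncoding := \ilpunhypergraphfunc$. The only hypothesis to discharge is that, on every input triple $(\bar t,D,Q)$, the procedure $\ilpunhypergraphfunc$ returns an ILP encoding of $\unwhy{\bar t}{D}{Q}$ in the sense of Definition~\ref{def:ilp-encoding-why}. It must also expose the database variables that Algorithm~\ref{alg:compute-diverse} reads off in its loop.

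First, I fix an input $Q = (\dep,R)$, $D = \{\alpha_1,\ldots,\alpha_n\}$ over $\esch{\dep}$, $\bar t \in \adom{D}^{\arity{R}}$, and $k > 0$. I then split on whether $R(\bar t)$ occurs in $\gri{D}{\dep}$.

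If it occurs, then $\downc{R(\bar t)}{D}{\dep}$ is well defined, and $\ilpunhypergraph{\bar t}{D}{Q}$ is built as described. By Theorem~\ref{thm:un-hyper-correct}, it is an ILP encoding of $\unwhy{\bar t}{D}{Q}$ with database variables $z_{\alpha_1},\ldots,z_{\alpha_n}$. Algorithm~\ref{alg:compute-diverse} uses exactly these variables, via their copies $\copyof{z_{\alpha_\ell}}{i}$, both when forming the $(T,k)$-problem of Definition~\ref{def:ilp-maximization-problem} and when extracting $D_1,\ldots,D_k$.

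If it does not occur, I take $\ilpunhypergraphfunc$ to follow the convention stated at the start of Section~\ref{sec:characterizations}. Under that convention it returns the unfeasible system $\{v \ge 1, v \le 0\}$ for a database variable $v$. This is trivially an ILP encoding of $\unwhy{\bar t}{D}{Q} = \emptyset$.

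In both cases the hypothesis of Corollary~\ref{cor:generic-algorithm} holds, and the claim follows. Unfolding this slightly, Theorem~\ref{thm:generic-encoding} gives two things. When $\unwhyk{k}{\bar t}{D}{Q}$ is nonempty, the optimal solutions of the $(T,k)$-problem induce exactly the maximally diverse $k$-sets, so the returned set is a valid output. Otherwise the $(T,k)$-problem has no optimal solution and the algorithm returns \textbf{fail}.

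For this second part one extra point needs to be said. The $(T,k)$-problem has finitely many distinct induced database tuples, and its objective depends only on the $\auxvar$ variables, which are bounded. So if it is feasible, an optimal solution exists, even though the integer variables $x_{(\alpha,T)}$ and $y_{(\alpha,\beta)}$ are unbounded. Hence "no optimal solution exists" coincides with infeasibility, which by the encoding property coincides with $\unwhyk{k}{\bar t}{D}{Q} = \emptyset$.

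I expect no real obstacle, since all the substance lies in Theorems~\ref{thm:generic-encoding} and~\ref{thm:un-hyper-correct}. The only care needed is in the degenerate case $\bar t \notin Q(D)$, where $\ilpunhypergraphfunc$ has to be read as including the convention of returning the trivially unfeasible encoding.
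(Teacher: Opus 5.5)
Your proof is correct and matches the paper's argument: the corollary follows directly from Theorem~\ref{thm:un-hyper-correct} together with Corollary~\ref{cor:generic-algorithm}. Your extra remarks are sound. These are the degenerate case $\bar t \notin Q(D)$, handled via the trivially unfeasible encoding, and the existence of an optimal solution whenever the $(T,k)$-problem is feasible. The latter, however, already follows from the two directions of Theorem~\ref{thm:generic-encoding}.
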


\subsection{Graph-based ILP Encodings}\label{sec:proofdag-based}

We now recall the characterizations of when there exists an (unambiguous) proof tree of some fact $R(\bar t)$ w.r.t.~some database $D$ and Datalog program $\dep$ with some support $D'$ that rely on the existence of a directed acyclic graph (DAG) with certain properties, called a proof DAG of $R(\bar t)$ w.r.t.~$D$ and $\dep$, and then use them to devise the graph-based ILP encodings of why-provenance.

\medskip
\noindent
\underline{\textbf{Graph-based Characterizations}}

\smallskip
\noindent 
%In this last section we recall the other characterizations for the existence of an (unambiguous) proof tree of some fact $R(\bar t)$ w.r.t.\ some database $D$ and Datalog program $\dep$ with some support $D'$. These characterizations rely on the existence of a directed acyclic graph (DAG) with certain properties, called a proof DAG of $R(\bar t)$ w.r.t.\ $D$ and $\dep$. Roughly, a proof DAG is a compact way of encoding a proof tree. 
%
The characterizations of interest state that every (unambiguous) proof tree admits a corresponding (unambiguous) proof DAG $G$, where for each fact $\alpha$, only a ``small'' number of nodes of $G$ are labeled with $\alpha$. In fact, in the case of arbitrary proof DAGs, this number is bounded by a certain polynomial that depends on some parameters of the downward closure of $\alpha$ w.r.t.~$D$ and $\dep$, whereas in the case of unambiguous proof DAGs, this number is $1$.
%the number of nodes of the proof DAG labeled with $\alpha$ is bounded by a certain polynomial, and this polynomial depends on the family of proof trees we consider and on some parameters of the downward closure of $\alpha$ w.r.t.~$D$ and $\dep$.
%
We proceed to recall (unambiguous) proof DAGs and then present the actual characterizations. We recall that a DAG $G$ is \emph{rooted} if it has exactly one node, the \emph{root}, with no incoming edges. A node of $G$ is a \emph{leaf} if it has no outgoing edges.

\begin{definition}[\textbf{(Unambiguous) Proof DAG}]\label{def:proof-dag}
	Consider a Datalog program $\dep$, a database $D$ over $\esch{\dep}$, and a fact $\alpha$ over $\sch{\dep}$. A \emph{proof DAG} of $\alpha$ w.r.t.\ $D$ and $\dep$ is a finite labeled rooted DAG $G = (V,E,\lambda)$, with $\lambda : V \rightarrow \base{D,\dep}$, such that:
	\begin{enumerate}
		\item If $v \in V$ is the root, then $\lambda(v) = \alpha$.
		
		\item If $v \in V$ is a leaf, then $\lambda(v) \in D$.
		
		\item If $v \in V$ has $n\ge1$ outgoing edges $(v,u_1),\ldots,(v,u_n)$, then there exists a rule $R_0(\bar x_0)\ \assign\ R_1(\bar x_1),\ldots,R_n(\bar x_n) \in \dep$ and a function $h : \bigcup_{i \in [n]} \bar x_i \ra \ins{C}$ such that $\lambda(v) = R_0(h(\bar x_0))$, and $\lambda(u_i) = R_i(h(\bar x_i))$ for each $i \in [n]$. 
	\end{enumerate}
    We say that $G$ is an \emph{unambiguous proof DAG of $\alpha$ w.r.t.~$D$ and $\dep$} if, in addition to the above conditions, for every two nodes $u,v \in V$, it holds that $\lambda(u) = \lambda(v)$ implies $u=v$. \hfill\markfull
\end{definition}

The difference between a proof tree and a proof DAG is that a proof DAG might reuse nodes to compactly represent a proof tree. In fact, every proof tree is a proof DAG where each node has at most one parent. As for proof trees, given a proof DAG $G$ (of some fact w.r.t. some database and Datalog program), its support, denoted $\support{G}$, is the set of facts that label the leaves of $G$.

We can now recall the desired graph-based characterizations. In what follows, with $\downc{R(\bar t)}{D}{\dep} = (V,E)$, for some  Datalog program $\dep$, database $D$ over $\esch{\dep}$, and fact $R(\bar t)$, let 
\begin{eqnarray*}
\boundall{R(\bar t)}{D}{\dep} &=& ( |V| \cdot d + 1) \cdot d \cdot b,\\
\boundun{R(\bar t)}{D}{\dep} &=& 1,
\end{eqnarray*}
where $d = |V \cap D|$, i.e., $d$ is the number of nodes of $\downc{R(\bar t)}{D}{\dep}$ that correspond to database facts, and $b = \max\{|T| \mid (\alpha,T) \in E\}$ is the maximum ``branching'' of a hyperedge of $\downc{R(\bar t)}{D}{\dep}$.

%the downward closure.
%Finally, we let $\boundun{R(\bar t)}{D}{\dep} = 1$.

\begin{theorem}[\cite{CLPS24,CLPS24aaai}]\label{thm:arbitrary-proof-dag}
	Consider a Datalog query $Q=(\dep,R)$, a database $D$ over $\esch{\dep}$, a tuple $\bar t \in \adom{D}^{\arity{R}}$, and a database $D' \subseteq D$. The following are equivalent:
	\begin{enumerate}
		\item There is a proof tree (resp., an unambiguous proof tree) $T$ of $R(\bar t)$ w.r.t.\ $D$ and $\dep$ with support $D'$.
		\item There is a proof DAG (resp., an unambiguous proof DAG) $G = (V,E,\lambda)$ of $R(\bar t)$ w.r.t.\ $D$ and $\dep$ with support $D'$ such that the following hold:
		\begin{enumerate}
			\item For each $v \in V$, $\lambda(v)$ is a node of $\downc{R(\bar t)}{D}{\dep}$.
			\item For each $v \in V$ with outgoing edges $(v,u_1),\ldots,(v,u_n)$,
			$(\lambda(v),\{\lambda(u_1),\ldots,\lambda(u_n)\})$ belongs to $\downc{R(\bar t)}{D}{\dep}$.
%			$(\lambda(v),\multiset{\lambda(u_1),\ldots,\lambda(u_n)})$ belongs to $\downc{R(\bar t)}{D}{\dep}$.
			\item For each fact $\alpha$, it holds that $|\{v \in V \mid \lambda(v) = \alpha\}| \le \boundall{R(\bar t)}{D}{\dep}$ (resp., $\boundun{R(\bar t)}{D}{\dep}$).
		\end{enumerate} 
	\end{enumerate}
\end{theorem}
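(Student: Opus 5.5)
The equivalence splits into two directions, and I would treat the unambiguous case alongside the arbitrary one. Direction (2)$\Rightarrow$(1) works by unfolding. Given a proof DAG $G=(V,E,\lambda)$ satisfying (a)--(c), let $T_G$ be its tree unfolding: the nodes are the directed paths of $G$ starting at the root, each path is labeled by $\lambda$ of its last node, and each path has as children its one-edge extensions. Since $G$ is finite and acyclic, $T_G$ is finite. Condition (3) of Definition~\ref{def:proof-dag} transfers verbatim to condition (3) of Definition~\ref{def:proof-tree}, and leaves of $T_G$ correspond to leaves of $G$. In a finite DAG with a unique source, every node is reachable from the root, so every leaf of $G$ appears in $T_G$ and $\support{T_G}=\support{G}=D'$. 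In the unambiguous case, $\lambda$ is injective. Hence two nodes of $T_G$ with equal labels end at the same DAG node, their subtrees are both the unfolding of that node, and they are isomorphic; so $T_G$ is unambiguous.

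For (1)$\Rightarrow$(2) in the unambiguous case, start from an unambiguous proof tree $T$ and quotient it by labels: one node per fact occurring in $T$, with an edge from $\alpha$ to $\beta$ whenever some node labeled $\alpha$ has a child labeled $\beta$. Unambiguity makes the children consistent across all copies of $\alpha$, so the rule condition is inherited. Acyclicity also follows from unambiguity. If $\alpha$ occurred strictly below a node labeled $\alpha$, the two subtrees would be isomorphic while one is a proper subtree of the other, which is impossible in a finite tree. Condition (a) holds because every label in $T$ is the head of a rule instance whose body facts are in $\gri{D}{\dep}$ and is reachable from $R(\bar t)$. Condition (b) holds because each node with its children is exactly a hyperedge of the GRI, and the self-join-freeness assumption makes the target set match the children. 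Condition (c) with bound $1$ holds by construction, and the support is unchanged.

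The arbitrary case of (1)$\Rightarrow$(2) is where the real work lies, and I expect the polynomial bound to be the main obstacle. The plan has three steps.
\begin{enumerate}
\item For each fact $\beta\in D'$, fix in $T$ a root-to-leaf path ending at a leaf labeled $\beta$. Shorten these at most $d$ \emph{marked paths} so that no label repeats on any of them. This is done by repeatedly replacing the subtree at an upper occurrence of a repeated label with the subtree at a lower occurrence. The key delicacy is that such a replacement must not delete the endpoint of another marked path. I would handle this by processing the paths in a fixed order and re-choosing witnesses after each cut, using the fact that the lower subtree still carries the target leaf. After this step each marked path has length at most $|V|$, so there are at most $|V|\cdot d$ marked nodes.
\item Every off-path child of a marked node is replaced by a pointer into one shared unambiguous DAG $G_{D'}$. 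This DAG derives every fact derivable from $D'$, choosing for each such fact a single rule instance of minimal fixpoint stage, so it contains each label once and has support contained in $D'$. Any fact labeling such an off-path child is derivable from $D'$, since its original subtree had support in $D'$.
\item Count the copies of each fact $\alpha$. There are at most $|V|\cdot d$ marked occurrences, plus one occurrence in $G_{D'}$. Each of these can be multiplied by the branching $b$ and the $d$ witness choices when off-path children are attached. This gives a bound of the shape $(|V|\cdot d+1)\cdot d\cdot b$, namely $\boundall{R(\bar t)}{D}{\dep}$.
\end{enumerate}
The support of the resulting DAG is exactly $D'$: the marked paths contribute every fact of $D'$, and $G_{D'}$ contributes nothing outside $D'$. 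Conditions (a) and (b) hold as in the unambiguous case.
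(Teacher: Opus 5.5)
\paragraph{Overall assessment.} Your (2)$\Rightarrow$(1) unfolding is sound, as is the label quotient for the unambiguous case of (1)$\Rightarrow$(2) and the shared minimal-stage DAG $G_{D'}$ of step~2. The paper itself gives no proof of this theorem; it imports it from the cited works, so your argument is judged on its own.

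\paragraph{The gap: step~1.} The gap is in step~1 of the arbitrary case. The marked paths cannot in general be made free of repeated labels, so the claimed length bound $|V|$ is false. Take the rules $A(x) \assign B(x),E_1(x)$, $B(x) \assign A(x)$, $A(x) \assign E_2(x)$, with $D = D' = \{E_1(c),E_2(c)\}$ and $\bar t = (c)$. Every proof tree of $A(c)$ with support $D'$ is a chain of $A(c)$-nodes. All but the last use the first rule, and the last uses the third. Hence the unique leaf labeled $E_2(c)$ lies below at least two nodes labeled $A(c)$.

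This is exactly the delicacy you flagged, and re-choosing witnesses does not resolve it. Replacing the subtree at the upper occurrence by the one at the lower occurrence deletes every marked path that branches off strictly between them. In this example no other leaf carries the lost label, so there is nothing to re-choose.

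\paragraph{How to repair it.} Shortcut only when no marked path branches off between the two occurrences, so that both subtrees contain the same marked leaves. Afterwards, labels are pairwise distinct along each branch-free segment of the skeleton (the union of the marked paths). This gives the following bounds:
\begin{itemize}
\item There are at most $2d-1$ segments.
\item A marked path passes through at most $d-1$ branching nodes, hence through at most $d$ segments.
\item So each marked path has length at most $d\cdot|V|$, which matches the factor $|V|\cdot d$ in $\boundall{R(\bar t)}{D}{\dep}$.
\end{itemize}

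\paragraph{Step~3 needs an honest count.} Step~2 attaches off-path children as pointers into a single shared $G_{D'}$, so nothing is duplicated. The extra factor $d\cdot b$ therefore corresponds to nothing in your construction. The correct count is: each fact has at most one occurrence per segment plus one in $G_{D'}$, i.e.\ at most $2d$ occurrences. Since $d,b \ge 1$ whenever a proof tree exists, $2d \le \boundall{R(\bar t)}{D}{\dep}$. With these two changes the argument goes through.
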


\begin{comment}
We call the above result the \emph{proof DAG-based characterization}.
The characterization for unambiguous proof trees, which we dub \emph{unambiguous proof DAG-based characterization}, is similar. The only difference is that the number of nodes in the proof DAG labeled with a certain fact $\alpha$ must be one. That is:

\begin{theorem}[\cite{CLPS24,CLPS24aaai}]\label{thm:unambiguous-proof-dag}
	Consider a Datalog query $Q=(\dep,R)$, a database $D$ over $\esch{\dep}$, a tuple $\bar t \in \adom{D}^{\arity{R}}$, and a database $D' \subseteq D$. The following are equivalent:
	\begin{enumerate}
		\item There is an unambiguous proof tree $T$ of $R(\bar t)$ w.r.t.\ $D$ and $\dep$ with support $D'$.
		\item There is a proof DAG $G = (V,E,\lambda)$ of $R(\bar t)$ w.r.t.\ $D$ and $\dep$ with support $D'$ such that the following hold:
		\begin{enumerate}
			\item For each $v \in V$, $\lambda(v)$ is a node of $\downc{R(\bar t)}{D}{\dep}$.
			\item For each $v \in V$ with outgoing edges $(v,u_1),\ldots,(v,u_n)$,
			$(\lambda(v),\{\lambda(u_1),\ldots,\lambda(u_n)\})$ belongs to $\downc{R(\bar t)}{D}{\dep}$.
%			$(\lambda(v),\multiset{\lambda(u_1),\ldots,\lambda(u_n)})$ belongs to $\downc{R(\bar t)}{D}{\dep}$.
			\item For each fact $\alpha$, it holds that $|\{v \in V \mid \lambda(v) = \alpha\}| \le 1$.\hfill\markfull
		\end{enumerate} 
	\end{enumerate}
\end{theorem}
\end{comment}

Let us now discuss how Theorem~\ref{thm:arbitrary-proof-dag} can be used in order to devise an ILP encoding of why-provenance. 
%, given a Datalog query $Q = (\dep,R)$, a database $D$ over $\esch{\dep}$, and a tuple $\bar t \in \adom{D}^{\arity{R}}$, 
%how to use  Theorem~\ref{thm:arbitrary-proof-dag} to devise an ILP encoding of why-provenance.
%$\starwhy{\bar t}{D}{Q}$, where $\star \in \{\mathsf{ALL},\mathsf{UN}\}$.

%As we did for the hypergraph-based characterizations, we now show how to use the above proof DAG-based characterizations to devise ILP encodings for $\allwhy{\bar t}{D}{Q}$ and $\unwhy{\bar t}{D}{Q}$, for a given Datalog query $Q = (\dep,R)$, database $D$ over $\esch{\dep}$, and tuple $\bar t \in \adom{D}^{\arity{R}}$.

\medskip
\noindent
\underline{\textbf{Graph-based ILP Encoding}}

\smallskip
\noindent Fix a Datalog query $Q=(\dep,R)$, a database $D$ over $\esch{\dep}$, a tuple $\bar t \in \adom{D}^{\arity{R}}$, and let $\downc{R(\bar t)}{D}{\dep} = (V,E)$.
We start with the ILP encoding for $\allwhy{\bar t}{D}{Q}$, which we dub the \emph{graph-based ILP encoding of $\allwhy{\bar t}{D}{Q}$}, and denote it by $\ilpproofdag{\bar t}{D}{Q}$. The encoding is a direct implementation of Theorem~\ref{thm:arbitrary-proof-dag}, namely the set of constraints $\ilpproofdag{\bar t}{D}{Q}$ searches for a proof DAG of $R(\bar t)$ w.r.t.\ $D$ and $\dep$ with \emph{some} support, where no more than $\boundall{R(\bar t)}{D}{\dep}$ nodes are labeled with the same fact.
Formally, we construct $\ilpproofdag{\bar t}{D}{Q}$ as a system of constraints over integer variables $X$ and binary variables $B$ as follows.

\medskip
\noindent
\textbf{The Variables of $\ilpproofdag{\bar t}{D}{Q}$.} Let $\hat{E}$ be the set of (standard) edges induced by the hyperedges of $\downc{R(\bar t)}{D}{\dep}$, i.e., 
$\hat{E} = \{(\alpha,\beta) \mid (\alpha,\{\ldots,\beta,\ldots\}) \in E\}$. For a fact $\alpha$,
%node $\alpha \in V$ of $\downc{R(\bar t)}{D}{\dep}$, 
%
an \emph{amplified copy of $\alpha$} is an expression of the form $\alpha^i$, where $i \in [\boundall{R(\bar t)}{D}{\dep}]$ if $\alpha \not\in D$, and $i = 1$ if $\alpha \in D$. In other words, a database fact admits a single amplified copy.
% i.e., it can be represented by at most one node of the proof DAG that we are looking for. This is w.l.o.g.~since the nodes of a proof DAG labeled with a database fact are always leaves, and thus all the nodes labeled with the same database fact can be merged into a single one (see Claim~1 in the proof of Theorem~\ref{thm:dag-correct}); it is also what allows us to use the variables of the form $x_{\alpha^1}$, with $\alpha \in D$, as the database variables of the encoding. 
Moreover, for a hyperedge $(\alpha,T) \in E$, with $T = \{\beta_1,\ldots,\beta_n\}$, of $\downc{R(\bar t)}{D}{\dep}$, an \emph{amplified copy of $(\alpha,T)$} is an expression of the form $(\alpha^i,T')$, with $T' = \{\beta_1^{k_1},\ldots,\beta_n^{k_n}\}$, where $\alpha^i$ is an amplified copy of $\alpha$, and $\beta_j^{k_j}$ is an amplified copy of $\beta_j$, for $j \in [n]$. We now introduce the variables of $\ilpproofdag{\bar t}{D}{Q}$.

The set $X$ is empty, whereas $B$ is the union of three disjoint sets $B_N$, $B_H$, and $B_E$ defined as follows:
\begin{itemize}
\item[-] The set $B_N$ contains a binary variable $x_{\alpha^i}$ for each amplified copy $\alpha^i$ of a fact $\alpha \in V \cup D$. 

\item[-] The set $B_H$ contains a binary variable $y_{(\alpha^i,T')}$ for each amplified copy $(\alpha^i,T')$ of a hyperedge $(\alpha,T) \in E$.

\item[-] The set $B_E$ contains a binary variable $z_{(\alpha^i,\beta^j)}$ for each pair of amplified copies $\alpha^i,\beta^j$ of facts $\alpha,\beta$ such that $(\alpha,\beta) \in \hat{E}$.
\end{itemize}

\noindent 
Intuitively, when $x_{\alpha^i}$ is true, for a fact $\alpha \in V \cup D$, it means that a node of the underlying proof DAG $G$ is labeled with $\alpha$, and among all nodes of $G$ labeled with $\alpha$, this node has $i$ as its identifier. In other words, $\alpha^i$ is the unique name of such a node labeled with $\alpha$ in $G$.
Moreover, the variables of $B_N$ and $B_E$ that become true via a solution of $\ilpproofdag{R(\bar t)}{D}{\dep}$ induce the nodes and the edges of a proof DAG $G$ for $R(\bar t)$ w.r.t.~$D$ and $\dep$, where each true variable of the form $x_{\alpha^i}$ means that among all possible nodes of a proof DAG that can be labeled with $\alpha$, the node with identifier $i$ is part of $G$, and a true variable of the form $z_{(\alpha^i,\beta^j)}$ means that the node with identifier $i$ labeled with $\alpha$ and the node with identifier $j$ labeled with $\beta$ are connected by an edge in $G$.

\medskip
\noindent
\textbf{The Constraints of $\ilpproofdag{\bar t}{D}{Q}$.} We now move to the constraints. In particular, the system $\ilpproofdag{\bar t}{D}{Q}$ is a set of constraints of the form
\[
T_{\mathsf{graph}} \cup T_{\mathsf{root}} \cup T_{\mathsf{proof}} \cup T_{\mathsf{acyclic}}.
\]

The set of constraints $T_{\mathsf{graph}}$ is in charge of guaranteeing consistency between the truth value of the variables in $B_N$ and the variables in $B_E$, i.e., if an edge between two nodes is part of $G$, then the two nodes must belong to $G$ as well. Formally, $T_{\mathsf{graph}}$ is the set
\[
\left\{ z_{(\alpha^i,\beta^j)} \le x_{\alpha^i} \mid z_{(\alpha^i,\beta^j)} \in B_E \right\}\ \cup\
 \left\{ z_{(\alpha^i,\beta^j)} \le x_{\beta^j} \mid z_{(\alpha^i,\beta^j)} \in B_E \right\}.
\]

The set of constraints $T_{\mathsf{root}}$ guarantees that $G$ contains a distinguished node labeled with the fact $R(\bar t)$, which is the root of $G$, and no other node of $G$ can be the root (i.e., every other node must have at least one incoming edge). For this, we pick $R(\bar t)^1$, which we denote as $\rho$, as the representative node labeled with $R(\bar t)$ to play the role of the root. Formally, $T_{\mathsf{root}}$ is the set
\begin{multline*}
	\{x_{\rho} = 1\} \cup
	\left\{ z_{(\alpha^i,\rho)} = 0 \mid z_{(\alpha^i,\rho)} \in B_E \right\} \cup \\
	\left\{ x_{\alpha^i} \le \sum\limits_{ z_{(\beta^j,\alpha^i)} \in B_E} z_{(\beta^j,\alpha^i)} \mid x_{\alpha^i} \in B_N \text{ with } \alpha \in V \text{ and } \alpha^i \neq \rho \right\}.
\end{multline*}

We now discuss $T_{\mathsf{proof}}$. Roughly, this set of constraints is in charge of guaranteeing that the nodes and edges induced by the true variables of $B_N$ and $B_E$ form a valid proof DAG. For this, the constraints ensure, for each fact $\alpha \in D \setminus V$, that each variable of the form $x_{\alpha^i}$ is always false, and, for all other facts $\alpha \in V$, ensure that (i) when a node $\alpha^i$ belongs to $G$, i.e., it is a node labeled with $\alpha$, and (ii) $\alpha$ is an intensional atom $\alpha$, then $\alpha^i$ must have the correct children in $G$. The latter means that the labels of the children of $\alpha^i$ form the target of some hyperedge of $\downc{R(\bar t)}{D}{Q}$ having $\alpha$ as the source. Moreover, no other nodes can be the children of $\alpha^i$.
This is achieved with two families of constraints in $T_{\mathsf{proof}}$. The first set of constraints is in charge of choosing which nodes should the children of $\alpha^i$ in $G$, assuming $\alpha^i$ occurs in $G$. This is done by forcing one of the variables of the form $y_{(\alpha^i,T)} \in B_H$ to be true, whenever $x_{\alpha^i}$ is true. The second set of constraints guarantees that whenever the variable $y_{(\alpha^i,T)} \in B_H$ is true, \emph{all and only} the nodes in $T$ are children of $\alpha^i$ in $G$. Formally, $T_{\mathsf{proof}}$ is the set constraints
\begin{multline*}
	\left\{x_{\alpha^i} = 0 \mid \alpha \in D \setminus V \right\} \cup \\ 
	\left\{ x_{\alpha^i} \le \sum\limits_{y_{(\alpha^i,T)} \in B_H} y_{(\alpha^i,T)} \mid x_{\alpha^i} \in B_N \text{ with }  \alpha \in V \text{ intensional }\right\} \cup  \\
	\left\{ y_{e} \le \ell_{e,\beta^j} \mid y_{e} \in B_H \text{ with } e = (\alpha^i,T) \text{ and } z_{(\alpha^i,\beta^j)} \in B_E \right\},
\end{multline*}
where, for each $e=(\alpha^i,T)$, $\ell_{e,\beta^j}$ denotes $z_{(\alpha^i,\beta^j)}$ if $\beta^j \in T$, and $1- z_{(\alpha^i,\beta^j)}$ otherwise.

We finally discuss $T_{\mathsf{acyclic}}$, which is in charge of checking that $G$, namely the graph whose edges correspond to the true variables in $B_E$, is acyclic.
Checking the acyclicity of a graph via a set of constraints over binary variables defining the edges of the graph is a well-studied problem in different fields with a plethora of encodings. For our construction, we employ the state-of-the-art encoding based on vertex elimination~\cite{RankoohR22}, which is much more efficient than other encodings that compute the full transitive closure of the graph.

This completes the definition of $\ilpproofdag{\bar t}{D}{Q}$ and we can formally show that is an ILP encoding of why-provenance:

%The construction of $\ilpproofdag{\bar t}{D}{Q}$ is now complete, and we can prove the key result of this section:

\begin{theorem}\label{thm:dag-correct}
	Consider a Datalog query $Q = (\dep,R)$, a database $D = \{\alpha_1,\ldots,\alpha_n\}$ over $\esch{\dep}$, and a tuple $\bar t \in \adom{D}^{\arity{R}}$. The system of ILP constraints $\ilpproofdag{\bar t}{D}{Q}$ is an ILP encoding of $\allwhy{\bar t}{D}{Q}$ with database variables $x_{\alpha_1^1},\ldots,x_{\alpha^1_n}$.
\end{theorem}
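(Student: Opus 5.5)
We prove the two directions of Definition~\ref{def:ilp-encoding-why} separately, using Theorem~\ref{thm:arbitrary-proof-dag} as the bridge between proof trees and proof DAGs. Throughout, write $N = \boundall{R(\bar t)}{D}{\dep}$.

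\textbf{From explanations to solutions.} Let $D' \in \allwhy{\bar t}{D}{Q}$. By Theorem~\ref{thm:arbitrary-proof-dag}, there is a proof DAG $G=(V_G,E_G,\lambda)$ of $R(\bar t)$ with support $D'$ that satisfies conditions (a)--(c). Before encoding $G$, we normalize it in two ways.
\begin{itemize}
\item[-] \emph{Merging database leaves.} All nodes labeled with a database fact are leaves, since facts in $D$ are extensional and never heads of hyperedges. We merge all leaves carrying the same label $\alpha\in D$ into a single node. This preserves acyclicity, the proof-DAG conditions, the support, and conditions (a)--(c), and it matches the fact that each database fact has only one amplified copy $\alpha^1$.
\item[-] \emph{Fixing the root.} No node labeled $R(\bar t)$ other than the root can reach the root, by acyclicity. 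We also keep only nodes reachable from the root, which leaves the support unchanged. The root is assigned identifier $1$.
\end{itemize}
Next, we fix an injective naming: each node $v$ with $\lambda(v)=\alpha$ gets a distinct identifier $i\in[N]$. This is possible by condition (c). We then set the variables as follows: $x_{\alpha^i}=1$ iff $\alpha^i$ names a node; $z_{(\alpha^i,\beta^j)}=1$ iff the corresponding edge is in $G$; and $y_{(\alpha^i,T')}=1$ iff $T'$ is exactly the set of named children of $\alpha^i$. Condition (b) guarantees that this last hyperedge copy exists.

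It remains to check the constraints. $T_{\mathsf{graph}}$ holds trivially. $T_{\mathsf{root}}$ holds because every non-root node is reachable from the root and hence has a parent. $T_{\mathsf{proof}}$ holds because every intensional node has children forming a hyperedge, and every fact in $D\setminus V$ is unused by (a). $T_{\mathsf{acyclic}}$ holds by the correctness of the vertex-elimination encoding of~\cite{RankoohR22} applied to the acyclic graph $G$. Finally, the database variables $x_{\alpha^1}$ induce exactly the leaves of $G$, which is $D'$.

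\textbf{From solutions to explanations.} Let $\mu$ be a solution. We build the graph $G$ whose nodes are the true $x_{\alpha^i}$, labeled $\alpha$, and whose edges are the true $z$'s. $T_{\mathsf{graph}}$ makes $G$ well defined, and $T_{\mathsf{acyclic}}$ makes it a DAG. We then verify the proof-DAG conditions.
\begin{itemize}
\item[-] \emph{Rooted at $R(\bar t)$.} $T_{\mathsf{root}}$ gives a single source, namely $\rho$ labeled $R(\bar t)$: every other true node labeled from $V$ has an incoming edge. A database-fact node $\alpha^1$ could be true with no parent, which would be a spurious extra root. I expect this to be the main obstacle, and I would handle it as follows. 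Either the constraint set implicitly covers it, or we argue that the ``floating'' true database variables must be excluded. If the intended encoding leaves this gap, the fix is to read $T_{\mathsf{root}}$ as also applying to $\alpha\in D\cap V$, which is what the index condition $\alpha\in V$ actually includes, since $D\cap V\subseteq V$. So every true node except $\rho$ has a parent.
\item[-] \emph{Correct children.} An intensional true node $\alpha^i$ has some true $y_{(\alpha^i,T')}$. The third family of $T_{\mathsf{proof}}$ then forces its children to be exactly $T'$, which corresponds to a hyperedge of the downward closure and hence to a rule instance. If two $y$'s were true, they would force the same child set, so the choice is consistent.
\item[-] \emph{Leaves.} Extensional nodes have no outgoing edges because no hyperedge has them as source. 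Conversely, intensional nodes always have children, since hyperedge targets are nonempty. Hence the leaves are exactly the true database-fact nodes, and all of them lie in $D\cap V$.
\end{itemize}
Thus $G$ is a proof DAG satisfying (a)--(c), with at most $N$ copies per fact. Its support is $\{\alpha\in D\mid \mu(x_{\alpha^1})=1\}$. By Theorem~\ref{thm:arbitrary-proof-dag}, $(2)\Rightarrow(1)$, this support belongs to $\allwhy{\bar t}{D}{Q}$.

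Besides the root issue above, the other delicate point is invoking the correctness of the acyclicity encoding of~\cite{RankoohR22} in both directions: any DAG admits an extension of the auxiliary variables to a solution, and any solution yields an acyclic graph.
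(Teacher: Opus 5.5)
Your proposal is correct and follows essentially the same route as the paper. You normalize the proof DAG from Theorem~\ref{thm:arbitrary-proof-dag} by merging nodes with the same database label (the paper's Lemma~\ref{lem:auxiliary-1}), name nodes injectively with the root sent to $\rho$, and read off the assignment. Conversely, you rebuild the graph from a solution and apply Theorem~\ref{thm:arbitrary-proof-dag} again. You also resolve the spurious-root worry exactly as the paper does: the third family of $T_{\mathsf{root}}$ ranges over all $\alpha\in V$, hence also over $D\cap V$.

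The one thing you leave implicit is the standing self-join-freeness assumption of Section~\ref{sec:characterizations}. The paper invokes it explicitly at two steps you only assert. First, it guarantees that merging database leaves never collapses two children of the same node, so item~(3) of Definition~\ref{def:proof-dag} survives. Second, it guarantees that a selected copy $(\alpha^i,T')$ yields exactly $n=|T|$ children, matching the $n$ body atoms of the rule that produced the hyperedge.
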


By Theorem~\ref{thm:dag-correct} and Corollary~\ref{cor:generic-algorithm}, we obtain that:

\begin{corollary}\label{cor:dag-all-algorithm}
$\ComputeDiverse{\ilpproofdagfunc}$ solves the problem $\mathsf{ALL}\text{-}\mathsf{DiverseWhyProvenance}$.
\end{corollary}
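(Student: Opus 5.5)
The statement to prove is Corollary~\ref{cor:dag-all-algorithm}. It follows by instantiating Corollary~\ref{cor:generic-algorithm} with $\ConstructILPEncoding := \ilpproofdagfunc$. The plan is therefore to check the hypothesis of Corollary~\ref{cor:generic-algorithm}: on every input triple $(\bar t,D,Q)$, the procedure $\ilpproofdagfunc$ must output an ILP encoding of $\allwhy{\bar t}{D}{Q}$ in the sense of Definition~\ref{def:ilp-encoding-why}.

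We split on whether $\bar t \in Q(D)$.

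If $\bar t \in Q(D)$, then $R(\bar t)$ occurs in $\gri{D}{\dep}$, so $\downc{R(\bar t)}{D}{\dep}$ is well defined. Theorem~\ref{thm:dag-correct} then says directly that $\ilpproofdag{\bar t}{D}{Q}$ is an ILP encoding of $\allwhy{\bar t}{D}{Q}$, with database variables $x_{\alpha_1^1},\ldots,x_{\alpha_n^1}$. Algorithm~\ref{alg:compute-diverse} only needs to know which variables are the database variables, and here they are exactly these.

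If $\bar t \notin Q(D)$, we use the convention stated at the start of Section~\ref{sec:characterizations}: the procedure returns the trivially unfeasible system $\{v \ge 1, v \le 0\}$, with $v$ a database variable. This is an ILP encoding of the empty family, and $\allwhy{\bar t}{D}{Q}=\emptyset$ in this case. We therefore take $\ilpproofdagfunc$ to mean the procedure that performs this check first; the check is decidable, since it amounts to evaluating a Datalog query.

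With the hypothesis verified for every input, Corollary~\ref{cor:generic-algorithm}, which rests on Theorem~\ref{thm:generic-encoding}, gives that $\ComputeDiverse{\ilpproofdagfunc}$ solves $\mathsf{ALL}\text{-}\mathsf{DiverseWhyProvenance}$. In particular, in the empty case the $(T,k)$-problem is unfeasible and the algorithm correctly returns \textbf{fail}.

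The only delicate point is this degenerate case, and it is settled by the stated convention, so no step should present a real obstacle. All the substantive work lies in Theorem~\ref{thm:dag-correct}, which we assume.
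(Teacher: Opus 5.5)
Your proposal is correct and follows the paper's own route: the corollary is obtained by plugging Theorem~\ref{thm:dag-correct} into Corollary~\ref{cor:generic-algorithm}. Your separate handling of the case $\bar t \notin Q(D)$ simply makes explicit the convention the paper states at the start of Section~\ref{sec:characterizations}, and it is handled correctly.
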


%\medskip
\noindent
\underline{\textbf{Unambiguous Graph-based ILP Encoding}}

\smallskip
\noindent Consider a Datalog query $Q=(\dep,R)$, a database $D$ over $\esch{\dep}$, and a tuple $\bar t \in \adom{D}^{\arity{R}}$.
%, and assume $\downc{R(\bar t)}{D}{\dep} = (V,E)$. 
We proceed to show how to construct the ILP encoding of $\unwhy{\bar t}{D}{Q}$.
Recall that the only difference between the graph-based and the unambiguous graph-based characterization, both presented in Theorem~\ref{thm:arbitrary-proof-dag}, concerns only the number of nodes in the proof DAG that can be labeled with the same fact; for unambiguous proof DAGs this mumber is $1$.
%We observe that the only difference between the graph-based characterization of Theorem~\ref{thm:arbitrary-proof-dag}, and the unambiguous graph-based characterization of Theorem~\ref{thm:unambiguous-proof-dag} regards only the number of nodes in the proof DAG that can be labeled with the same fact; for unambiguous proof DAGs, this bound is one. 
Hence, we can define the ILP encoding of $\unwhy{\bar t}{D}{Q}$, called $\ilpunproofdag{\bar t}{D}{Q}$, by adopting the same construction of $\ilpproofdag{\bar t}{D}{Q}$ with the difference that each fact $\alpha$ admits exactly one amplified copy, i.e., the amplified copy $\alpha^1$. We can the show the following:
%Our last key result follows:

\begin{theorem}\label{thm:unambiguous-dag-correct}
	Consider a Datalog query $Q = (\dep,R)$, a database $D = \{\alpha_1,\ldots,\alpha_n\}$ over $\esch{\dep}$, and a tuple $\bar t \in \adom{D}^{\arity{R}}$. The system of ILP constraints $\ilpunproofdag{\bar t}{D}{Q}$ is an ILP encoding of $\unwhy{\bar t}{D}{Q}$ with database variables $x_{\alpha_1^1},\ldots,x_{\alpha^1_n}$.
\end{theorem}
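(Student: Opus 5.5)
The plan is to reduce the statement to the unambiguous case of Theorem~\ref{thm:arbitrary-proof-dag}, exactly as one would for Theorem~\ref{thm:dag-correct}, but with every fact having the single amplified copy $\alpha^1$. Fix $D' \subseteq D$. We show two things. First, every solution $\mu$ of $\ilpunproofdag{\bar t}{D}{Q}$ induces an unambiguous proof DAG of $R(\bar t)$ w.r.t.~$D$ and $\dep$ satisfying (a)--(c), with support $\db{\mu}$. Second, every such DAG with support $D'$ yields a solution $\mu$ with $\db{\mu} = D'$. Theorem~\ref{thm:arbitrary-proof-dag} then gives the equivalence with $D' \in \unwhy{\bar t}{D}{Q}$ required by Definition~\ref{def:ilp-encoding-why}.

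\emph{Soundness.} Given $\mu$, define $G=(N,A,\lambda)$ as follows. The nodes are the facts $\alpha$ with $\mu(x_{\alpha^1})=1$, each labeled by itself. The edges are the pairs $(\alpha,\beta)$ with $\mu(z_{(\alpha^1,\beta^1)})=1$.
\begin{itemize}
\item $T_{\mathsf{graph}}$ makes $A \subseteq N \times N$.
\item $T_{\mathsf{acyclic}}$ (vertex elimination, assumed correct as in~\cite{RankoohR22}) makes $G$ acyclic.
\item $T_{\mathsf{root}}$ puts $R(\bar t)$ in $N$ with no incoming edges, and gives every other node in $V$ a parent. Nodes in $D \setminus V$ are excluded by $T_{\mathsf{proof}}$, so $R(\bar t)$ is the unique root.
\item For an intensional node $\alpha$, $T_{\mathsf{proof}}$ forces some $y_{(\alpha^1,T)}=1$. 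The third family then makes the children of $\alpha$ exactly $T$, a hyperedge of $\downc{R(\bar t)}{D}{\dep}$, which gives condition~3 of Definition~\ref{def:proof-dag} and (b).
\item An extensional node (a database fact) is never a hyperedge source, so it has no outgoing $z$-variables and is a leaf.
\end{itemize}
Leaves are therefore exactly the database facts in $N$, so $\support{G} = \db{\mu}$.

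One subtlety needs care. Could an intensional node have two true $y$-variables? If $y_{(\alpha^1,T)}$ and $y_{(\alpha^1,T')}$ are both true with $T\neq T'$, then some $\beta \in T\setminus T'$ would need $z=1$ and $z=0$ at once, which is a contradiction. So the child set is well defined.

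Unambiguity and (a) are immediate, since labels are injective and every label lies in $V$. For (c) with bound $\boundun{R(\bar t)}{D}{\dep}=1$, injectivity again suffices.

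\emph{Completeness.} Given an unambiguous proof DAG $G$ with (a)--(c) and support $D'$, labels are injective, so identify nodes with facts. Set $x_{\alpha^1}=1$ iff $\alpha$ labels a node. Set $z_{(\alpha^1,\beta^1)}=1$ iff the edge exists. Set $y_{(\alpha^1,T)}=1$ iff $T$ is the child set of $\alpha$; by (b) this is a hyperedge, so the variable exists. Auxiliary variables of the acyclicity encoding are set from a topological order of $G$.

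We then check each family:
\begin{itemize}
\item $T_{\mathsf{graph}}$ holds trivially.
\item $T_{\mathsf{root}}$ holds because $G$ is rooted at $R(\bar t)$.
\item Every intensional node is non-leaf, since leaves are in $D$ and extensional and intensional predicates are disjoint. Hence it has a chosen $y$, and the third $T_{\mathsf{proof}}$ family holds by construction.
\item Facts in $D\setminus V$ do not occur, by (a).
\end{itemize}
Finally, the database facts labeling nodes are exactly the leaves, so $\db{\mu}=D'$.

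The main obstacle is the acyclicity part. We must confirm that the vertex-elimination encoding is satisfiable exactly on acyclic edge sets, and that its auxiliary variables can be extended for any acyclic $G$. I would simply invoke the correctness theorem of~\cite{RankoohR22} for this. The remaining checks are local bookkeeping, mirroring the proof of Theorem~\ref{thm:dag-correct} with the bound replaced by $1$.
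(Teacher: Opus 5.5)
Your proposal is correct and follows essentially the same route as the paper. The paper also reduces to the unambiguous case of Theorem~\ref{thm:arbitrary-proof-dag} and runs both directions of Lemma~\ref{lem:auxiliary-2} with every fact having the single copy $\alpha^1$, where the map $\iota(v)=\lambda(v)^1$ is injective exactly by unambiguity, matching your identification of nodes with facts. One detail you leave implicit: when you derive condition~3 of Definition~\ref{def:proof-dag} from the children being exactly a hyperedge target $T$, you need the paper's w.l.o.g.\ assumption of self-join-free rule bodies, so that $|T|$ equals the number of body atoms.
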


By Theorem~\ref{thm:unambiguous-dag-correct} and Corollary~\ref{cor:generic-algorithm}, we obtain that:

\begin{corollary}\label{cor:dag-un-algorithm}
$\ComputeDiverse{\ilpunproofdagfunc}$ solves the problem $\mathsf{UN}\text{-}\mathsf{DiverseWhyProvenance}$.
\end{corollary}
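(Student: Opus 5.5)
The plan is to reduce to Theorem~\ref{thm:dag-correct}, whose proof we reuse almost verbatim, now relying on the unambiguous half of Theorem~\ref{thm:arbitrary-proof-dag} with bound $\boundun{R(\bar t)}{D}{\dep}=1$. We must show, for every database $D'$ over $\esch{\dep}$, that $D'\in\unwhy{\bar t}{D}{Q}$ iff $D'=\{\alpha_i \mid \mu(x_{\alpha_i^1})=1\}$ for some $\mu\in\sol{\ilpunproofdag{\bar t}{D}{Q}}$. Note that every member of $\unwhy{\bar t}{D}{Q}$ is a subset of $D\cap V$. Indeed, all leaves of a proof tree are reachable from the root in $\downc{R(\bar t)}{D}{\dep}$. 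This subset property is what makes the constraints $x_{\alpha^1}=0$ for $\alpha\in D\setminus V$ harmless.

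\emph{(1)$\Rightarrow$(2).} Take an unambiguous proof tree with support $D'$. By Theorem~\ref{thm:arbitrary-proof-dag}, there is an unambiguous proof DAG $G=(V_G,E_G,\lambda)$ satisfying (a)--(c), in which each fact labels at most one node. We identify the node labeled $\alpha$ with $\alpha^1$ and define $\mu$ accordingly:
\begin{itemize}
\item $x_{\alpha^1}=1$ iff $\alpha$ labels a node of $G$;
\item $z_{(\alpha^1,\beta^1)}=1$ iff $G$ has the corresponding edge;
\item $y_{(\alpha^1,T)}=1$ iff $T$ is exactly the set of children of $\alpha^1$. Condition (b) guarantees this is a hyperedge of the downward closure, and since rule bodies are self-join-free the children carry distinct labels.
\end{itemize}
We then check each constraint group:
\begin{itemize}
\item $T_{\mathsf{graph}}$ is immediate.
\item $T_{\mathsf{root}}$ holds because the root is labeled $R(\bar t)$ and has no incoming edges. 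Every other node has a parent, and no edge enters the root, since the root is the unique node labeled $R(\bar t)$ and $G$ is acyclic.
\item $T_{\mathsf{proof}}$ holds because intensional nodes are never leaves (leaves are in $D$, which is extensional) and their children match exactly one chosen $y$.
\item $T_{\mathsf{acyclic}}$ holds by acyclicity of $G$ and the completeness of the vertex-elimination encoding~\cite{RankoohR22}.
\end{itemize}
The leaves of $G$ are exactly the true database variables, since database facts are extensional and hence leaves. So $\db{\mu}=D'$.

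\emph{(2)$\Rightarrow$(1).} Given a solution $\mu$, build $G$ with nodes $\{\alpha^1 \mid \mu(x_{\alpha^1})=1\}$, labels $\lambda(\alpha^1)=\alpha$, and edges given by the true $z$ variables. Then:
\begin{itemize}
\item $T_{\mathsf{acyclic}}$, by soundness of the encoding, gives that $G$ is a DAG.
\item $T_{\mathsf{root}}$ gives that $\rho$ is the unique node without incoming edges, so $G$ is rooted at $R(\bar t)$.
\item $T_{\mathsf{proof}}$ gives that each intensional node has some chosen hyperedge $(\alpha^1,T)$ with $y=1$, and that its out-neighbours are exactly $T$. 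That hyperedge comes from a rule instance, so condition 3 of Definition~\ref{def:proof-dag} holds.
\end{itemize}
One subtlety: several $y$'s could be true for the same node. The third family of constraints then forces their targets to coincide, so this is consistent.

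Leaves are nodes with no out-edges. Intensional nodes always have a nonempty target, so leaves are extensional and thus in $D$, because all nodes of the downward closure are either in $D$ or intensional. Conversely, every database node has no outgoing edges in $\hat E$, so it is a leaf. Labels are unique by construction, so $G$ is an unambiguous proof DAG satisfying (a)--(c) with support $\db{\mu}$. Theorem~\ref{thm:arbitrary-proof-dag} then yields an unambiguous proof tree with that support.

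The main obstacle is the unrolling direction's reliance on the acyclicity encoding and the root constraints. We must make sure every selected node is actually reachable from $\rho$, so that spurious disconnected components cannot add extra database facts to $\db{\mu}$. The fix comes from $T_{\mathsf{root}}$ together with acyclicity: every non-root node has a parent, so following parents backward from any node must terminate, and it can only terminate at $\rho$.
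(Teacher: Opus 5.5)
Your proof is correct and follows the paper's route. You re-establish Theorem~\ref{thm:unambiguous-dag-correct} the same way the paper does: you specialize the proof of Lemma~\ref{lem:auxiliary-2} to the case where every fact has the single amplified copy $\alpha^1$, so no analogue of Lemma~\ref{lem:auxiliary-1} is needed (in (2)$\Rightarrow$(1) you use $T_{\mathsf{graph}}$ and the first family of $T_{\mathsf{proof}}$ without saying so, which is minor). Your write-up stops at the encoding property, so state the final step the paper makes explicitly: since $\ilpunproofdag{\bar t}{D}{Q}$ is an ILP encoding of $\unwhy{\bar t}{D}{Q}$, the corollary follows from Corollary~\ref{cor:generic-algorithm}.
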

\section{Exact Computation of Diverse Why-Provenance}
\label{sec:exact-experiments}
In the previous section, for each $\star \in \{\mathsf{ALL}, \mathsf{UN}\}$, we introduced two different strategies for constructing ILP encodings for $\starwhy{\bar t}{D}{Q}$, given a Datalog query $Q = (\dep,R)$, a database $D$ over $\esch{\dep}$, and a tuple $\bar t \in \adom{D}^{\arity{R}}$. This in turn led to four concrete instantiations of the generic algorithm $\ComputeDiverse{\ConstructILPEncoding}$ of Algorithm~\ref{alg:compute-diverse} discussed in Section~\ref{sec:exact-encoding}, by choosing $\ConstructILPEncoding \in \{\ilphypergraphfunc, \ilpproofdagfunc, \ilpunhypergraphfunc, \ilpunproofdagfunc\}$. This provides two algorithms for solving $\mathsf{ALL}\text{-}\mathsf{DiverseWhyProvenance}$, and two algorithms for solving $\mathsf{UN}\text{-}\mathsf{DiverseWhyProvenance}$.
The goal of this section is to experimentally evaluate those algorithms. 

In the rest of the section, we will first discuss how we implemented the algorithms in question, and then present an experimental evaluation of such implementations. The latter will allow us to assess the running time of our implementations on well-established benchmarks, and, in particular, how the running time scales w.r.t.~the requested number $k$ of members of $\starwhy{\bar t}{D}{Q}$.

%We evaluate the runtime of the \emph{exact} algorithm for computing a set of \(k\) diverse why-provenance explanations using the ILP encoding in Section~\ref{sec:characterizations}. Our goal is to understand how this algorithm scales with \(k\) across databases and configurations, and to identify the regimes where exact optimization remains practical.
%
%We consider three configurations for the algorithm combining proof-tree types and ILP characterizations: (\arbitrary, \hypergraph), (\unambiguous, \hypergraph), and (\unambiguous, \dagEnc). We exclude (\arbitrary, \dagEnc) due to its poor results. In preliminary experiments, we constructed the corresponding ILP instances and evaluated them under the same experimental setting used throughout the paper, including a timeout limit of 10 minutes per instance. However, none of the tested instances in this configuration completed within the timeout. The main reason is the rapid growth of the ILP formulation caused by the need for multiple copies of intensional nodes, which leads to extremely large optimization problems in both time and memory consumption.

\subsection{Implementation Details}
\label{sec:implementation-details}

In what follows, fix a family of proof trees $\star \in \{\mathsf{ALL}, \mathsf{UN}\}$, and a desired charaterization $\gamma \in \{\mathsf{hyper}, \mathsf{graph}\}$. We recall that for a given Datalog query $Q = (\dep,R)$, database $D$, and tuple $\bar t$, constructing the ILP encoding $\ilpstargamma{\bar t}{D}{Q}$ of $\starwhy{\bar t}{D}{Q}$ requires first the construction of the downward closure $\downc{R(\bar t)}{D}{\dep}$. Hence, the main tasks that the algorithm $\ComputeDiverse{\ilpstargammafunc}$ needs to perform can be summarized as follows:
\begin{enumerate}
\item compute $\downc{R(\bar t)}{D}{\dep}$,
\item compute the encoding $T = \ilpstargamma{\bar t}{D}{Q}$, together with the $(T,k)$-problem $(f,S)$, by using $\downc{R(\bar t)}{D}{\dep}$, and 
\item solve $(f,S)$.
\end{enumerate}
We briefly describe how we implemented the above steps. 

\medskip
\noindent
\textbf{Constructing the Downward Closure.}
Regarding the downward closure, this is a data structure that is orthogonal to the specific ILP encoding employed, and it has already been used in previous works on why-provenance for Datalog~\cite{ElKM22,CLPS24aaai}. In particular, we employ an existing implementation from~\cite{CLPS24aaai} for computing the downward closure which in turn exploits the DLV Datalog engine~\cite{adrian2018dlv}.

\medskip
\noindent
\textbf{Constructing the ILP Encoding.}
The construction of the ILP encoding $T = \ilpstargamma{\bar t}{D}{Q}$ and the corresponding $(T,k)$-problem mostly follows their definition. The only differences are related to the database variables of $T$ and how we encode indicator constraints. Regarding the database variables, only the ones corresponding to facts appearing in the downward closure are actually generated in $T$ since, by definition of $T = \ilpstargamma{\bar t}{D}{Q}$, all other database variables are always set to $0$, in any solution of $T$, and thus the corresponding database facts will never appear in the output of $\ComputeDiverse{\ilpstargammafunc}$. Regarding the indicator constraints, these only occur in $T$ when $\gamma = \mathsf{hyper}$, and they are all of the form
$$ z\ \equiv\ (\ a_1 x_1 + \cdots a_n x_n\ ) \ge 1,$$
where the expression $a_1 x_1 + \cdots a_n x_n$ can never be negative and it is always an integer. Hence, the indicator constraint essentially asks that $z$ is $1$ when $a_1 x_1 + \cdots a_n x_n$ is non-zero, and zero when $a_1 x_1 + \cdots a_n x_n$ is zero. For
efficiency reasons, we encode the constraint by means of the following set of two constraints:
\[
\{z \cdot (\ a_1 x_1 + \cdots a_n x_n\ ) = a_1 x_1 + \cdots a_n x_n,\,\,\, z \le a_1 x_1 + \cdots a_n x_n\}.
 \]
%\begin{multline*}
%\{z \cdot (\ a_1 x_1 + \cdots a_n x_n\ ) = a_1 x_1 + \cdots a_n x_n,\\ 
%z \le a_1 x_1 + \cdots a_n x_n\}.
%\end{multline*}
%
We note that the first constraint is non-linear, due to the product on the left of the equality, and it might seem that this would result in a harder encoding w.r.t.\ a standard encoding via linear constraints. However, we recall that the standard encoding for the indicator constraint via linear constraints requires finding a constant $M$ that is guaranteed to be larger than $a_1 x_1 + \cdots a_n x_n$, for \emph{all} values of $x_1,\ldots,x_n$. A preliminary analysis has shown that such a constant $M$ would be extremely large, and caused numerical instabilities while solving the ILP encodings, and thus cannot be employed for our implementation. 
Nonetheless, as we are going to see next, we will exploit a state-of-the-art solver that natively supports quadratic constraints, and thus, we can employ our alternative encoding.

\medskip
\noindent
\textbf{Finding Optimal Solutions.}
For solving the $(T,k)$-problem $(f,S)$ built at the previous step, we use Gurobi~\cite{gurobi}, which is a state-of-the-art Mixed-Integer (Quadratic) Programming solver.

\begin{table}[t]
	\centering
	\resizebox{\columnwidth}{!}{%
	\begin{tabular}{lccc}
		\toprule
		Name & Databases & \#Rules & Rec. \\
		\midrule
		
		\multirow{2}{*}{%
		\begin{tabular}{@{}l@{}}
		\graphanalytics \\
		(transitive closure)
		\end{tabular}}
		& \bitcoin~\cite{kumar2016edge} \text{(235K)} & \multirow{2}{*}{2} & \multirow{2}{*}{Yes} \\
		& \facebook~\cite{snap} \text{(88.2K)} & & \\
		
		\midrule
		
		\doctors 1-7 (7 scenarios)~\cite{ElKM22}
		& \factsk \text{(100K)} & 6 & No \\
		
		\midrule
		
		\multirow{2}{*}{\galen~\cite{ElKM22}}
		& \fifteen \text{(26.5K)}, \twentyfive \text{(30.5K)} & \multirow{2}{*}{14} & \multirow{2}{*}{Yes} \\
		& \forty \text{(67K)}, \fifty \text{(82K)} & & \\
		
		\midrule
		
		\multirow{2}{*}{\csda~\cite{fan2019scaling}}
		& \httpd \text{(10M)}, \linux \text{(44M)} & \multirow{2}{*}{2} & \multirow{2}{*}{Yes} \\
		& \postgres \text{(34.8M)} & & \\
		
		\bottomrule
	\end{tabular}%
	}
	\medskip
	\caption{Experimental scenarios. For each database, we give in parenthesis the number of tuples. ``Rec.'' indicates whether the rule set is recursive.}
	\label{tab:scenarios}
\end{table}

\subsection{Experimental Setup} 

\noindent \textbf{System Details.}
All experiments were conducted on a server running Fedora Linux 39 (Cloud Edition), provisioned with 20 vCPUs (Intel Xeon Processor, Cascadelake) and 117GB of RAM. The algorithms $\ComputeDiverse{\ilpstargammafunc}$, with $\star \in \{\mathsf{ALL}, \mathsf{UN}\}$ and $\gamma \in \{\mathsf{hyper}, \mathsf{graph}\}$, are implemented in C++, apart from the step solving the maximization problem $(f,S)$ that is delegated to Gurobi 11.0.0~\cite{gurobi}. We only use Java OpenJDK 21.0.11 as a thin wrapper around the C++ implementation to ease the logging of the results.

\medskip
\noindent \textbf{Experimental Scenarios.} 
We consider a variety of scenarios from the Datalog literature, where each scenario consists of a query $Q = (\dep,R)$ and a family of databases $\mathcal{D}$ over $\esch{\dep}$. All the considered scenarios are listed in Table~\ref{tab:scenarios}; the number of rules in $\dep$, whether the program $\dep$ is recursive or not, and further references about each scenario are given in the table as well. Let us provide further details about the adopted scenarios:
%Further details about the scenarios can be found in the appendix.

\begin{description}
	\item[$\mathsf{TransClosure}$.] This scenario computes the transitive closure of a graph and asks for connected nodes. The database $\mathsf{bitcoin}$ stores a portion of the Bicoin network~\cite{Weber19}, and
	%with nodes representing transactions and directed edges representing flow of money from one transaction to the other~\cite{Weber19}. 
	$\mathsf{facebook}$ stores different ``social circles'' from Facebook~\cite{McAuley12}.
	%, where nodes represent users and an edge from $u$ to $v$ denotes that $u$ has chosen $v$ to be part of her social circle~\cite{McAuley12}.
	%
	%The scenarios $\mathsf{DOCTORS\text{-}1},\ldots,\mathsf{DOCTORS\text{-}7}$, and $\mathsf{GALEN}$ from~\cite{ElhalawatiKM22}, which have been used to experimentally evaluate the computation of ondemand why-provenance for standard proof trees, and the scenarios $\mathsf{ANDERSEN}$ and $\mathsf{CSDA}$ from~\cite{FanMK22}.

	\item[$\mathsf{Doctors}$.] The scenarios $\mathsf{doctors\text{-}}i$, for $i \in [7]$, represent queries obtained from a well-known data-exchange benchmark involving existential rules (the existential variables have been replaced with fresh constants). All such scenarios share the same database with 100K facts.

	\item[$\mathsf{Galen}$.] This scenario implements the ELK calculus~\cite{KazakovKS14} and asks for all pairs of concepts that are related with the $\mathsf{subClassOf}$ relation. The various databases contain different portions of the Galen ontology~\cite{Galen}.
	
	% \item[$\mathsf{Andersen}$.] This scenario used in~\cite{FanMK22} implements the classical Andersen ``points-to'' algorithm for determining the flow of data in procedural programs and asks for all the pairs of a pointer $p$ and a variable $v$ such that $p$ points to $v$. The databases are encodings of program statements of different length.

	\item[$\mathsf{CSDA}$.] This scenario (Context-Sensitive Dataflow Analysis) implements the Andersen ``points-to'' algorithm for determining the flow of data in procedural programs and asks for null references in a program. The databases $\mathsf{httpd}$, ${\mathsf{postgresql}}$, and ${\mathsf{linux}}$ store the statements of the httpd web server, the PostgreSQL DBMS, and the Linux kernel, respectively. 
\end{description}

\medskip
\noindent \textbf{Execution.} For each scenario $s$, consisting of a query $Q = (\dep,R)$ and a family of databases $\mathcal{D}$, we first computed $Q(D)$, for each database $D \in \mathcal{D}$, using an off-the-shelf Datalog engine~\cite{adrian2018dlv}, and selected $100$ tuples $\bar t^1_{s,D},\ldots,\bar t^{100}_{s,D}$ from $Q(D)$ uniformly at random.
Then, for the experimental evaluation, for each of the four combinations of proof tree family $\star \in \{\mathsf{ALL}, \mathsf{UN}\}$ and characterization $\gamma \in \{\mathsf{hyper},\mathsf{graph}\}$, each scenario $s$, each database $D$ of the scenario, each tuple $t^i_{s,D}$, with $i \in [100]$, and each $k \in [5]$, we executed our implementation of the algorithm $\ComputeDiverse{\ilpstargammafunc}$ with input $Q$, $D$, $\bar t^i_{s,D}$, and $k$. We collected the total running time, and the diversity of the resulting set, providing that the execution did not fail. Each execution had a timeout of 10 minutes.

%and $Q$, we compute $Q(D)$ using DLV~\cite{adrian2018dlv} and uniformly select 100 answer tuples at random from the query result. For each selected tuple, we construct its downward closure, i.e., the sub-hypergraph capturing all possible derivations relevant to the tuple. Following~\cite{CLPS24aaai}, the downward closure is computed by evaluating a modified Datalog query over a modified database, whose answers coincide with the hyperedges of the downward closure. This allows us to use the same Datalog engine both for query answering and for explanation construction. We then generate the corresponding ILP formulation and solve the resulting optimization problem using Gurobi to compute a set of $k$ explanations maximizing diversity. Each run is subject to a timeout of 10 minutes per tuple.
%
%\medskip
%\noindent \textbf{System Details.} All experiments were conducted on a server running Fedora Linux 39 (Cloud Edition), provisioned with 20 vCPUs (Intel Xeon Processor, Cascadelake) and 117GB of RAM. The ILP models are constructed in C++ and solved using Gurobi11.0.0~\cite{gurobi}. Python 3.12.7 is used for preprocessing.

\subsection{Experimental Results} \label{sec:exp-results-exact}
Before we discuss the results of our experimental evaluation, we point out that for the ILP encoding $\ilpproofdagfunc$, all executions of the algorithm $\ComputeDiverse{\ilpproofdagfunc}$ did not terminate withing the given timeout. This is due to the extremely high number of variables that are required by the ILP encoding $\ilpproofdagfunc$. The latter is due to the very high number of amplified copies of facts and hyperedges. For this reason, although we will discuss the results related to all our algorithms, we will only present plots for the remaining 3 algorithms $\ComputeDiverse{\ilphypergraphfunc}$, $\ComputeDiverse{\ilpunhypergraphfunc}$, and $\ComputeDiverse{\ilpunproofdagfunc}$.

\medskip
\noindent
\textbf{Visualizing Our Results.}
For each scenario $s=(Q,\mathcal{D})$, database $D \in \mathcal{D}$, $k \in [5]$, and algorithm $\ComputeDiverse{\ilpstargammafunc}$, with $(\gamma,\star) \in \{(\mathsf{hyper},\mathsf{ALL})$, $(\mathsf{hyper},\mathsf{UN})$, $(\mathsf{graph},\mathsf{UN})\}$, we report the runtimes of the executions of the algorithm $\ComputeDiverse{\ilpstargammafunc}$ with input the query $Q$, the database $D$, one of the tuples $\bar t^i_{s,D}$ for $i \in [100]$, and the parameter $k$, in a dedicated box-plot. In particular, each box-plot reports only the runtimes for the tuples $\bar t^i_{s,D}$ that have at least 5 members in $\starwhy{\bar t^i_{s,D}}{D}{Q}$. The reason for this is that answer tuples with, e.g., no more than $2$ explanations would be reported with an identical running time, for all $2 < k \le 5$, thus biasing the comparison across different values of $k$. Nonetheless, out of the 100 randomly selected answer tuples, this filtering retained 6 tuples for \bitcoin, 68 for \facebook, 77 for \httpd, 57 for \linux, 55 for \postgres, and all 100 tuples for each of the seven \doctors scenarios. We treat \galen differently because most tuples $\bar t^i_{s,D}$ had only one member in $\starwhy{\bar t^i_{s,D}}{D}{Q}$. Thus, for \galen, we report execution times only for $k=1$, retaining all 100 sampled tuples for each of the databases $D_{15}$, $D_{25}$, $D_{40}$, and $D_{50}$.

% Nonetheless, for all scenarios, out of the 100 randomly selected answer tuples, we could keep at least \textcolor{red}{XX} tuples, except for \galen, in which most of the tuples $\bar t^i_{s,D}$ had only one member in $\starwhy{\bar t^i_{s,D}}{D}{Q}$. Thus, for this scenario, we only report execution times for $k = 1$.

%
The box-plots are organized in figures and subfigures as follows. Each figure, from Figure~\ref{fig:graph-analytics-exact} to Figure~\ref{fig:csda-exact}, collects the box-plots relative to a specific scenario $s = (Q,\mathcal{D})$. Each figure has a subfigure for each database $D \in \mathcal{D}$ of the scenario $s$, which shows the box-plots relative to the execution of each of the algorithms $\ComputeDiverse{\ilpstargammafunc}$, with $(\gamma,\star) \in \{(\mathsf{hyper},\mathsf{ALL})$, $(\mathsf{hyper},\mathsf{UN})$, $(\mathsf{graph},\mathsf{UN})\}$, over the tuples $\bar t^i_{s,D}$ having at least 5 members of $\starwhy{\bar t^i_{s,D}}{D}{Q}$, for the different values of $k \in [5]$. Since for \galen we focus only on the case of $k = 1$, we group in each subfigure relative to a database $D \in \mathcal{D}$, all the box-plots relative to the three algorithms, executed with $k = 1$. In each box-plot, the bottom and the top borders represent the first and third quartile, i.e., the runtime under which 25\% and 75\% of all the runtimes occur, respectively, and the middle line represents the median runtime. Moreover, the bottom and the top whiskers represent the minimum and maximum runtime, respectively. Each box-plot also shows each individual runtime as a dot, where red dots denote timeouts, and the percentage of tuples for which a timeout was reached is shown in red on the box-plot as well. Finally, the overal distribution of the runtimes is shown around the box-plot by means of a violin plot; wider areas denote runtimes that are more frequent across the considered tuples. Runtimes are in milliseconds over a log-scale.

% \medskip
\noindent
\textbf{Algorithms Comparison.}
We first provide a relative comparison of the algorithms $\ComputeDiverse{\ilpstargammafunc}$, for $\star \in \{\mathsf{ALL},\mathsf{UN}\}$, and $\gamma \in \{\mathsf{hyper},\mathsf{graph}\}$. When focusing on $\star = \mathsf{ALL}$, and thus on the computation of diverse sets of $k$ members of the why-provenance relative to arbitrary proof trees, only the algorithm $\ComputeDiverse{\ilphypergraphfunc}$ is able to complete its execution in most of the scenarios, whereas $\ComputeDiverse{\ilpproofdagfunc}$ timeouts in all scenarios, for all answer tuples, as already discussed at the beginning of the section. Hence, $\ComputeDiverse{\ilphypergraphfunc}$ is the way to go for solving $\mathsf{ALL}\text{-}\mathsf{DiverseWhyProvenance}$.

The picture is different for $\star = \mathsf{UN}$, i.e., when focusing on the family of unambiguous proof trees. In this case, both algorithms $\ComputeDiverse{\ilpunhypergraphfunc}$ and $\ComputeDiverse{\ilpunproofdagfunc}$ perform similarly across all scenarios, with the exception of the \doctors\ scenarios, where the algorithm $\ComputeDiverse{\ilpunhypergraphfunc}$ is generally faster, while the algorithm $\ComputeDiverse{\ilpunproofdagfunc}$ exhibits higher runtimes and a higher number of timeouts. This difference is due to the fact that the ILP encoding $\ilpunproofdagfunc$ employs the set of constraints $T_{\mathsf{acyclic}}$ whose size is highly sensitive on how connected the downward closure is. This is a consequence of the vertex elimination technique employed in $T_{\mathsf{acyclic}}$. This is confirmed by Figure~\ref{fig:timeouts}, which reports the number of variables occurring in $T_\mathsf{acyclic}$ for each of the answer tuples considered in the three scenarios \doctors~1, \doctors~5, and \doctors~7, which are the only \doctors\ scenarios where $\ComputeDiverse{\ilpunproofdagfunc}$ starts exhibiting timeouts; the timeouts occur for $k \in \{2,3,4,5\}$, $k \in \{4,5\}$, and $k \in \{2,3,4,5\}$, respectively. In the plots, red dots denote answer tuples for which a timeout is reached. The box-plots clearly show that timeouts consistently occur for tuples in which the number of variables employed in $T_\mathsf{acyclic}$ is higher.

From the above discussion, we conclude that in the case of  unambiguous proof trees, the algorithms $\ComputeDiverse{\ilpunhypergraphfunc}$ and $\ComputeDiverse{\ilpunproofdagfunc}$ are generally comparable, but the first is to be preferred for a more predictable running time.

\begin{figure}[h]
	\centering
	\small
	
	%================= facebook =================%
	\begin{minipage}[t]{\linewidth}
		\centering
		\begin{subfigure}[t]{0.36\linewidth}
			\centering
			\safeincludegraphics[height=2.57cm]{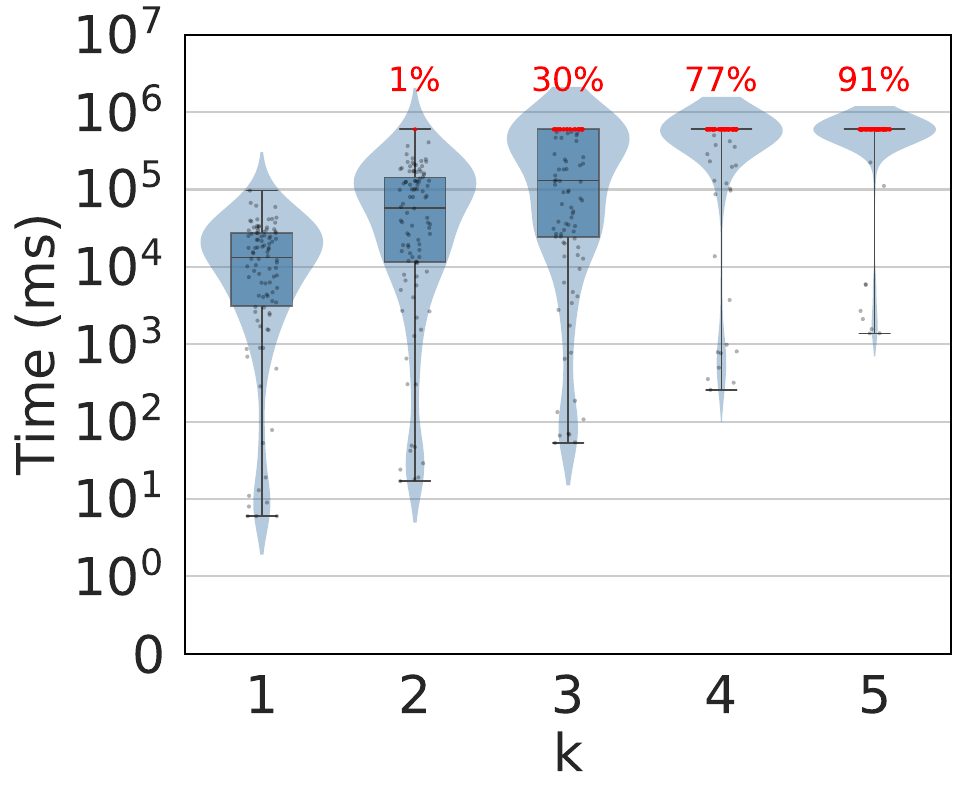}
			\caption{$\ilphypergraphfunc$}
		\end{subfigure}\hfill
		\begin{subfigure}[t]{0.30\linewidth}
			\centering
			\safeincludegraphics[height=2.5cm]{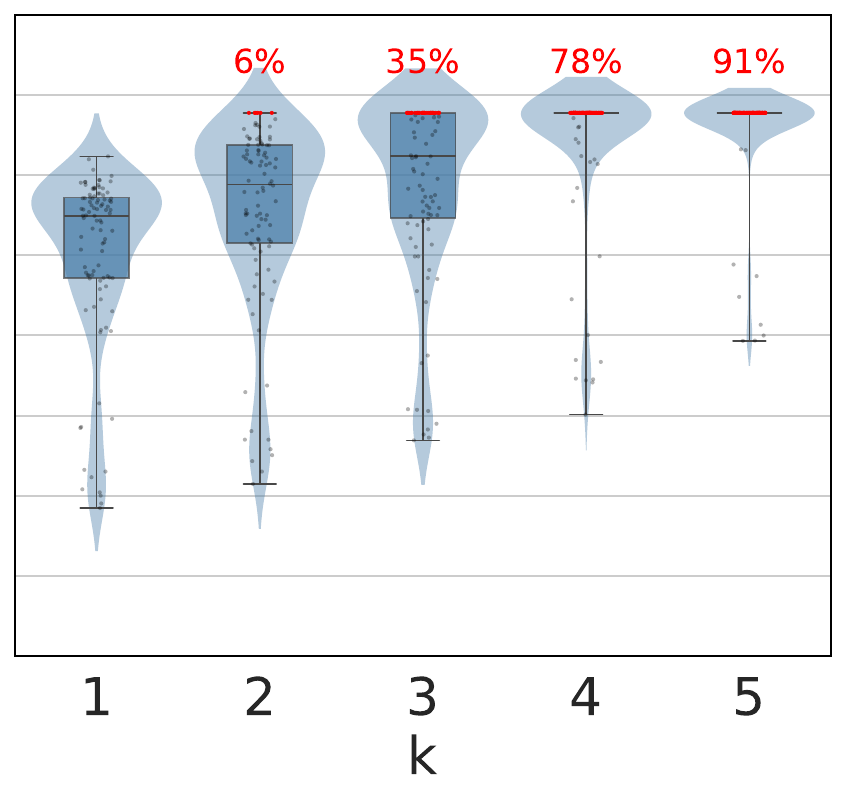}
			\caption{$\ilpunhypergraphfunc$}
		\end{subfigure}\hfill
		\begin{subfigure}[t]{0.30\linewidth}
			\centering
			\safeincludegraphics[height=2.5cm]{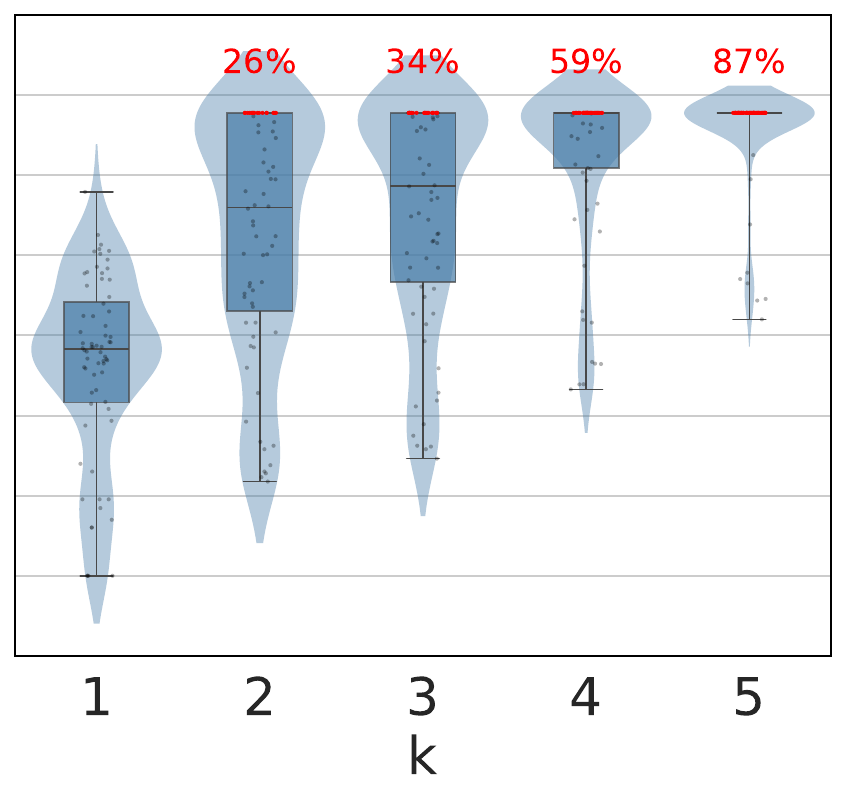}
			\caption{$\ilpunproofdagfunc$}
		\end{subfigure}
		\par\vspace{0.5em}
		\textbf{\graphanalytics\ (\facebook)}
	\end{minipage}
	
	\par\vspace{1em}
	
	%================= bitcoin =================%
	\begin{minipage}[t]{\linewidth}
		\centering
		\begin{subfigure}[t]{0.36\linewidth}
			\centering
			\safeincludegraphics[height=2.57cm]{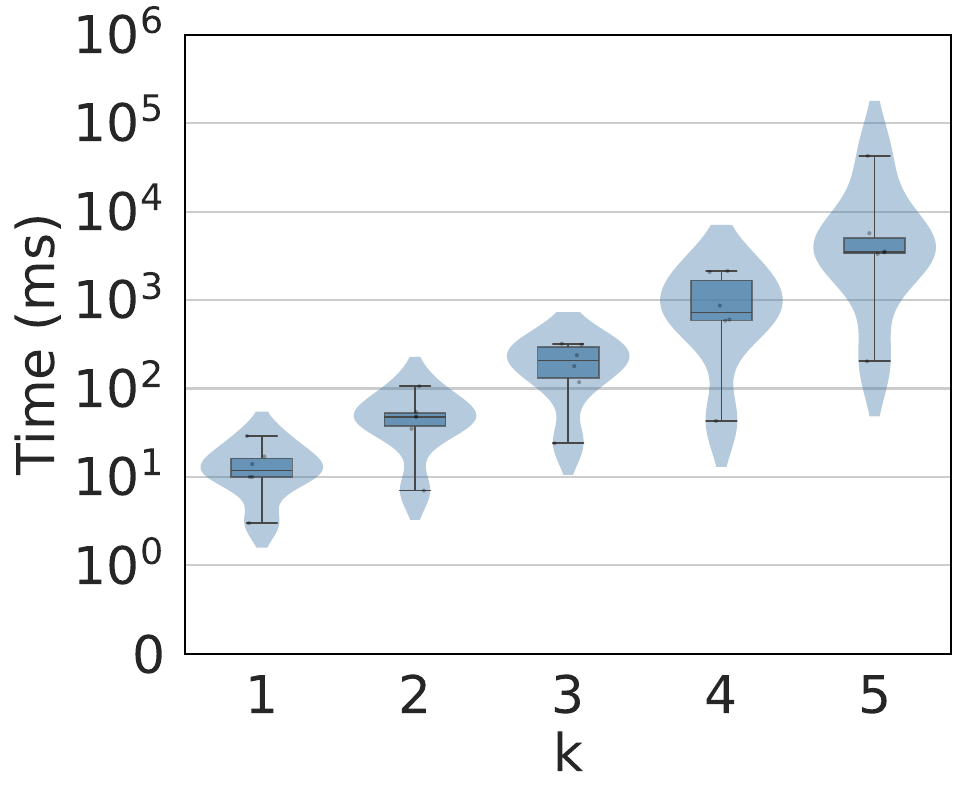}
			\caption{$\ilphypergraphfunc$}
		\end{subfigure}\hfill
		\begin{subfigure}[t]{0.30\linewidth}
			\centering
			\safeincludegraphics[height=2.5cm]{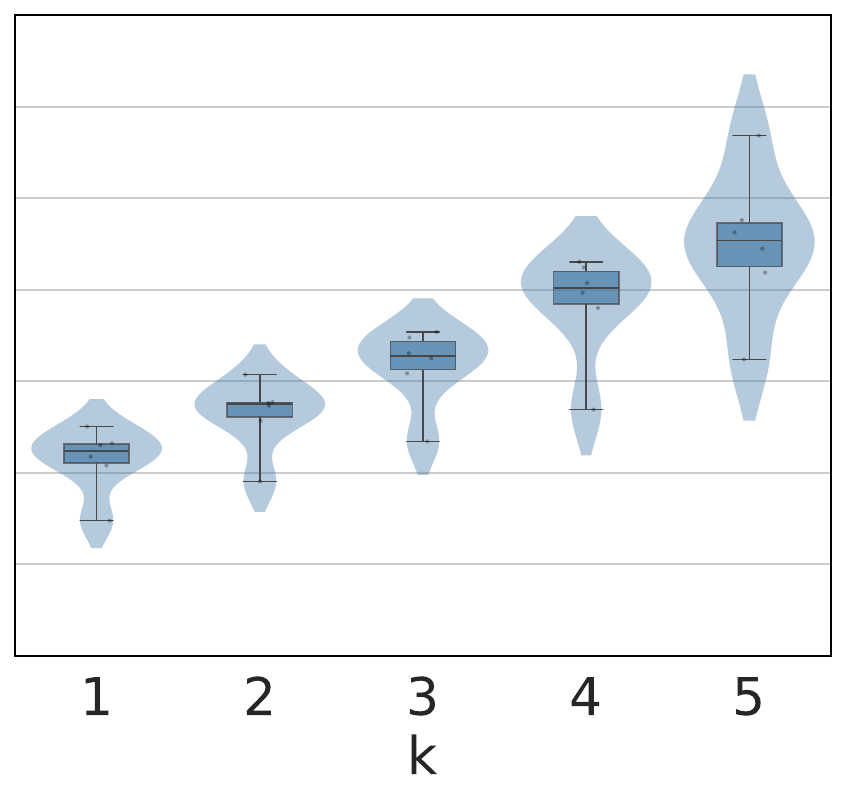}
			\caption{$\ilpunhypergraphfunc$}
		\end{subfigure}\hfill
		\begin{subfigure}[t]{0.30\linewidth}
			\centering
			\safeincludegraphics[height=2.5cm]{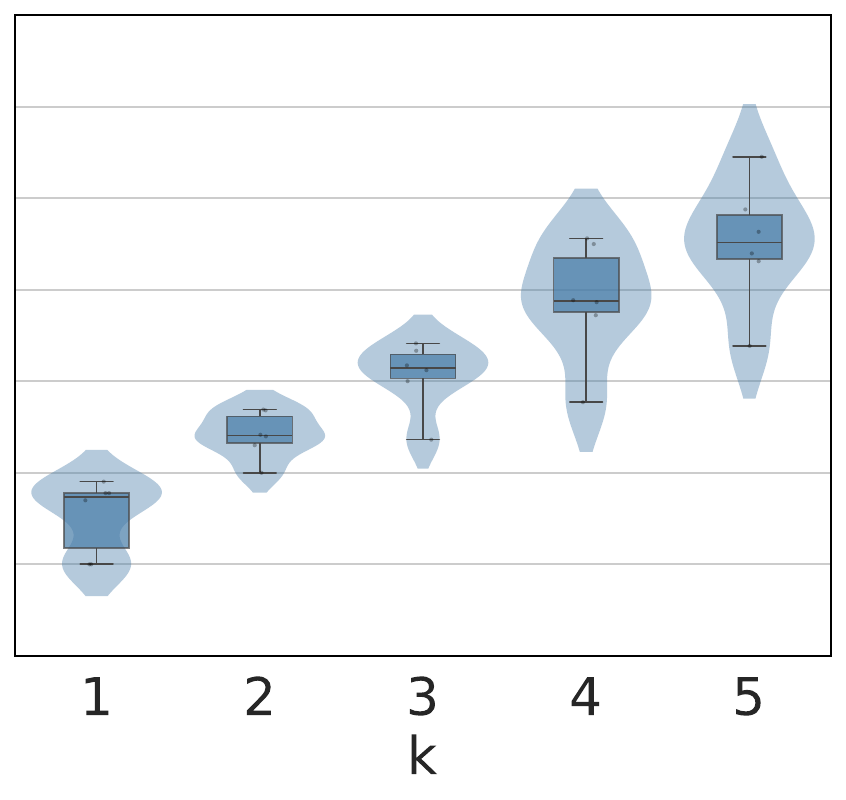}
			\caption{$\ilpunproofdagfunc$}
		\end{subfigure}
		\par\vspace{0.5em}
		\textbf{\graphanalytics\ (\bitcoin)}
	\end{minipage}
	
	\caption{Runtimes for the \graphanalytics\ scenario.}
	\label{fig:graph-analytics-exact}
\end{figure}

\begin{figure}[h]
	\centering
	\small
	
	%==================== doctors 1 ====================%
	\begin{minipage}[t]{\linewidth}
		\centering
		
		\begin{subfigure}[t]{0.36\linewidth}
			\centering
			\safeincludegraphics[height=2.57cm]{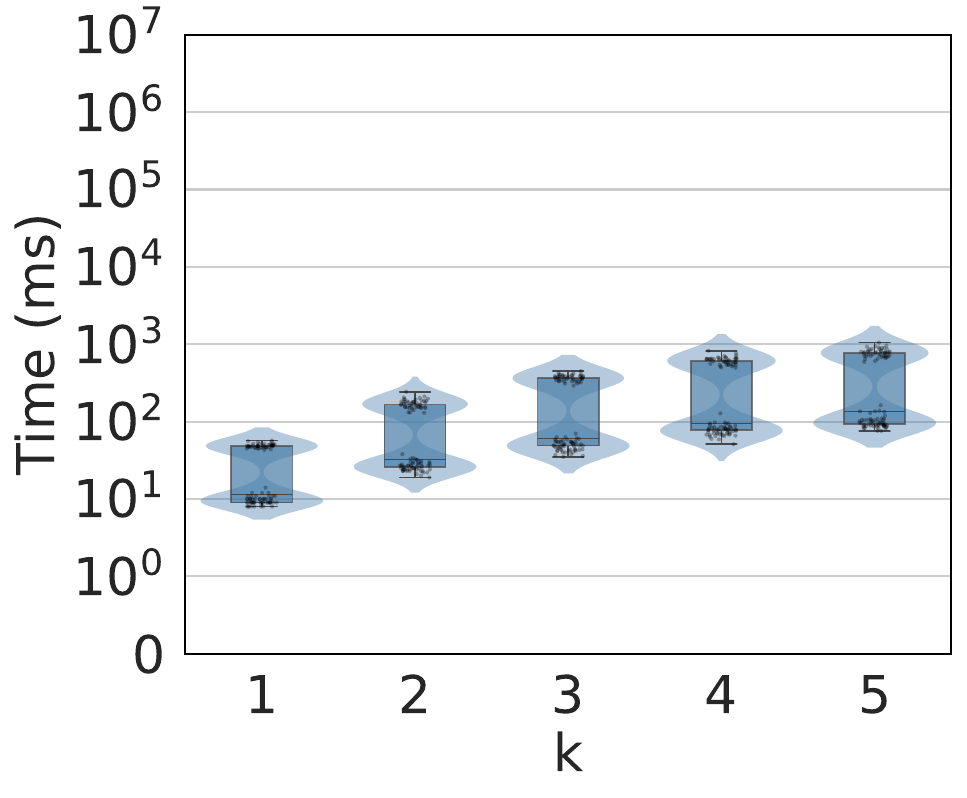}
			\caption{$\ilphypergraphfunc$}
		\end{subfigure}\hfill
		\begin{subfigure}[t]{0.30\linewidth}
			\centering
			\safeincludegraphics[height=2.5cm]{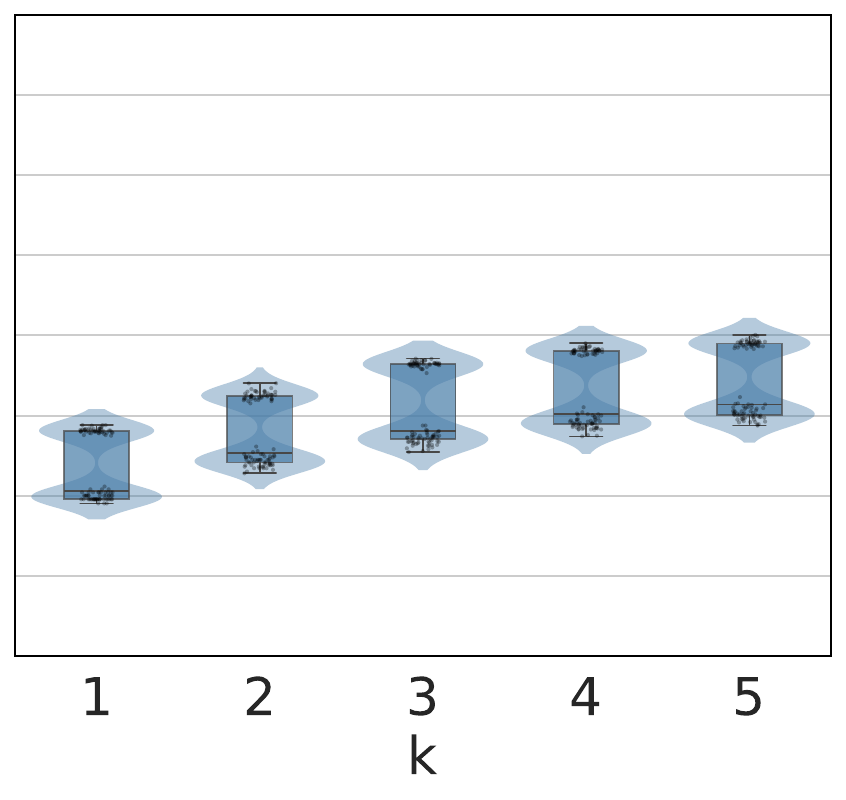}
			\caption{$\ilpunhypergraphfunc$}
		\end{subfigure}\hfill
		\begin{subfigure}[t]{0.30\linewidth}
			\centering
			\safeincludegraphics[height=2.5cm]{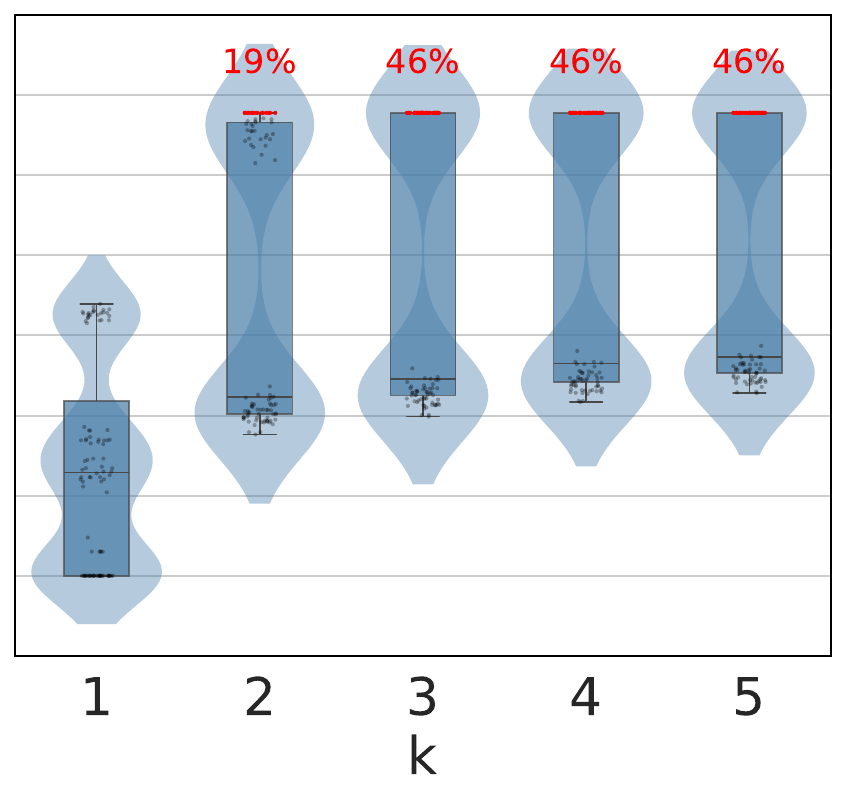}
			\caption{$\ilpunproofdagfunc$}
		\end{subfigure}
		
		\par\vspace{1em}
		\textbf{\doctors 1 (100k)}
	\end{minipage}
	
	\par\vspace{1em}
	
	%================= doctors 2 =================%
	\begin{minipage}[t]{\linewidth}
		\centering
		
		\begin{subfigure}[t]{0.36\linewidth}
			\centering
			\safeincludegraphics[height=2.57cm]{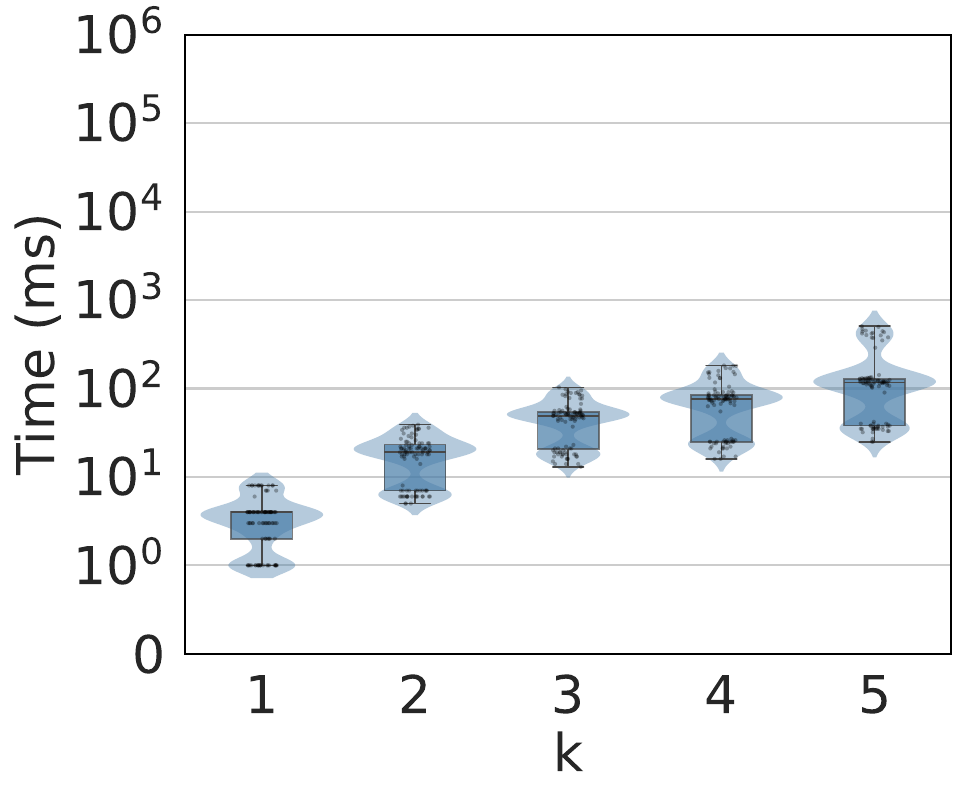}
			\caption{$\ilphypergraphfunc$}
		\end{subfigure}\hfill
		\begin{subfigure}[t]{0.30\linewidth}
			\centering
			\safeincludegraphics[height=2.5cm]{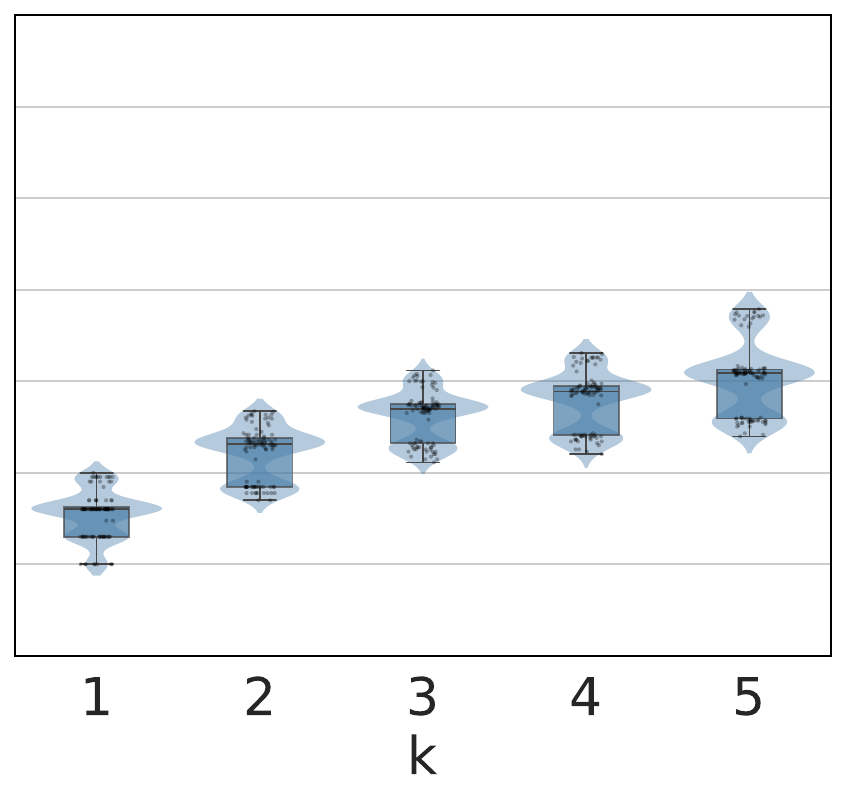}
			\caption{$\ilpunhypergraphfunc$}
		\end{subfigure}\hfill
		\begin{subfigure}[t]{0.30\linewidth}
			\centering
			\safeincludegraphics[height=2.5cm]{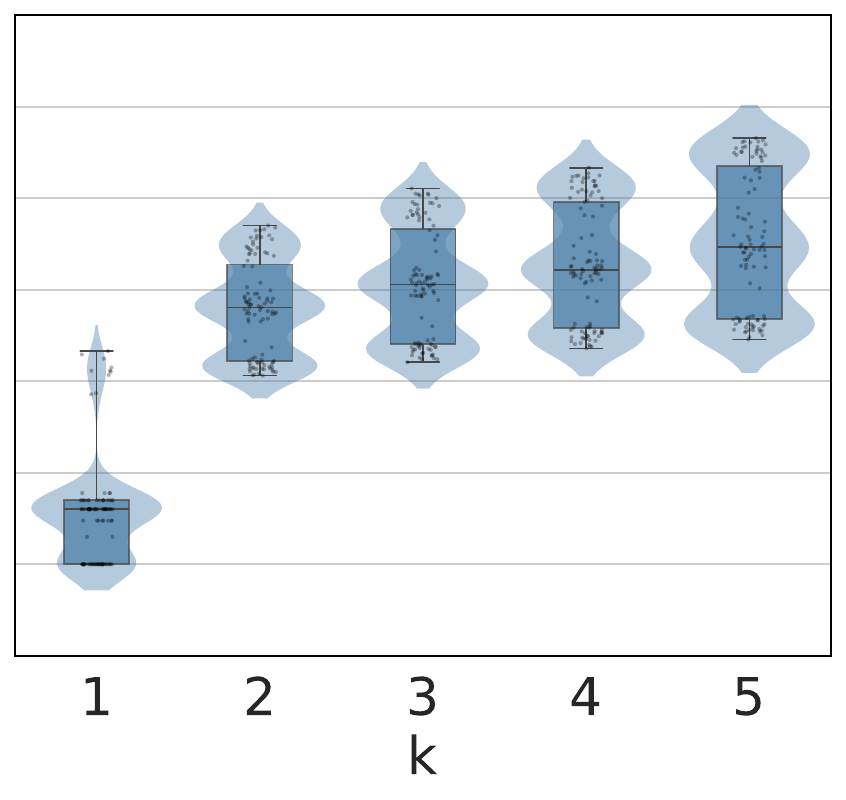}
			\caption{$\ilpunproofdagfunc$}
		\end{subfigure}
		
		\par\vspace{1em}
		\textbf{\doctors 2 (100k)}
	\end{minipage}
	
	\par\vspace{1em}
	
	%================= doctors 3 =================%
	\begin{minipage}[t]{\linewidth}
		\centering
		\begin{subfigure}[t]{0.36\linewidth}
			\centering
			\safeincludegraphics[height=2.57cm]{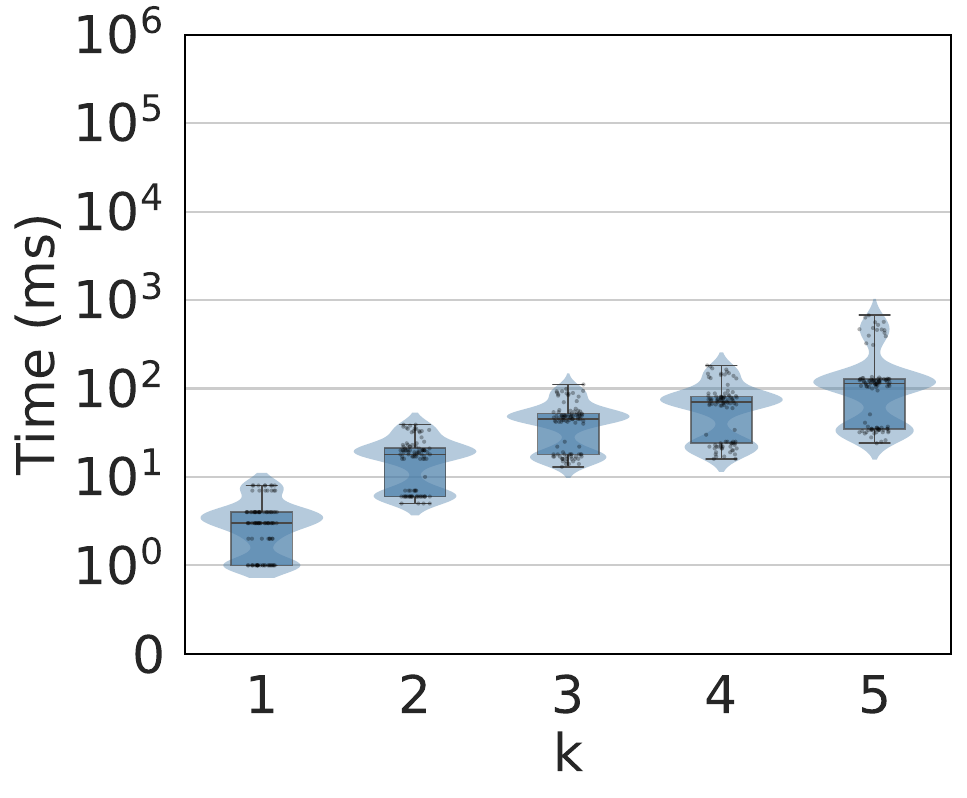}
			\caption{$\ilphypergraphfunc$}
		\end{subfigure}\hfill
		\begin{subfigure}[t]{0.30\linewidth}
			\centering
			\safeincludegraphics[height=2.5cm]{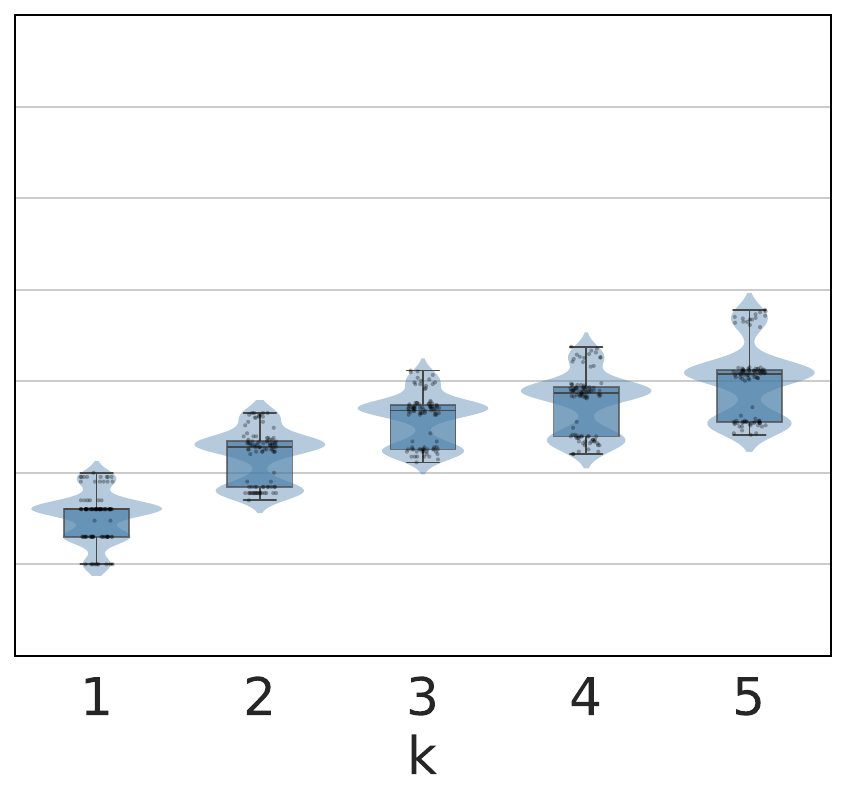}
			\caption{$\ilpunhypergraphfunc$}
		\end{subfigure}\hfill
		\begin{subfigure}[t]{0.30\linewidth}
			\centering
			\safeincludegraphics[height=2.5cm]{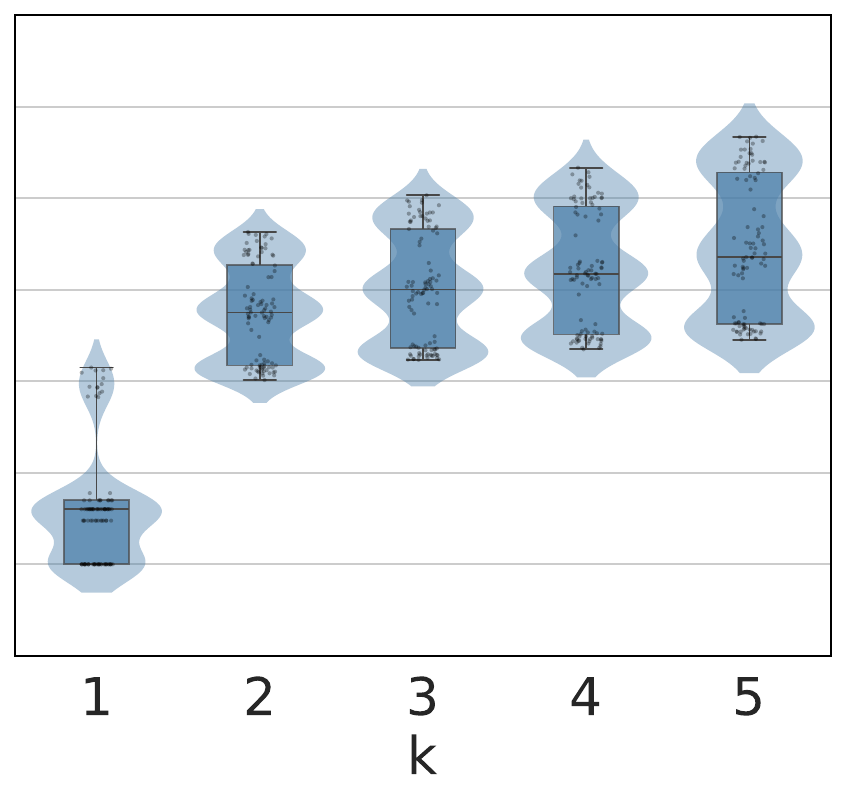}
			\caption{$\ilpunproofdagfunc$}
		\end{subfigure}
		
		\par\vspace{1em}
		\textbf{\doctors 3 (100k)}
	\end{minipage}
	
	\par\vspace{1em}
	
	%================= doctors 4 =================%
	\begin{minipage}[t]{\linewidth}
		\centering
		\begin{subfigure}[t]{0.36\linewidth}
			\centering
			\safeincludegraphics[height=2.57cm]{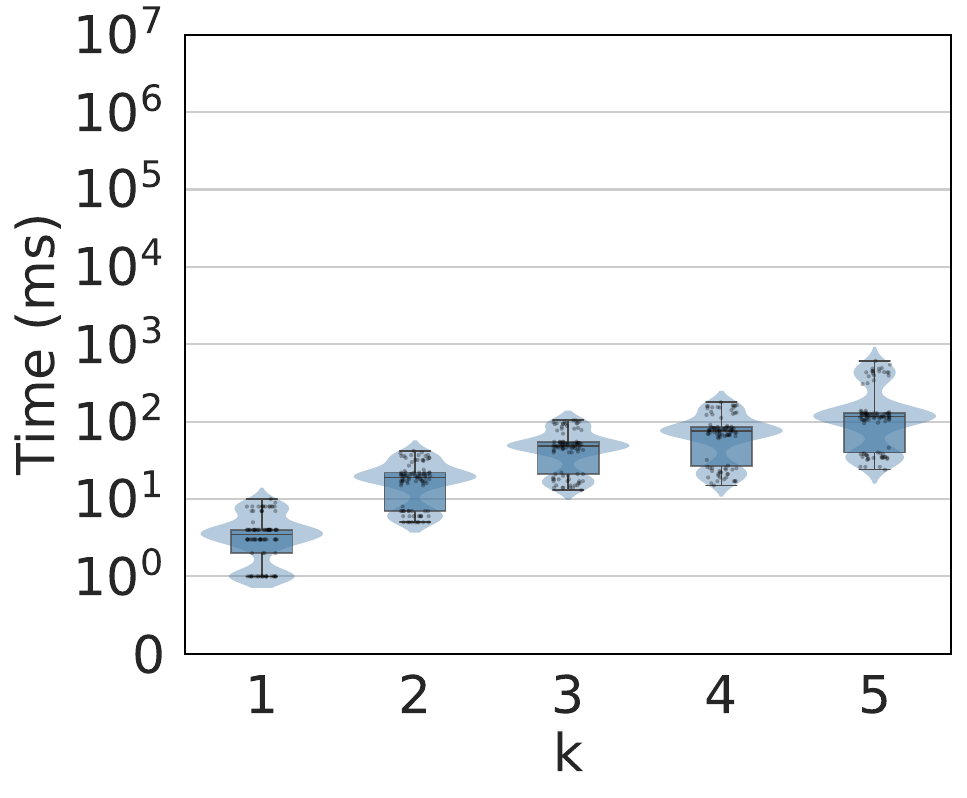}
			\caption{$\ilphypergraphfunc$}
		\end{subfigure}\hfill
		\begin{subfigure}[t]{0.30\linewidth}
			\centering
			\safeincludegraphics[height=2.5cm]{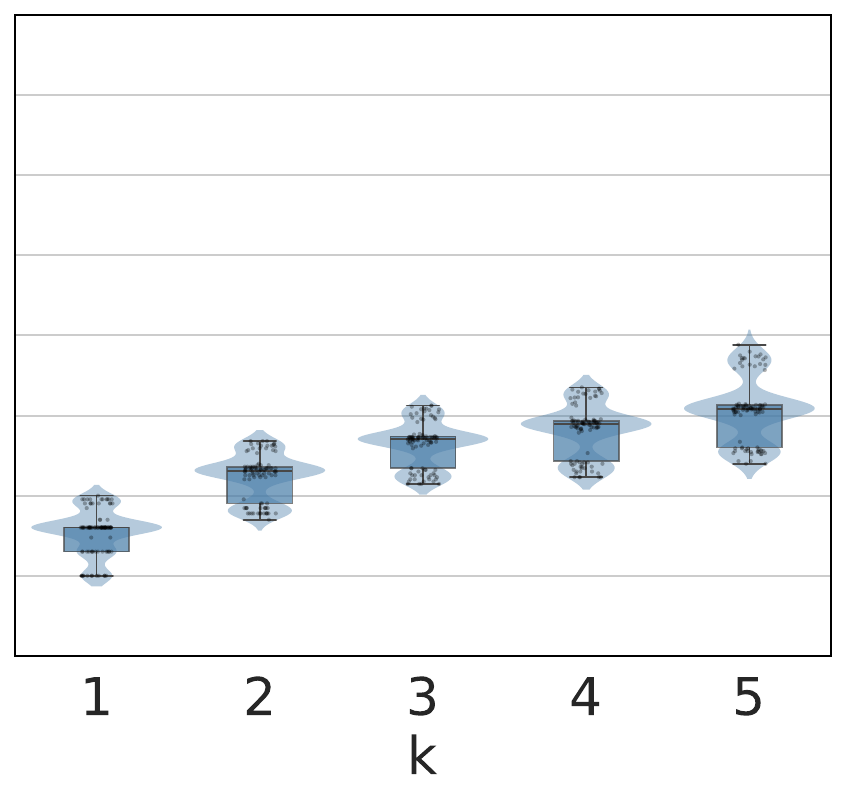}
			\caption{$\ilpunhypergraphfunc$}
		\end{subfigure}\hfill
		\begin{subfigure}[t]{0.30\linewidth}
			\centering
			\safeincludegraphics[height=2.5cm]{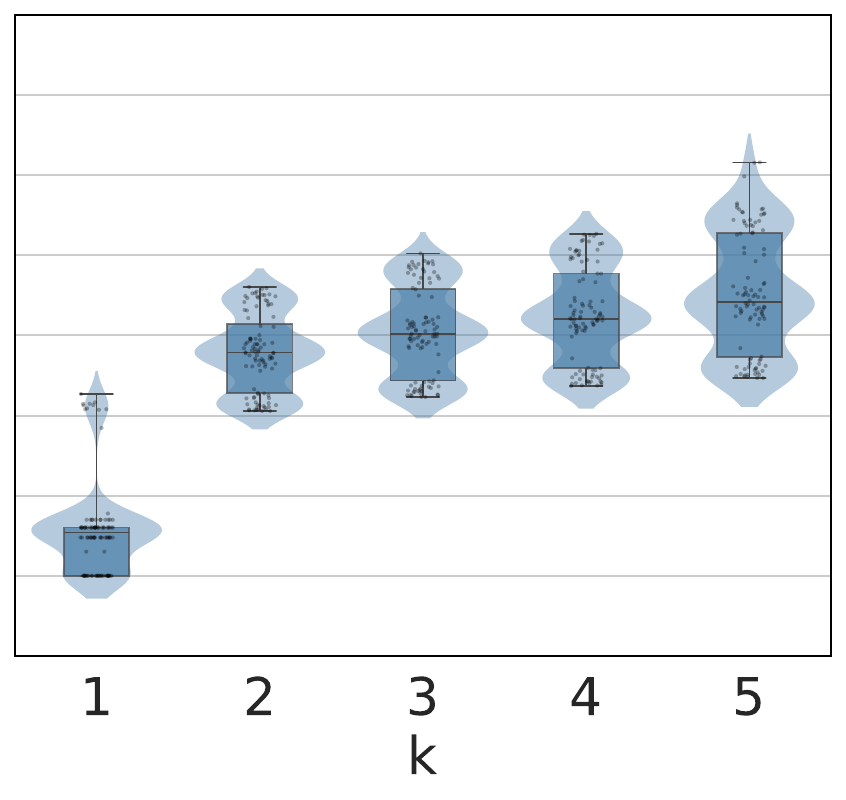}
			\caption{$\ilpunproofdagfunc$}
		\end{subfigure}
		
		\par\vspace{1em}
		\textbf{\doctors 4 (100k)}
	\end{minipage}

\par\vspace{1em}

    %================= doctors 5 =================%
	\begin{minipage}[t]{\linewidth}
		\centering
		
		\begin{subfigure}[t]{0.36\linewidth}
			\centering
			\safeincludegraphics[height=2.57cm]{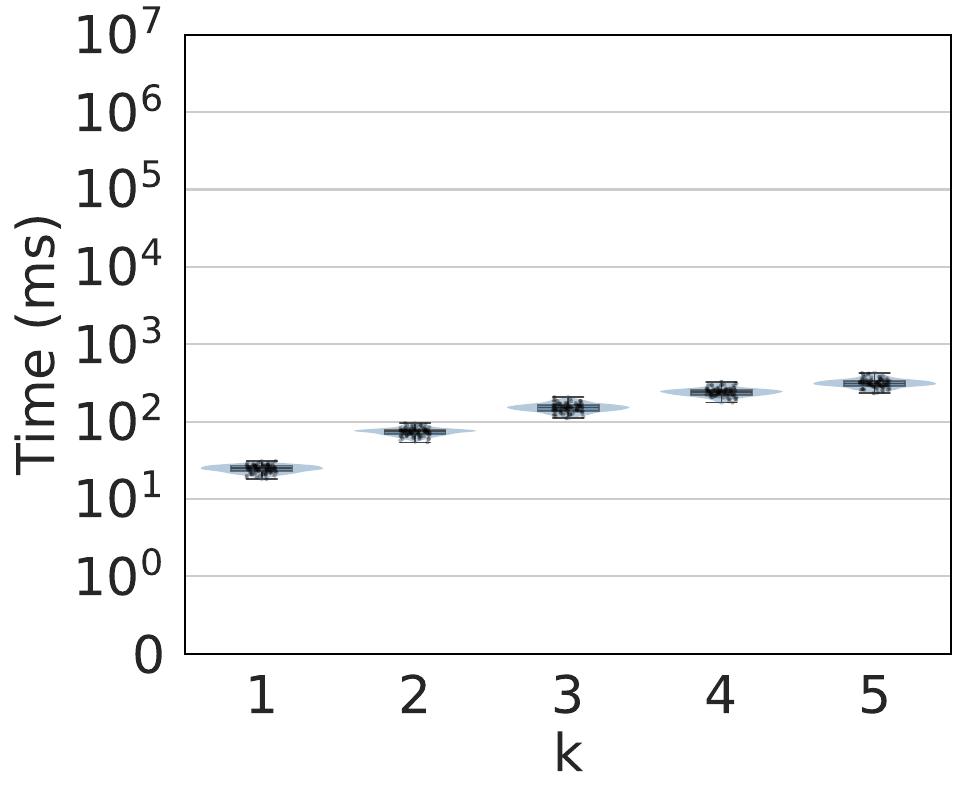}
			\caption{$\ilphypergraphfunc$}
		\end{subfigure}\hfill
		\begin{subfigure}[t]{0.30\linewidth}
			\centering
			\safeincludegraphics[height=2.5cm]{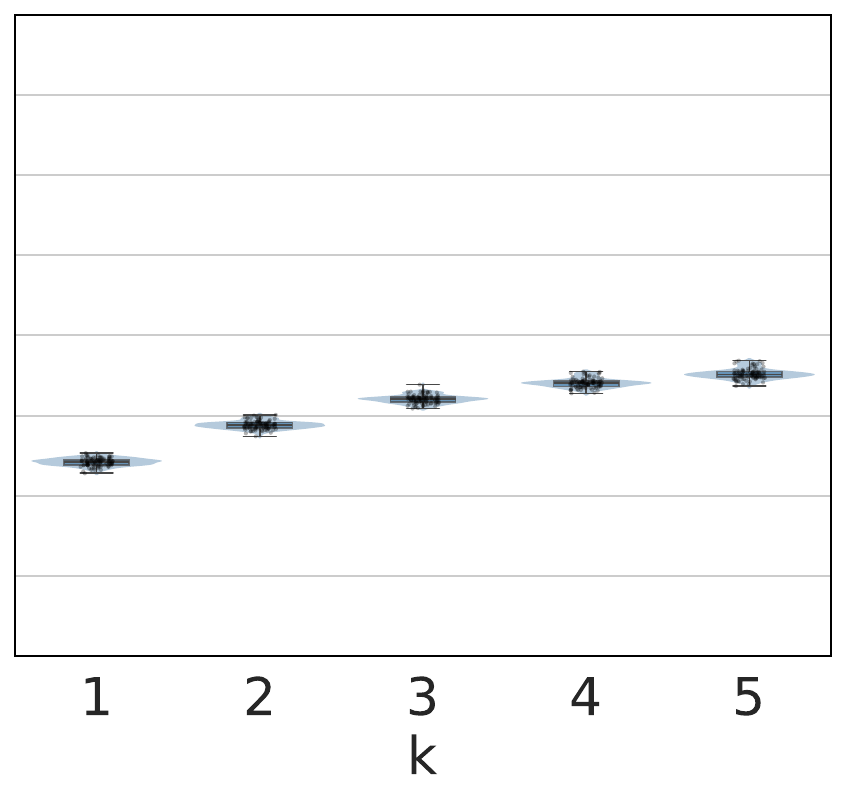}
			\caption{$\ilpunhypergraphfunc$}
		\end{subfigure}\hfill
		\begin{subfigure}[t]{0.30\linewidth}
			\centering
			\safeincludegraphics[height=2.5cm]{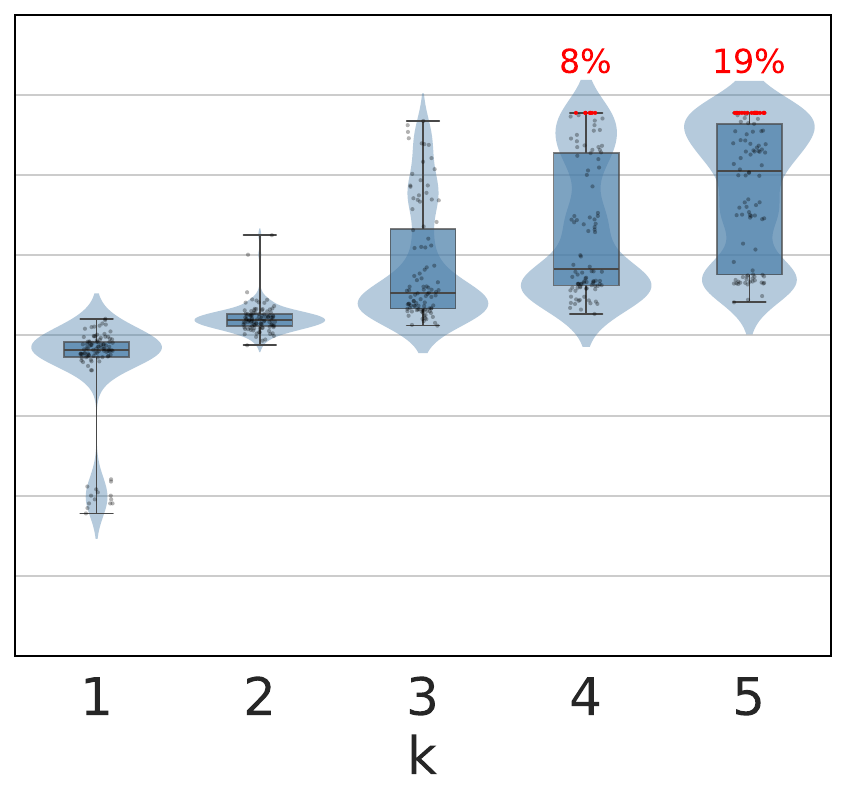}
			\caption{$\ilpunproofdagfunc$}
		\end{subfigure}
		
		\par\vspace{1em}
		\textbf{\doctors\ 5 (100k)}
	\end{minipage}

	\caption{Runtimes for the \doctors (1--5) scenarios.}
	\label{fig:doctors-exact}
\end{figure}

% Part 2: doctors 5-7 (same figure number; subfigure letters continue from (m))
\begin{figure}[!tp]
	\ContinuedFloat
	\centering
	\small

	%================= doctors 6 =================%
	\begin{minipage}[t]{\linewidth}
		\centering
		
		\begin{subfigure}[t]{0.36\linewidth}
			\centering
			\safeincludegraphics[height=2.57cm]{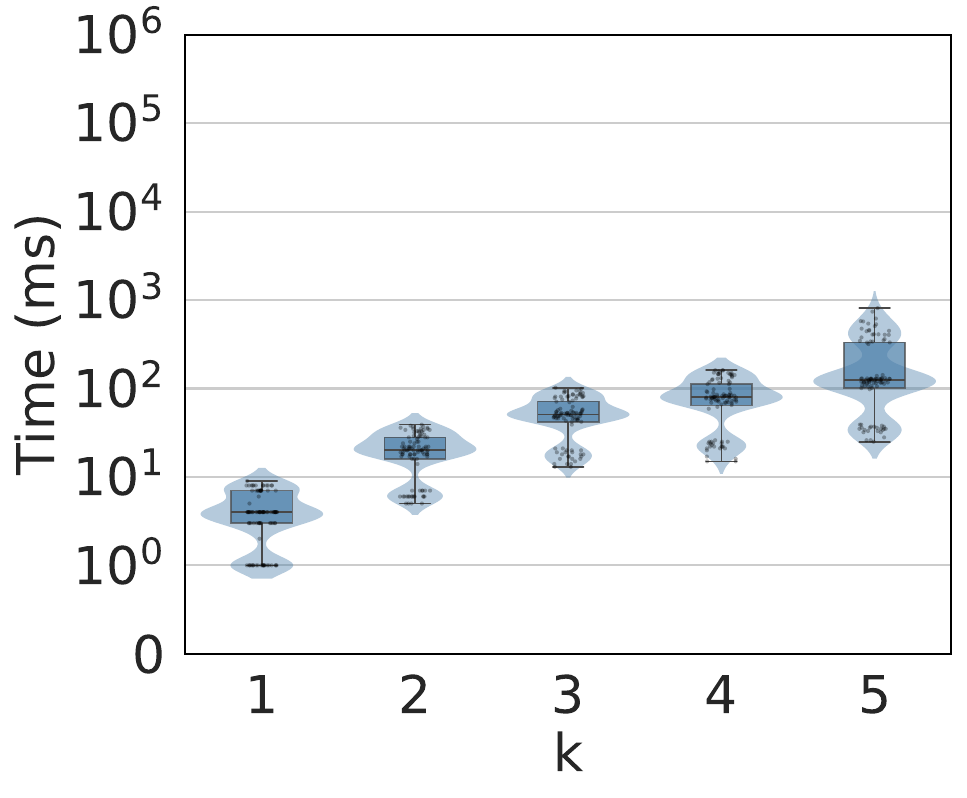}
			\caption{$\ilphypergraphfunc$}
		\end{subfigure}\hfill
		\begin{subfigure}[t]{0.30\linewidth}
			\centering
			\safeincludegraphics[height=2.5cm]{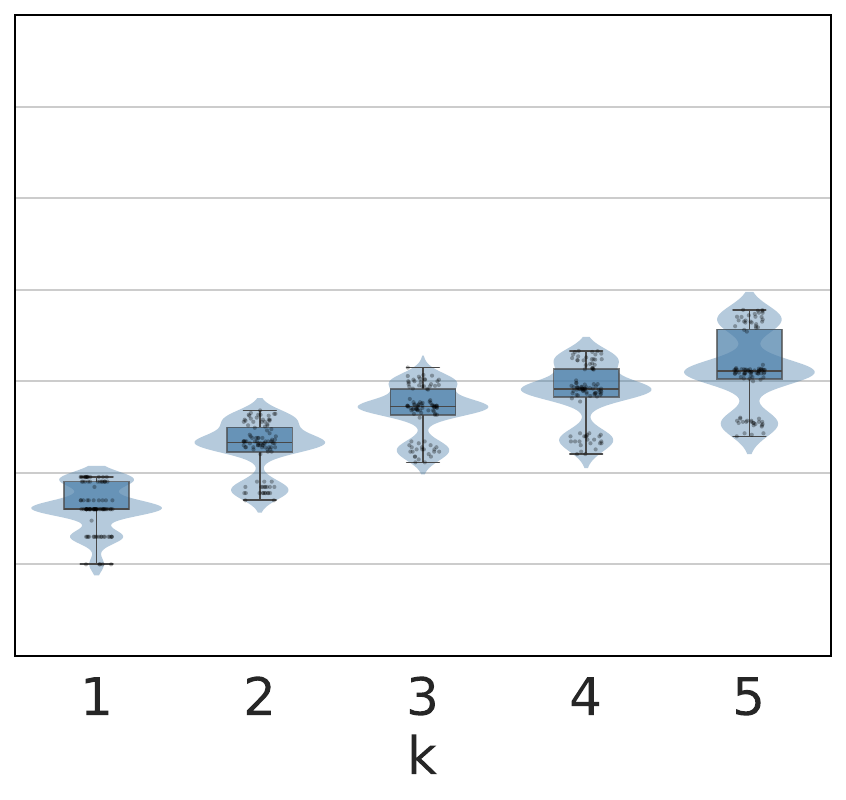}
			\caption{$\ilpunhypergraphfunc$}
		\end{subfigure}\hfill
		\begin{subfigure}[t]{0.30\linewidth}
			\centering
			\safeincludegraphics[height=2.5cm]{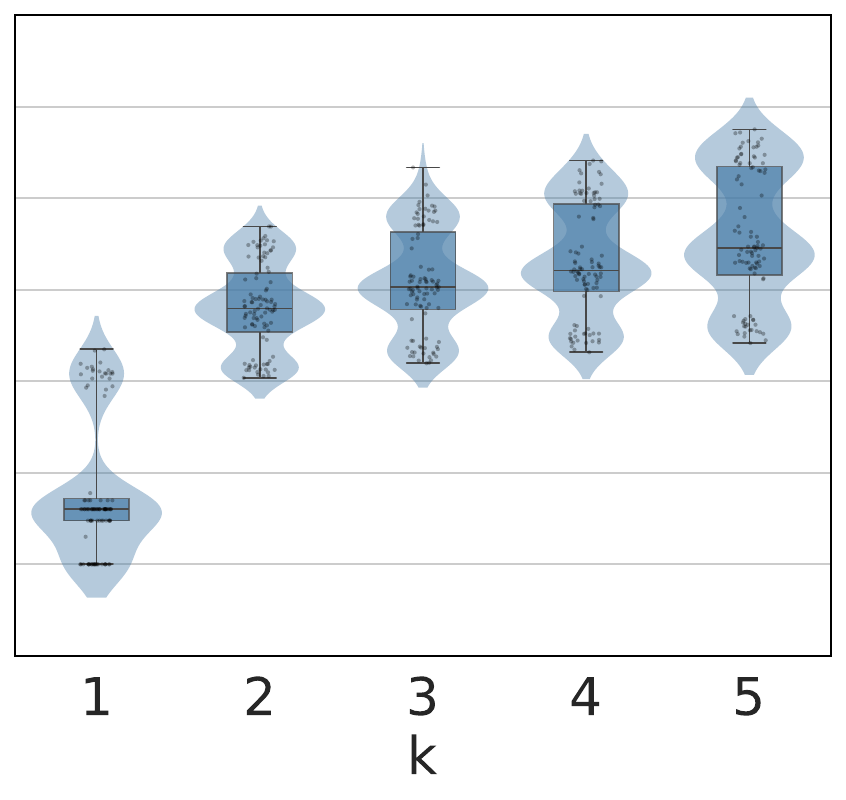}
			\caption{$\ilpunproofdagfunc$}
		\end{subfigure}
		
		\par\vspace{1em}
		\textbf{\doctors\ 6 (100k)}
	\end{minipage}
	
	\par\vspace{1em}
	
	%================= doctors 7 =================%
	\begin{minipage}[t]{\linewidth}
		\centering
		
		\begin{subfigure}[t]{0.36\linewidth}
			\centering
			\safeincludegraphics[height=2.57cm]{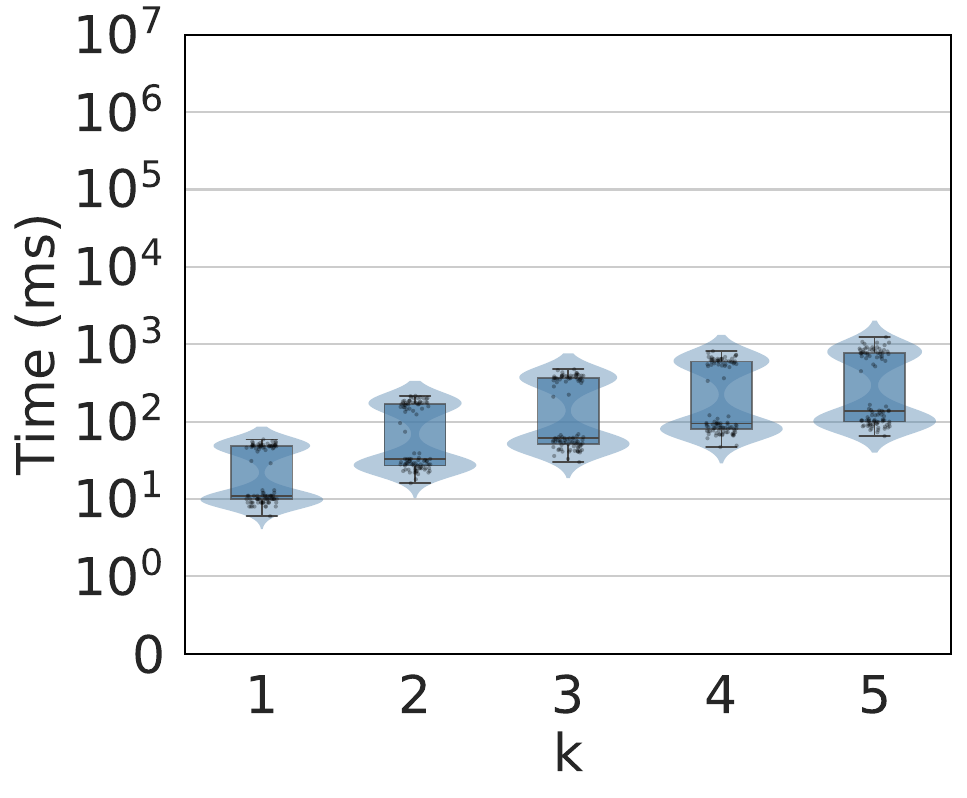}
			\caption{$\ilphypergraphfunc$}
		\end{subfigure}\hfill
		\begin{subfigure}[t]{0.30\linewidth}
			\centering
			\safeincludegraphics[height=2.5cm]{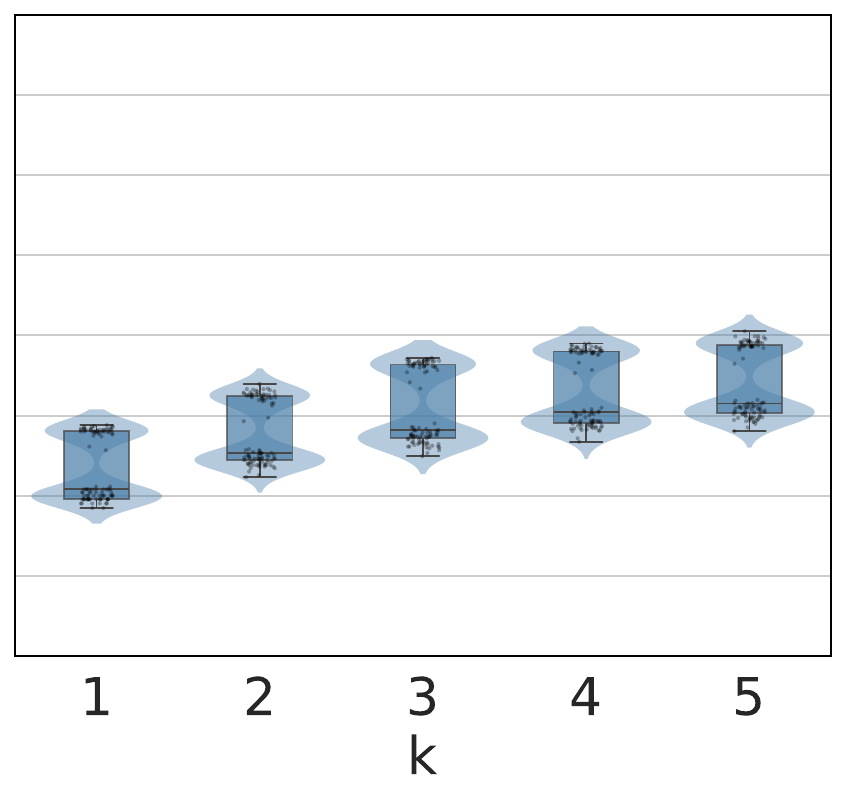}
			\caption{$\ilpunhypergraphfunc$}
		\end{subfigure}\hfill
		\begin{subfigure}[t]{0.30\linewidth}
			\centering
			\safeincludegraphics[height=2.5cm]{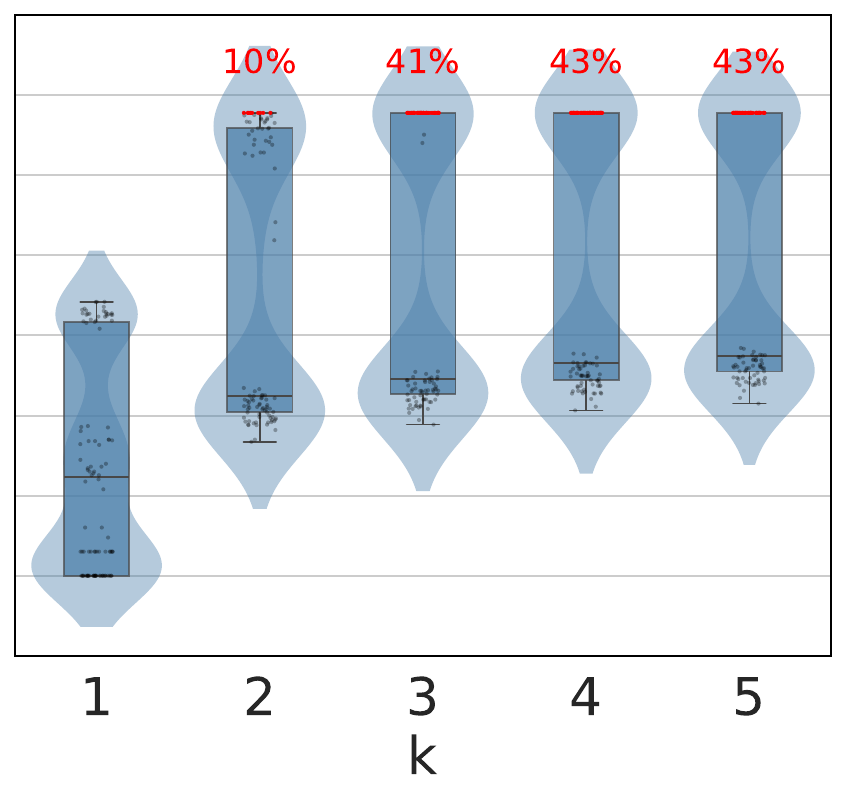}
			\caption{$\ilpunproofdagfunc$}
		\end{subfigure}
		
		\par\vspace{1em}
		\textbf{\doctors\ 7 (100k)}
	\end{minipage}
	
	\caption[]{Runtimes for the \doctors (6 and 7) scenarios.}
\end{figure}

\begin{figure}[!tp]
	\centering
	\small
	
	%==================== all galen ====================%
	\begin{minipage}[t]{\linewidth}
		\centering
		
		% ---- row 1 ----
		\begin{subfigure}[t]{0.49\linewidth}
			\centering
			\safeincludegraphics[height=3cm]{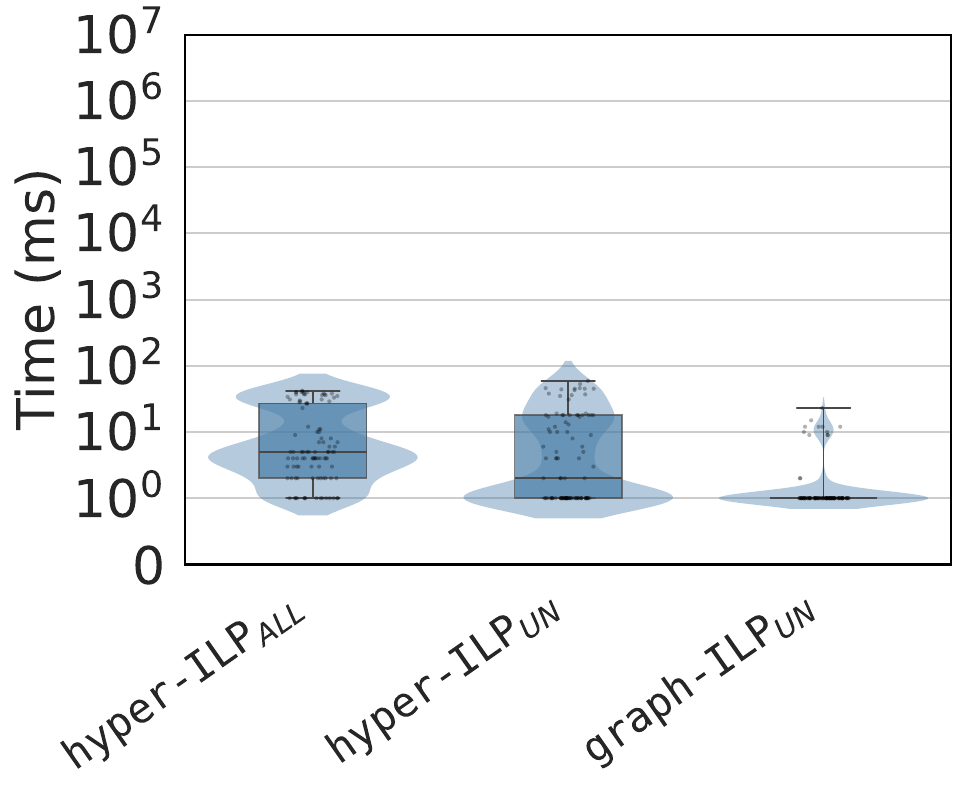}
			\caption{\galen (\fifteen)}
		\end{subfigure}\hfill
		\begin{subfigure}[t]{0.49\linewidth}
			\centering
			\safeincludegraphics[height=3cm]{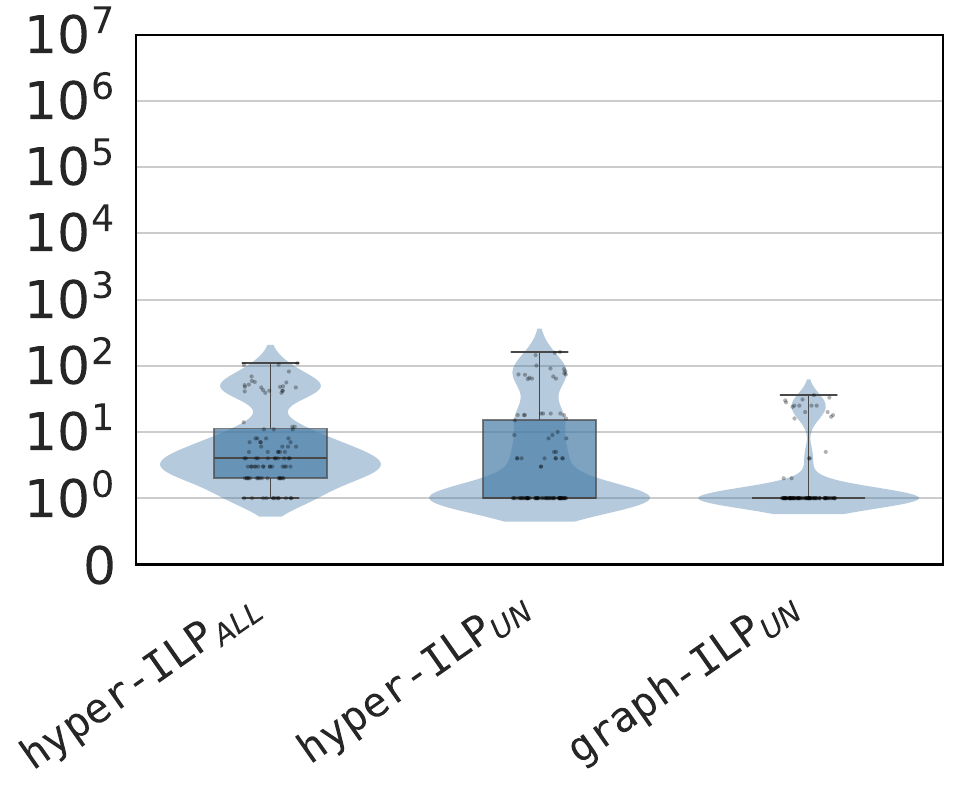}
			\caption{\galen (\twentyfive)}
		\end{subfigure}
		
		\par\vspace{1em}
		
		% ---- row 2 ----
		\begin{subfigure}[t]{0.49\linewidth}
			\centering
			\safeincludegraphics[height=3cm]{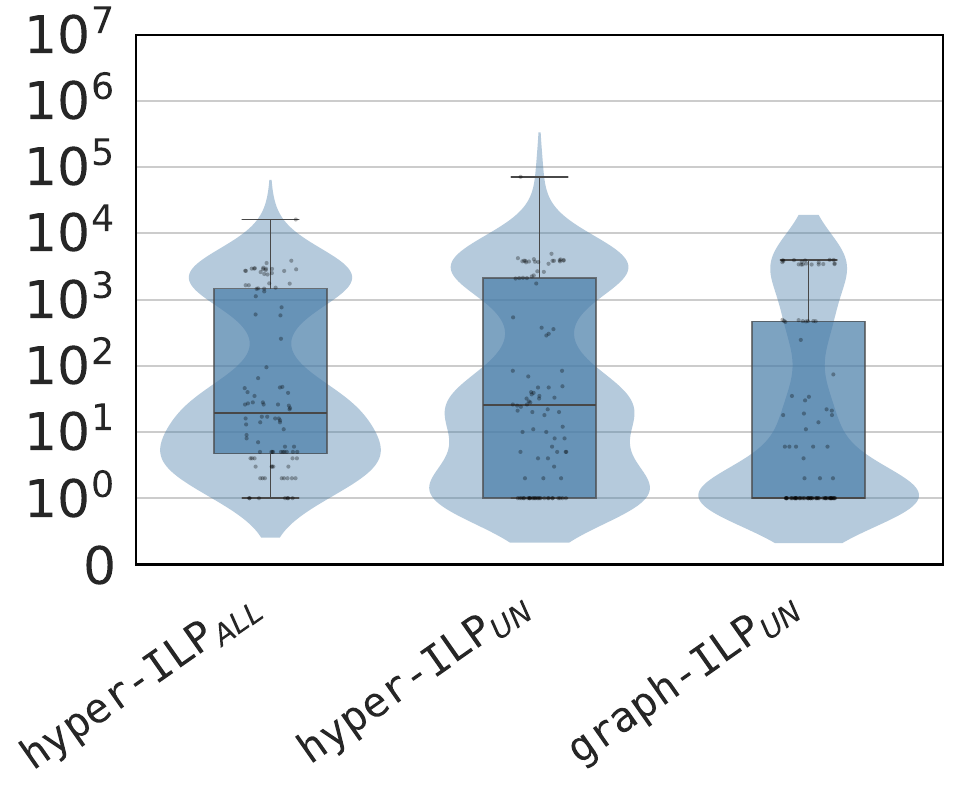}
			\caption{\galen (\forty)}
		\end{subfigure}\hfill
		\begin{subfigure}[t]{0.49\linewidth}
			\centering
			\safeincludegraphics[height=3cm]{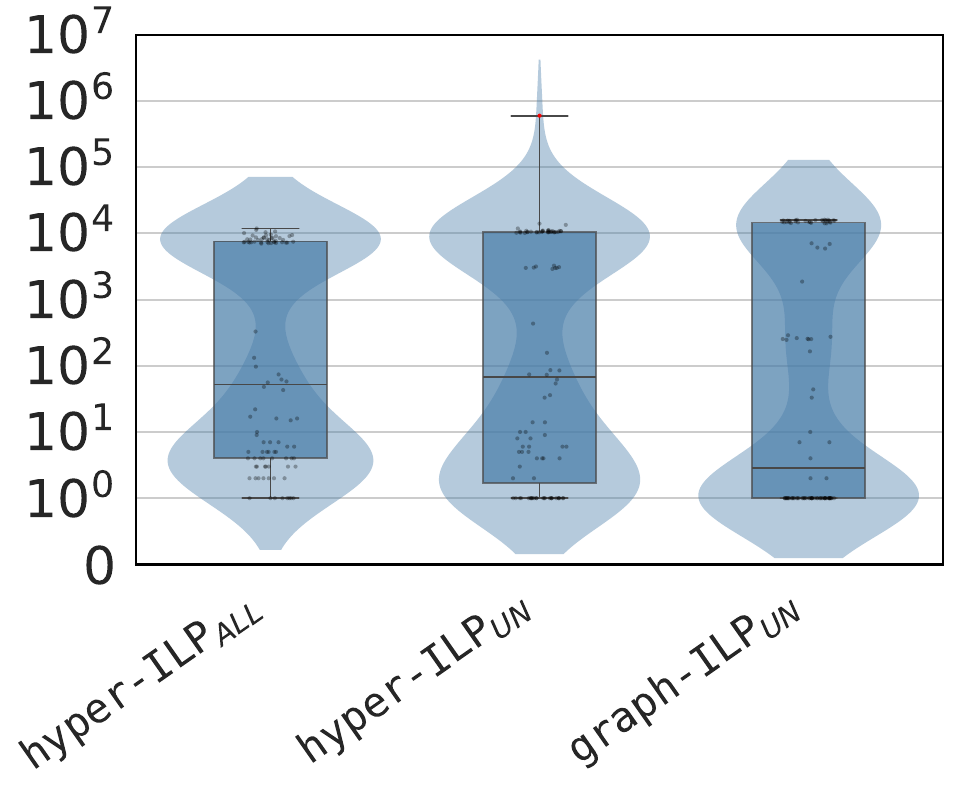}
			\caption{\galen (\fifty)}
		\end{subfigure}
	\end{minipage}
	\caption{Runtimes for the \galen\ scenario.}
	\label{fig:galen-exact}
\end{figure}

\begin{figure}[!tp]
	\centering
	\small
	
	%==================== HTTPD ====================%
	\begin{minipage}[t]{\linewidth}
		\centering
		
		\begin{subfigure}[t]{0.36\linewidth}
			\centering
			\safeincludegraphics[height=2.57cm]{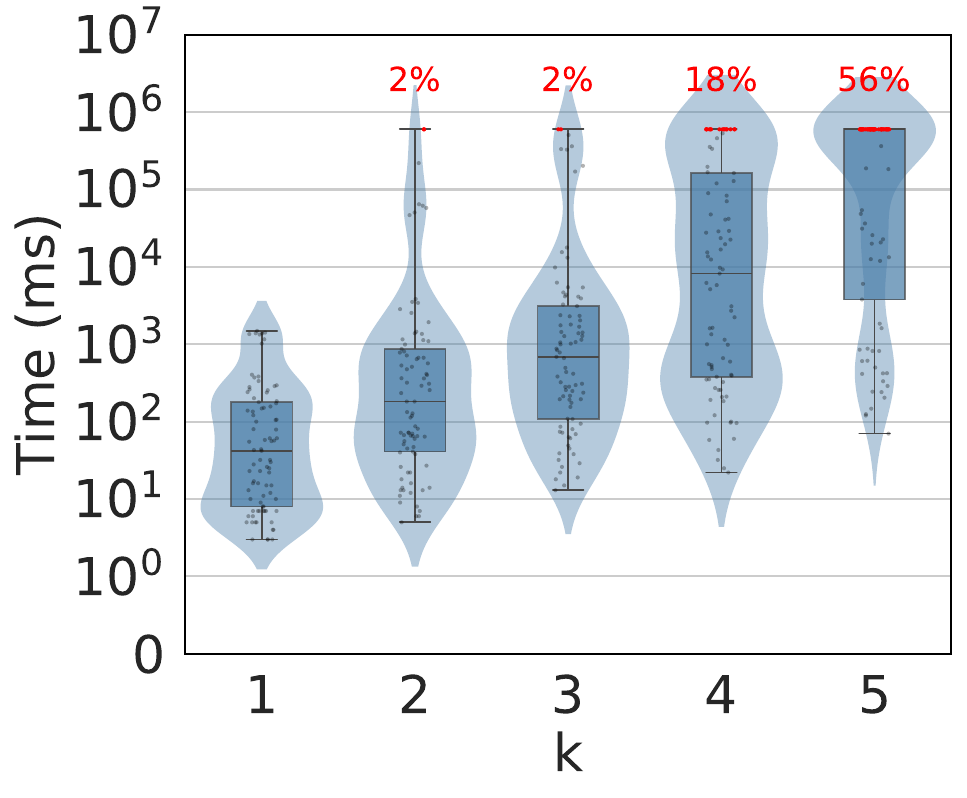}
			\caption{$\ilphypergraphfunc$}
		\end{subfigure}\hfill
		\begin{subfigure}[t]{0.30\linewidth}
			\centering
			\safeincludegraphics[height=2.5cm]{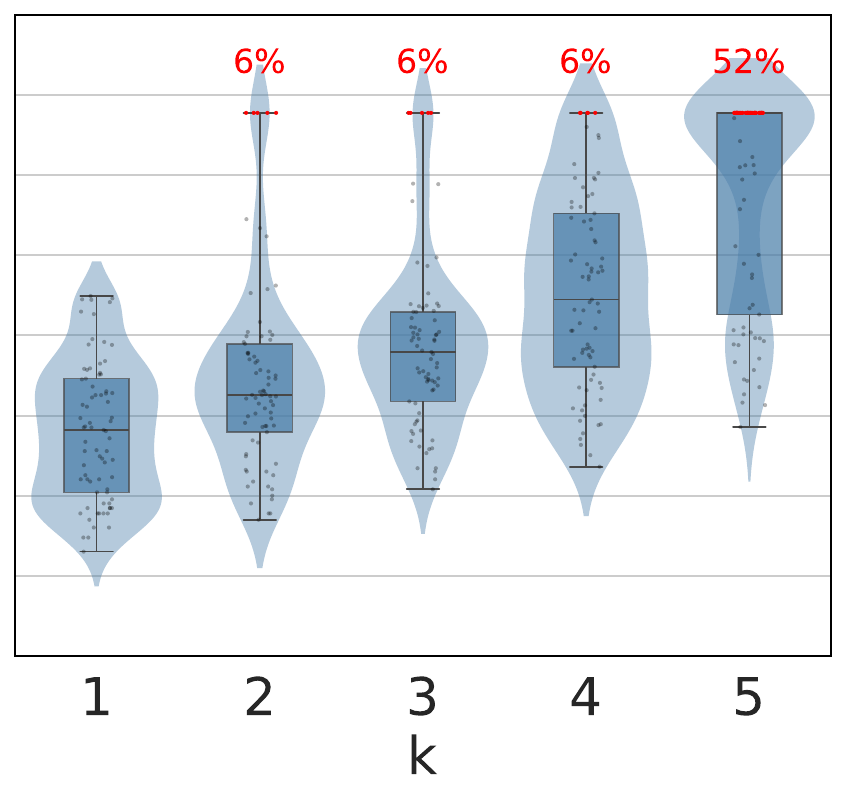}
			\caption{$\ilpunhypergraphfunc$}
		\end{subfigure}\hfill
		\begin{subfigure}[t]{0.30\linewidth}
			\centering
			\safeincludegraphics[height=2.5cm]{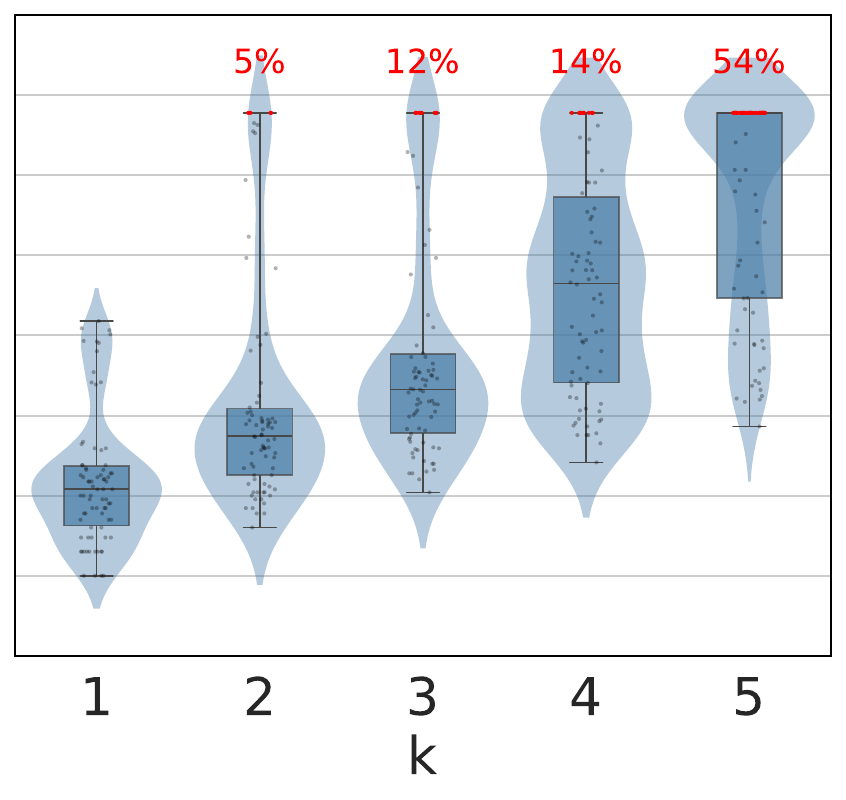}
			\caption{$\ilpunproofdagfunc$}
		\end{subfigure}
		
		\par\vspace{1em}
		\textbf{\csda (\httpd)}
	\end{minipage}
	
	\par\vspace{1em}
	
	%================= PostgreSQL =================%
	\begin{minipage}[t]{\linewidth}
		\centering
		
		\begin{subfigure}[t]{0.36\linewidth}
			\centering
			\safeincludegraphics[height=2.57cm]{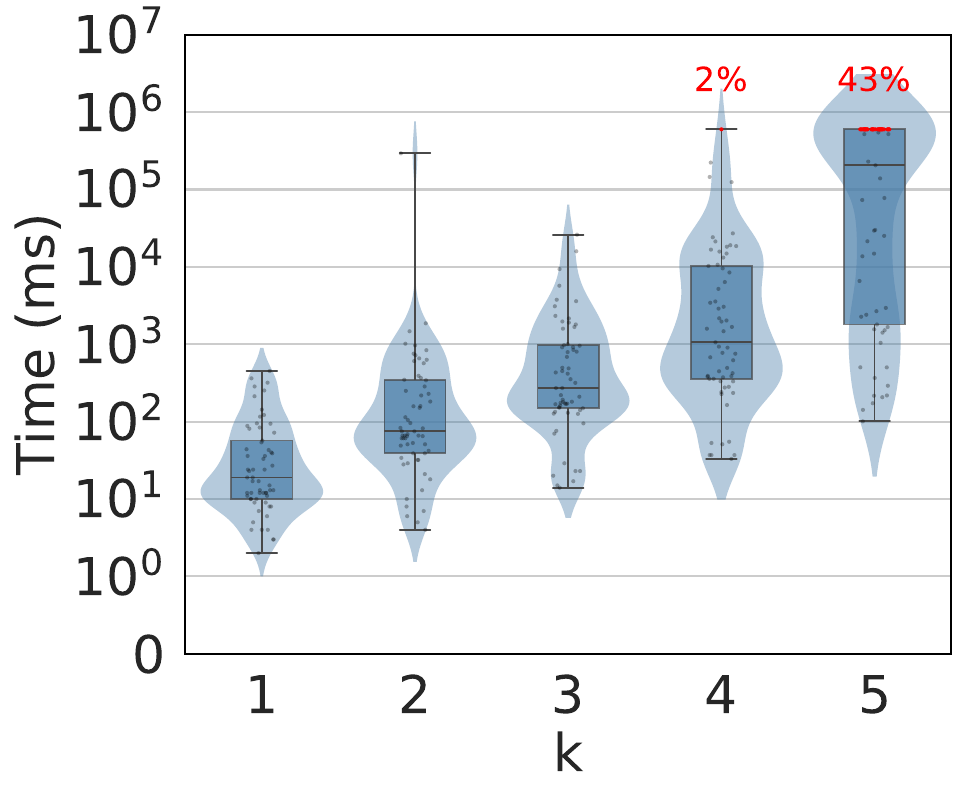}
			\caption{$\ilphypergraphfunc$}
		\end{subfigure}\hfill
		\begin{subfigure}[t]{0.30\linewidth}
			\centering
			\safeincludegraphics[height=2.5cm]{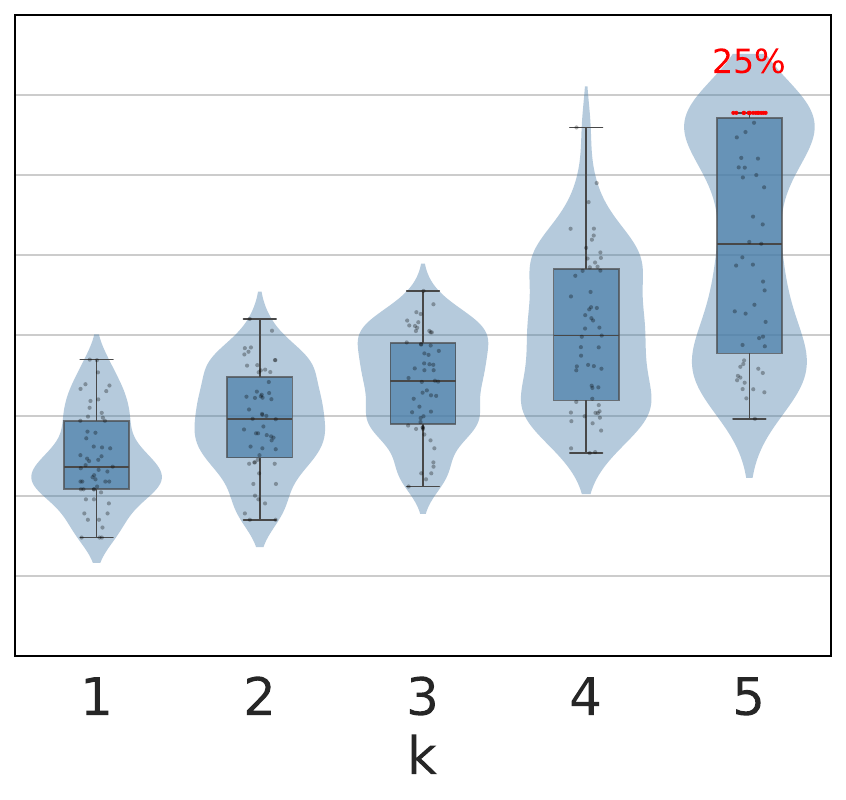}
			\caption{$\ilpunhypergraphfunc$}
		\end{subfigure}\hfill
		\begin{subfigure}[t]{0.30\linewidth}
			\centering
			\safeincludegraphics[height=2.5cm]{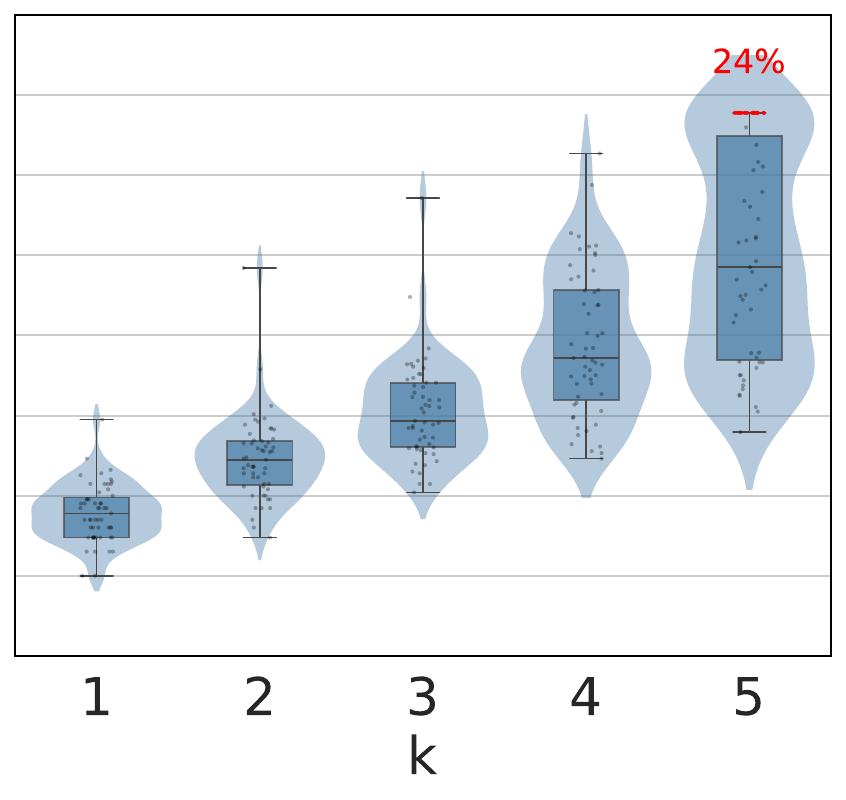}
			\caption{$\ilpunproofdagfunc$}
		\end{subfigure}
		
		\par\vspace{1em}
		\textbf{\csda (\postgres)}
	\end{minipage}
	
	\par\vspace{1em}
	
	%================= Linux =================%
	\begin{minipage}[t]{\linewidth}
		\centering
		\begin{subfigure}[t]{0.36\textwidth}
			\centering
			\safeincludegraphics[height=2.57cm]{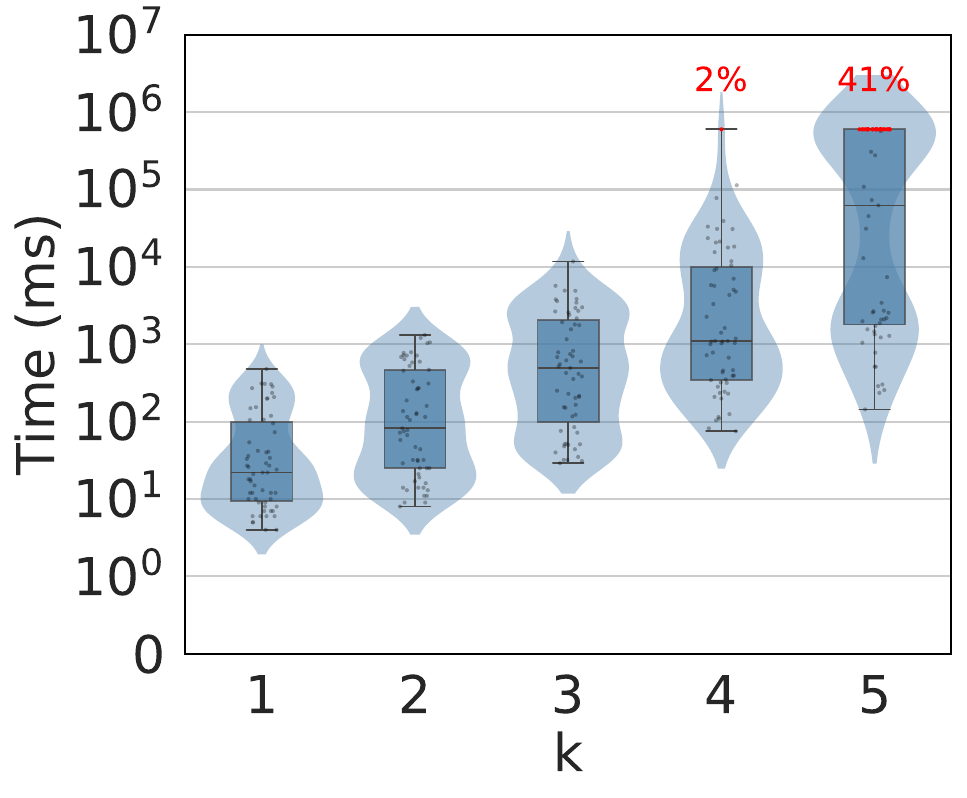}
			\caption{$\ilphypergraphfunc$}
		\end{subfigure}\hfill
		\begin{subfigure}[t]{0.30\textwidth}
			\centering
			\safeincludegraphics[height=2.5cm]{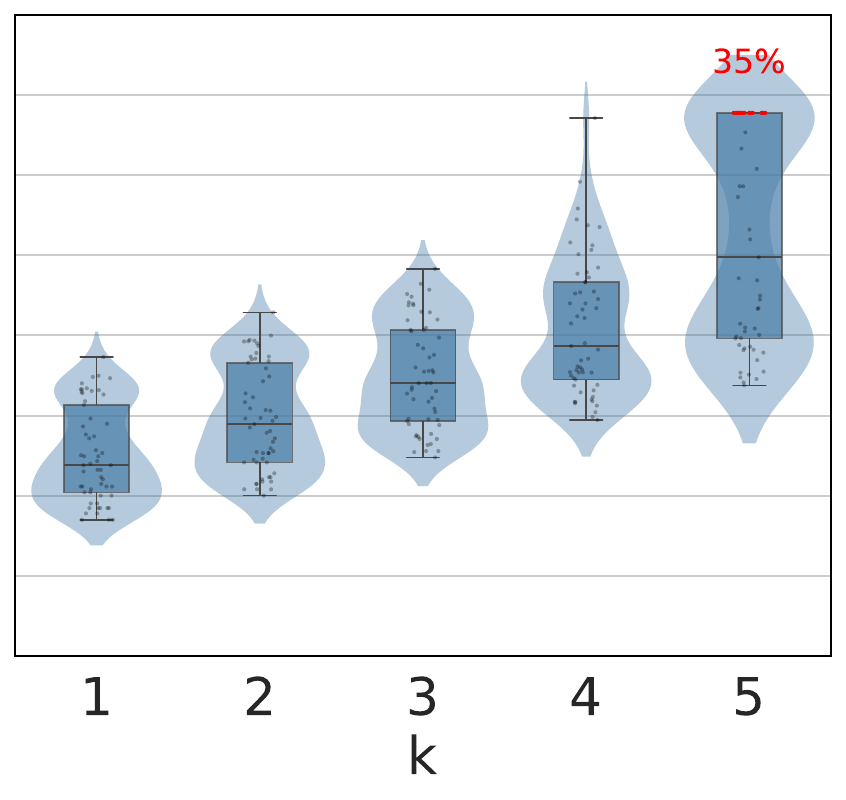}
			\caption{$\ilpunhypergraphfunc$}
		\end{subfigure}\hfill
		\begin{subfigure}[t]{0.30\textwidth}
			\centering
			\safeincludegraphics[height=2.5cm]{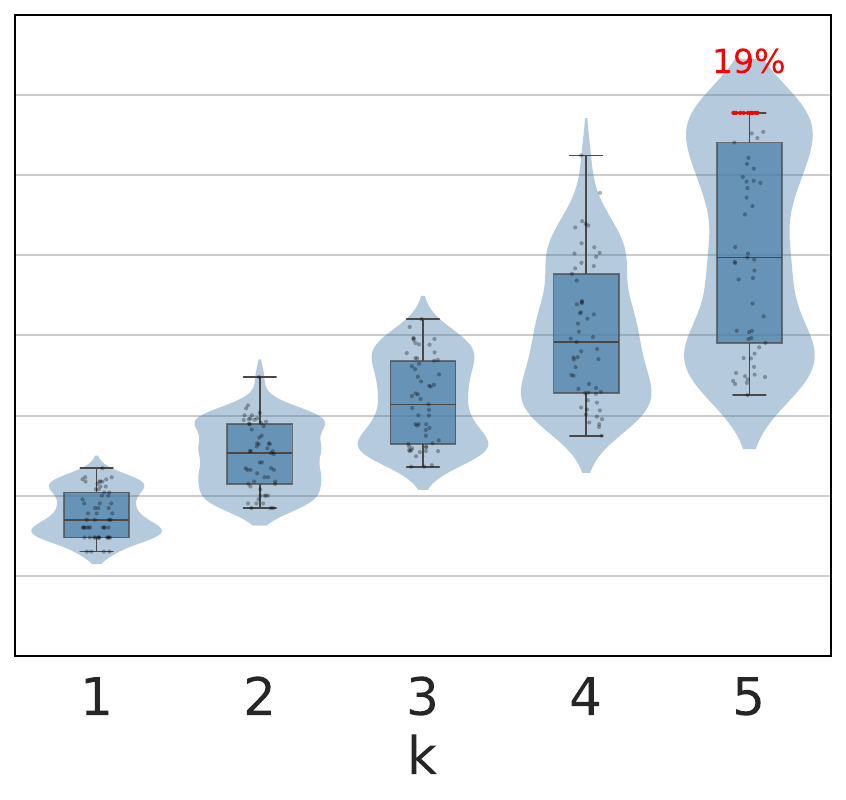}
			\caption{$\ilpunproofdagfunc$}
		\end{subfigure}
		
		\par\vspace{1em}
		\textbf{\csda (\linux)}
	\end{minipage}
	\caption{Runtimes for the \csda\ scenario.}
	\label{fig:csda-exact}
\end{figure}

\begin{figure}[h]
    \centering
    \includegraphics[width=.8\linewidth]{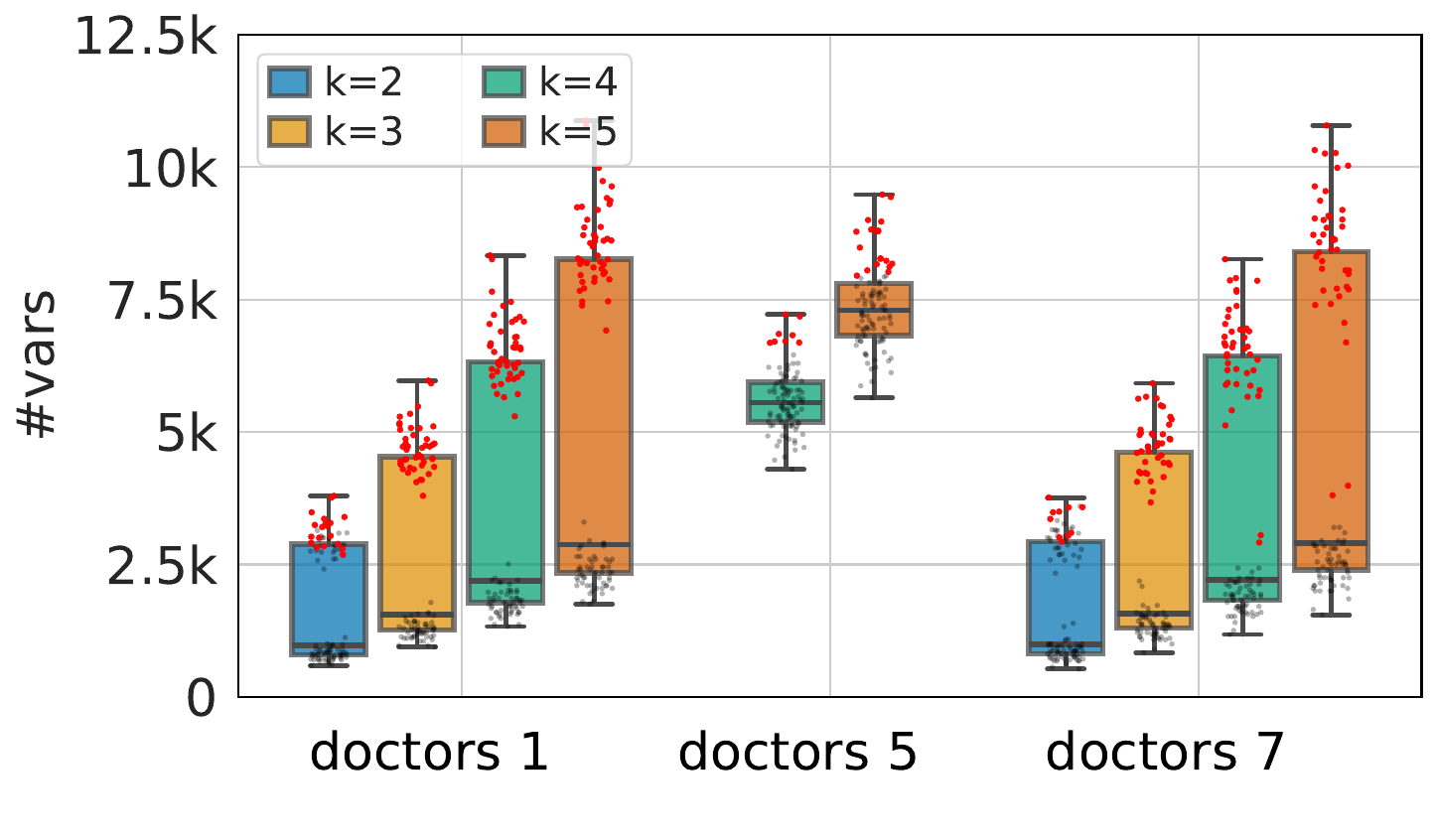}
    \vspace{-3mm}
    \caption{Number of variables occurring in the set $T_\mathsf{acyclic}$ of the ILP encoding $\ilpunproofdagfunc$. Red dots denote timeouts.}
    \label{fig:timeouts}
\end{figure}

\noindent
\textbf{Runtime Evaluation and Scalability with the Diversity Parameter.}
We now analyse the running time of our algorithms in absolute terms, and their scalability w.r.t.\ the input diversity parameter $k$.
Our results show that, apart from $\ComputeDiverse{\ilpproofdagfunc}$, which timeouts in all scenarios, $\ComputeDiverse{\ilpstargammafunc}$, with $(\gamma,\star) \in \{(\mathsf{hyper},\mathsf{ALL})$, $(\mathsf{hyper},\mathsf{UN})$, $(\mathsf{graph},\mathsf{UN})\}$, perform well across all scenarios when computing sets of no more than $k = 3$ maximally diverse explanations. In this setting, such sets can be computed in the order of seconds in most cases, with some scenarios requiring some minutes, for most of the answer tuples. For larger values of $k$, runtimes stay below the 10-minutes timeout for the less demanding scenarios (Figure~\ref{fig:graph-analytics-exact}/\bitcoin, Figure~\ref{fig:doctors-exact}, and Figure~\ref{fig:galen-exact}), while an increasibly large number of timeouts can be observed for the more demanding scenarios (Figure~\ref{fig:graph-analytics-exact}/\facebook, and Figure~\ref{fig:csda-exact}), even when focusing on the best performing algorithm for each family of proof trees. Hence, all the algorithms we consider do not scale well with $k$, in general. This is due to the exponential growth of the search space of the underlying maximization problem solved by each of the algorithms we consider. Indeed, there are $\binom{N}{k}$ possible sets of $k$ explanations to consider, where $N$ is the total number of explanations in $\starwhy{\bar t}{D}{Q}$, which itself can already be exponentially large w.r.t. $|D|$, for a considered query $Q$, database $D$, and tuple $\bar t$.

From our discussion, we conclude that although the above three algorithms are effective at finding diverse sets of small size, they exhibit poor scalability w.r.t.~the diversity parameter $k$, and thus, they quickly become infeasible for computing larger sets. This motivates the need for alternative approaches that trade optimality of the computed sets for performance, namely for approximation techniques. This will be the subjected of the next section.
\section{Approximation Algorithm for Diverse Why-Provenance}\label{sec:approximations}

We proceed to present an approximation algorithm for the problem $\star\text{-}\mathsf{DiverseWhyProvenance}$, where $\star \in \{\mathsf{ALL},\mathsf{UN}\}$. Our goal is to devise an algorithm that, given as input a Datalog query $Q =(\dep,R)$, a database $D$ over $\esch{\dep}$, a tuple $\bar t \in \adom{D}^{\arity{R}}$, and $k > 0$, computes a set of $\starwhyk{k}{\bar t}{D}{Q}$ whose diversity is an approximation of the optimal diversity and it performs much faster, in practice, than the approach discussed in Section~\ref{sec:characterizations} that aims for an exact computation.
We first formalize the kind of approximation algorithm that we target and then explain how we obtain such an algorithm.

\begin{definition}
For any $\beta \ge 1$, a \emph{$\beta$-approximation algorithm for $\star\text{-}\mathsf{DiverseWhyProvenance}$}, where $\star \in \{\mathsf{ALL},\mathsf{UN}\}$, is an algorithm that, given as input a Datalog query $Q = (\dep,R)$, a database $D$ over $\esch{\dep}$, a tuple $\bar t \in \adom{D}^{\arity{R}}$, and an integer $k > 0$, outputs a set $\hat{\mathcal{S}} \in \starwhyk{k}{\bar t}{D}{Q}$ of $k$ databases such that
$$\diversity{\hat{\mathcal{S}}} \ge \dfrac{1}{\beta} \cdot\diversity{\mathcal{S}_{\mathsf{opt}}},$$
if there is $\mathcal{S}_{\mathsf{opt}} \in \argmax\limits_{\mathcal{S} \in \starwhyk{k}{\bar t}{D}{Q}} \diversity{\mathcal{S}}$; otherwise, it fails.
We call $\beta$ the \emph{approximation ratio} of the algorithm. \hfill\markfull
\end{definition}

\medskip
\noindent
\textbf{Diverse Why-Provenance as a Remote-Clique Problem.}
To devise the desired approximation algorithm, we first observe that $\star\text{-}\mathsf{DiverseWhyProvenance}$ can be seen as a (succinct) encoding of the remote-clique problem, a well-studied optimization problem in graph theory. Consider an edge-weighted, undirected graph $G = (V,E,w)$, where $w : E \rightarrow \mathbb{N}$ assigns a non-negative weight to each edge of $G$. By abuse of notation, for an edge $\{u,v\} \in E$, we may write $w(u,v)$ or $w(v,u)$ instead of $w(\{u,v\})$. We further assume that $w(v,v) = 0$, for each node $v \in V$.
We say that $G$ is \emph{complete} if, for every pair of distinct nodes $u,v \in V$, there is an edge $\{u,v\} \in E$. For an edge-weighted, complete, undirected graph $G = (V,E,w)$, we say that $w$ \emph{satisfies the triangle inequality} if, for every three nodes $v_1,v_2,v_3 \in V$, $w(v_1,v_3) \le w(v_1,v_2) + w(v_2,v_3)$.
For an integer $k > 0$, we denote by $\cliques{k}{G}$ the family of all sets $V' \subseteq V$ of $k$ nodes of $G$. Finally, for a set $V'= \{v_1,\ldots,v_k\} \in \cliques{k}{G}$, the \emph{diversity} of $V'$ is defined as the integer
\[
\diversity{V'} = \sum\limits_{i,j \in [|V'|] \text{ and }i<j} w(v_i,v_j).
\]
The remote-clique problem is defined as follows:

\medskip

\begin{center}
	\fbox{\begin{tabular}{ll}
			{\small PROBLEM} : & $\mathsf{Remote\text{-}Clique}$
			\\[2mm]
			{\small INPUT} : & A complete, edge-weighted, undirected graph \\
			& $G = (V,E,w)$, where $w$ satisfies the triangle \\
			& inequality, and an integer $k > 0$\\[2mm]
			{\small OUTPUT} : & A member of $\argmax\limits_{V' \in \cliques{k}{G}} \diversity{V'}$ if it exists;\\
			&  otherwise \textbf{fail} 
	\end{tabular}}
\end{center}

\medskip

It is easy to see that the problem $\star\text{-}\mathsf{DiverseWhyProvenance}$, where $\star \in \{\mathsf{ALL},\mathsf{UN}\}$, can be reduced to $\mathsf{Remote\text{-}Clique}$. In particular, given a Datalog query $Q = (\dep, R)$, a database $D \in \esch{\dep}$, a tuple $\bar t \in \adom{D}^{\arity{R}}$, and an integer $k > 0$, we construct the edge-weighted, complete, undirected graph $G = (V,E,w)$, where $V = \starwhy{\bar t}{D}{Q}$ and $w(D_1,D_2) = |\diff{D_1}{D_2}|$, for each $D_1,D_2 \in V$.\footnote{The weight function $w$ defined as $w(D_1,D_2) = |\diff{D_1}{D_2}|$ satisfies the triangle inequality since, for any three sets $A,B,C$, $\diff{A}{C} \subseteq \diff{A}{B} \cup \diff{B}{C}$.}
It is clear that the output of $\mathsf{Remote\text{-}Clique}$ with input the graph $G$ as defined above and the integer $k$ coincides with the output of $\star\text{-}\mathsf{DiverseWhyProvenance}$ with input $Q$, $D$, $\bar t$, and $k$.

\medskip
\noindent
\textbf{Approximation Algorithm for $\mathsf{Remote\text{-}Clique}$.}
Due to the connection between our problem and the  $\mathsf{Remote\text{-}Clique}$ problem discussed above, we can exploit exiting approximation algorithms for  $\mathsf{Remote\text{-}Clique}$ to provide an approximation algorithm for our problem.
The state of the art approximation algorithm for $\mathsf{Remote\text{-}Clique}$ is a greedy algorithm defined as follows. Given an input $G,k$ of the problem $\mathsf{Remote\text{-}Clique}$, the algorithm performs $k$ iterations, where at each iteration $i \in [k]$, the algorithm picks a node $v_i \in V$ such that $v_i$ is different from all previously selected nodes $v_{j}$, with $j < i$, and $v_{i}$ is such that the quantity $\diversity{\{v_1,\ldots,v_{i}\}} - \diversity{\{v_1,\ldots,v_{i-1}\}}$ is maximized, namely $v_i$ maximizes the \emph{marginal diversity}, i.e., the increase in diversity w.r.t.\ the set of nodes selected so far. For each $i \in [k]$, if a node $v_{i}$ which is different from all nodes $v_j$, with $j < i$ cannot be found, the algorithm halts failing. If instead all $k$ iterations complete without failing, the algorithm outputs the set $\{v_1,\ldots,v_{k}\}$ and halts.
The above algorithm is a $2$-approximation algorithm for $\mathsf{Remote\text{-}Clique}$, where for any $\beta \ge 1$, a $\beta$-approximation algorithm for $\mathsf{Remote\text{-}Clique}$ is an algorithm that, given $G$ and $k$, outputs a set $\hat{V}$ of $k$ nodes such that $\diversity{\hat{V}} \ge \frac{1}{\beta} \cdot \diversity{V_\mathsf{opt}}$, if there is $V_\mathsf{opt} \in \argmax\limits_{V' \in \cliques{k}{G}} \diversity{V'}$; otherwise, it fails.
Moreover, the factor $2$ is \emph{tight} for the approximation algorithm, i.e., there exist families of inputs $G,k$ for which the algorithm returns a set $\hat{V}$ of $k$ nodes with $\diversity{\hat{V}} = \frac{1}{2} \cdot \diversity{V_{\mathsf{opt}}}$.

\medskip
\noindent
\textbf{Approximation Algorithm for Diverse Why-Provenance.} A naive adaptation of the greedy $2$-approximation algorithm discussed above to the problem $\star\text{-}\mathsf{DiverseWhyProvenance}$ would be as follows: given a Datalog query $Q = (\dep,R)$, a database $D$ over $\esch{\dep}$, a tuple $\bar t \in \adom{D}^{\arity{R}}$, and an integer $k > 0$, 
\begin{enumerate}
\item first construct the set $\starwhy{\bar t}{D}{Q}$,
\item then construct $G = (V,E,w)$ with $V = \starwhy{\bar t}{D}{Q}$ and $w(D_1,D_2) = |\diff{D_1}{D_2}|$, for each $D_1,D_2 \in V$, and
\item execute the $2$-approximation algorithm for the problem $\mathsf{Remote\text{-}Clique}$ with input $G,k$.
\end{enumerate}

It should not come as a surprise that this naive approach would never scale in practice since it requires the explicit construction of the whole graph $G$, which in turn requires the construction of the whole set $\starwhy{\bar t}{D}{Q}$ upfront; the latter can be very large. 
Hence, we need a more sophisticated approach that does not require the explicit construction of the graph $G$. 
%In particular, we need strategy for selecting a node $v_i$ at each iteration $i \in [k]$ that maximizes the integer $\diversity{\{v_1,\ldots,v_{i}\}} - \diversity{\{v_1,\ldots,v_{i-1}\}}$ without explicitly having available the set $\starwhy{\bar t}{D}{Q}$.}
%
Towards such a sophisticated approach, we need a clever strategy for selecting a node $v_i$ at each iteration $i \in [k]$ that maximizes the integer $\diversity{\{v_1,\ldots,v_{i}\}} - \diversity{\{v_1,\ldots,v_{i-1}\}}$.
%we observe that for the the approximation algorithm to work, the only requirement is to have a strategy for selecting a node $v_i$ at each iteration $i \in [k]$ that maximizes the integer $\diversity{\{v_1,\ldots,v_{i}\}} - \diversity{\{v_1,\ldots,v_{i-1}\}}$. 
In terms of our problem, this means that we need a way, given an already constructed set of $i-1$ databases $D_1,\ldots,D_{i-1}$ from $\starwhy{\bar t}{D}{Q}$, to construct a new database $D_i \in \starwhy{\bar t}{D}{Q}$ that differs from all $D_1,\ldots,D_{i-1}$, and the marginal diversity $\diversity{\{D_1,\ldots,D_i\}} - \diversity{\{D_1,\ldots,D_{i-1}\}}$ is maximized. We discuss how this can be achieved without, of course, explicitly constructing the set $\starwhy{\bar t}{D}{Q}$, by exploiting once again our ILP-based framework introduced above.

Let $\star \in \{\mathsf{ALL}, \mathsf{UN}\}$, and assume that $\ConstructILPEncoding$ is a procedure that takes as input a triple $(\bar t,D,Q)$, where $Q$ is a Datalog query $Q = (\dep,R)$, $D$ is a database $D$ over $\esch{\dep}$, and $\bar t$ a tuple of $\adom{D}^{\arity{R}}$, and constructs an ILP encoding for $\starwhy{\bar t}{D}{Q}$. The approximation algorithm $\mathsf{Approx}[\ConstructILPEncoding]$, which is parameterized with the procedure $\ConstructILPEncoding$, is depicted in Algorithm~\ref{alg:generic-approximation}.
Note that in the description of the algorithm, we write $g_{\mu,i}$, for each $i \in [n]$, to denote the ``opposite'' of $x_{\alpha_i}$ w.r.t.~$\mu$, i.e., the expression $1 - x_{\alpha_i}$ if $\mu(x_{\alpha_i}) = 1$, and the expression $x_{\alpha_i}$ if $\mu(x_{\alpha_i}) = 0$.
We can show the following:

\begin{theorem}\label{thm:approximation-quality}
Let $\star \in \{\mathsf{ALL}, \mathsf{UN}\}$, and assume that $\ConstructILPEncoding$ is a procedure that takes as input a triple $(\bar t,D,Q)$, where $Q$ is a Datalog query $Q = (\dep,R)$, $D$ is a database $D$ over $\esch{\dep}$, and $\bar t$ a tuple of $\adom{D}^{\arity{R}}$, and constructs an ILP encoding for $\starwhy{\bar t}{D}{Q}$. Then, $\mathsf{Approx}[\ConstructILPEncoding]$ is a $2$-approximation algorithm for the problem $\star\text{-}\mathsf{DiverseWhyProvenance}$.
\end{theorem}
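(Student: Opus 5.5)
The plan is to show that $\mathsf{Approx}[\ConstructILPEncoding]$ simulates, step by step, the greedy algorithm for $\mathsf{Remote\text{-}Clique}$ on the graph $G=(V,E,w)$ with $V=\starwhy{\bar t}{D}{Q}$ and $w(D_1,D_2)=|\diff{D_1}{D_2}|$. The $2$-approximation guarantee then transfers through the reduction described above. I assume Algorithm~\ref{alg:generic-approximation} has the following form; the proof depends on this. At iteration $i$ it takes the encoding $T=\ConstructILPEncoding(\bar t,D,Q)$ and adds, for each previously found solution $\mu_j$ with $j<i$, the constraint $\sum_{\ell\in[n]} g_{\mu_j,\ell}\ge 1$. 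It then maximizes the objective $\sum_{j<i}\sum_{\ell\in[n]} g_{\mu_j,\ell}$ and sets $D_i=\db{\mu_i}$ for an optimal solution $\mu_i$. It fails if the system is unfeasible.

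\textbf{Step 1 (the objective is a distance).} For any assignment $\nu$ and any $\ell\in[n]$, the term $g_{\mu_j,\ell}$ evaluates under $\nu$ to $1$ iff $\nu(x_{\alpha_\ell})\neq\mu_j(x_{\alpha_\ell})$, that is, iff $\alpha_\ell\in\diff{\db{\nu}}{\db{\mu_j}}$. Hence
\[
\sum_{\ell\in[n]} g_{\mu_j,\ell}(\nu)=\dist{\db{\nu}}{D_j}.
\]
It follows that the added constraints say exactly that $\db{\nu}\notin\{D_1,\ldots,D_{i-1}\}$. The objective equals $\sum_{j<i}\dist{\db{\nu}}{D_j}$, which is $\diversity{\{D_1,\ldots,D_{i-1},\db{\nu}\}}-\diversity{\{D_1,\ldots,D_{i-1}\}}$, the marginal diversity.

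\textbf{Step 2 (each iteration is a greedy step).} By Definition~\ref{def:ilp-encoding-why}, the sets $\db{\nu}$ with $\nu\in\sol{T}$ are exactly the members of $\starwhy{\bar t}{D}{Q}$. Combined with Step~1, this gives three facts about iteration $i$. First, the iteration's system is feasible iff some member of $\starwhy{\bar t}{D}{Q}$ differs from $D_1,\ldots,D_{i-1}$. Second, every optimal solution $\mu_i$ yields a member $D_i$ that maximizes marginal diversity among all such members. Third, every maximizer is induced by some optimal solution. For $i=1$ there are no added constraints and the objective is empty, so any member may be chosen; this matches the greedy algorithm, whose first choice is arbitrary. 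By induction on $i$, the run of the algorithm is therefore a valid run of the greedy algorithm on $(G,k)$, with ties resolved by the solver. It fails exactly when $|\starwhy{\bar t}{D}{Q}|<k$, which is also exactly when no $\mathcal{S}_{\mathsf{opt}}$ exists.

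\textbf{Step 3 (conclude).} The weight $w$ satisfies the triangle inequality, as noted in the footnote. The known guarantee for greedy $\mathsf{Remote\text{-}Clique}$ therefore gives $\diversity{\{D_1,\ldots,D_k\}}\ge\frac12\diversity{\mathcal{S}_{\mathsf{opt}}}$. Since diversity in $G$ coincides with diversity of the corresponding sets of explanations, the output is a set in $\starwhyk{k}{\bar t}{D}{Q}$ with the required bound.

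The main obstacle is the reliance on the greedy guarantee. It must hold for every run of the greedy algorithm, meaning an arbitrary first node and arbitrary tie-breaking, because the ILP solver controls both. If the cited result is stated only for a specific rule, I would reprove it directly. Let $O$ be an optimal $k$-set and $S_{i-1}$ the greedy set after $i-1$ steps. Using the triangle inequality, one lower-bounds the best marginal gain at step $i$ by an averaged distance from $S_{i-1}$ to $O\setminus S_{i-1}$, of order $\frac{i-1}{k(k-1)}\,\diversity{O}$ up to correction terms. Summing these bounds over $i$ should give $\diversity{S_k}\ge\frac12\diversity{O}$. Only the maximality of each greedy step is used in this argument, never the identity of the first node, so the bound covers every run.
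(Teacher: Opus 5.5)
Your proposal is correct and is essentially the paper's proof. The identity $\nu\bigl(\sum_{\ell} g_{\mu_j,\ell}\bigr)=\dist{\db{\nu}}{D_j}$ gives the loop invariant on $\sol{T_i}$ and $f_i$. That invariant yields both the failure characterization and the fact that $D_1,\ldots,D_k$ is a possible run of the greedy $\mathsf{Remote\text{-}Clique}$ algorithm, and the factor $2$ is then inherited through the reduction.

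The paper also simply cites the greedy guarantee for an arbitrary run, since its description of the greedy algorithm already leaves the first node and tie-breaking free. So your fallback reproof is not needed. As sketched, it is also not yet an argument: the per-step bounds sum to exactly $\frac12\diversity{O}$, so there is no slack. The ``correction terms'' arising when $O\cap S_{i-1}\neq\emptyset$ (where the admissible candidates shrink to $O\setminus S_{i-1}$) are exactly what would have to be controlled.
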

\begin{algorithm}[t]
	\caption{Algorithm $\mathsf{Approx}[\ConstructILPEncoding]$}
	\label{alg:generic-approximation}
	
	\BlankLine
	
	\KwIn{A Datalog query $Q = (\dep,R)$, a database $D = \{\alpha_1,\ldots,\alpha_n\}$ over $\esch{\dep}$, a tuple $\bar t \in \adom{D}^{\arity{R}}$, and an integer $k >0 $.}
	\KwOut{A set of $k$ databases, or \textbf{fail}}
	
	\BlankLine
	
	$T := \ConstructILPEncoding(\bar t,D,Q)$\;
	Let $x_{\alpha_1},\ldots,x_{\alpha_n}$ be the database variables of $T$\;
	$f := 0$\;
	$D_1 := \emptyset$; \ldots ; $D_k := \emptyset$\;
	\For{$i = 1$ \textbf{\text{to}} $k$}{
		$\mu :=$ an optimal solution of $(f,T)$ \textbf{if} exists, \textbf{else} \textbf{fail}\;
		$D_i := \db{\mu}$\;\
		$T := T \cup \left\{\sum\limits^n_{\ell = 1} g_{\mu,\ell} \ge 1\right\}$;\hfill\tcp{\footnotesize Force $D_{i+1}$ to be new}
		$f := f + \sum\limits^n_{\ell = 1} g_{\mu,\ell}$;\hfill\tcp{\footnotesize Maximize $\diversity{\{D_1,\ldots,D_{i+1}\}} - \diversity{\{D_1,\ldots,D_{i}\}}$}
	}
	\textbf{return} $\{D_1,\ldots,D_{k}\}$;
\end{algorithm}

% Moreover, we can prove that $\mathsf{Approx}$ is optimal:
% \begin{theorem}\label{thm:approximation-optimal}
% 	There is no $\beta$-approximation algorithm for $\mathsf{Diversity}$ with $\beta > \frac{1}{2}$.
% \end{theorem}
% \begin{proof}
% 	\textcolor{red}{Missing...}
% \end{proof}
\section{Approximate Computation of Diverse Why-Provenance}
\label{sec:approximate-experiments}

The aim of this section is to experimentally evaluate the approximation algorithms $\mathsf{Approx}[\ilpstargammafunc]$, where $\star \in \{\mathsf{ALL}, \mathsf{UN}\}$ and $\gamma \in \{\mathsf{hyper},\mathsf{graph}\}$. Our goal is to assess both their efficiency and the quality of the sets of explanations that they produce. To this end, we employ the same scenarios as the ones used in Section~\ref{sec:exact-experiments}.
In particular, in Section~\ref{sec:approximate-runtime}, we evaluate the runtime of the algorithm $\mathsf{Approx}[\ilpstargammafunc]$ over all answer tuples which guarantee at least five members of the why-provenance, as in Section~\ref{sec:exact-experiments}, and compare it with the ones of the exact algorithm $\ComputeDiverse{\ilpstargammafunc}$. In Section~\ref{sec:quality}, we evaluate the quality of the \emph{diversity} of the sets of explanations constructed by $\mathsf{Approx}[\ilpstargammafunc]$ by comparing it against the optimal value computed by $\ComputeDiverse{\ilpstargammafunc}$ in Section~\ref{sec:exact-experiments}; we only report the comparison for the answer tuples for which the exact algorithm could finish its execution within the given 10-minutes timeout. The comparison is performed by computing the ratio $\frac{d_\mi{opt}}{d_\mi{apx}}$ between the optimal diversity $d_\mi{opt}$ provided by the algorithm $\ComputeDiverse{\ilpstargammafunc}$ and the approximate diversity  $d_\mi{apx}$ provided by the algorithm $\mathsf{Approx}[\ilpstargammafunc]$. 
Finally, in Section~\ref{sec:large-k}, we study the scalability of the approximation algorithms for larger values of $k$ (beyond those considered in Section~\ref{sec:exact-experiments}).

% Requires subcaption (for \ContinuedFloat in the doctors figure).

% =========================
% Graph analytics
% =========================

\begin{figure}[!tp]
	\centering
	\small
	
	%==================== Facebook ====================%
	\begin{minipage}[t]{\linewidth}
		\centering
		
		\begin{subfigure}[t]{0.36\linewidth}
			\centering
			\safeincludegraphics[height=2.57cm]{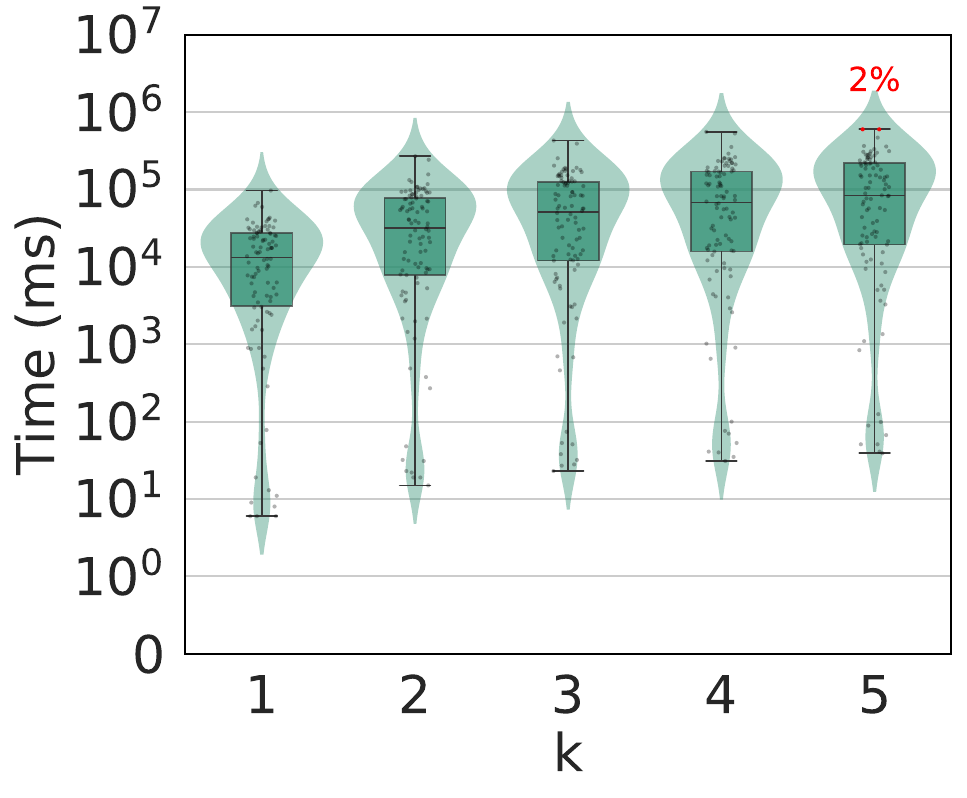}
			\caption{$\ilphypergraphfunc$}
		\end{subfigure}\hfill
		\begin{subfigure}[t]{0.30\linewidth}
			\centering
			\safeincludegraphics[height=2.5cm]{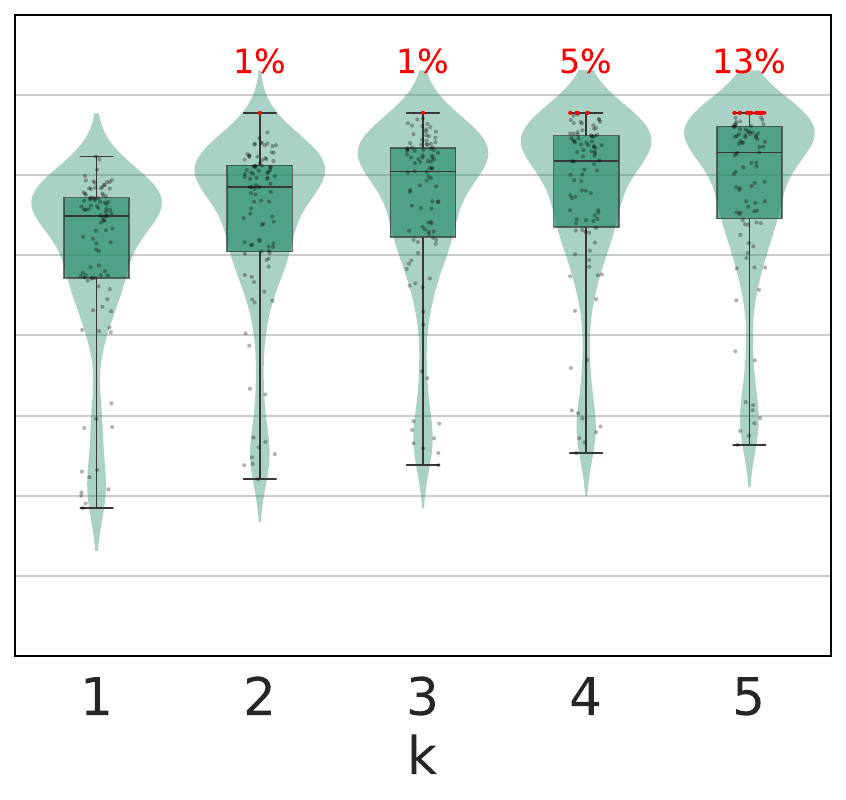}
			\caption{$\ilpunhypergraphfunc$}
		\end{subfigure}\hfill
		\begin{subfigure}[t]{0.30\linewidth}
			\centering
			\safeincludegraphics[height=2.5cm]{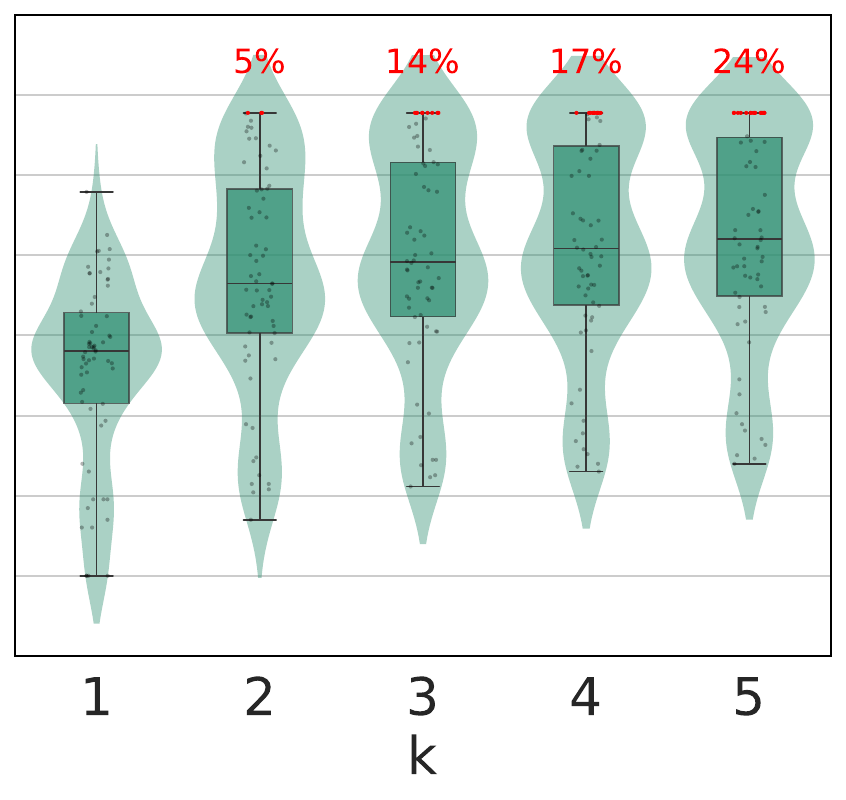}
			\caption{$\ilpunproofdagfunc$}
		\end{subfigure}
		
		\par\vspace{1em}
		\textbf{\graphanalytics (\facebook)}
	\end{minipage}

	\par\vspace{1em}

	%==================== Bitcoin ====================%
	\begin{minipage}[t]{\linewidth}
		\centering
		
		\begin{subfigure}[t]{0.36\linewidth}
			\centering
			\safeincludegraphics[height=2.57cm]{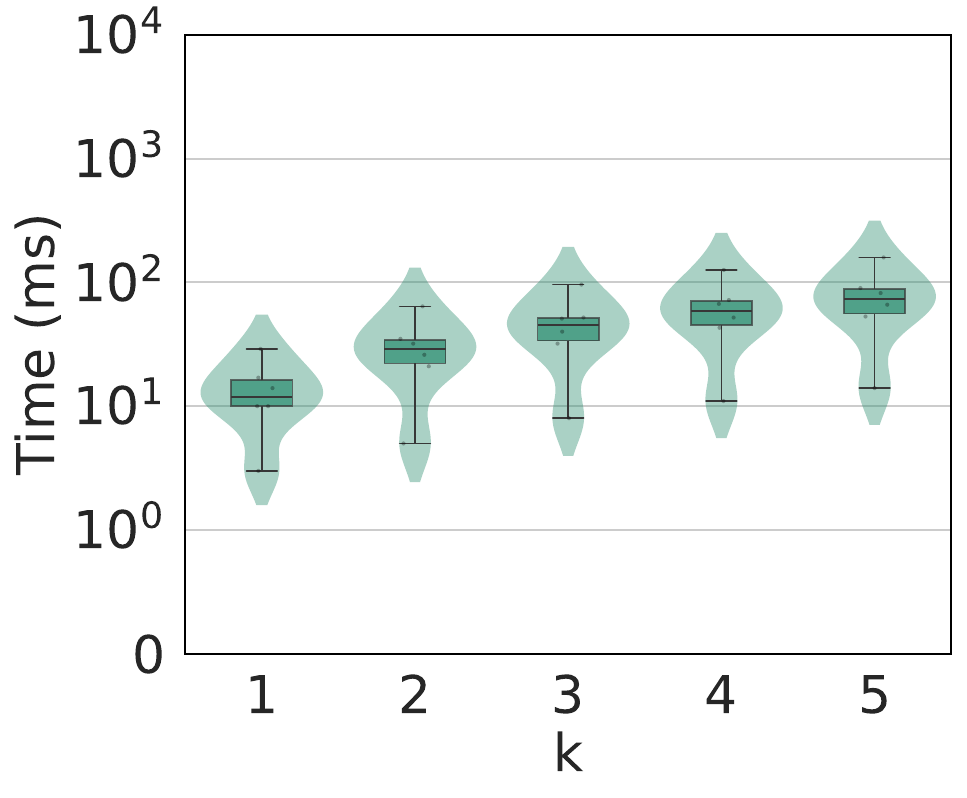}
			\caption{$\ilphypergraphfunc$}
		\end{subfigure}\hfill
		\begin{subfigure}[t]{0.30\linewidth}
			\centering
			\safeincludegraphics[height=2.5cm]{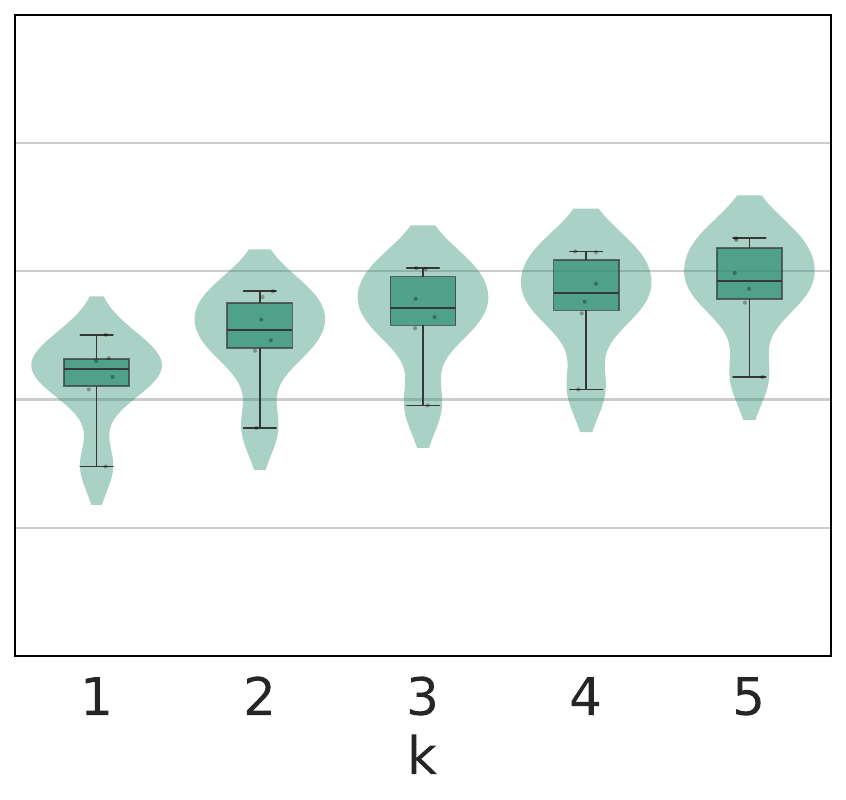}
			\caption{$\ilpunhypergraphfunc$}
		\end{subfigure}\hfill
		\begin{subfigure}[t]{0.30\linewidth}
			\centering
			\safeincludegraphics[height=2.5cm]{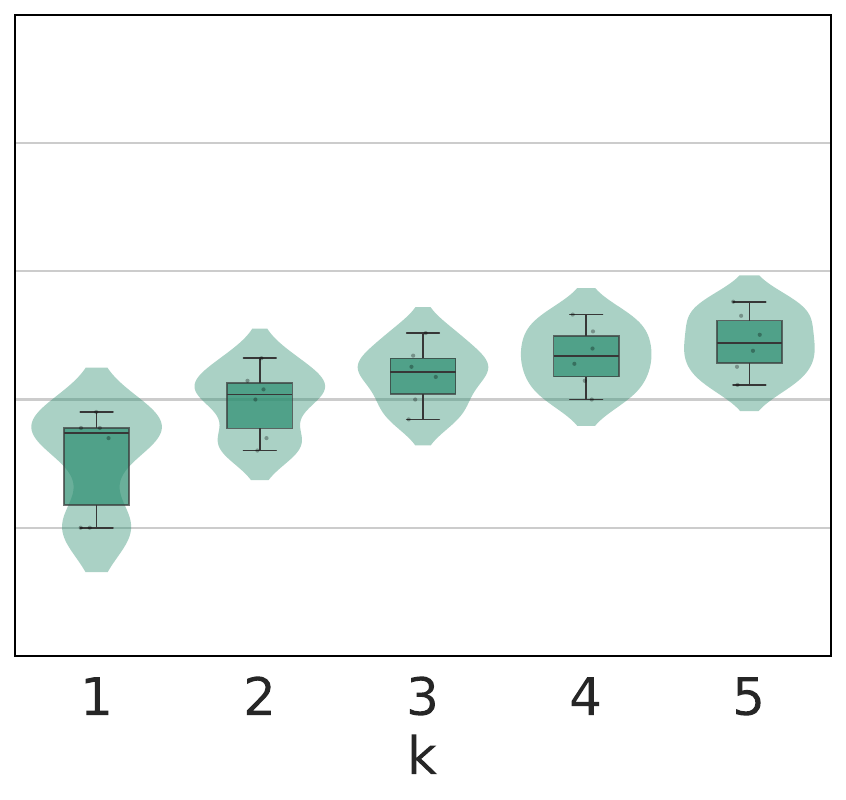}
			\caption{$\ilpunproofdagfunc$}
		\end{subfigure}
		
		\par\vspace{1em}
		\textbf{\graphanalytics (\bitcoin)}
	\end{minipage}
	
	\caption{Runtimes for the \plotlabel{\graphanalytics} scenario using the approximation algorithms.}
	\label{fig:graph-analytics-approximate}
\end{figure}

% =========================
% Doctors (split in two floats; same figure number)
% =========================

% Part 1: doctors 1-4
\begin{figure}[!tp]
	\centering
	\small
	
	%==================== doctors 1 ====================%
	\begin{minipage}[t]{\linewidth}
		\centering
		
		\begin{subfigure}[t]{0.36\linewidth}
			\centering
			\safeincludegraphics[height=2.57cm]{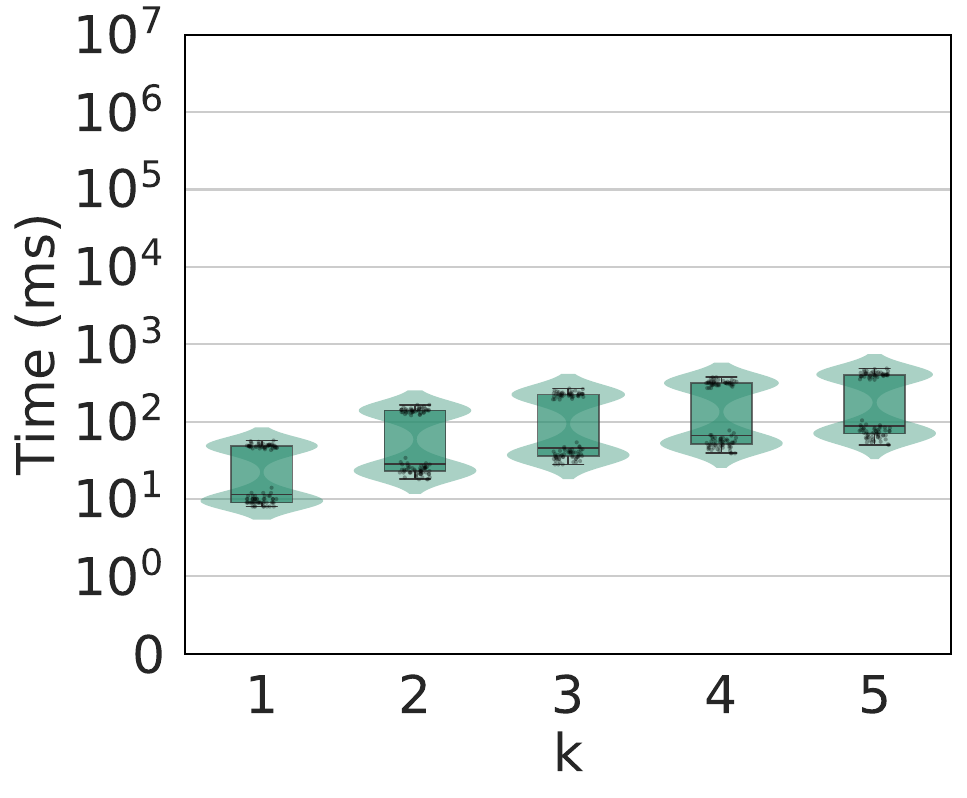}
			\caption{$\ilphypergraphfunc$}
		\end{subfigure}\hfill
		\begin{subfigure}[t]{0.30\linewidth}
			\centering
			\safeincludegraphics[height=2.5cm]{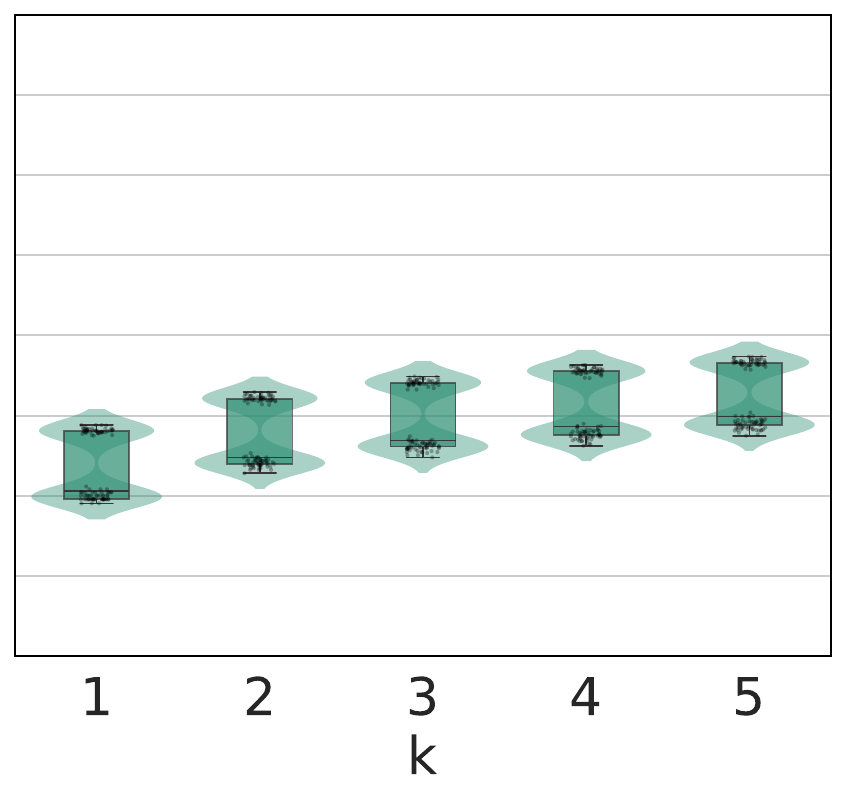}
			\caption{$\ilpunhypergraphfunc$}
		\end{subfigure}\hfill
		\begin{subfigure}[t]{0.30\linewidth}
			\centering
			\safeincludegraphics[height=2.5cm]{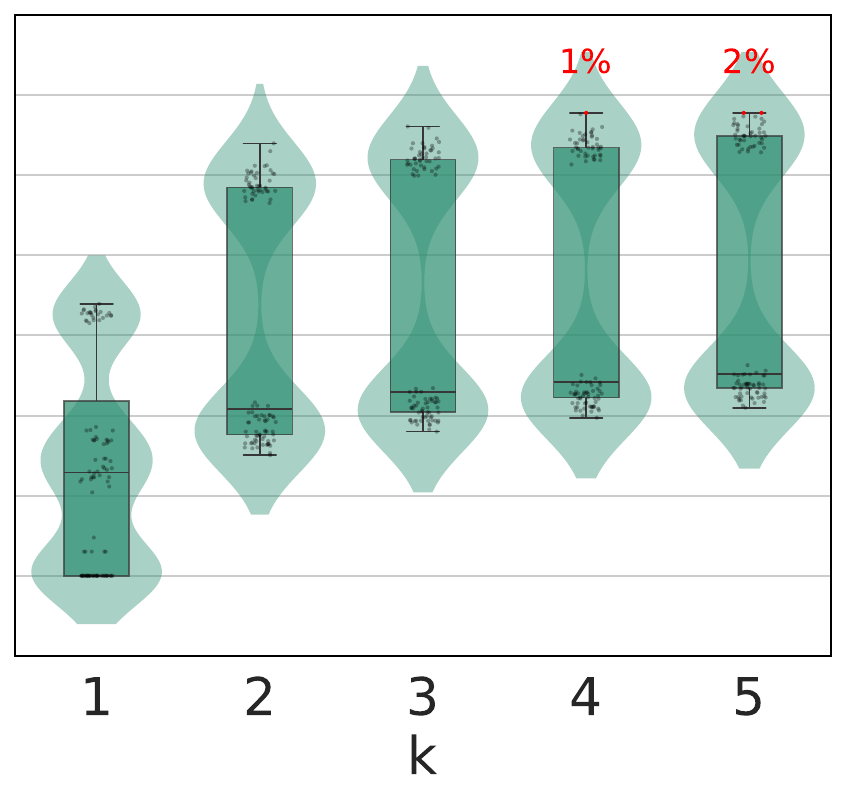}
			\caption{$\ilpunproofdagfunc$}
		\end{subfigure}
		
		\par\vspace{1em}
		\textbf{\doctors\ 1 (100k)}
	\end{minipage}

	\par\vspace{1em}

	%==================== doctors 2 ====================%
	\begin{minipage}[t]{\linewidth}
		\centering
		
		\begin{subfigure}[t]{0.36\linewidth}
			\centering
			\safeincludegraphics[height=2.57cm]{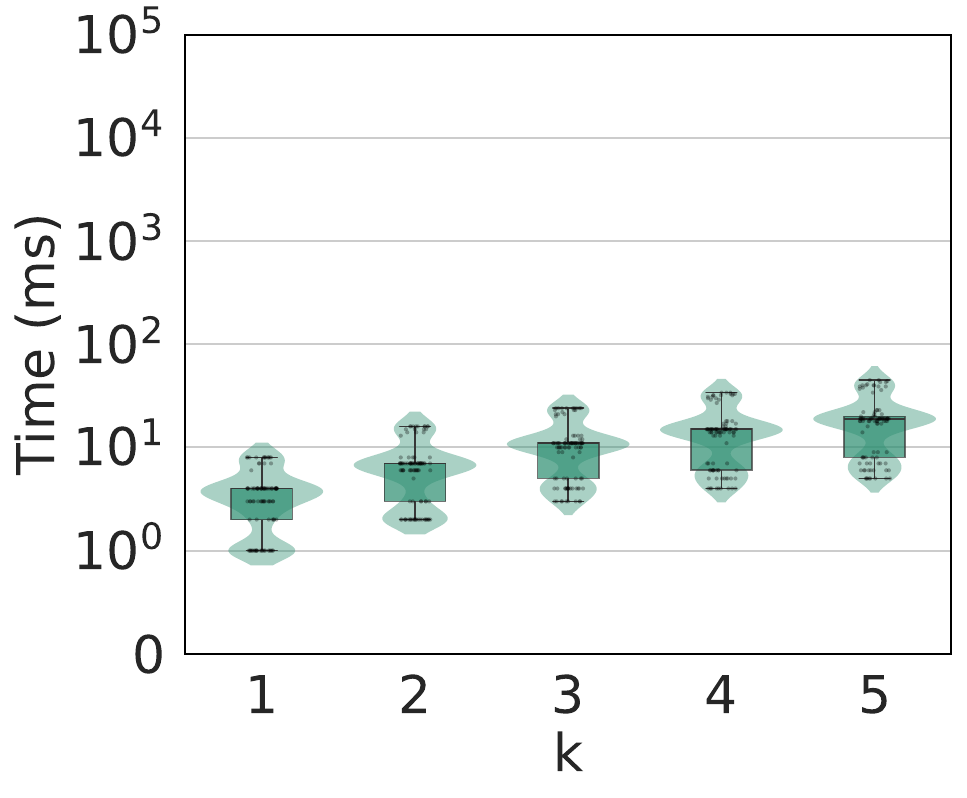}
			\caption{$\ilphypergraphfunc$}
		\end{subfigure}\hfill
		\begin{subfigure}[t]{0.30\linewidth}
			\centering
			\safeincludegraphics[height=2.5cm]{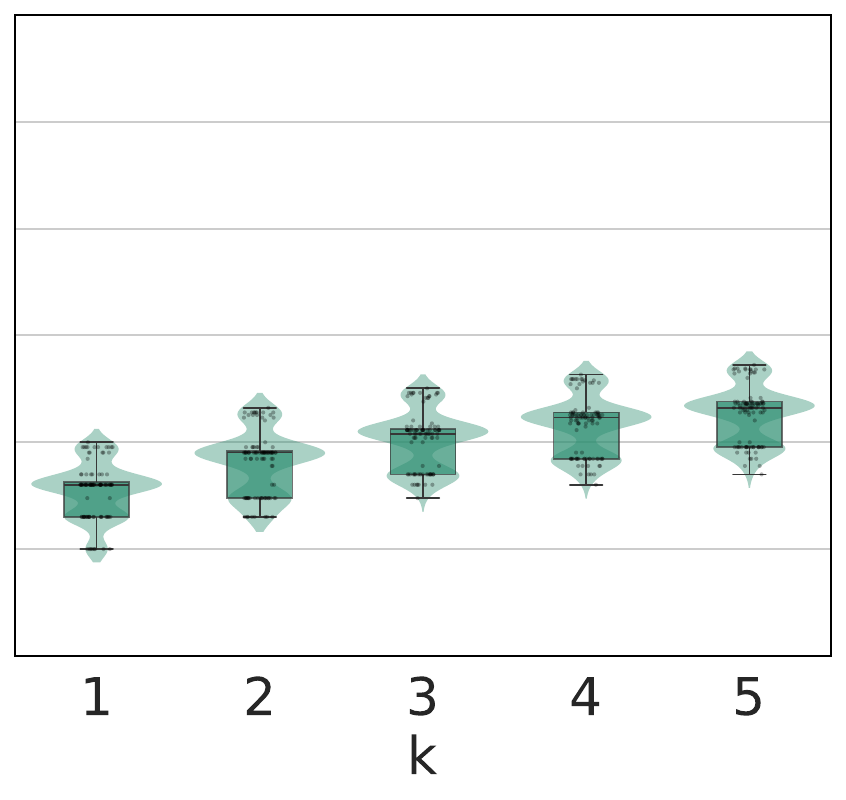}
			\caption{$\ilpunhypergraphfunc$}
		\end{subfigure}\hfill
		\begin{subfigure}[t]{0.30\linewidth}
			\centering
			\safeincludegraphics[height=2.5cm]{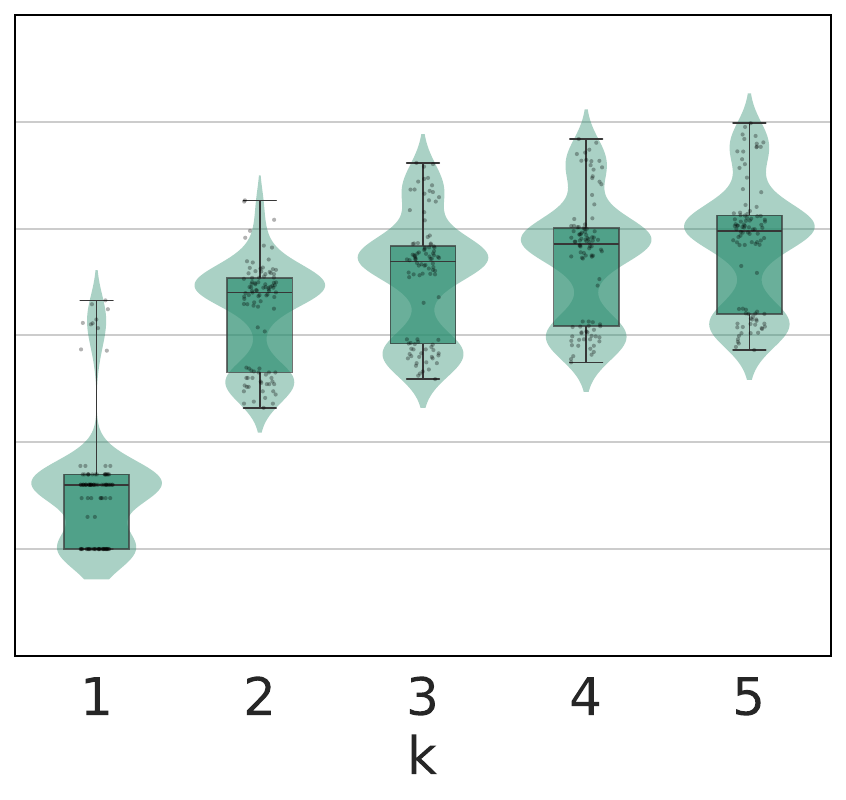}
			\caption{$\ilpunproofdagfunc$}
		\end{subfigure}
		
		\par\vspace{1em}
		\textbf{\doctors\ 2 (100k)}
	\end{minipage}

	\par\vspace{1em}

	%==================== doctors 3 ====================%
	\begin{minipage}[t]{\linewidth}
		\centering
		
		\begin{subfigure}[t]{0.36\linewidth}
			\centering
			\safeincludegraphics[height=2.57cm]{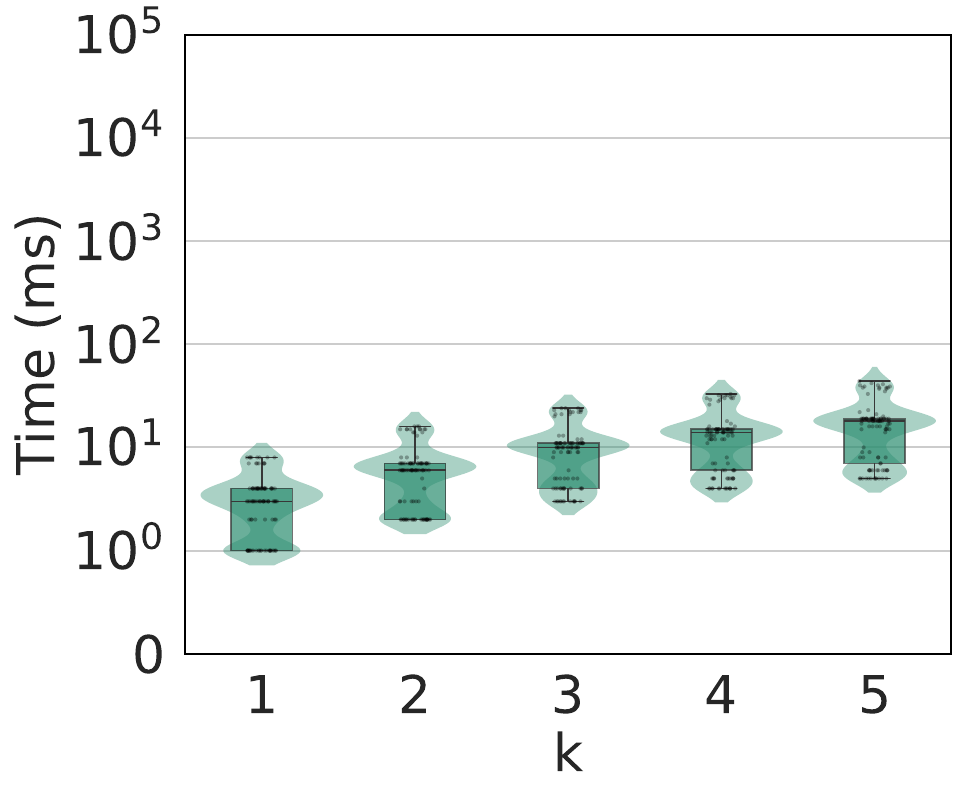}
			\caption{$\ilphypergraphfunc$}
		\end{subfigure}\hfill
		\begin{subfigure}[t]{0.30\linewidth}
			\centering
			\safeincludegraphics[height=2.5cm]{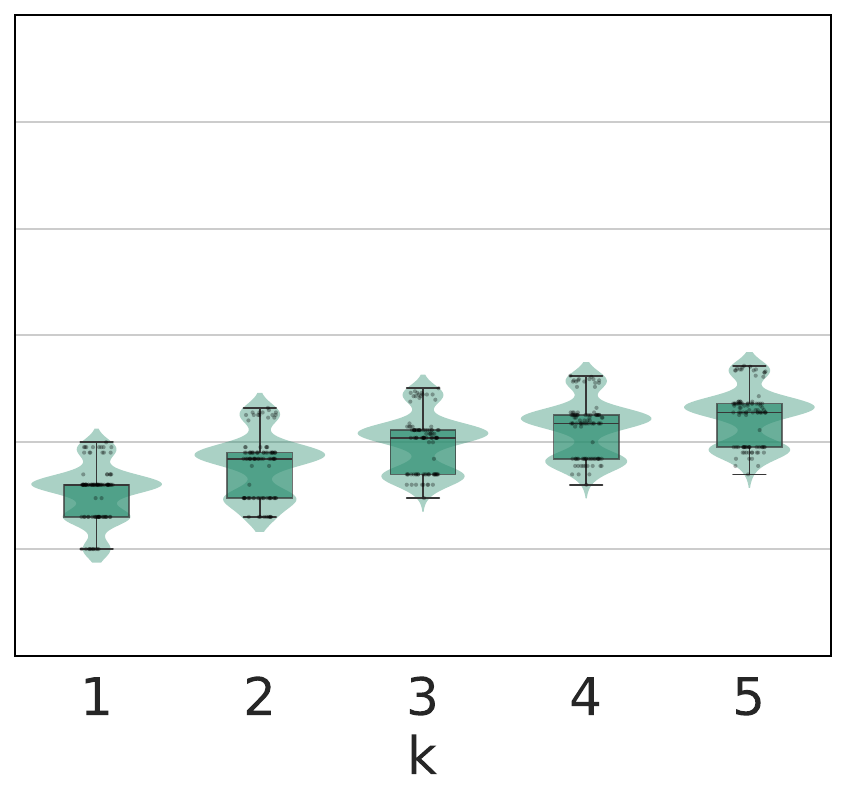}
			\caption{$\ilpunhypergraphfunc$}
		\end{subfigure}\hfill
		\begin{subfigure}[t]{0.30\linewidth}
			\centering
			\safeincludegraphics[height=2.5cm]{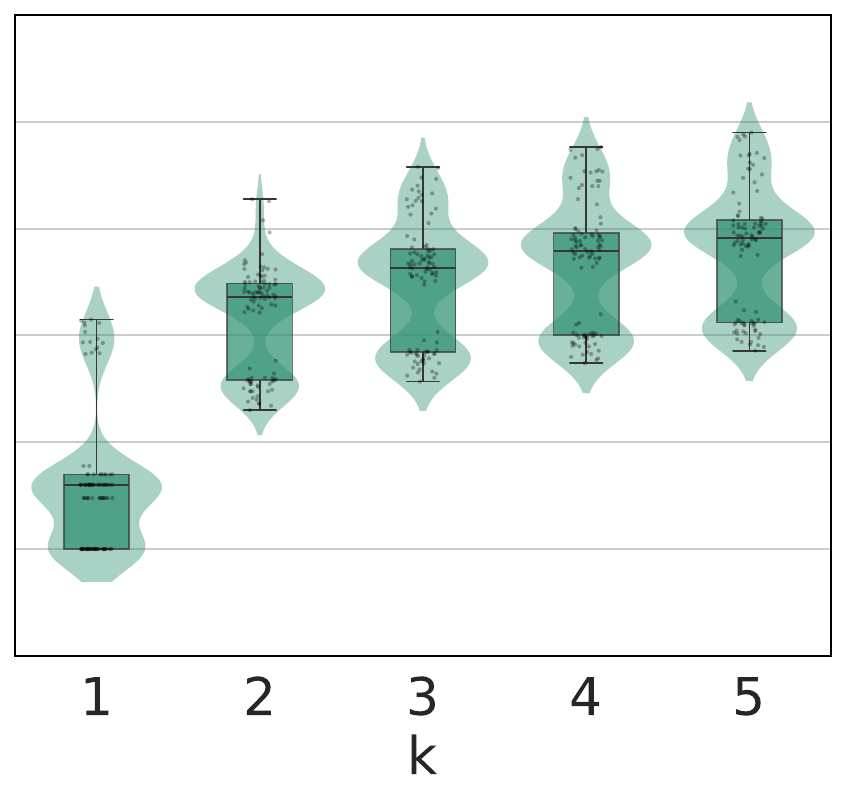}
			\caption{$\ilpunproofdagfunc$}
		\end{subfigure}
		
		\par\vspace{1em}
		\textbf{\doctors\ 3 (100k)}
	\end{minipage}

	\par\vspace{1em}

	%==================== doctors 4 ====================%
	\begin{minipage}[t]{\linewidth}
		\centering
		
		\begin{subfigure}[t]{0.36\linewidth}
			\centering
			\safeincludegraphics[height=2.57cm]{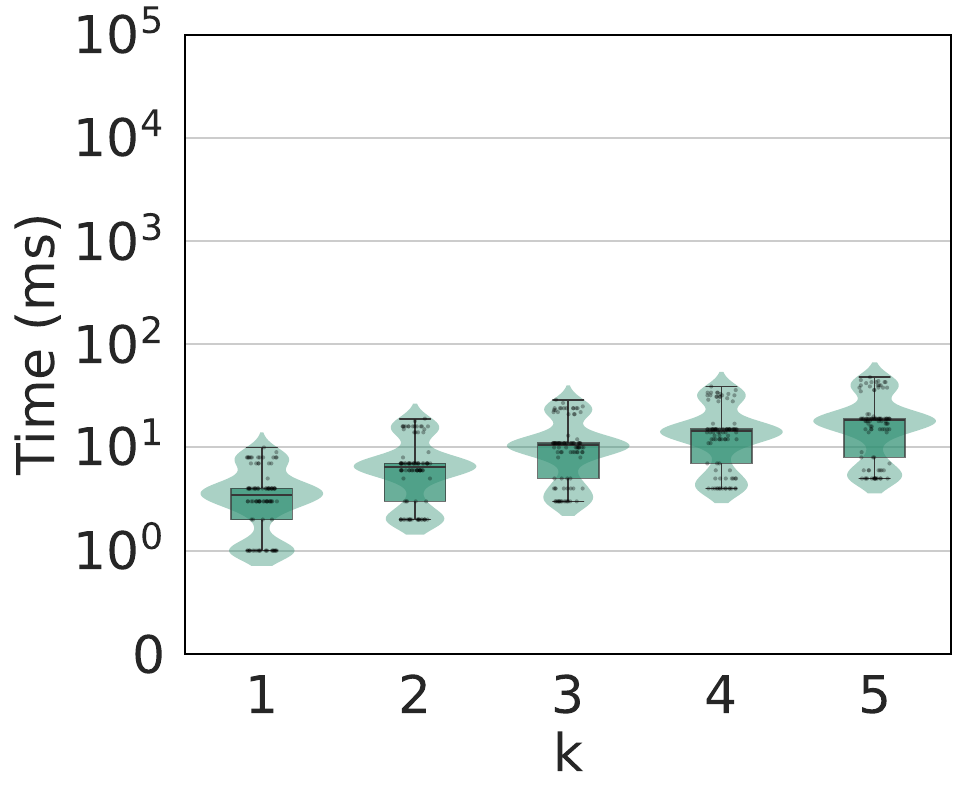}
			\caption{$\ilphypergraphfunc$}
		\end{subfigure}\hfill
		\begin{subfigure}[t]{0.30\linewidth}
			\centering
			\safeincludegraphics[height=2.5cm]{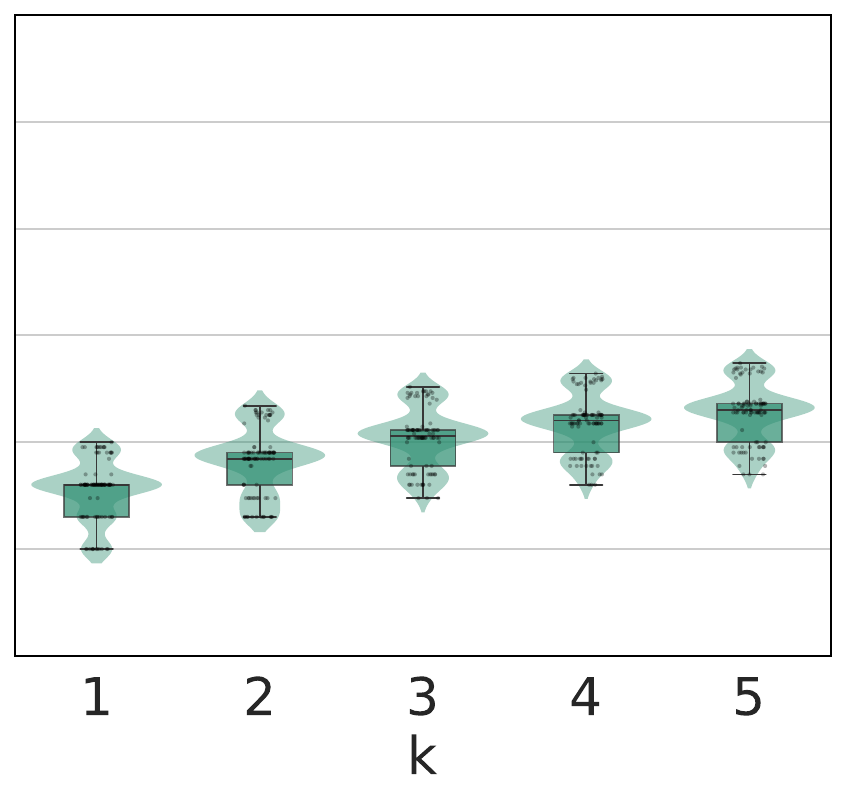}
			\caption{$\ilpunhypergraphfunc$}
		\end{subfigure}\hfill
		\begin{subfigure}[t]{0.30\linewidth}
			\centering
			\safeincludegraphics[height=2.5cm]{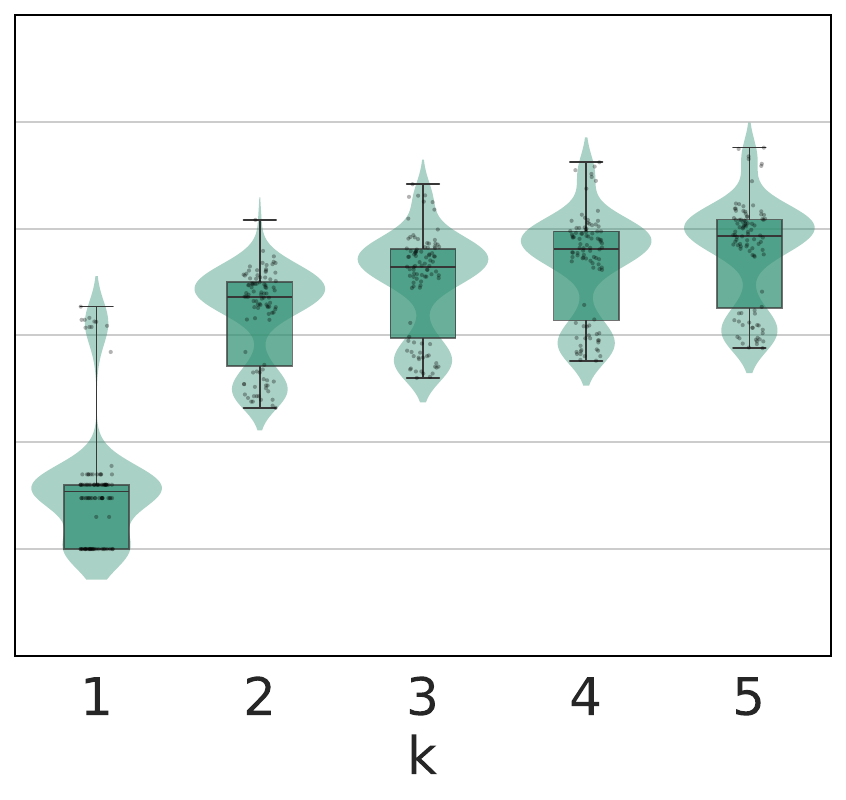}
			\caption{$\ilpunproofdagfunc$}
		\end{subfigure}
		
		\par\vspace{1em}
		\textbf{\doctors\ 4 (100k)}
	\end{minipage}
	
	\caption{Runtimes for the \doctors\ (1--4) scenarios using the approximation algorithms.}
	\label{fig:doctors-approximate}
\end{figure}

% Part 2: doctors 5-7 (subfigure letters continue)
\begin{figure}[!tp]
	\ContinuedFloat
	\centering
	\small
	
	%==================== doctors 5 ====================%
	\begin{minipage}[t]{\linewidth}
		\centering
		
		\begin{subfigure}[t]{0.36\linewidth}
			\centering
			\safeincludegraphics[height=2.57cm]{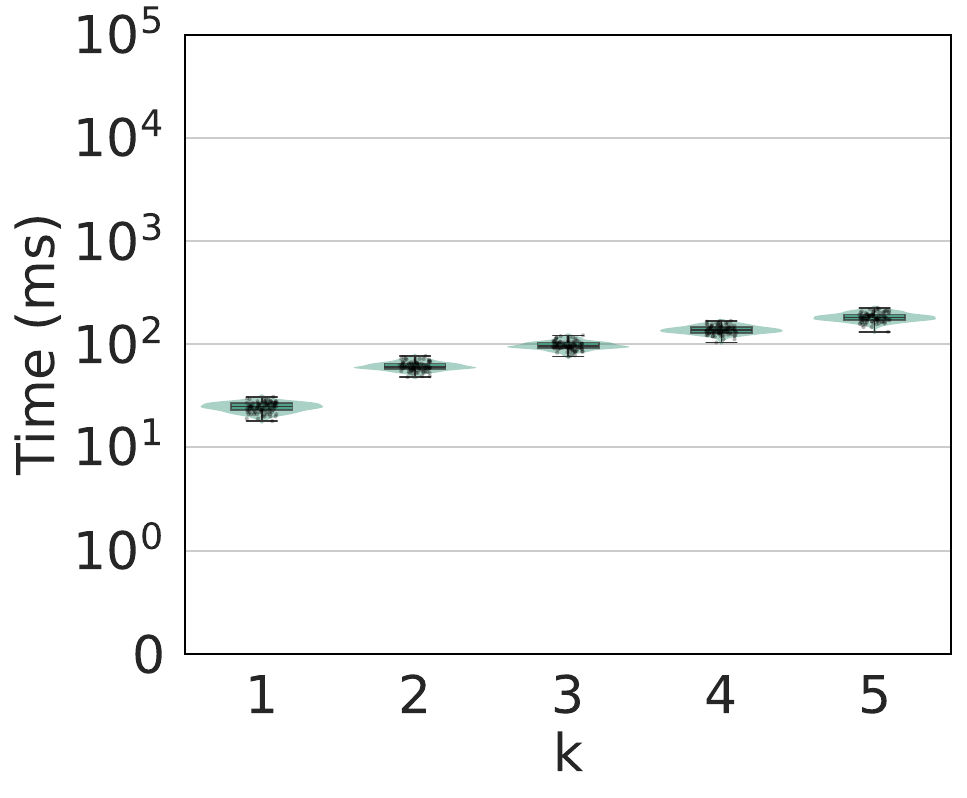}
			\caption{$\ilphypergraphfunc$}
		\end{subfigure}\hfill
		\begin{subfigure}[t]{0.30\linewidth}
			\centering
			\safeincludegraphics[height=2.5cm]{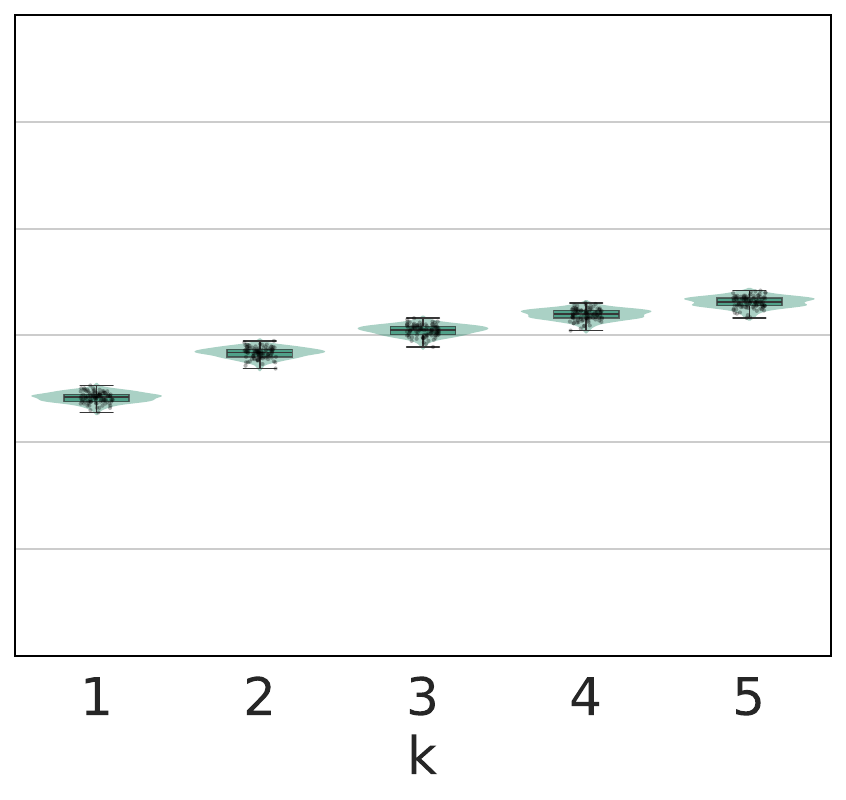}
			\caption{$\ilpunhypergraphfunc$}
		\end{subfigure}\hfill
		\begin{subfigure}[t]{0.30\linewidth}
			\centering
			\safeincludegraphics[height=2.5cm]{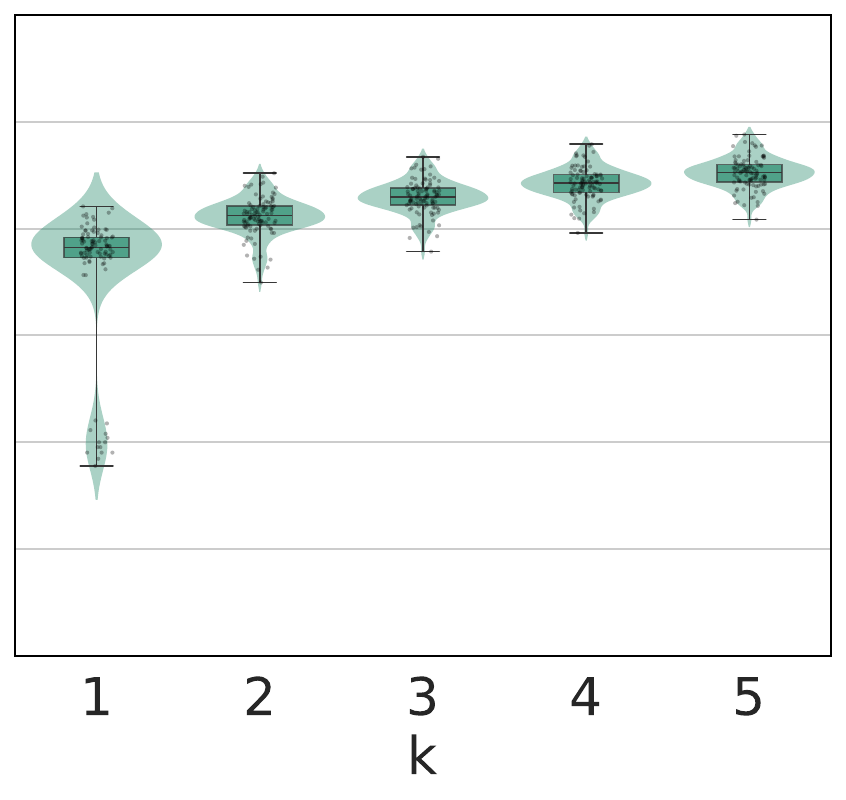}
			\caption{$\ilpunproofdagfunc$}
		\end{subfigure}
		
		\par\vspace{1em}
		\textbf{\doctors\ 5 (100k)}
	\end{minipage}

	\par\vspace{1em}

	%==================== doctors 6 ====================%
	\begin{minipage}[t]{\linewidth}
		\centering
		
		\begin{subfigure}[t]{0.36\linewidth}
			\centering
			\safeincludegraphics[height=2.57cm]{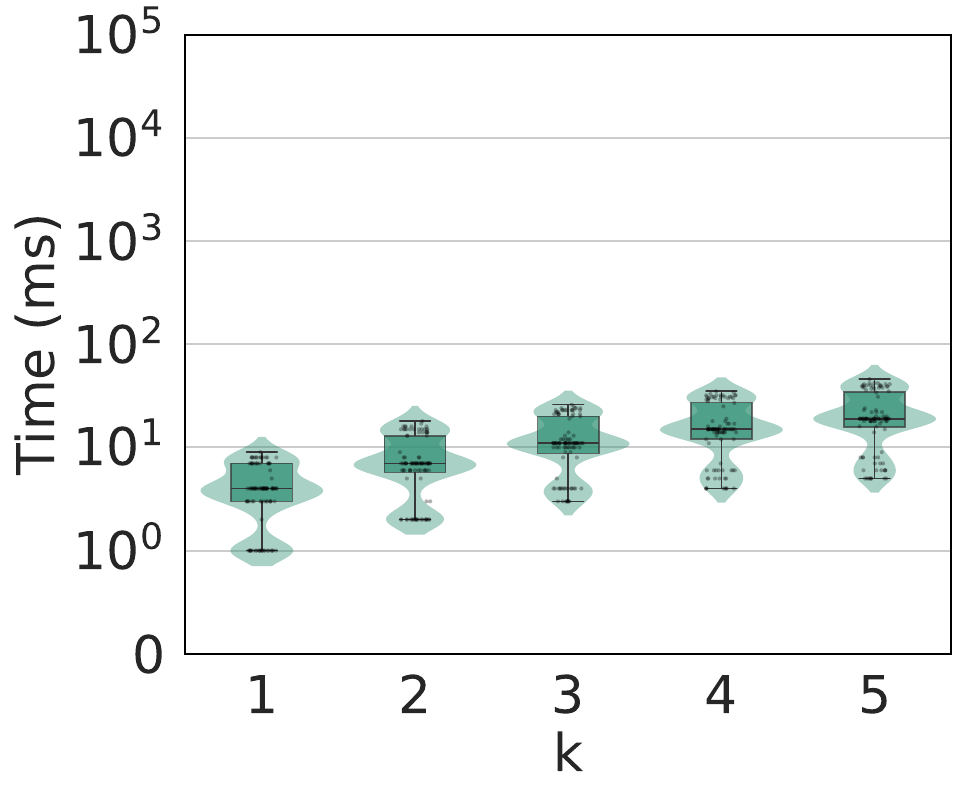}
			\caption{$\ilphypergraphfunc$}
		\end{subfigure}\hfill
		\begin{subfigure}[t]{0.30\linewidth}
			\centering
			\safeincludegraphics[height=2.5cm]{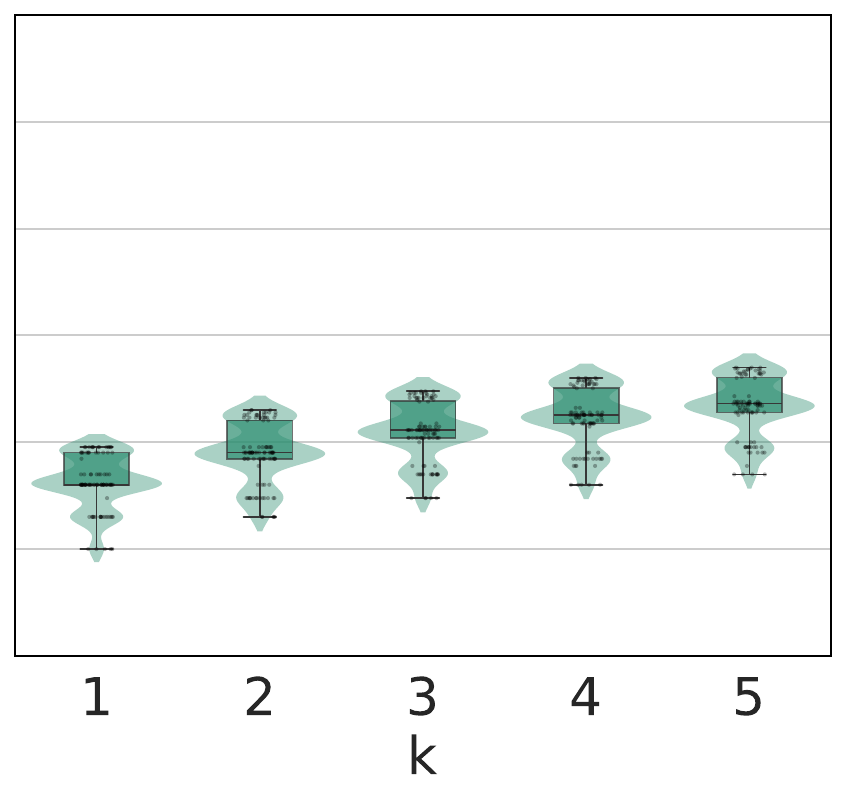}
			\caption{$\ilpunhypergraphfunc$}
		\end{subfigure}\hfill
		\begin{subfigure}[t]{0.30\linewidth}
			\centering
			\safeincludegraphics[height=2.5cm]{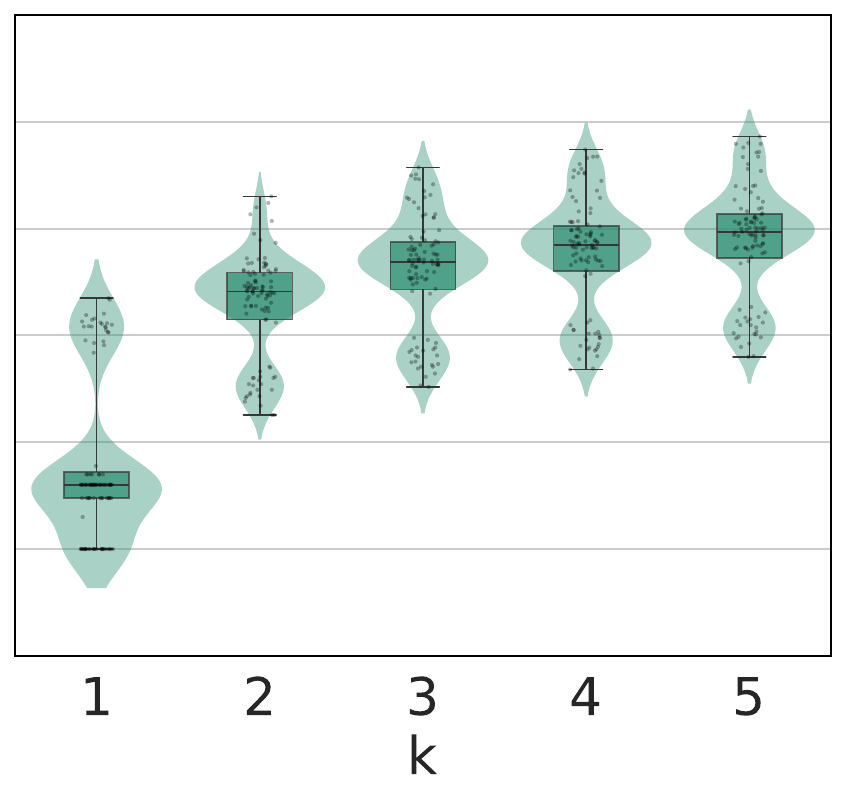}
			\caption{$\ilpunproofdagfunc$}
		\end{subfigure}
		
		\par\vspace{1em}
		\textbf{\doctors\ 6 (100k)}
	\end{minipage}

	\par\vspace{1em}

	%==================== doctors 7 ====================%
	\begin{minipage}[t]{\linewidth}
		\centering
		
		\begin{subfigure}[t]{0.36\linewidth}
			\centering
			\safeincludegraphics[height=2.57cm]{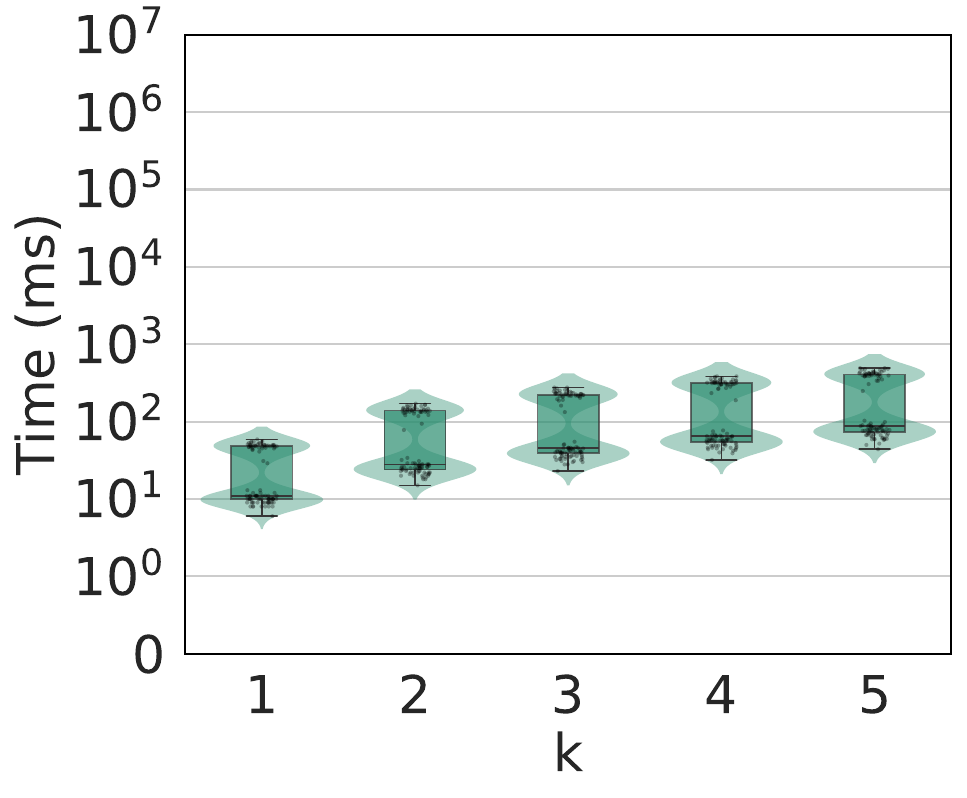}
			\caption{$\ilphypergraphfunc$}
		\end{subfigure}\hfill
		\begin{subfigure}[t]{0.30\linewidth}
			\centering
			\safeincludegraphics[height=2.5cm]{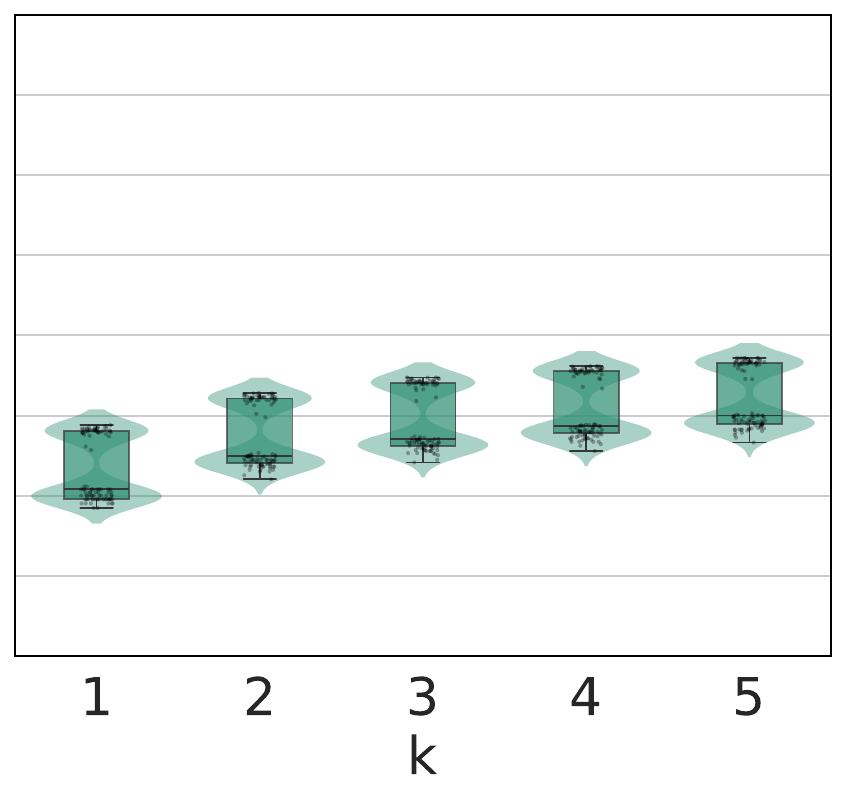}
			\caption{$\ilpunhypergraphfunc$}
		\end{subfigure}\hfill
		\begin{subfigure}[t]{0.30\linewidth}
			\centering
			\safeincludegraphics[height=2.5cm]{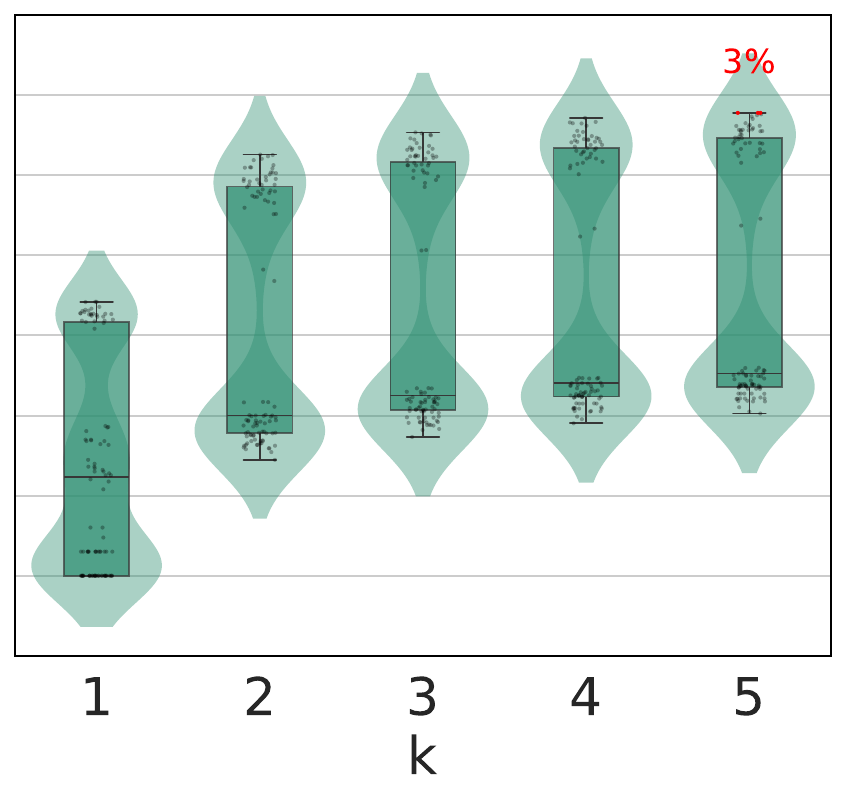}
			\caption{$\ilpunproofdagfunc$}
		\end{subfigure}
		
		\par\vspace{1em}
		\textbf{\doctors\ 7 (100k)}
	\end{minipage}
	
	\caption[]{Runtimes for the \doctors\ (5--7) scenarios using the approximation algorithms.}
\end{figure}

% =========================
% CSDA
% =========================

\begin{figure}[!tp]
	\centering
	\small
	
	%==================== HTTPD ====================%
	\begin{minipage}[t]{\linewidth}
		\centering
		
		\begin{subfigure}[t]{0.36\linewidth}
			\centering
			\safeincludegraphics[height=2.57cm]{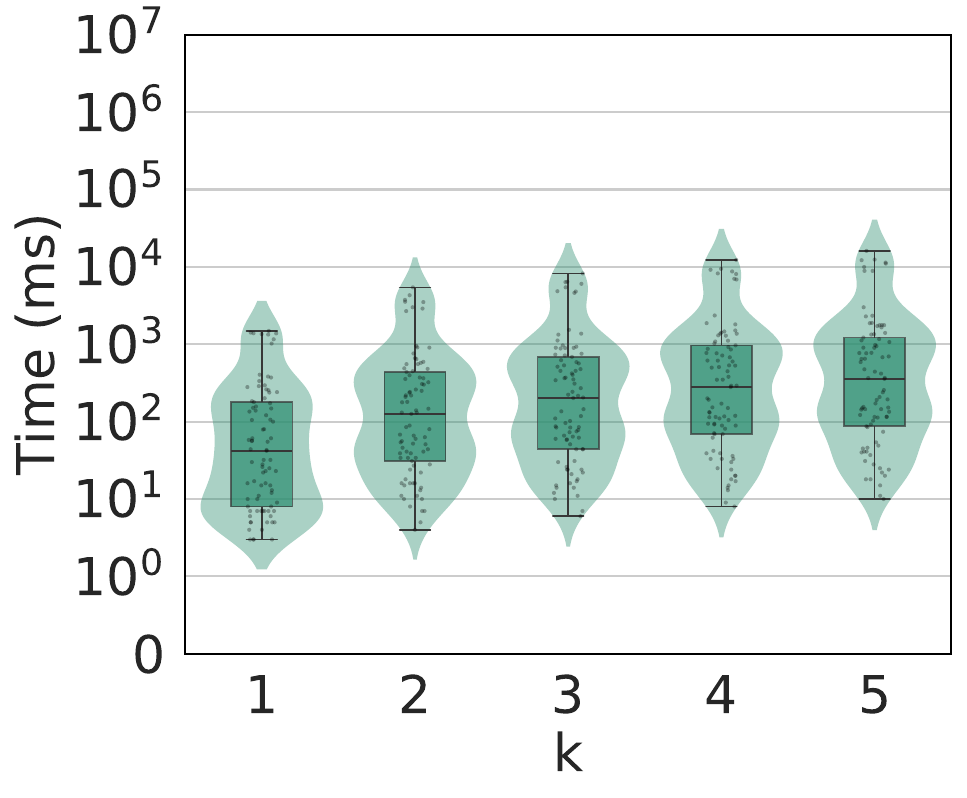}
			\caption{$\ilphypergraphfunc$}
		\end{subfigure}\hfill
		\begin{subfigure}[t]{0.30\linewidth}
			\centering
			\safeincludegraphics[height=2.5cm]{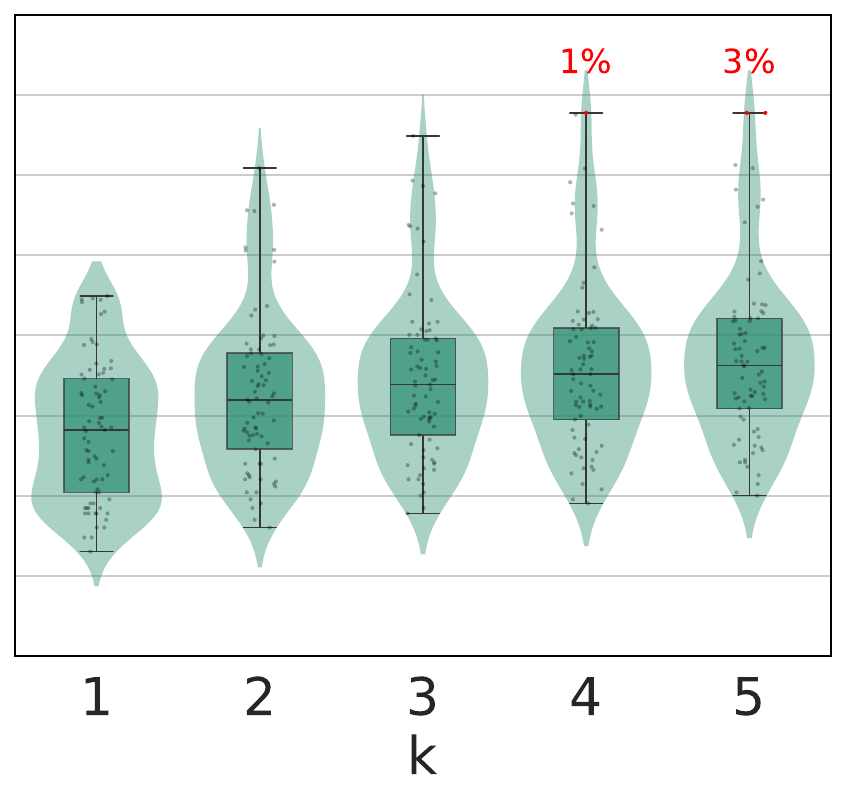}
			\caption{$\ilpunhypergraphfunc$}
		\end{subfigure}\hfill
		\begin{subfigure}[t]{0.30\linewidth}
			\centering
			\safeincludegraphics[height=2.5cm]{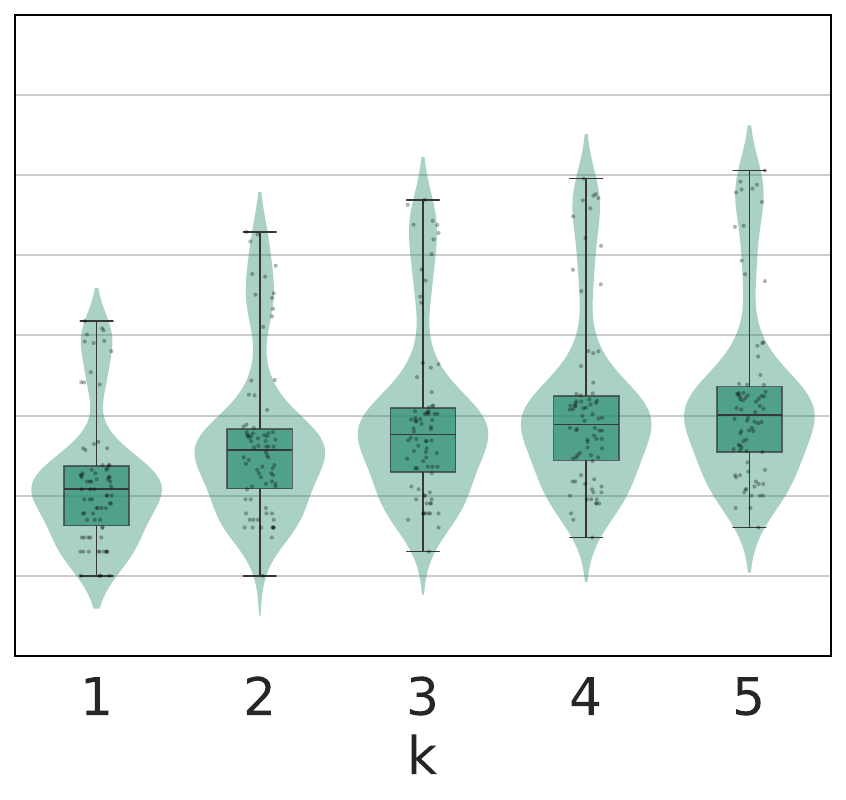}
			\caption{$\ilpunproofdagfunc$}
		\end{subfigure}
		
		\par\vspace{1em}
		\textbf{\csda (\httpd)}
	\end{minipage}

	\par\vspace{1em}

	%==================== PostgreSQL ====================%
	\begin{minipage}[t]{\linewidth}
		\centering
		
		\begin{subfigure}[t]{0.36\linewidth}
			\centering
			\safeincludegraphics[height=2.57cm]{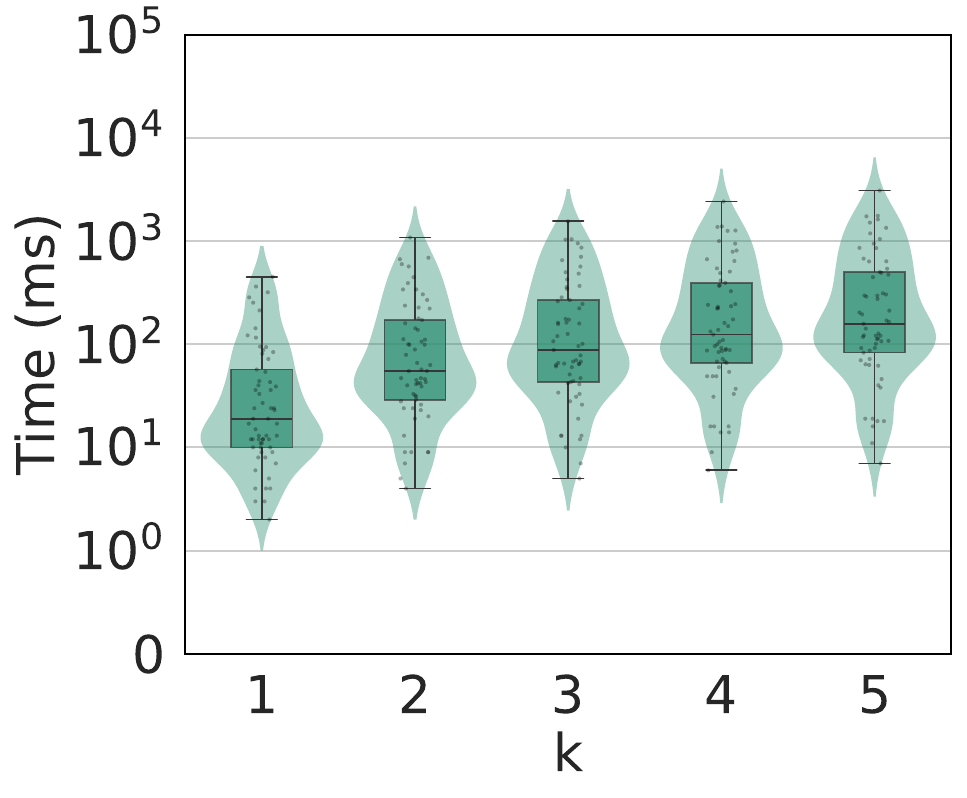}
			\caption{$\ilphypergraphfunc$}
		\end{subfigure}\hfill
		\begin{subfigure}[t]{0.30\linewidth}
			\centering
			\safeincludegraphics[height=2.5cm]{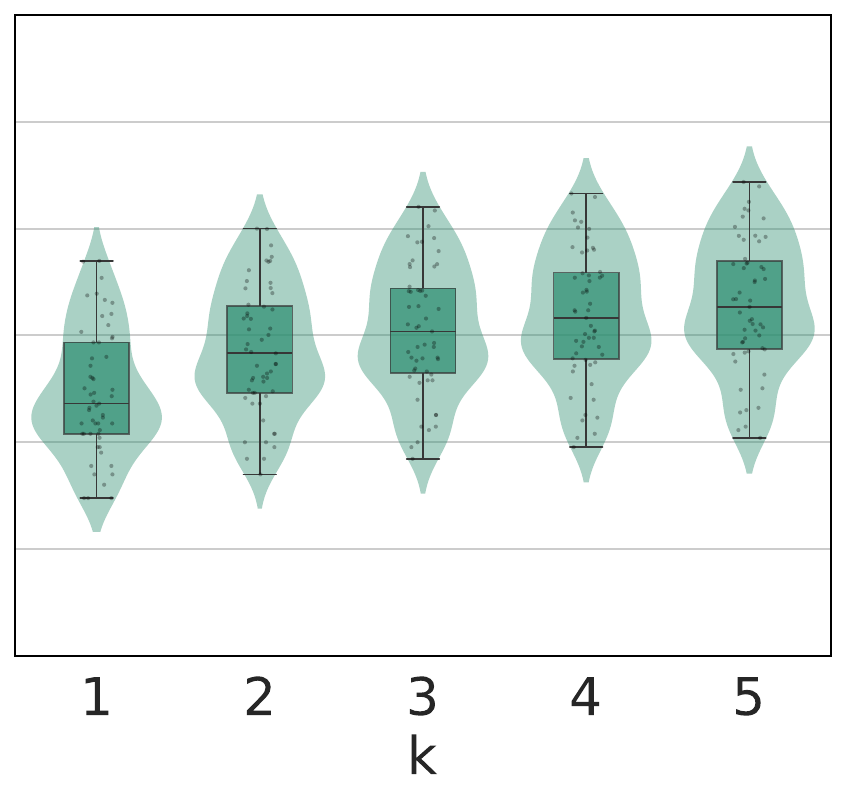}
			\caption{$\ilpunhypergraphfunc$}
		\end{subfigure}\hfill
		\begin{subfigure}[t]{0.30\linewidth}
			\centering
			\safeincludegraphics[height=2.5cm]{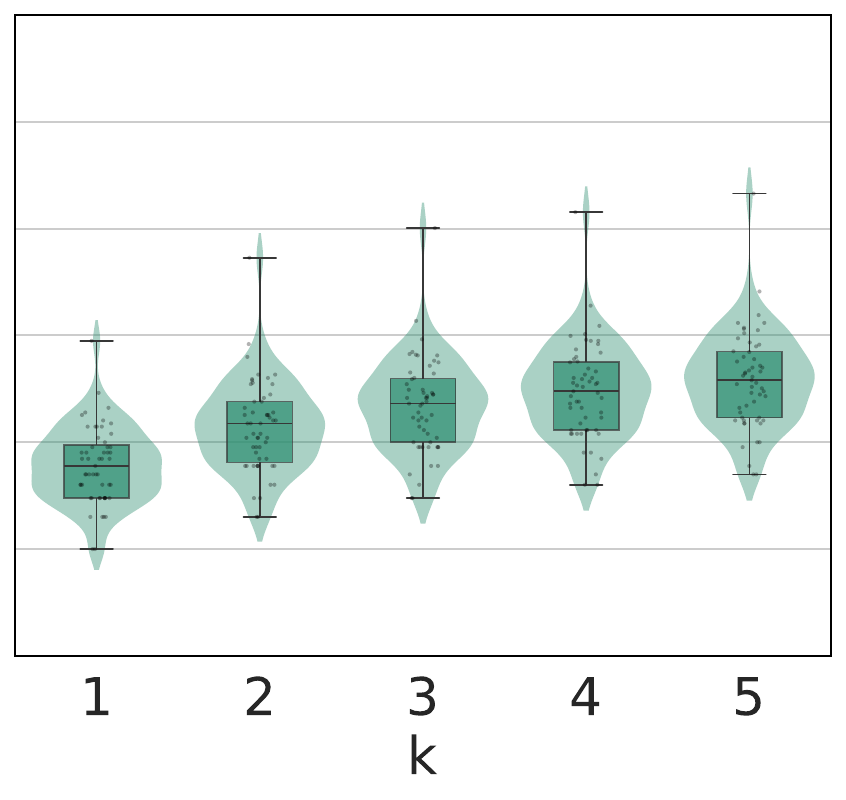}
			\caption{$\ilpunproofdagfunc$}
		\end{subfigure}
		
		\par\vspace{1em}
		\textbf{\csda (\postgres)}
	\end{minipage}

	\par\vspace{1em}

	%==================== Linux ====================%
	\begin{minipage}[t]{\linewidth}
		\centering
		
		\begin{subfigure}[t]{0.36\linewidth}
			\centering
			\safeincludegraphics[height=2.57cm]{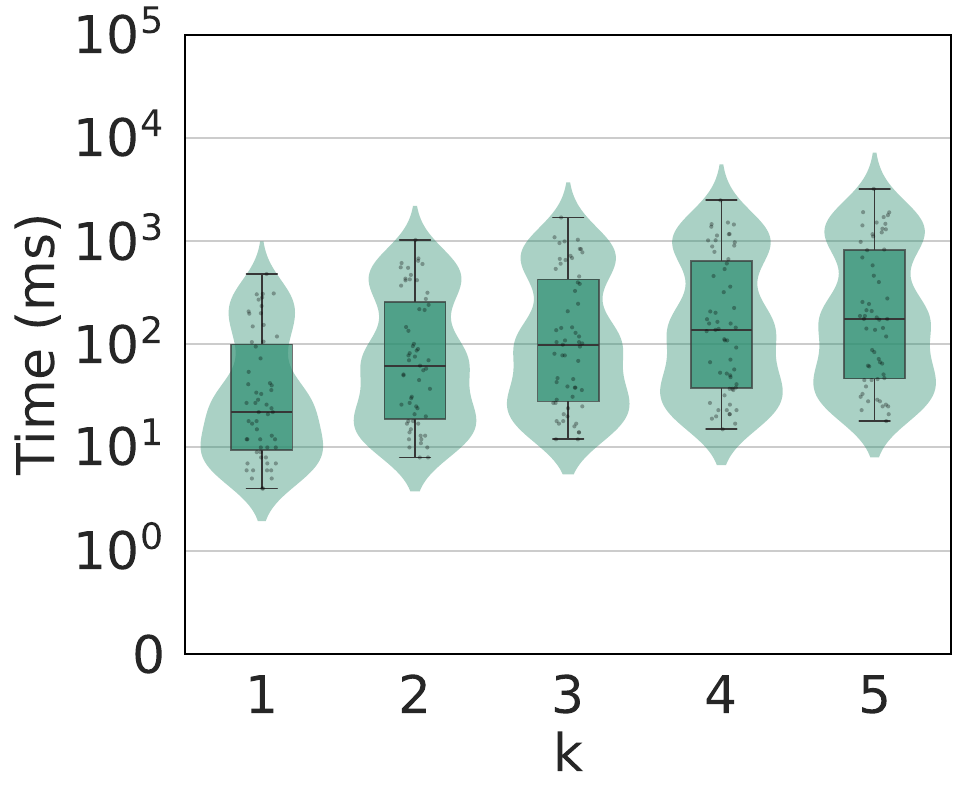}
			\caption{$\ilphypergraphfunc$}
		\end{subfigure}\hfill
		\begin{subfigure}[t]{0.30\linewidth}
			\centering
			\safeincludegraphics[height=2.5cm]{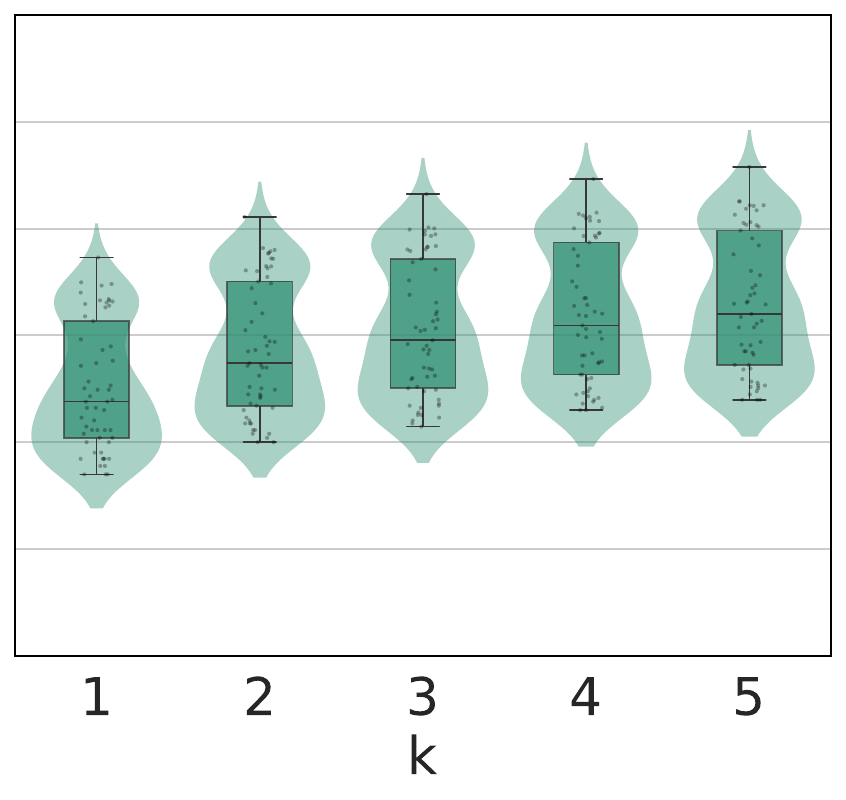}
			\caption{$\ilpunhypergraphfunc$}
		\end{subfigure}\hfill
		\begin{subfigure}[t]{0.30\linewidth}
			\centering
			\safeincludegraphics[height=2.5cm]{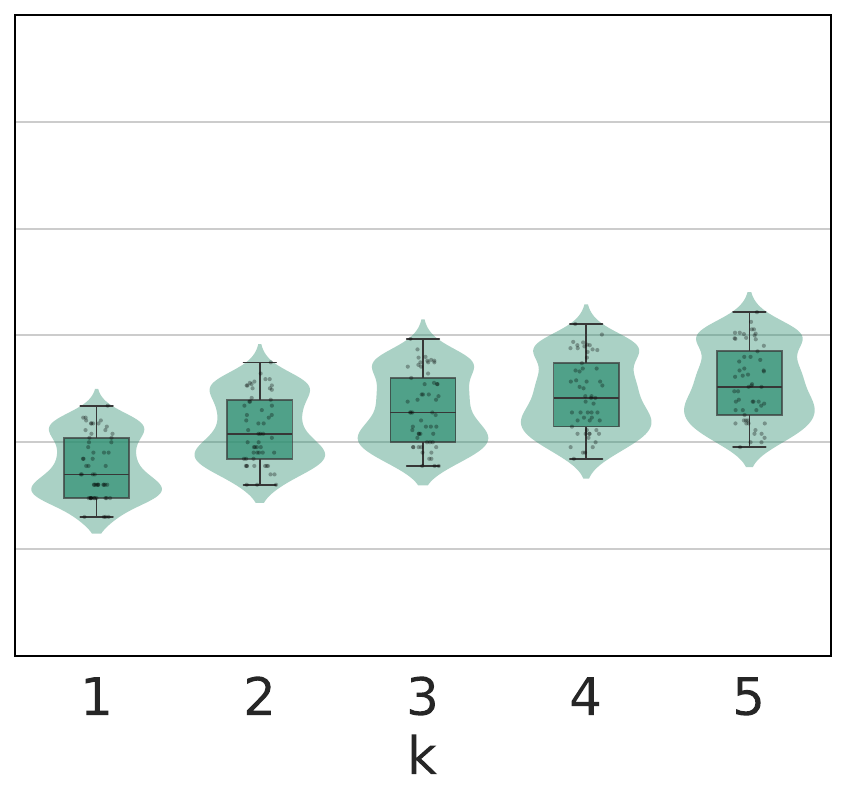}
			\caption{$\ilpunproofdagfunc$}
		\end{subfigure}
		
		\par\vspace{1em}
		\textbf{\csda (\linux)}
	\end{minipage}
	
	\caption{Runtimes for the \csda\ scenario using the approximation algorithms.}
	\label{fig:csda-approximate}
\end{figure}

\begin{figure}[!tp]
	\centering
	\small
	
	%==================== Facebook ====================%
	\begin{minipage}[t]{\linewidth}
		\centering
		
		\begin{subfigure}[t]{0.36\linewidth}
			\centering
			\safeincludegraphics[height=2.57cm]{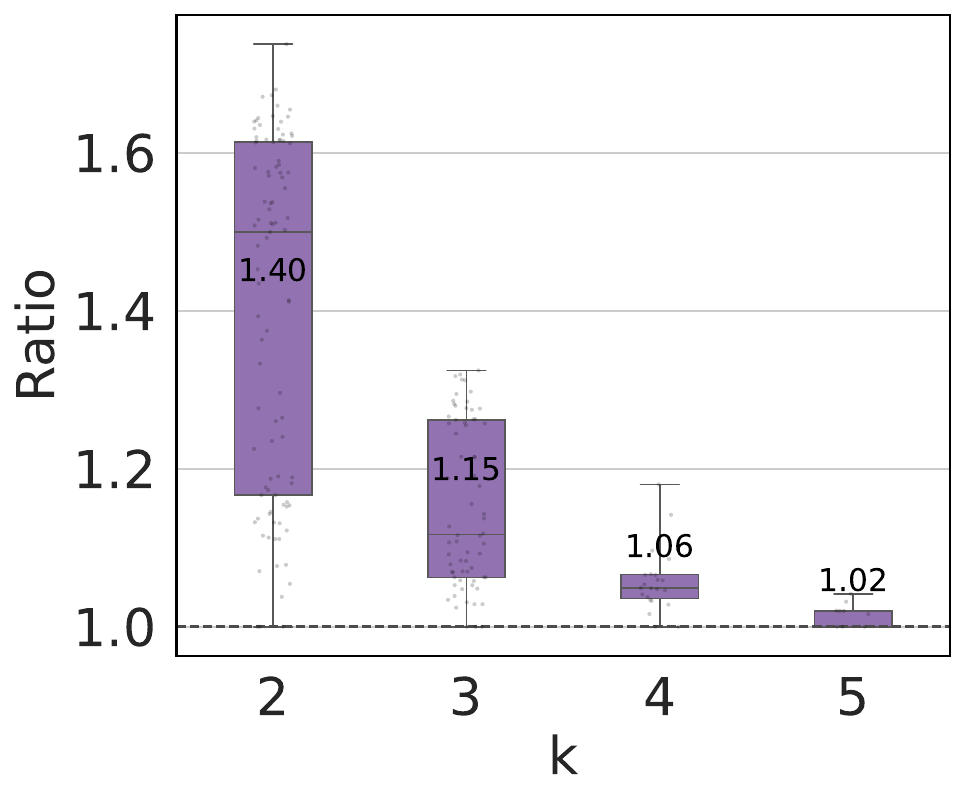}
			\caption{$\ilphypergraphfunc$}
		\end{subfigure}\hfill
		\begin{subfigure}[t]{0.30\linewidth}
			\centering
			\safeincludegraphics[height=2.5cm]{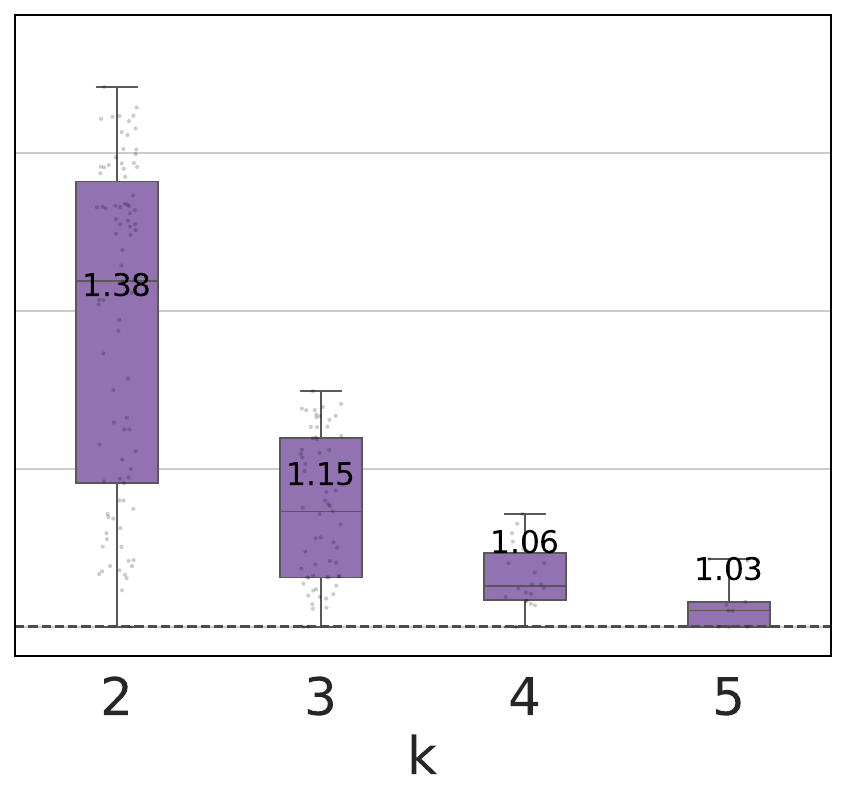}
			\caption{$\ilpunhypergraphfunc$}
		\end{subfigure}\hfill
		\begin{subfigure}[t]{0.30\linewidth}
			\centering
			\safeincludegraphics[height=2.5cm]{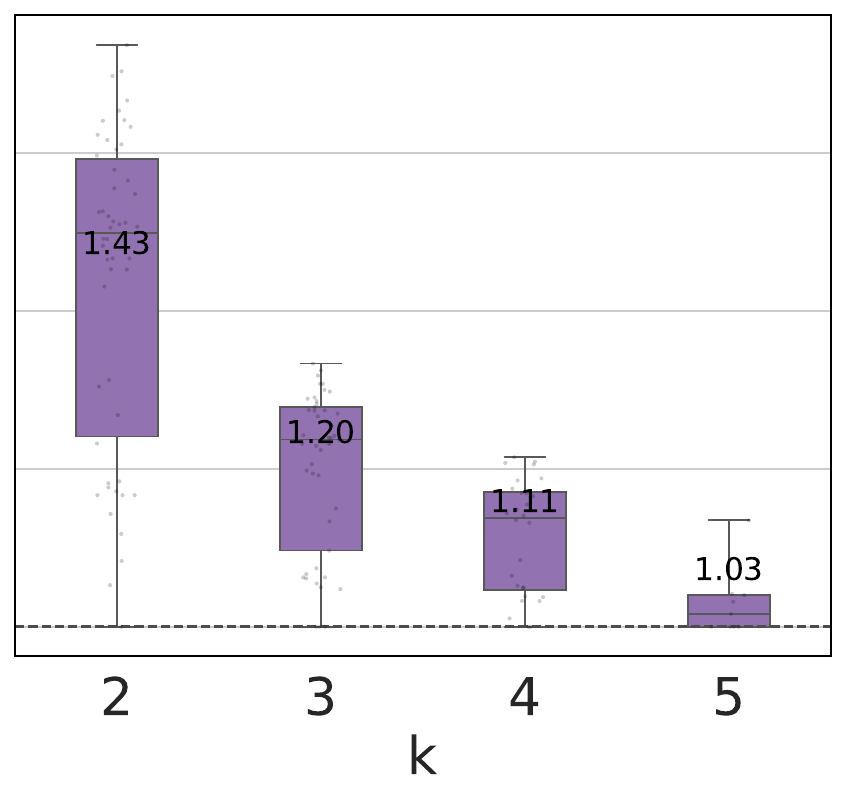}
			\caption{$\ilpunproofdagfunc$}
		\end{subfigure}
		
		\par\vspace{1em}
		\textbf{\graphanalytics (\facebook)}
	\end{minipage}
	
	\par\vspace{1em}
	
	%==================== Bitcoin ====================%
	\begin{minipage}[t]{\linewidth}
		\centering
		
		\begin{subfigure}[t]{0.36\linewidth}
			\centering
			\safeincludegraphics[height=2.57cm]{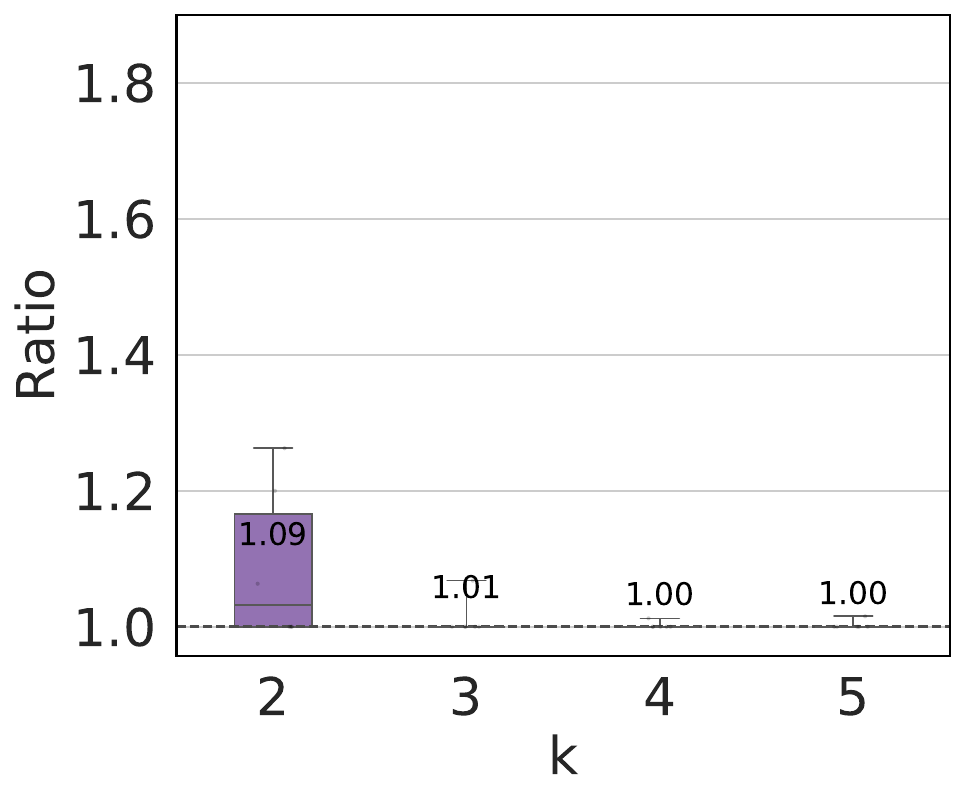}
			\caption{$\ilphypergraphfunc$}
		\end{subfigure}\hfill
		\begin{subfigure}[t]{0.30\linewidth}
			\centering
			\safeincludegraphics[height=2.5cm]{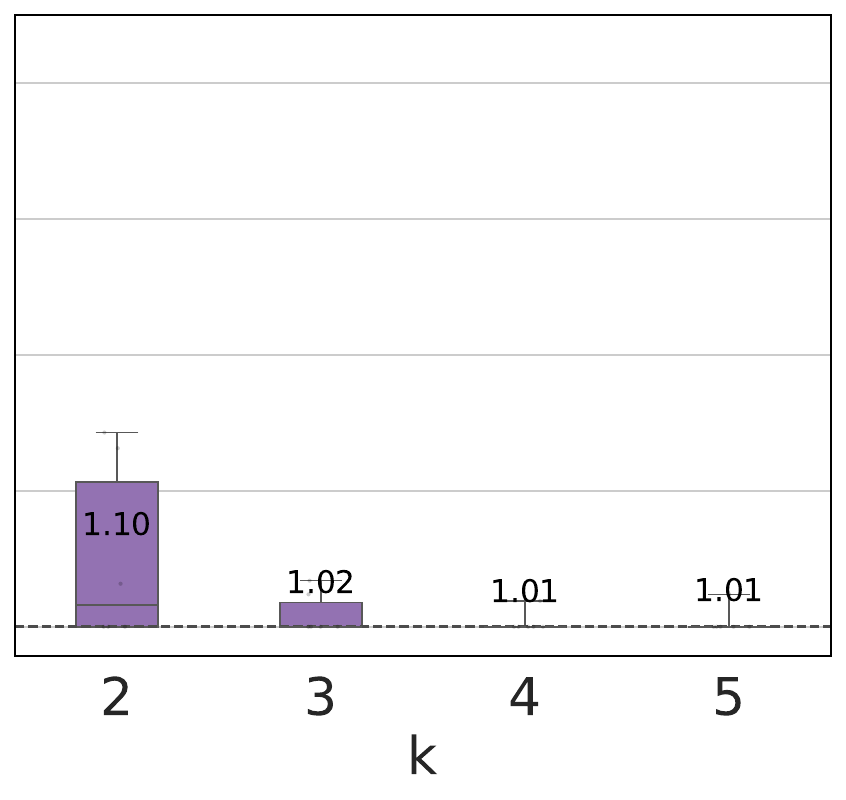}
			\caption{$\ilpunhypergraphfunc$}
		\end{subfigure}\hfill
		\begin{subfigure}[t]{0.30\linewidth}
			\centering
			\safeincludegraphics[height=2.5cm]{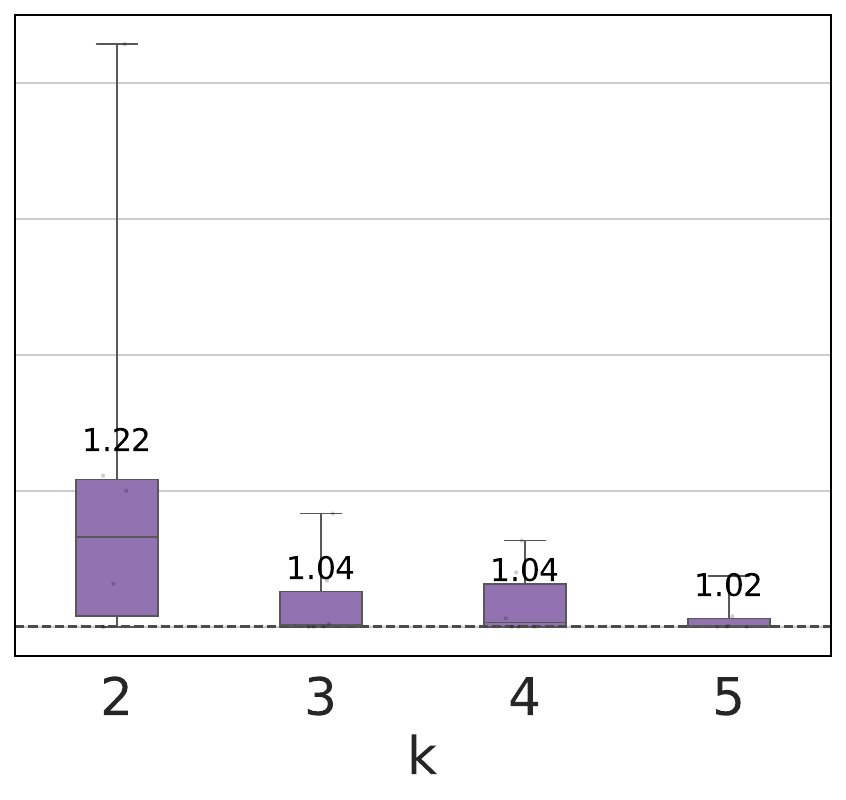}
			\caption{$\ilpunproofdagfunc$}
		\end{subfigure}
		
		\par\vspace{1em}
		\textbf{\graphanalytics (\bitcoin)}
	\end{minipage}
	
	\caption{Ratio (optimal / greedy) distribution across $k$ for the \graphanalytics\ scenario.}
	\label{fig:graph-analytics-rate}
\end{figure}

\begin{figure}[!tp]
	\centering
	\small
	
	%==================== HTTPD ====================%
	\begin{minipage}[t]{\linewidth}
		\centering
		
		\begin{subfigure}[t]{0.36\linewidth}
			\centering
			\safeincludegraphics[height=2.57cm]{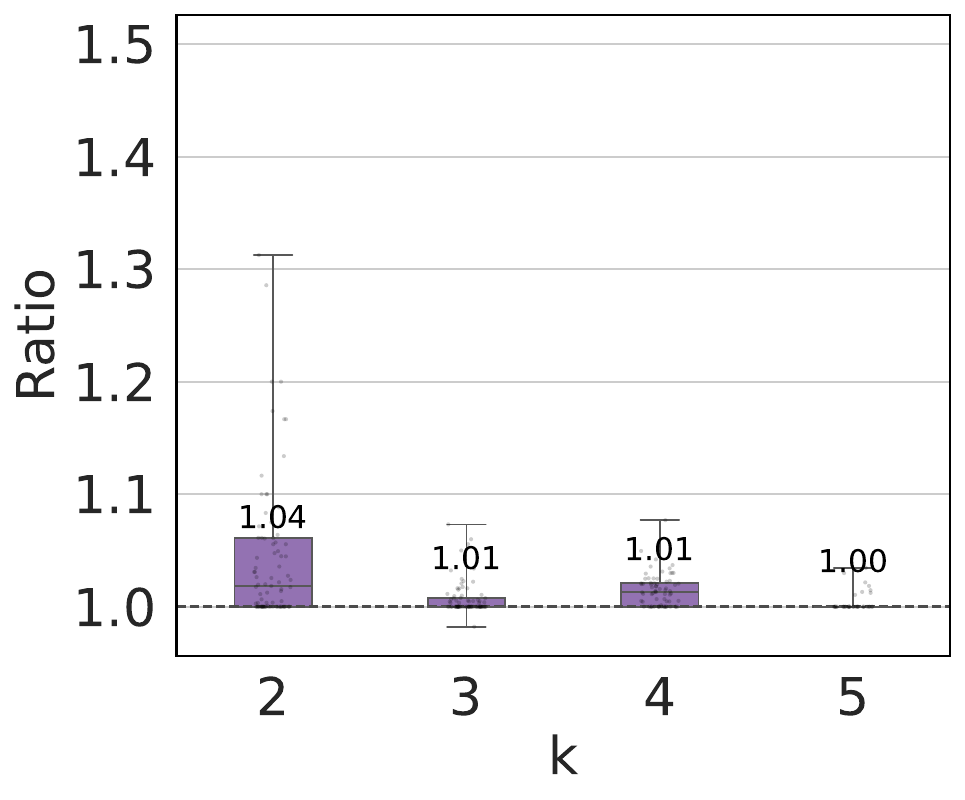}
			\caption{$\ilphypergraphfunc$}
		\end{subfigure}\hfill
		\begin{subfigure}[t]{0.30\linewidth}
			\centering
			\safeincludegraphics[height=2.5cm]{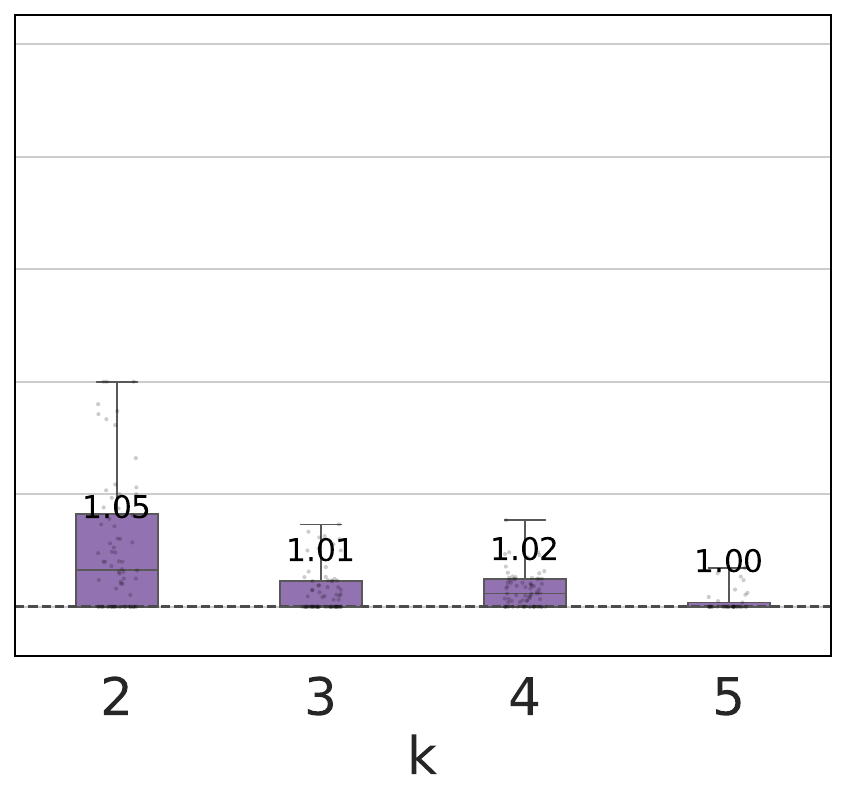}
			\caption{$\ilpunhypergraphfunc$}
		\end{subfigure}\hfill
		\begin{subfigure}[t]{0.30\linewidth}
			\centering
			\safeincludegraphics[height=2.5cm]{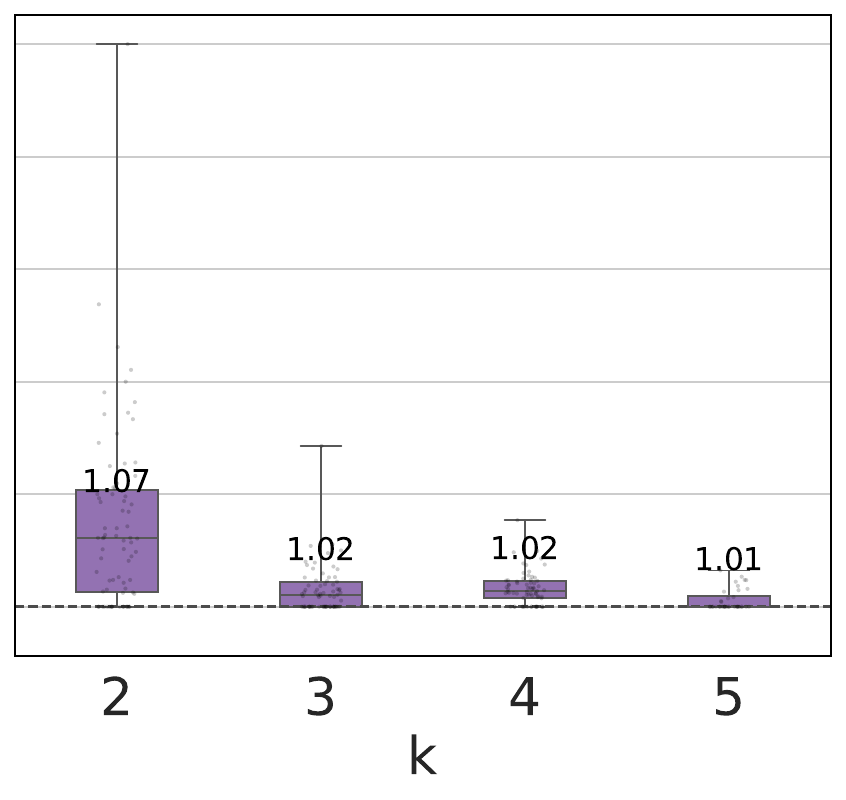}
			\caption{$\ilpunproofdagfunc$}
		\end{subfigure}
		
		\par\vspace{1em}
		\textbf{\csda (\httpd)}
	\end{minipage}
	
	\par\vspace{1em}
	
	%==================== PostgreSQL ====================%
	\begin{minipage}[t]{\linewidth}
		\centering
		
		\begin{subfigure}[t]{0.36\linewidth}
			\centering
			\safeincludegraphics[height=2.57cm]{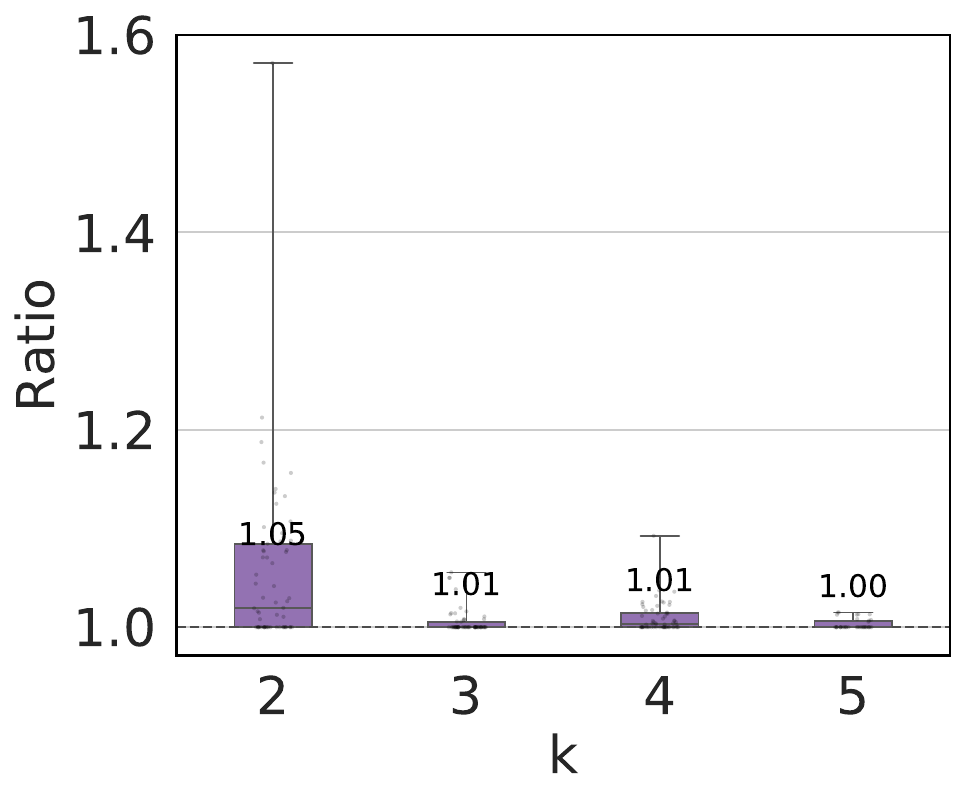}
			\caption{$\ilphypergraphfunc$}
		\end{subfigure}\hfill
		\begin{subfigure}[t]{0.30\linewidth}
			\centering
			\safeincludegraphics[height=2.5cm]{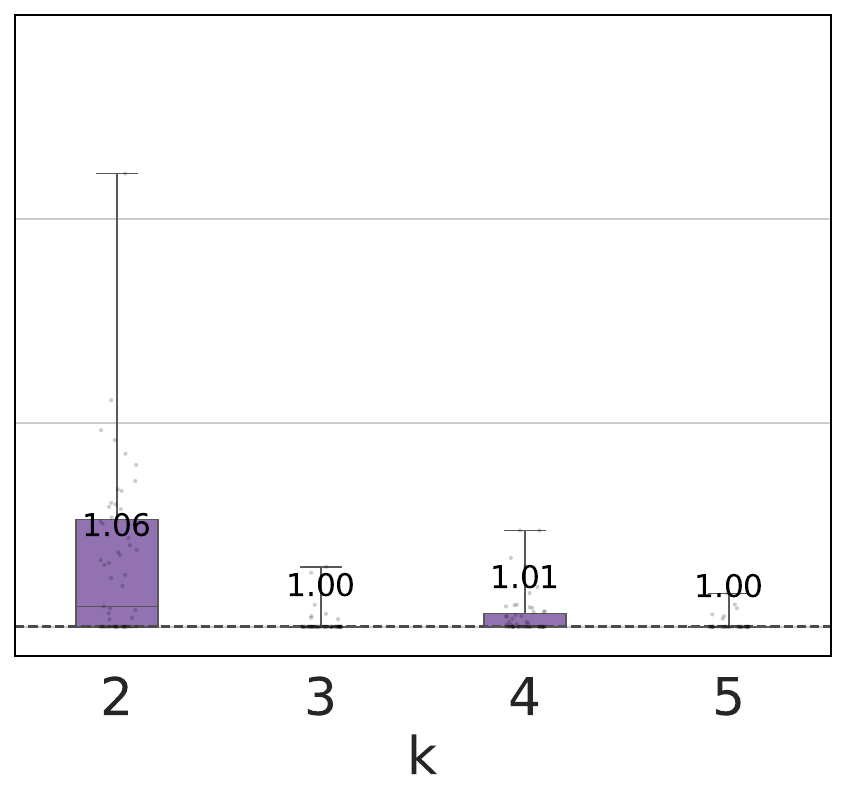}
			\caption{$\ilpunhypergraphfunc$}
		\end{subfigure}\hfill
		\begin{subfigure}[t]{0.30\linewidth}
			\centering
			\safeincludegraphics[height=2.5cm]{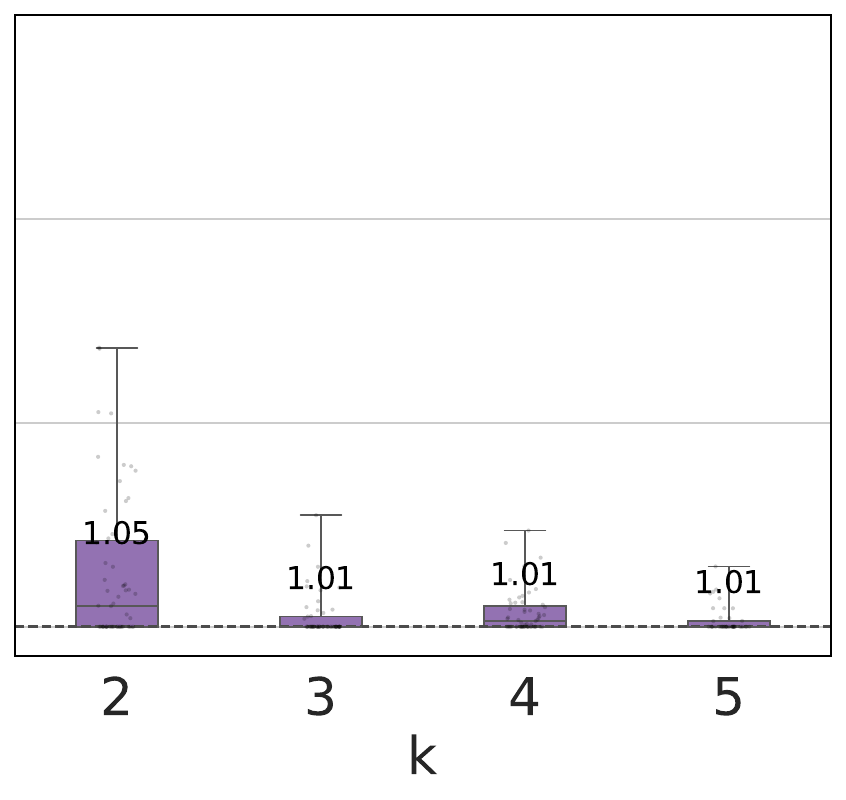}
			\caption{$\ilpunproofdagfunc$}
		\end{subfigure}
		
		\par\vspace{1em}
		\textbf{\csda (\postgres)}
	\end{minipage}
	
	\par\vspace{1em}
	
	%==================== Linux ====================%
	\begin{minipage}[t]{\linewidth}
		\centering
		
		\begin{subfigure}[t]{0.36\linewidth}
			\centering
			\safeincludegraphics[height=2.57cm]{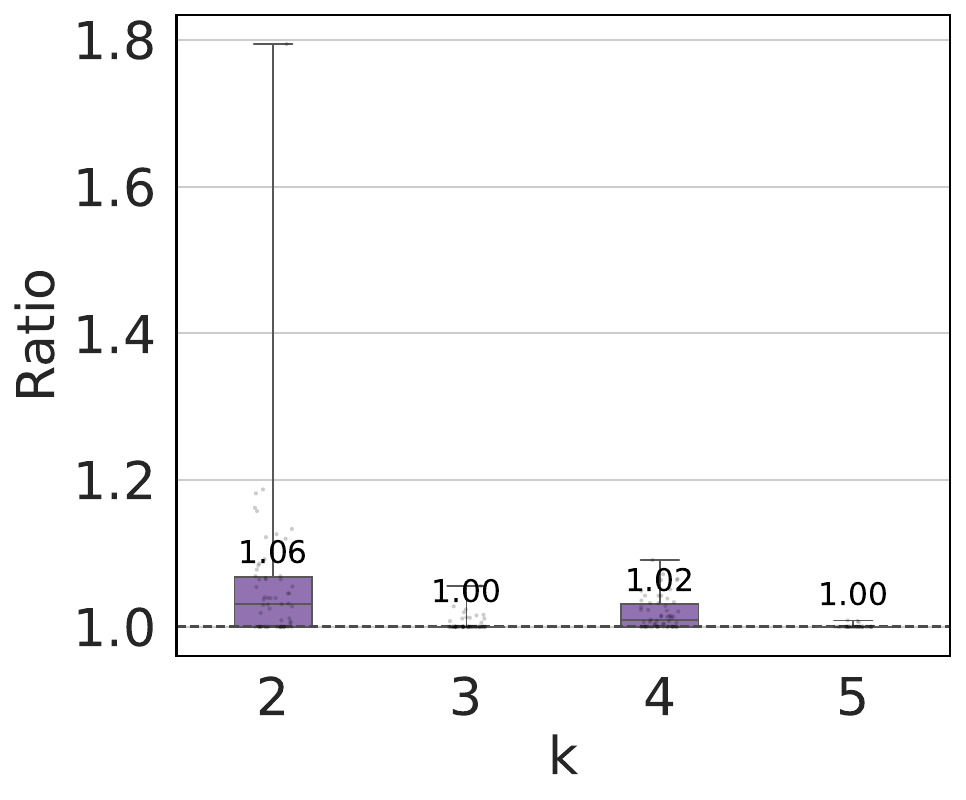}
			\caption{$\ilphypergraphfunc$}
		\end{subfigure}\hfill
		\begin{subfigure}[t]{0.30\linewidth}
			\centering
			\safeincludegraphics[height=2.5cm]{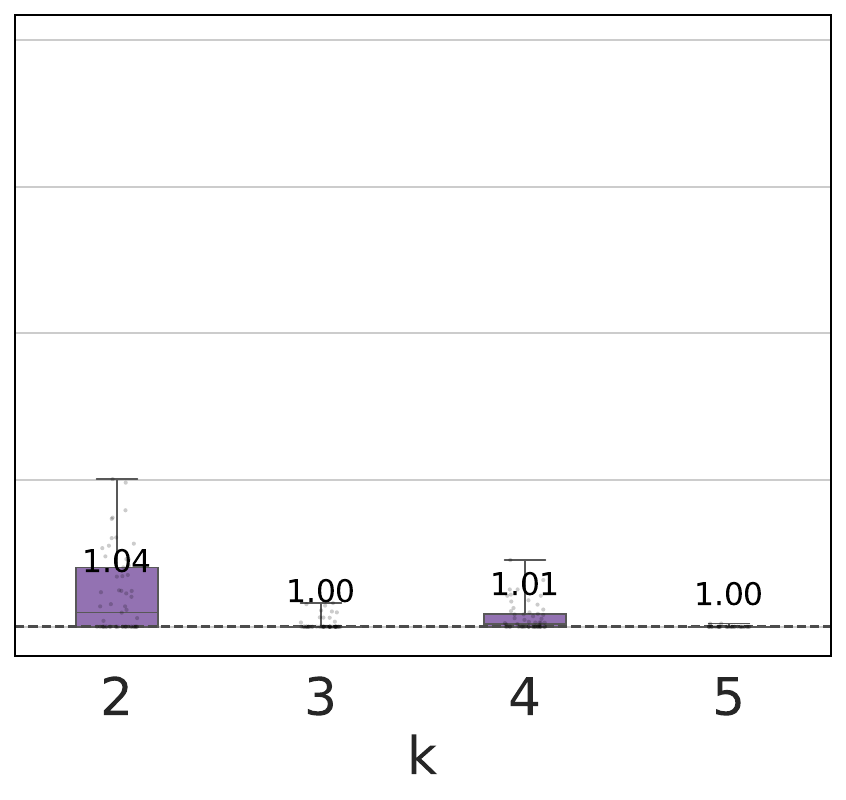}
			\caption{$\ilpunhypergraphfunc$}
		\end{subfigure}\hfill
		\begin{subfigure}[t]{0.30\linewidth}
			\centering
			\safeincludegraphics[height=2.5cm]{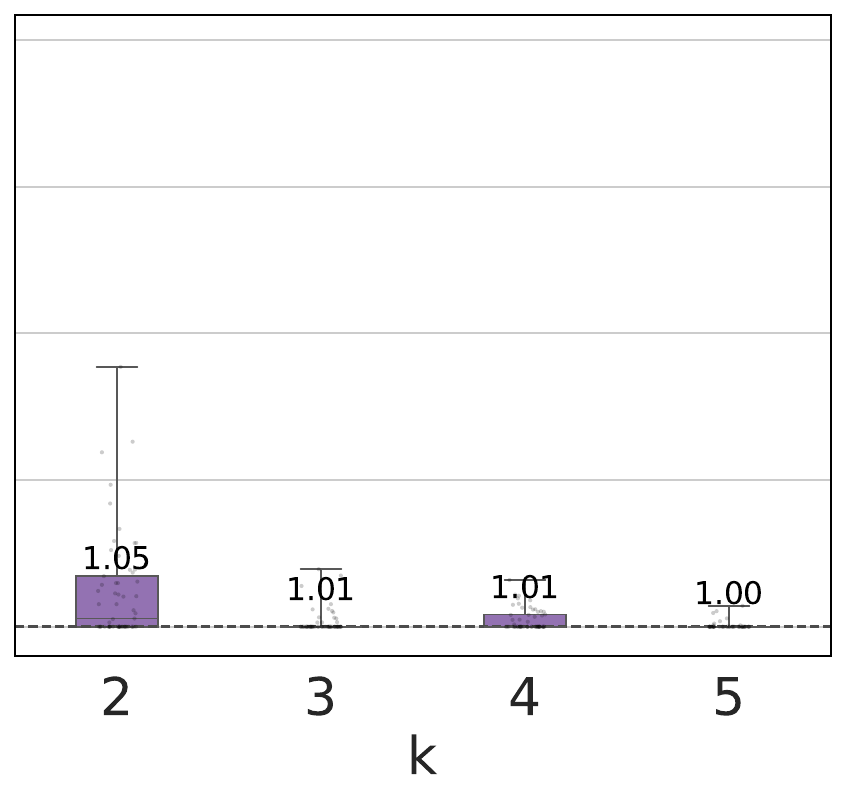}
			\caption{$\ilpunproofdagfunc$}
		\end{subfigure}
		
		\par\vspace{1em}
		\textbf{\csda (\linux)}
	\end{minipage}
	
	\caption{Ratio (optimal / greedy) distribution across $k$ for the \csda\ scenario.}
	\label{fig:csda-rate}
\end{figure}

\begin{figure}[!tp]
	\centering
	\small
	
	%==================== doctors ====================%
	\begin{minipage}[t]{\linewidth}
		\centering
		
		\begin{subfigure}[t]{0.36\linewidth}
			\centering
			\safeincludegraphics[height=2.57cm]{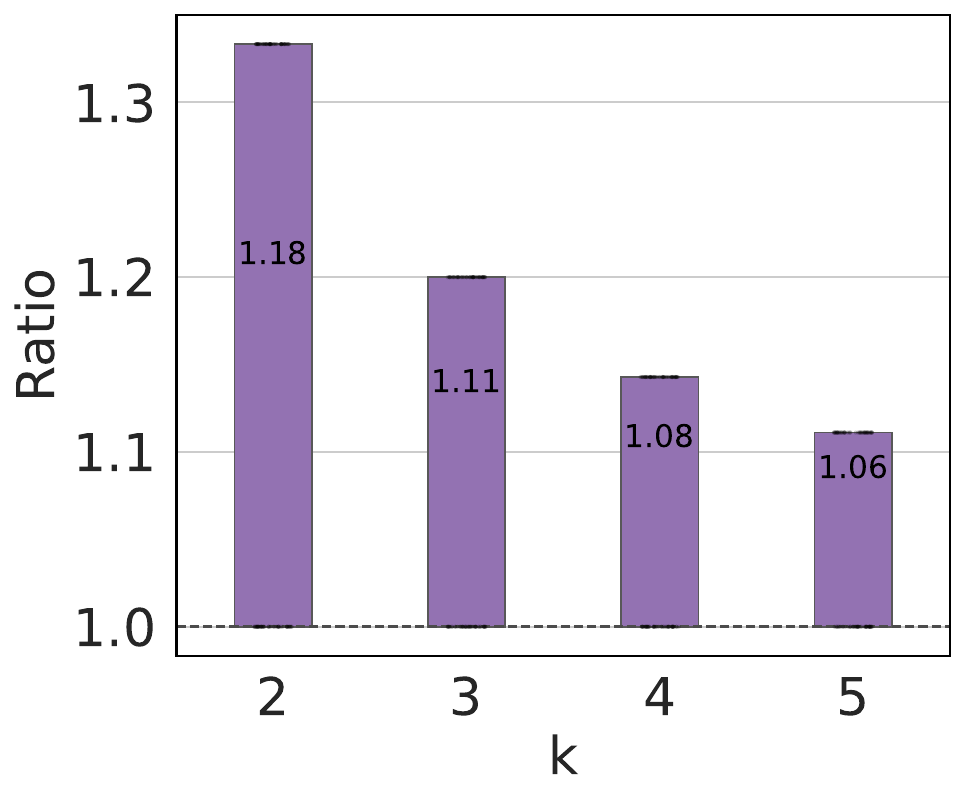}
			\caption{$\ilphypergraphfunc$}
		\end{subfigure}\hfill
		\begin{subfigure}[t]{0.30\linewidth}
			\centering
			\safeincludegraphics[height=2.5cm]{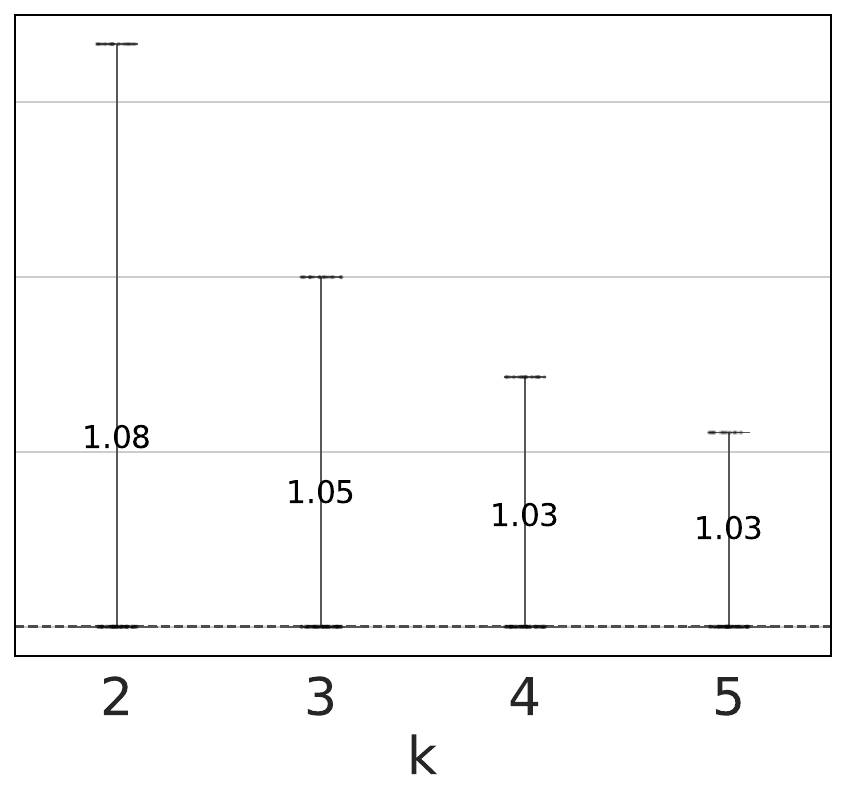}
			\caption{$\ilpunhypergraphfunc$}
		\end{subfigure}\hfill
		\begin{subfigure}[t]{0.30\linewidth}
			\centering
			\safeincludegraphics[height=2.5cm]{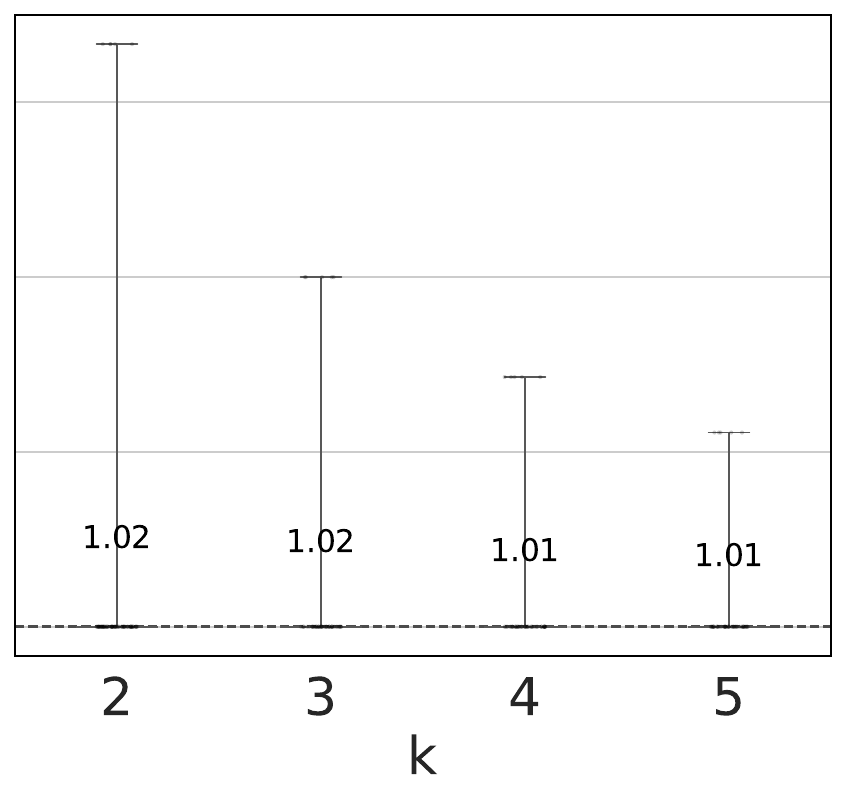}
			\caption{$\ilpunproofdagfunc$}
		\end{subfigure}
		
		\par\vspace{1em}
		\textbf{\doctors\ (100k)}
	\end{minipage}
	
	\caption{Ratio (optimal / greedy) distribution across $k$ for the \doctors\ scenario.}
	\label{fig:doctors-rate}
\end{figure}

\begin{figure}[!h]
	\centering
	\small
	
	%==================== Facebook ====================%
	\begin{minipage}[t]{\linewidth}
		\centering
		
		\begin{subfigure}[t]{0.36\linewidth}
			\centering
			\safeincludegraphics[height=2.57cm]{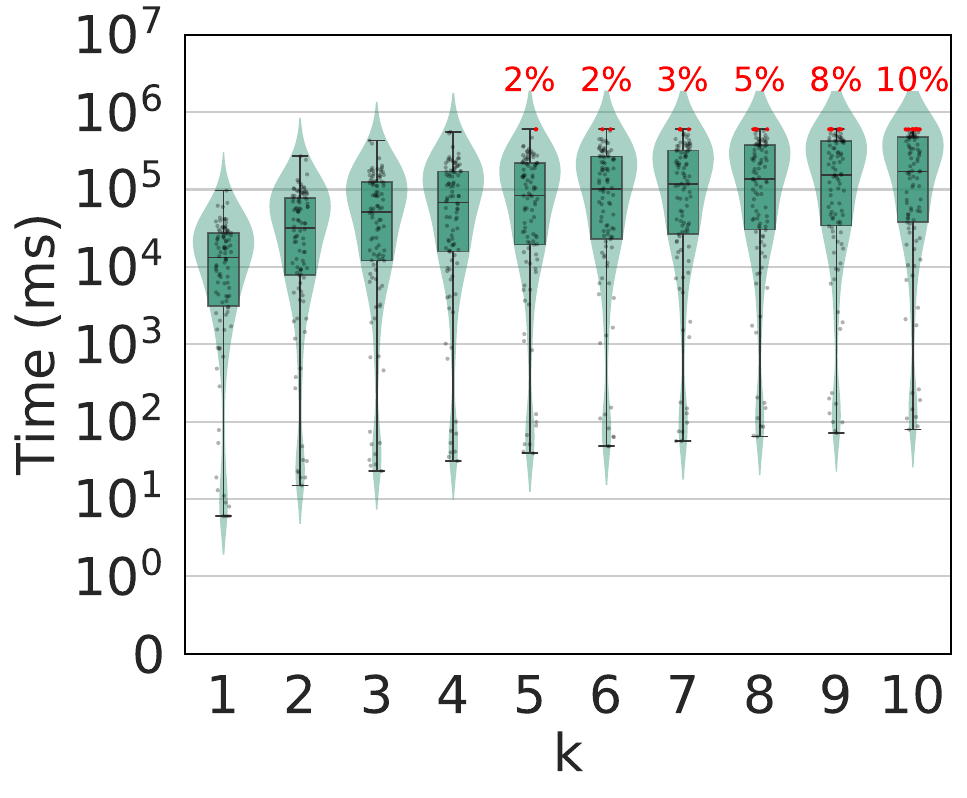}
			\caption{$\ilphypergraphfunc$}
		\end{subfigure}\hfill
		\begin{subfigure}[t]{0.30\linewidth}
			\centering
			\safeincludegraphics[height=2.5cm]{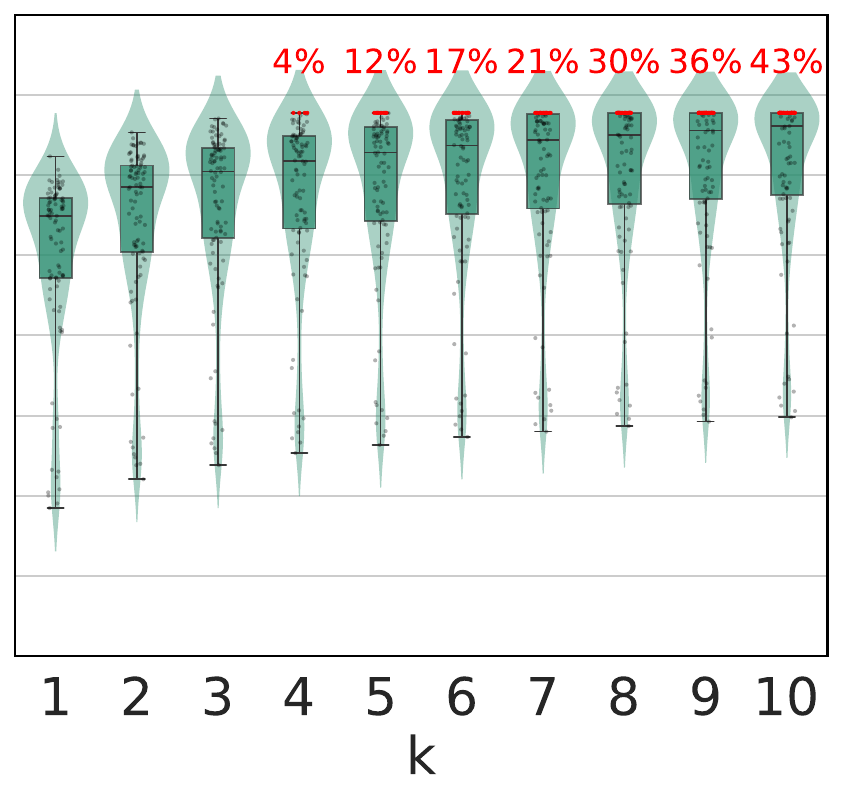}
			\caption{$\ilpunhypergraphfunc$}
		\end{subfigure}\hfill
		\begin{subfigure}[t]{0.30\linewidth}
			\centering
			\safeincludegraphics[height=2.5cm]{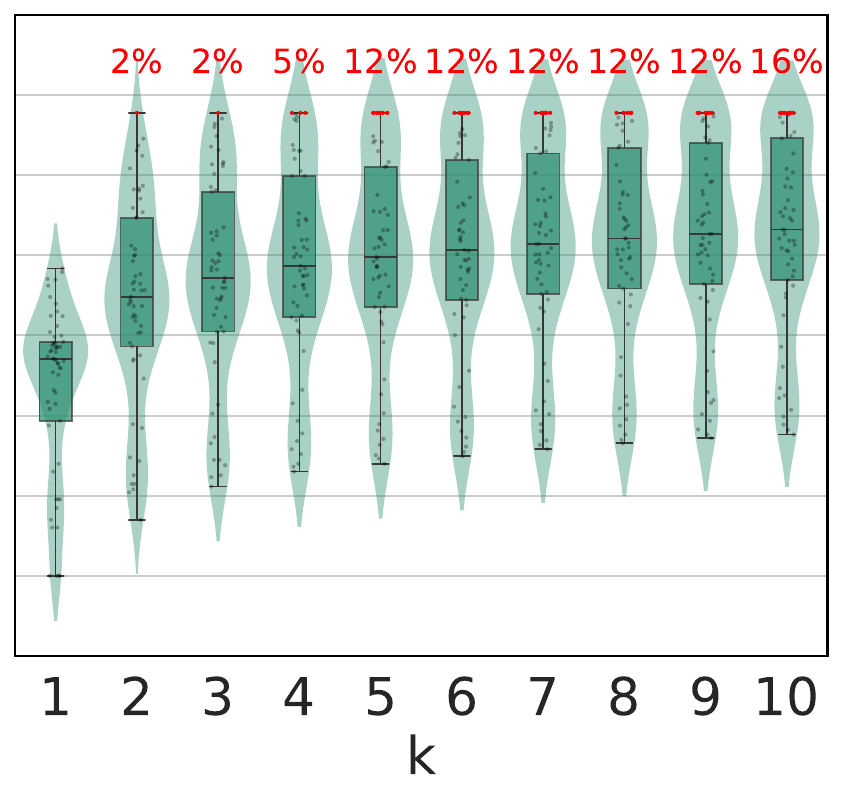}
			\caption{$\ilpunproofdagfunc$}
		\end{subfigure}
		
		\par\vspace{1em}
		\textbf{\graphanalytics (\facebook)}
	\end{minipage}
	
	\par\vspace{1em}
	
	%==================== Bitcoin ====================%
	\begin{minipage}[t]{\linewidth}
		\centering
		
		\begin{subfigure}[t]{0.36\linewidth}
			\centering
			\safeincludegraphics[height=2.57cm]{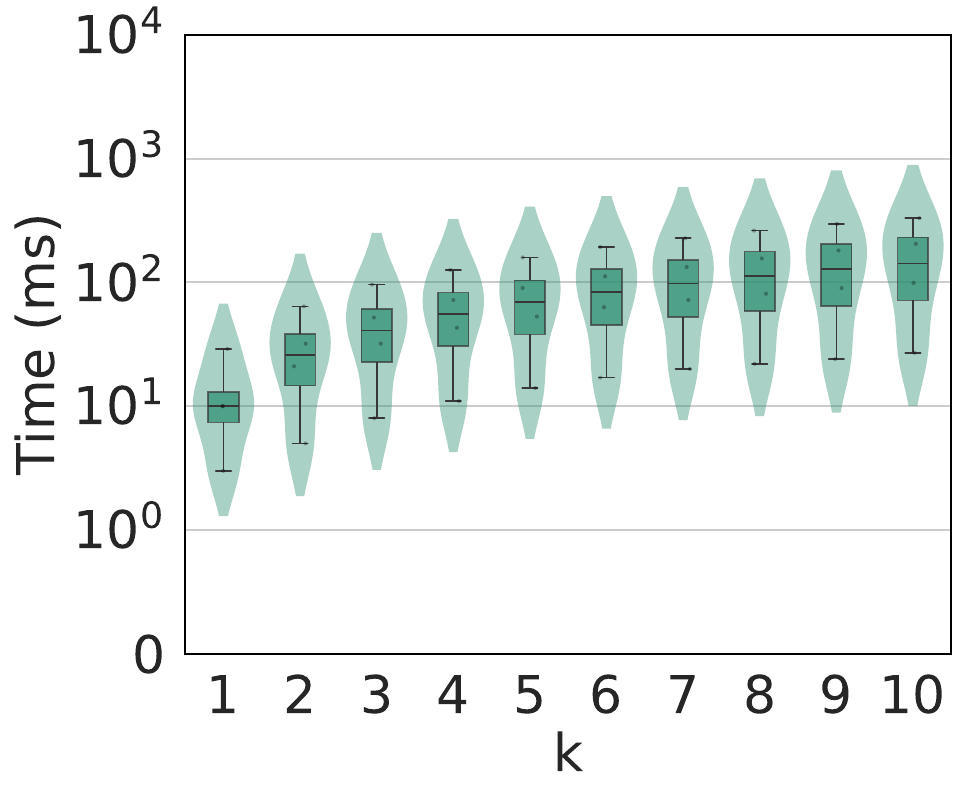}
			\caption{$\ilphypergraphfunc$}
		\end{subfigure}\hfill
		\begin{subfigure}[t]{0.30\linewidth}
			\centering
			\safeincludegraphics[height=2.5cm]{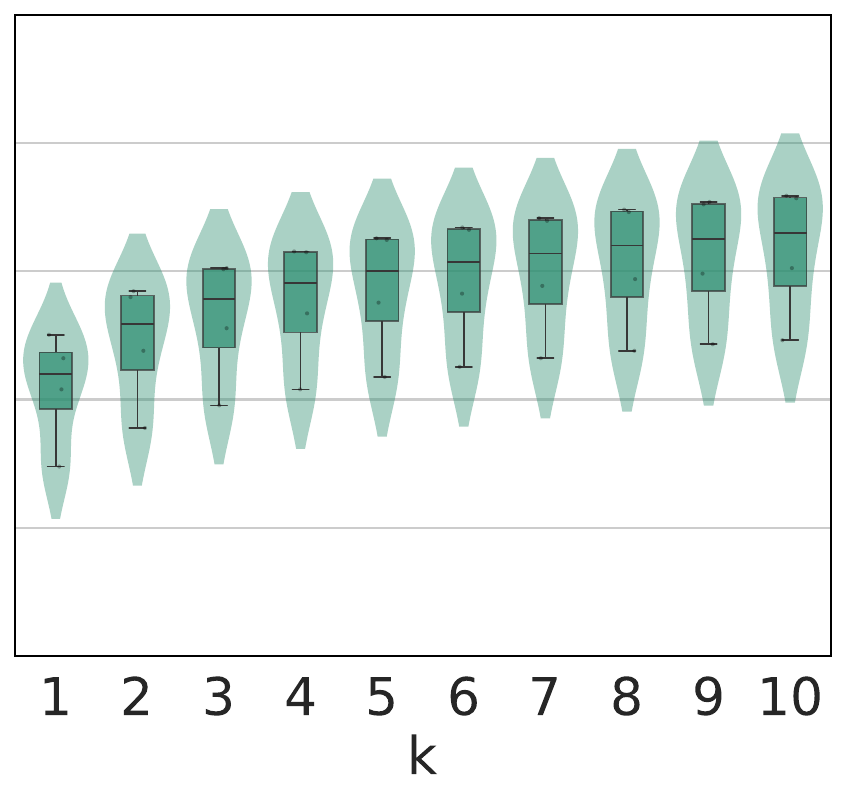}
			\caption{$\ilpunhypergraphfunc$}
		\end{subfigure}\hfill
		\begin{subfigure}[t]{0.30\linewidth}
			\centering
			\safeincludegraphics[height=2.5cm]{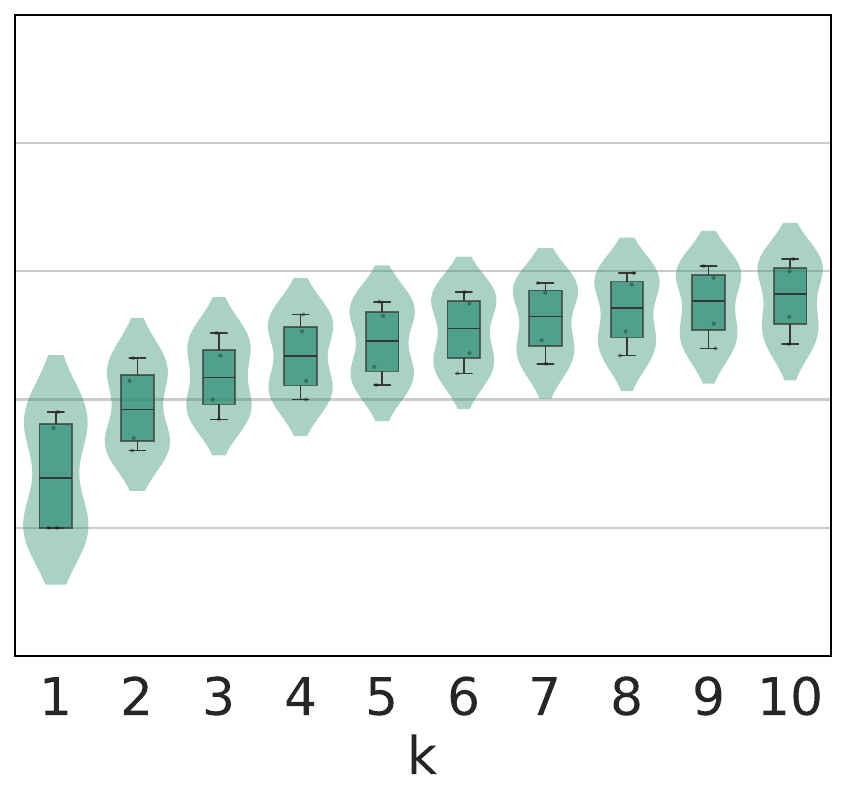}
			\caption{$\ilpunproofdagfunc$}
		\end{subfigure}
		
		\par\vspace{1em}
		\textbf{\graphanalytics (\bitcoin)}
	\end{minipage}
	
	\caption{Runtimes for the \plotlabel{\graphanalytics} scenario using the approximation algorithms ($k$ up to 10).}
	\label{fig:k10-graph-analytics-approximate}
\end{figure}

% Part 1: doctors 1-4
\begin{figure}[!h]
	\centering
	\small
	
	%==================== doctors 1 ====================%
	\begin{minipage}[t]{\linewidth}
		\centering
		
		\begin{subfigure}[t]{0.36\linewidth}
			\centering
			\safeincludegraphics[height=2.57cm]{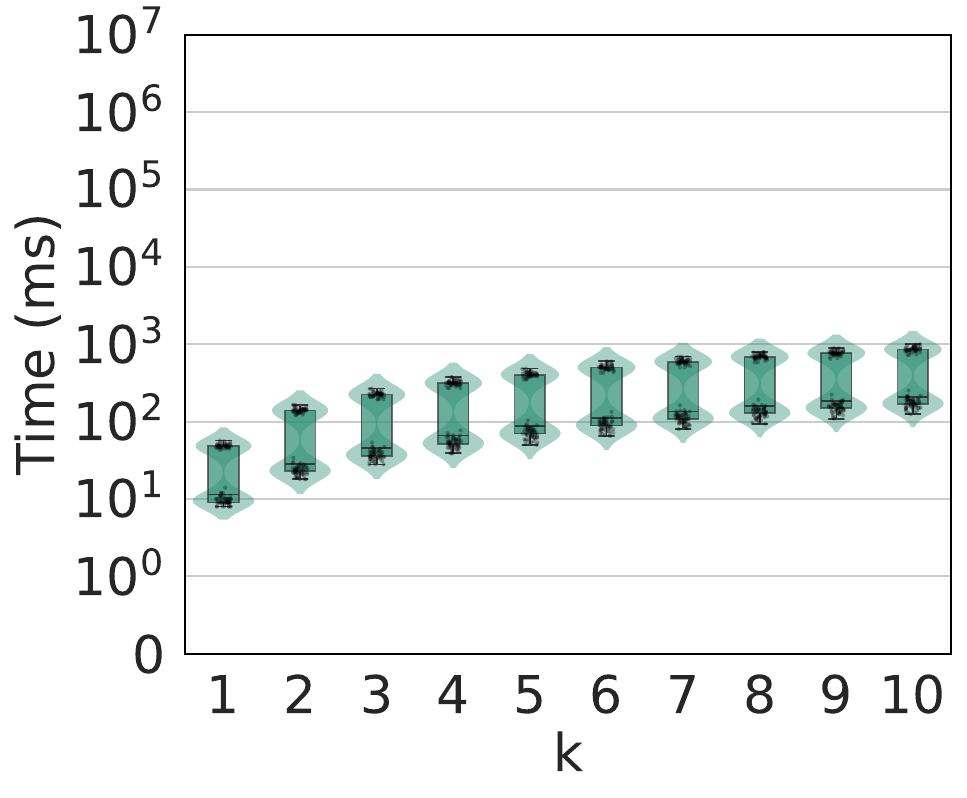}
			\caption{$\ilphypergraphfunc$}
		\end{subfigure}\hfill
		\begin{subfigure}[t]{0.30\linewidth}
			\centering
			\safeincludegraphics[height=2.5cm]{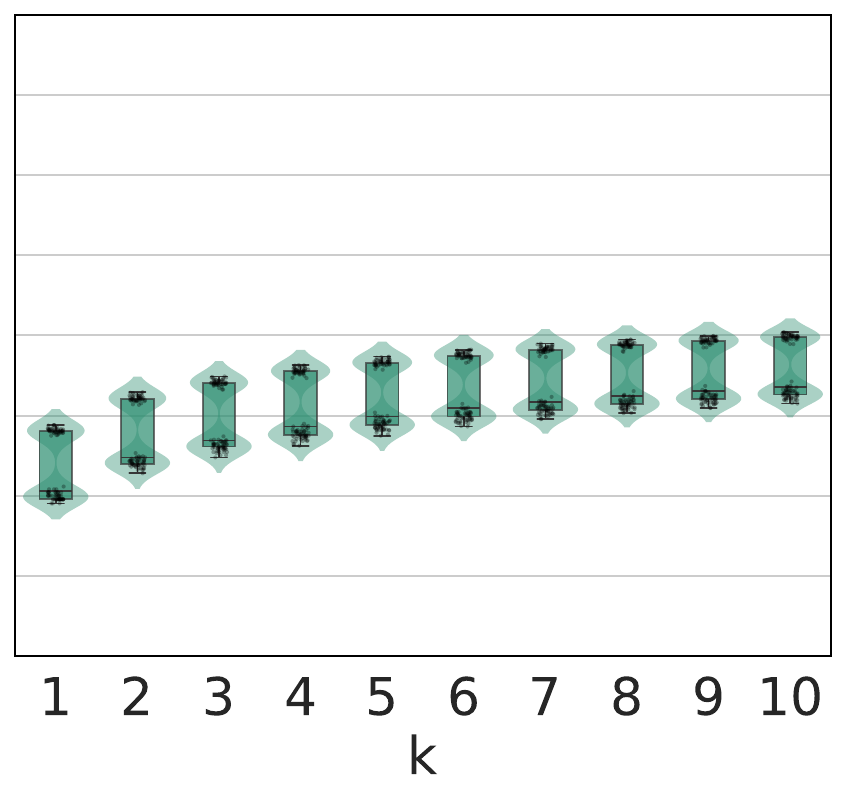}
			\caption{$\ilpunhypergraphfunc$}
		\end{subfigure}\hfill
		\begin{subfigure}[t]{0.30\linewidth}
			\centering
			\safeincludegraphics[height=2.5cm]{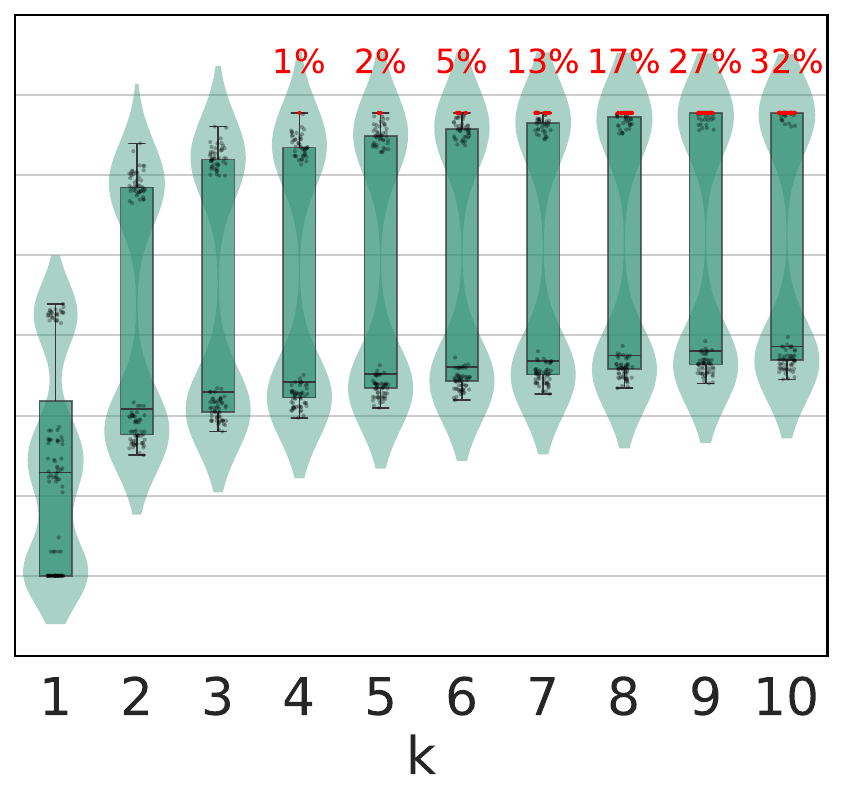}
			\caption{$\ilpunproofdagfunc$}
		\end{subfigure}
		
		\par\vspace{1em}
		\textbf{\doctors\ 1 (100k)}
	\end{minipage}
	
	\par\vspace{1em}
	
	%==================== doctors 2 ====================%
	\begin{minipage}[t]{\linewidth}
		\centering
		
		\begin{subfigure}[t]{0.36\linewidth}
			\centering
			\safeincludegraphics[height=2.57cm]{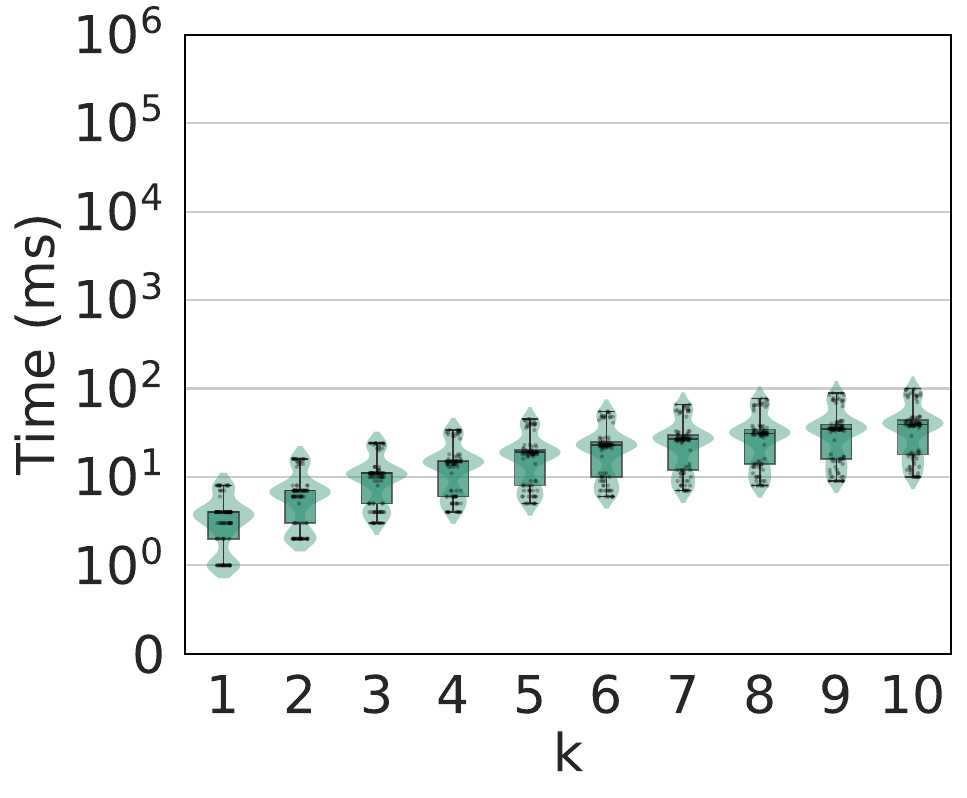}
			\caption{$\ilphypergraphfunc$}
		\end{subfigure}\hfill
		\begin{subfigure}[t]{0.30\linewidth}
			\centering
			\safeincludegraphics[height=2.5cm]{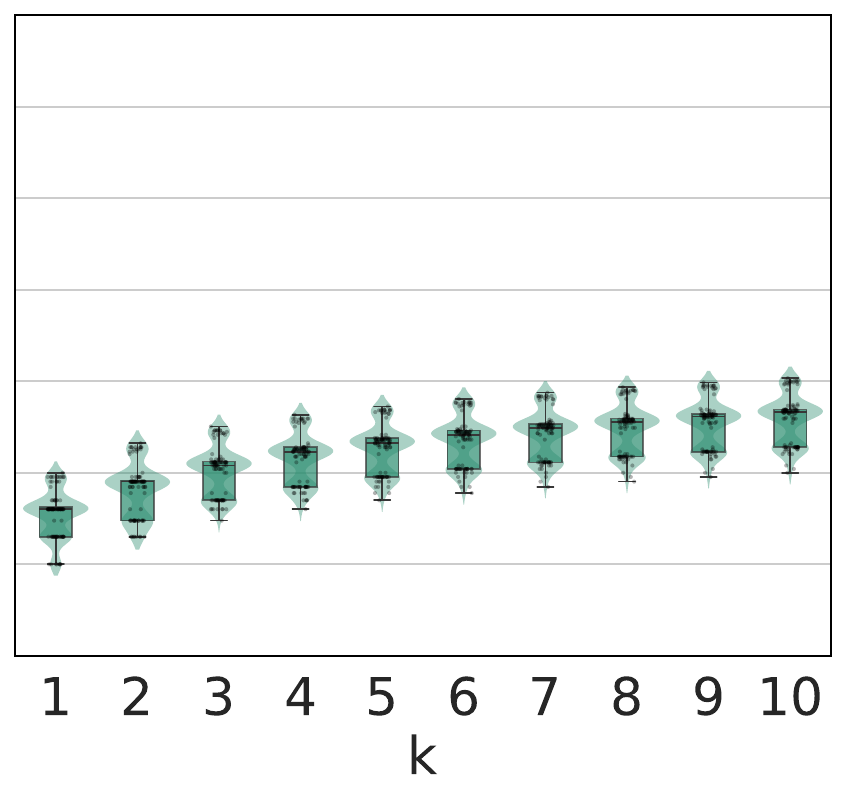}
			\caption{$\ilpunhypergraphfunc$}
		\end{subfigure}\hfill
		\begin{subfigure}[t]{0.30\linewidth}
			\centering
			\safeincludegraphics[height=2.5cm]{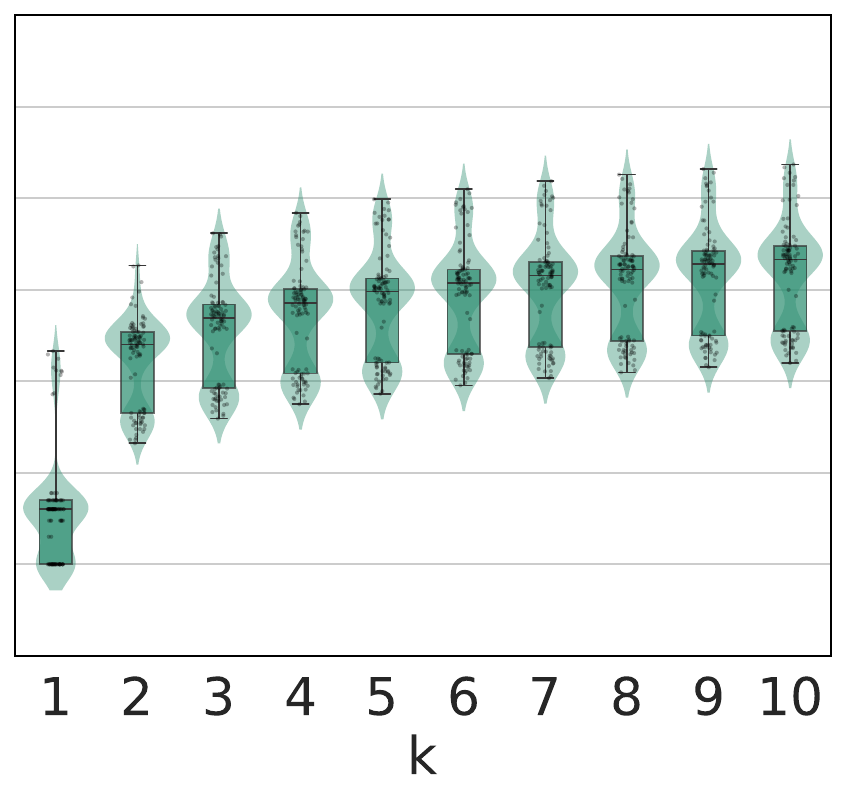}
			\caption{$\ilpunproofdagfunc$}
		\end{subfigure}
		
		\par\vspace{1em}
		\textbf{\doctors\ 2 (100k)}
	\end{minipage}
	
	\par\vspace{1em}
	
	%==================== doctors 3 ====================%
	\begin{minipage}[t]{\linewidth}
		\centering
		
		\begin{subfigure}[t]{0.36\linewidth}
			\centering
			\safeincludegraphics[height=2.57cm]{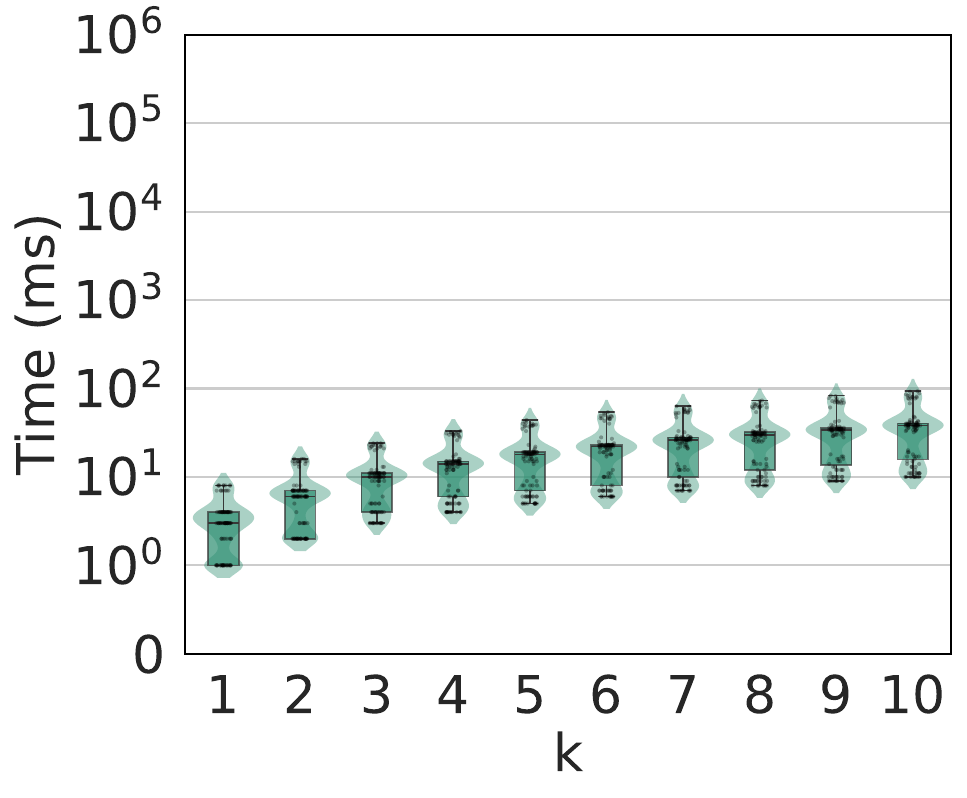}
			\caption{$\ilphypergraphfunc$}
		\end{subfigure}\hfill
		\begin{subfigure}[t]{0.30\linewidth}
			\centering
			\safeincludegraphics[height=2.5cm]{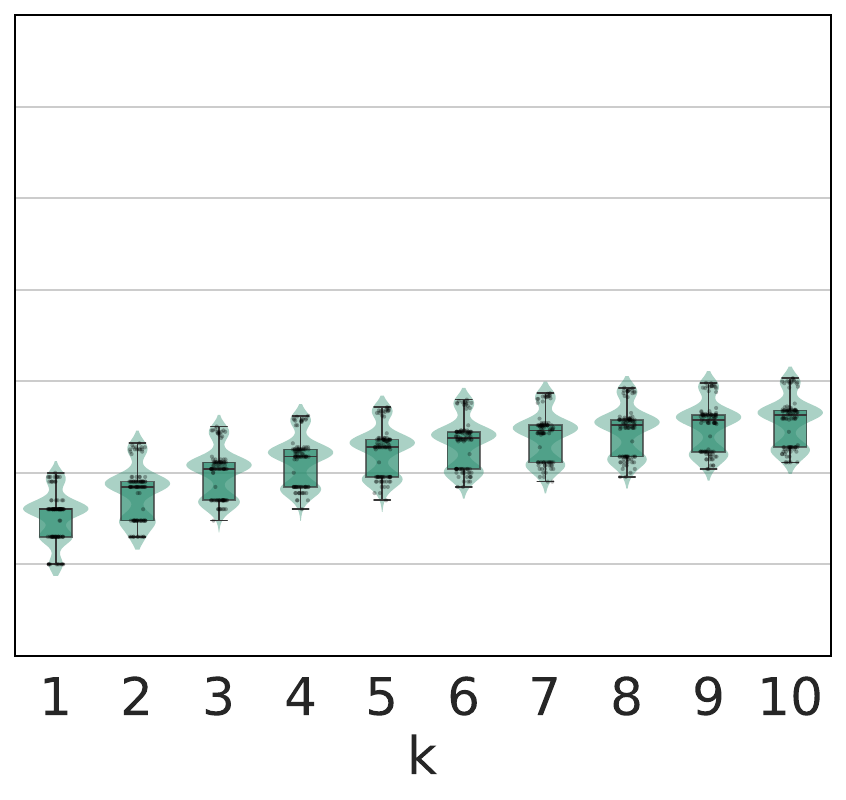}
			\caption{$\ilpunhypergraphfunc$}
		\end{subfigure}\hfill
		\begin{subfigure}[t]{0.30\linewidth}
			\centering
			\safeincludegraphics[height=2.5cm]{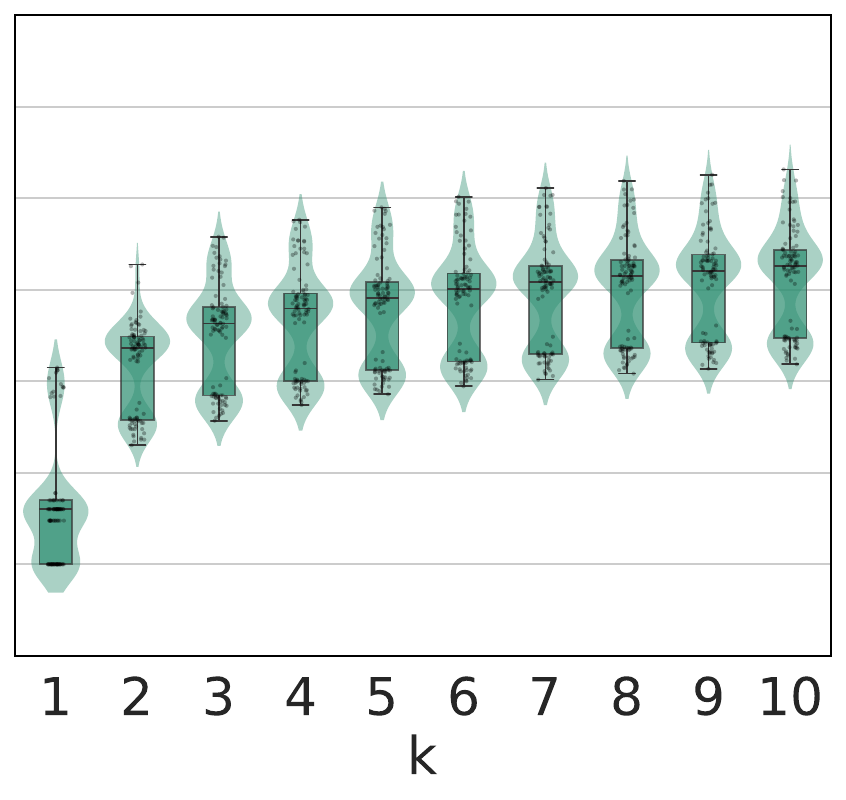}
			\caption{$\ilpunproofdagfunc$}
		\end{subfigure}
		
		\par\vspace{1em}
		\textbf{\doctors\ 3 (100k)}
	\end{minipage}
	
	\par\vspace{1em}
	
	%==================== doctors 4 ====================%
	\begin{minipage}[t]{\linewidth}
		\centering
		
		\begin{subfigure}[t]{0.36\linewidth}
			\centering
			\safeincludegraphics[height=2.57cm]{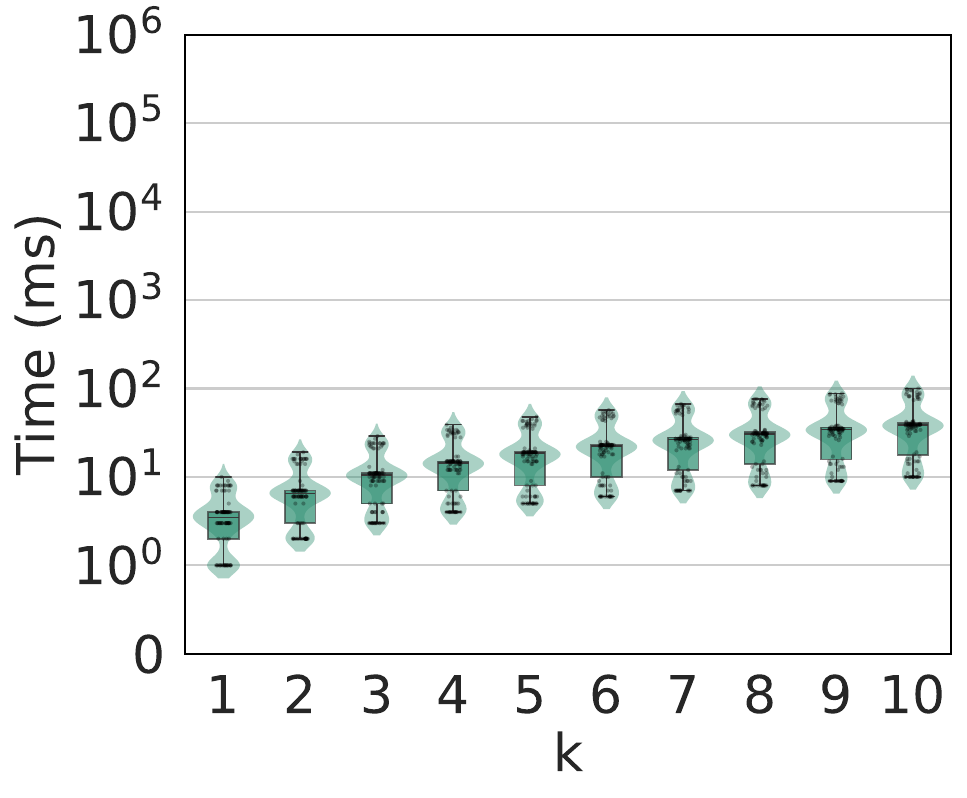}
			\caption{$\ilphypergraphfunc$}
		\end{subfigure}\hfill
		\begin{subfigure}[t]{0.30\linewidth}
			\centering
			\safeincludegraphics[height=2.5cm]{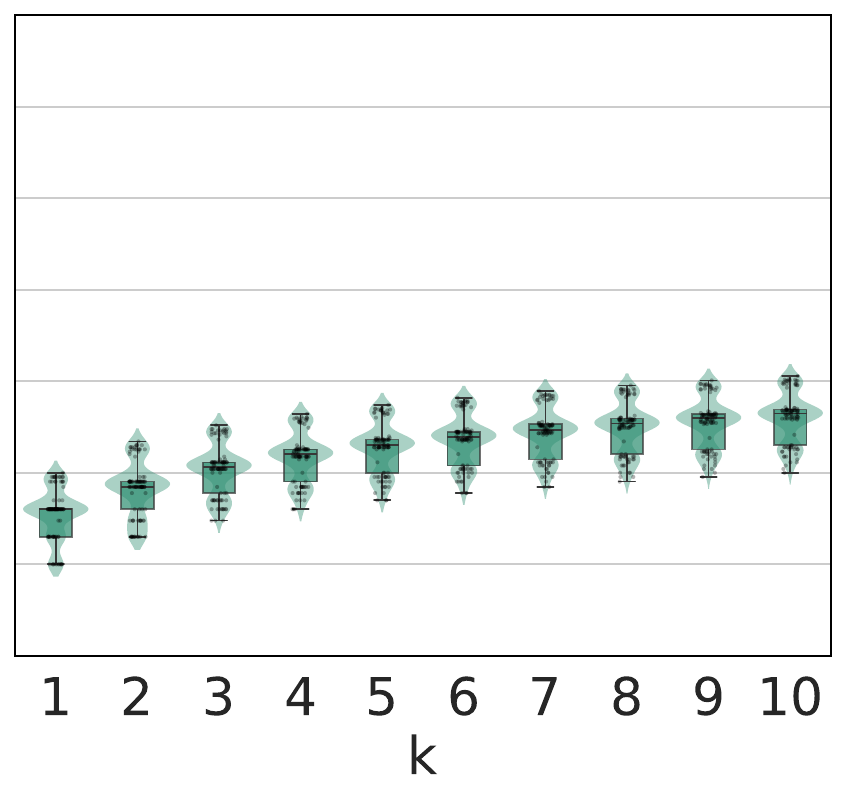}
			\caption{$\ilpunhypergraphfunc$}
		\end{subfigure}\hfill
		\begin{subfigure}[t]{0.30\linewidth}
			\centering
			\safeincludegraphics[height=2.5cm]{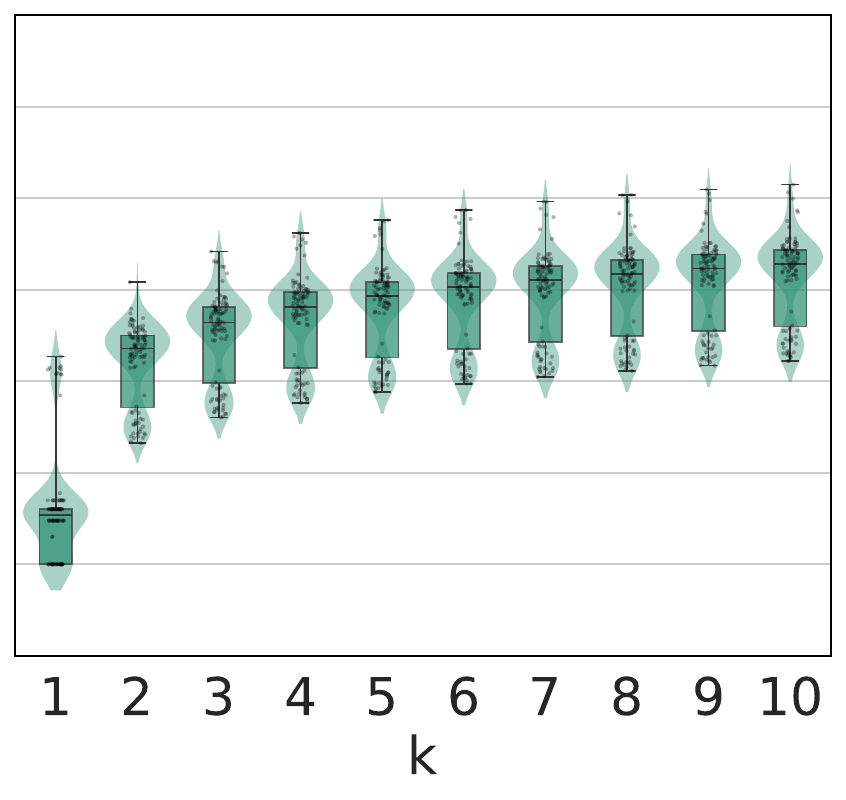}
			\caption{$\ilpunproofdagfunc$}
		\end{subfigure}
		
		\par\vspace{1em}
		\textbf{\doctors\ 4 (100k)}
	\end{minipage}
	
	\caption{Runtimes for the \doctors\ (1--7) scenarios using the approximation algorithms ($k$ up to 10).}
	\label{fig:k10-doctors-approximate}
\end{figure}

% Part 2: doctors 5-7 (same figure number; subfigure letters continue)
\begin{figure}[!h]
	\ContinuedFloat
	\centering
	\small
	
	%==================== doctors 5 ====================%
	\begin{minipage}[t]{\linewidth}
		\centering
		
		\begin{subfigure}[t]{0.36\linewidth}
			\centering
			\safeincludegraphics[height=2.57cm]{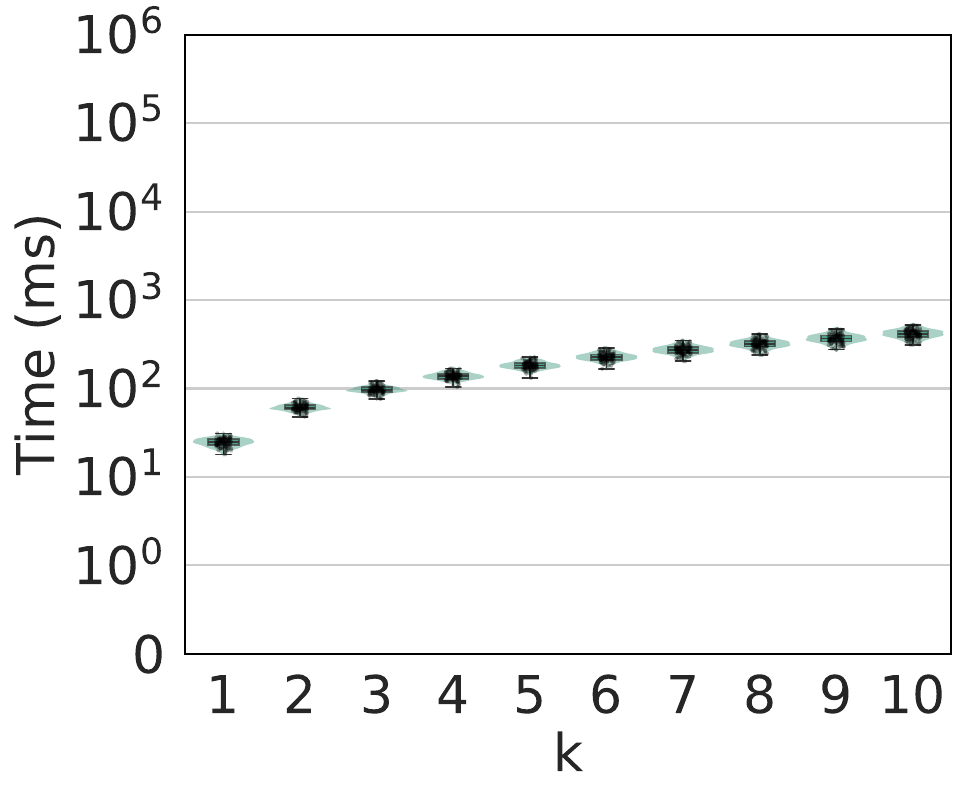}
			\caption{$\ilphypergraphfunc$}
		\end{subfigure}\hfill
		\begin{subfigure}[t]{0.30\linewidth}
			\centering
			\safeincludegraphics[height=2.5cm]{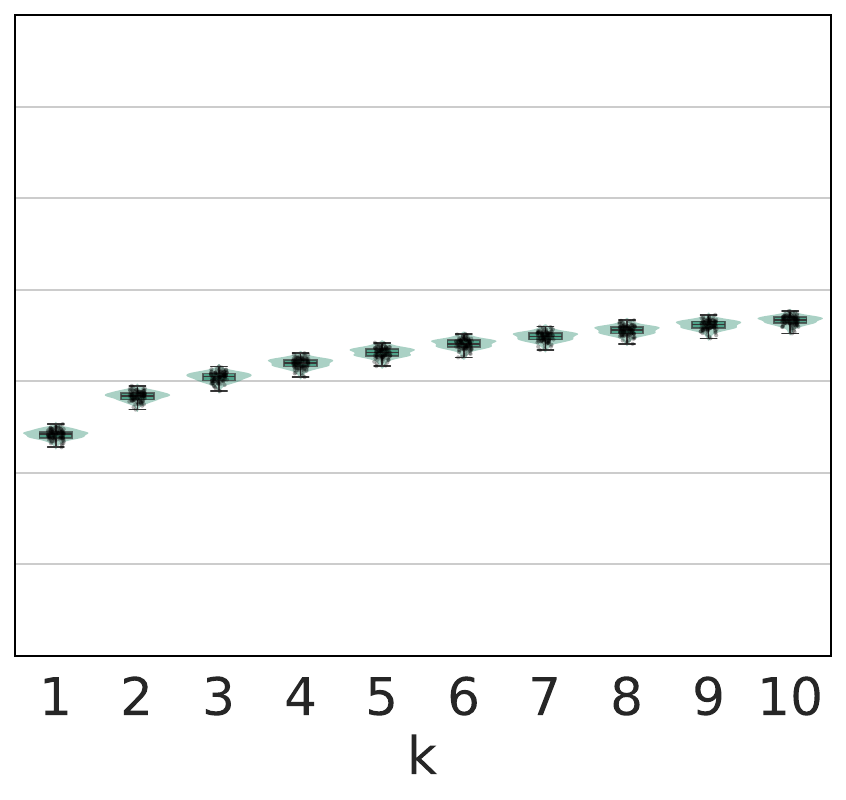}
			\caption{$\ilpunhypergraphfunc$}
		\end{subfigure}\hfill
		\begin{subfigure}[t]{0.30\linewidth}
			\centering
			\safeincludegraphics[height=2.5cm]{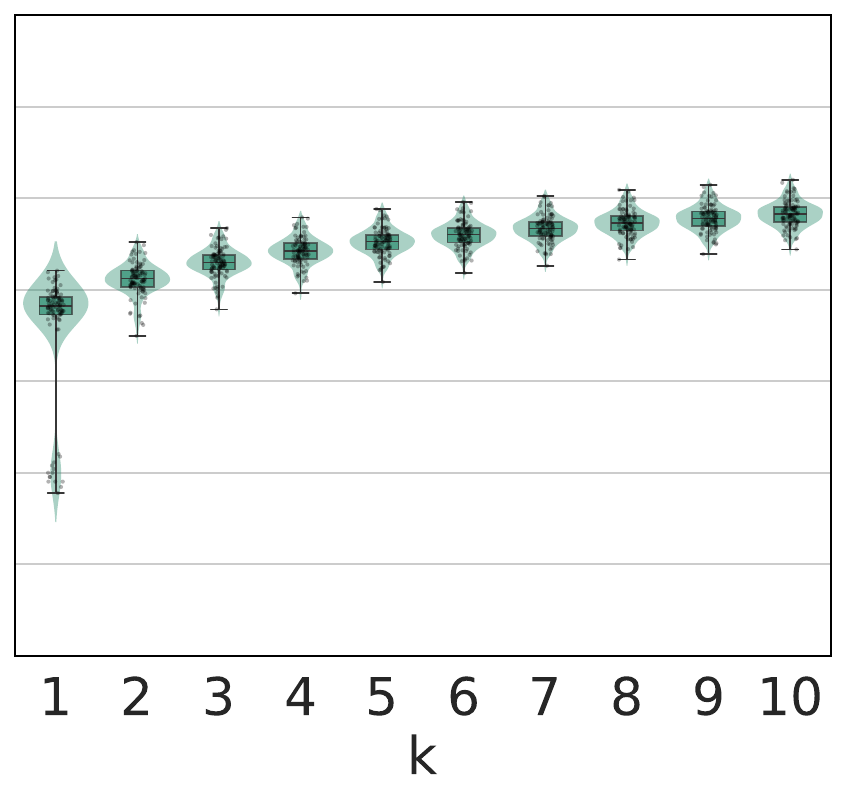}
			\caption{$\ilpunproofdagfunc$}
		\end{subfigure}
		
		\par\vspace{1em}
		\textbf{\doctors\ 5 (100k)}
	\end{minipage}
	
	\par\vspace{1em}
	
	%==================== doctors 6 ====================%
	\begin{minipage}[t]{\linewidth}
		\centering
		
		\begin{subfigure}[t]{0.36\linewidth}
			\centering
			\safeincludegraphics[height=2.57cm]{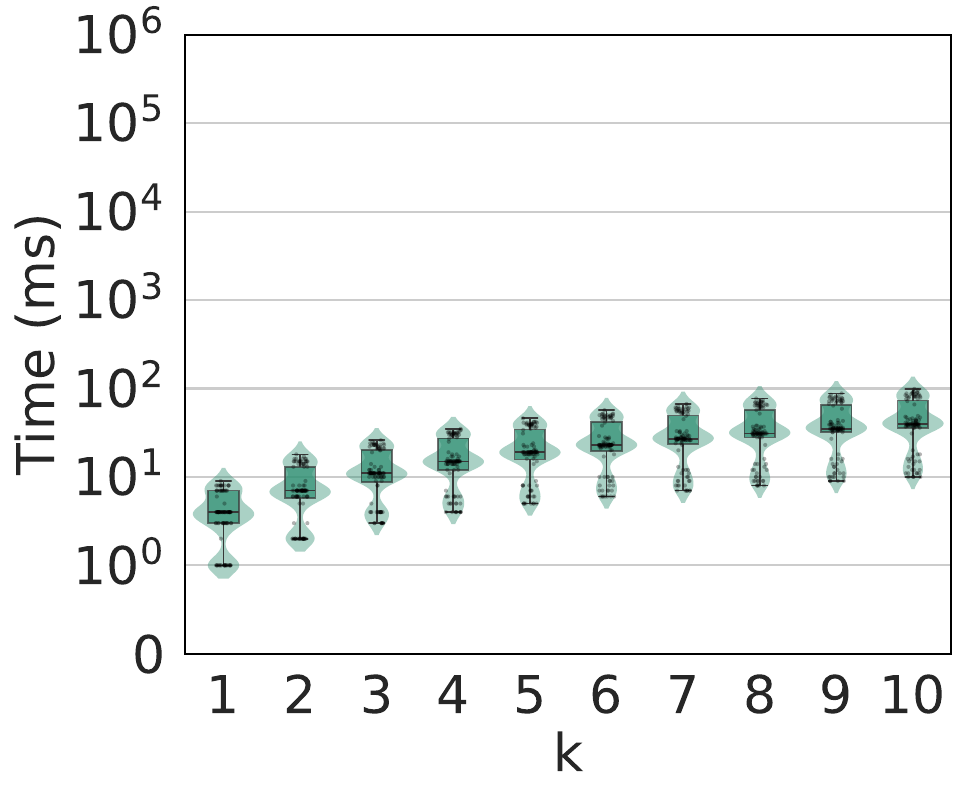}
			\caption{$\ilphypergraphfunc$}
		\end{subfigure}\hfill
		\begin{subfigure}[t]{0.30\linewidth}
			\centering
			\safeincludegraphics[height=2.5cm]{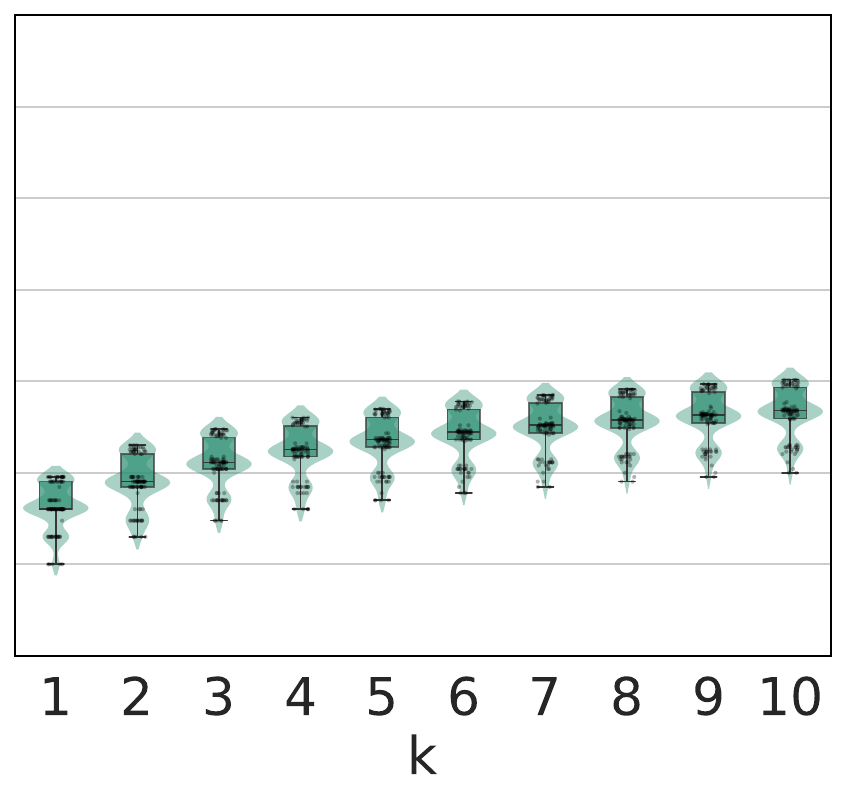}
			\caption{$\ilpunhypergraphfunc$}
		\end{subfigure}\hfill
		\begin{subfigure}[t]{0.30\linewidth}
			\centering
			\safeincludegraphics[height=2.5cm]{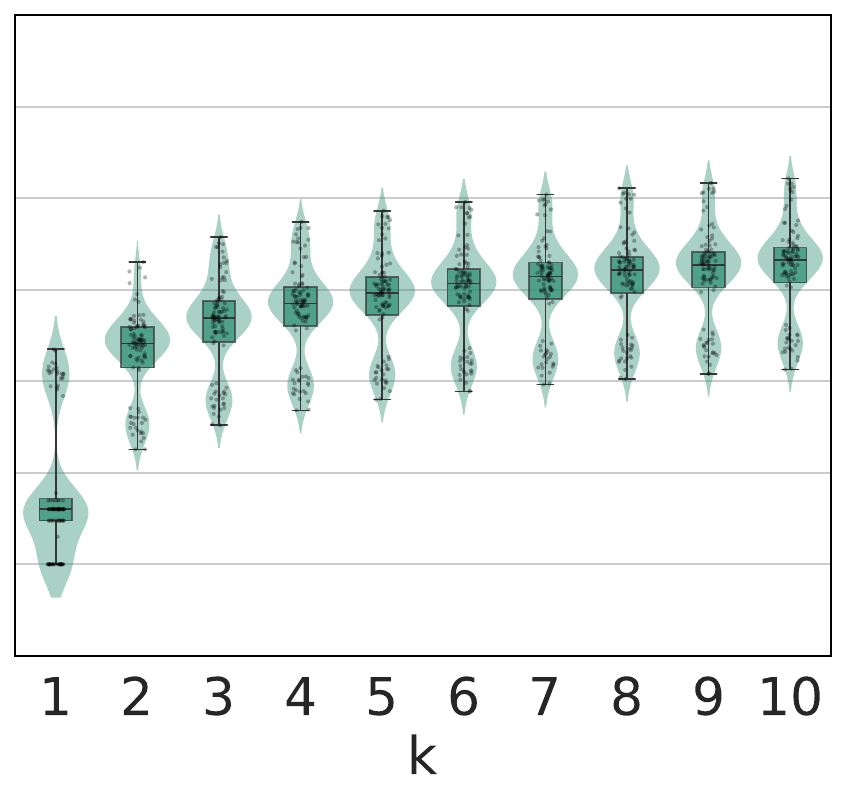}
			\caption{$\ilpunproofdagfunc$}
		\end{subfigure}
		
		\par\vspace{1em}
		\textbf{\doctors\ 6 (100k)}
	\end{minipage}
	
	\par\vspace{1em}
	
	%==================== doctors 7 ====================%
	\begin{minipage}[t]{\linewidth}
		\centering
		
		\begin{subfigure}[t]{0.36\linewidth}
			\centering
			\safeincludegraphics[height=2.57cm]{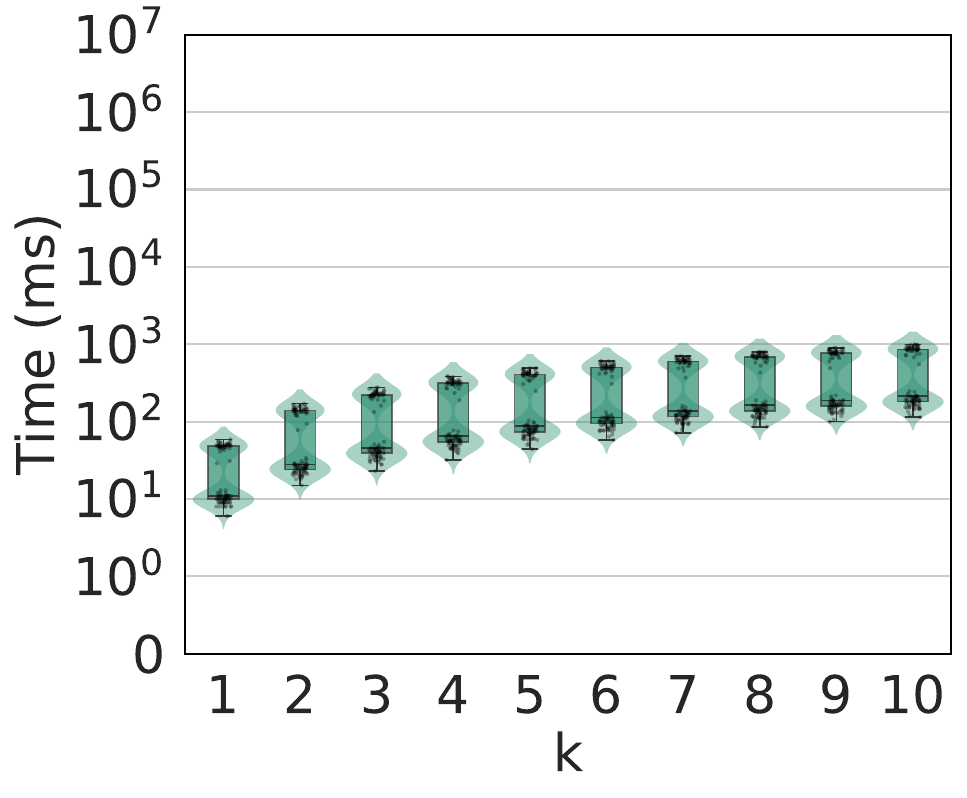}
			\caption{$\ilphypergraphfunc$}
		\end{subfigure}\hfill
		\begin{subfigure}[t]{0.30\linewidth}
			\centering
			\safeincludegraphics[height=2.5cm]{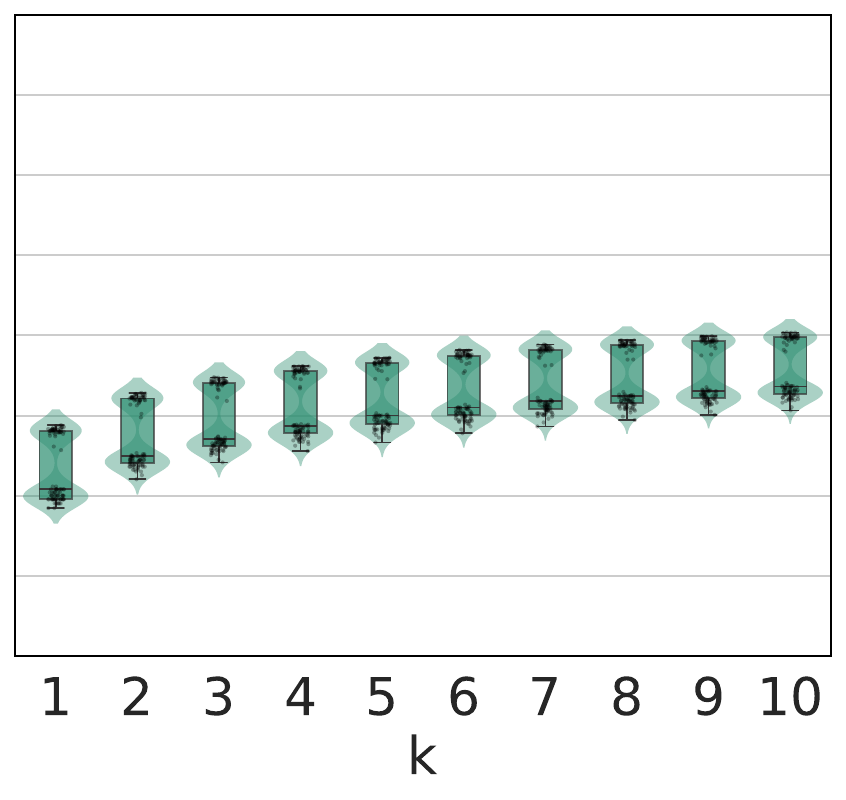}
			\caption{$\ilpunhypergraphfunc$}
		\end{subfigure}\hfill
		\begin{subfigure}[t]{0.30\linewidth}
			\centering
			\safeincludegraphics[height=2.5cm]{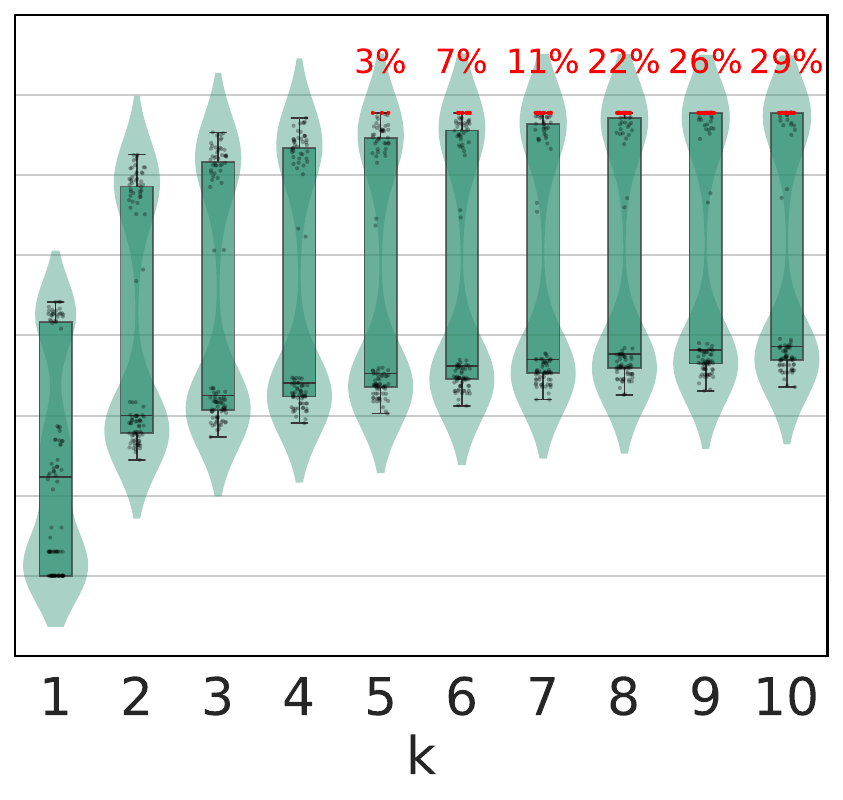}
			\caption{$\ilpunproofdagfunc$}
		\end{subfigure}
		
		\par\vspace{1em}
		\textbf{\doctors\ 7 (100k)}
	\end{minipage}
	
	\caption[]{Runtimes for the \doctors\ (1--7) scenarios using the approximation algorithms ($k$ up to 10) (continued).}
\end{figure}

\begin{figure}[!h]
	\centering
	\small
	
	%==================== HTTPD ====================%
	\begin{minipage}[t]{\linewidth}
		\centering
		
		\begin{subfigure}[t]{0.36\linewidth}
			\centering
			\safeincludegraphics[height=2.57cm]{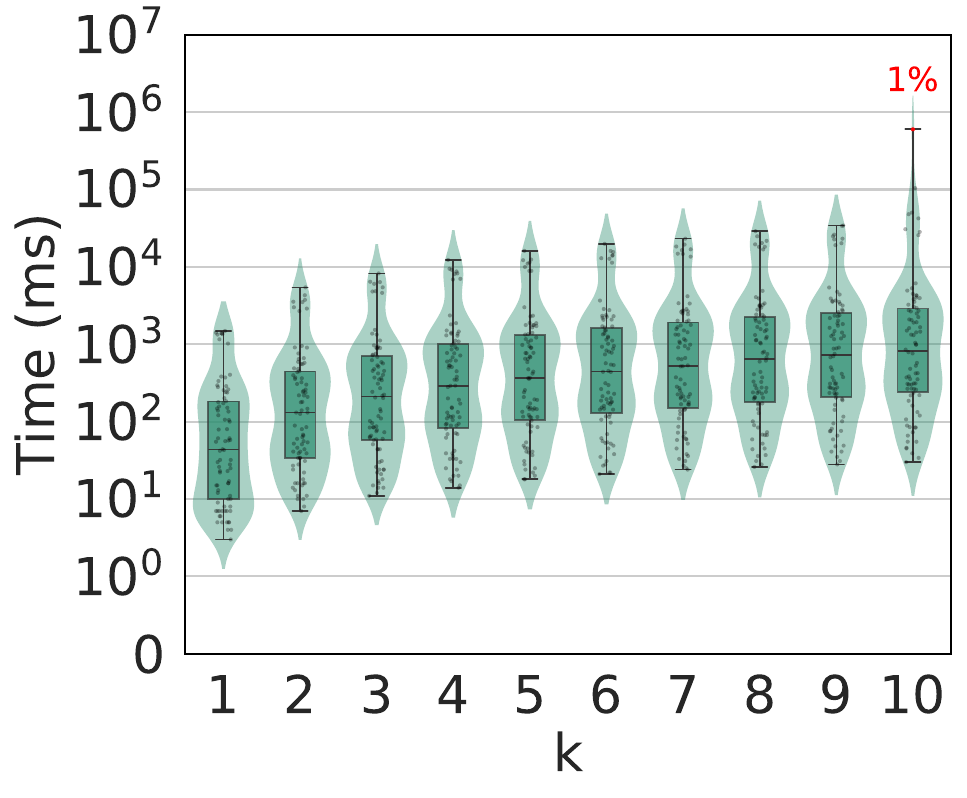}
			\caption{$\ilphypergraphfunc$}
		\end{subfigure}\hfill
		\begin{subfigure}[t]{0.30\linewidth}
			\centering
			\safeincludegraphics[height=2.5cm]{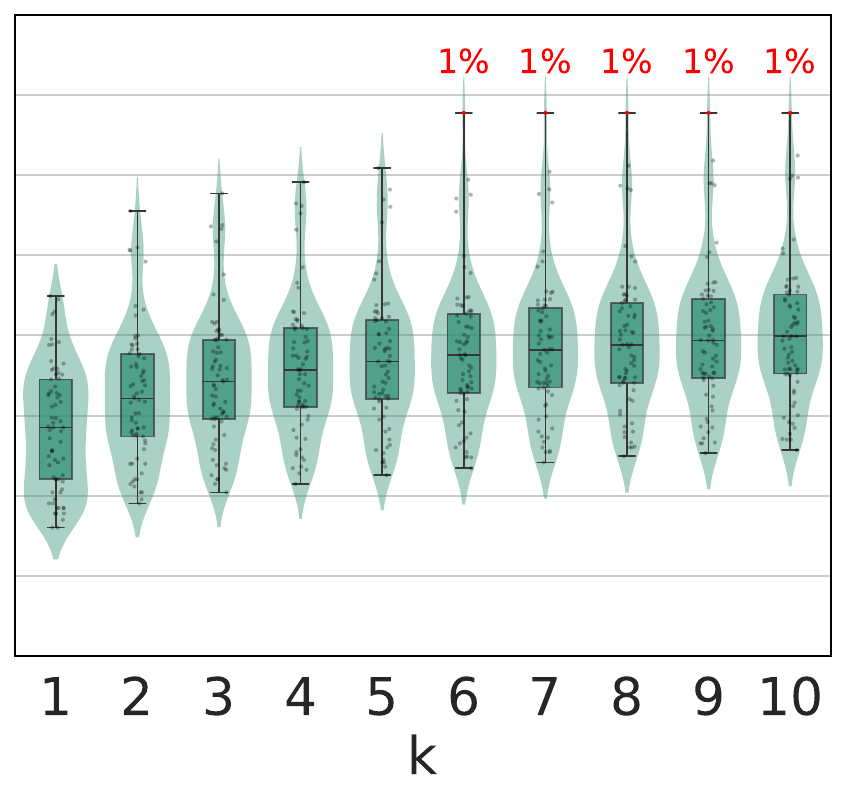}
			\caption{$\ilpunhypergraphfunc$}
		\end{subfigure}\hfill
		\begin{subfigure}[t]{0.30\linewidth}
			\centering
			\safeincludegraphics[height=2.5cm]{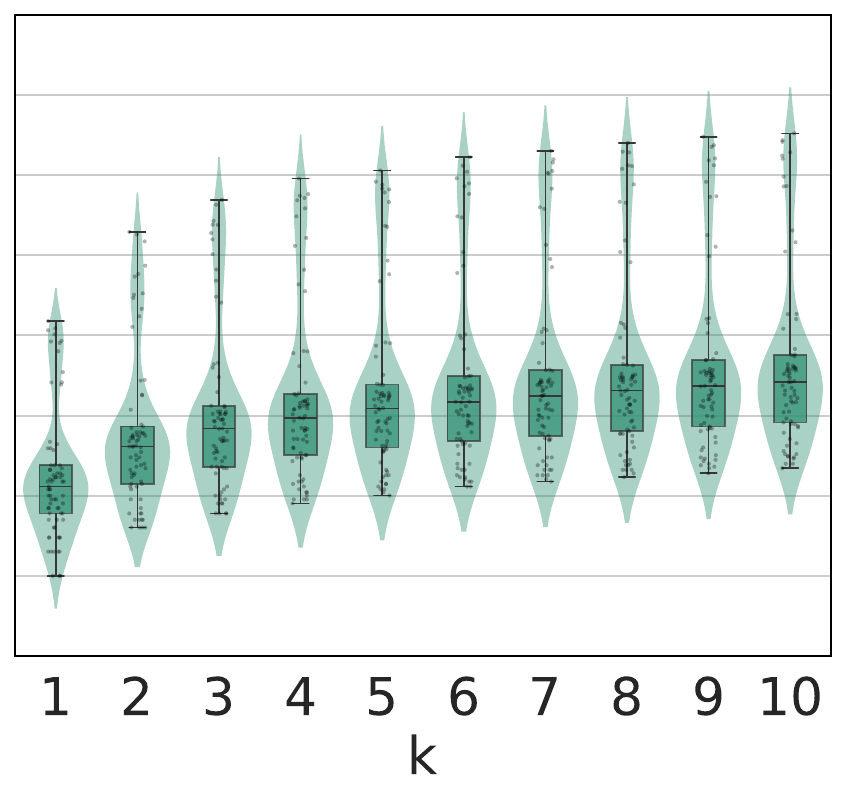}
			\caption{$\ilpunproofdagfunc$}
		\end{subfigure}
		
		\par\vspace{1em}
		\textbf{\csda (\httpd)}
	\end{minipage}
	
	\par\vspace{1em}
	
	%==================== PostgreSQL ====================%
	\begin{minipage}[t]{\linewidth}
		\centering
		
		\begin{subfigure}[t]{0.36\linewidth}
			\centering
			\safeincludegraphics[height=2.57cm]{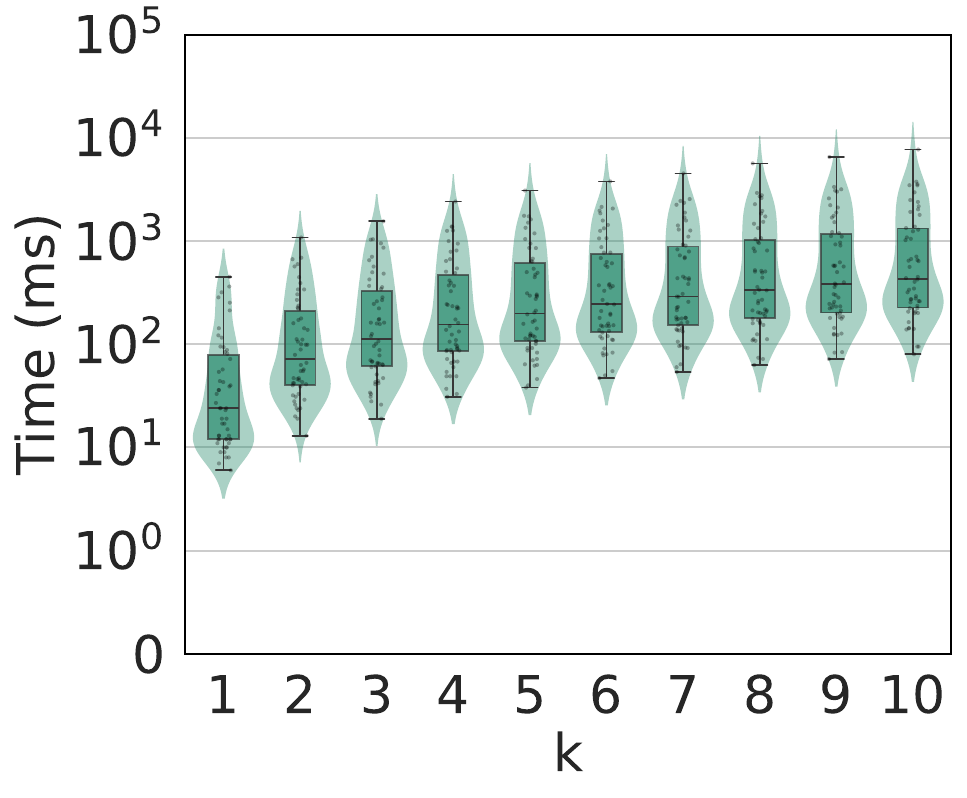}
			\caption{$\ilphypergraphfunc$}
		\end{subfigure}\hfill
		\begin{subfigure}[t]{0.30\linewidth}
			\centering
			\safeincludegraphics[height=2.5cm]{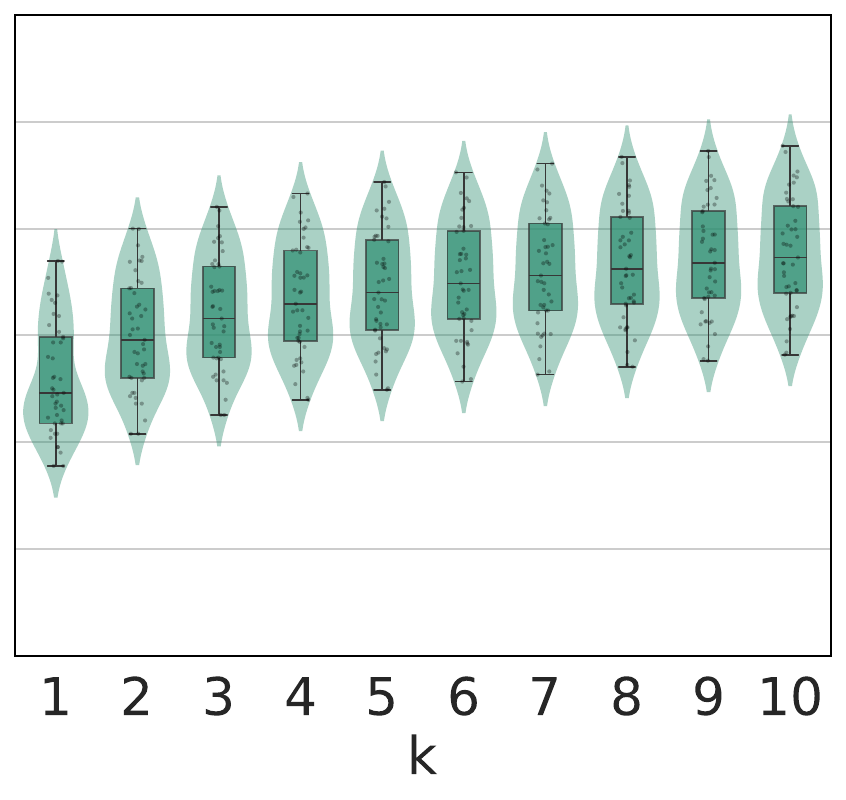}
			\caption{$\ilpunhypergraphfunc$}
		\end{subfigure}\hfill
		\begin{subfigure}[t]{0.30\linewidth}
			\centering
			\safeincludegraphics[height=2.5cm]{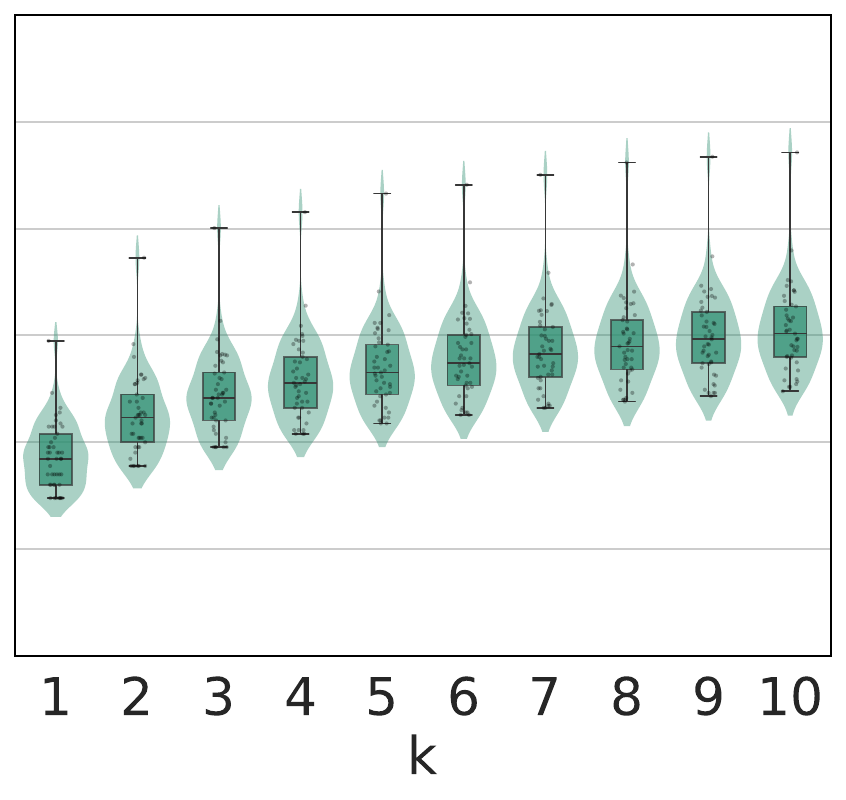}
			\caption{$\ilpunproofdagfunc$}
		\end{subfigure}
		
		\par\vspace{1em}
		\textbf{\csda (\postgres)}
	\end{minipage}
	
	\par\vspace{1em}
	
	%==================== Linux ====================%
	\begin{minipage}[t]{\linewidth}
		\centering
		
		\begin{subfigure}[t]{0.36\linewidth}
			\centering
			\safeincludegraphics[height=2.57cm]{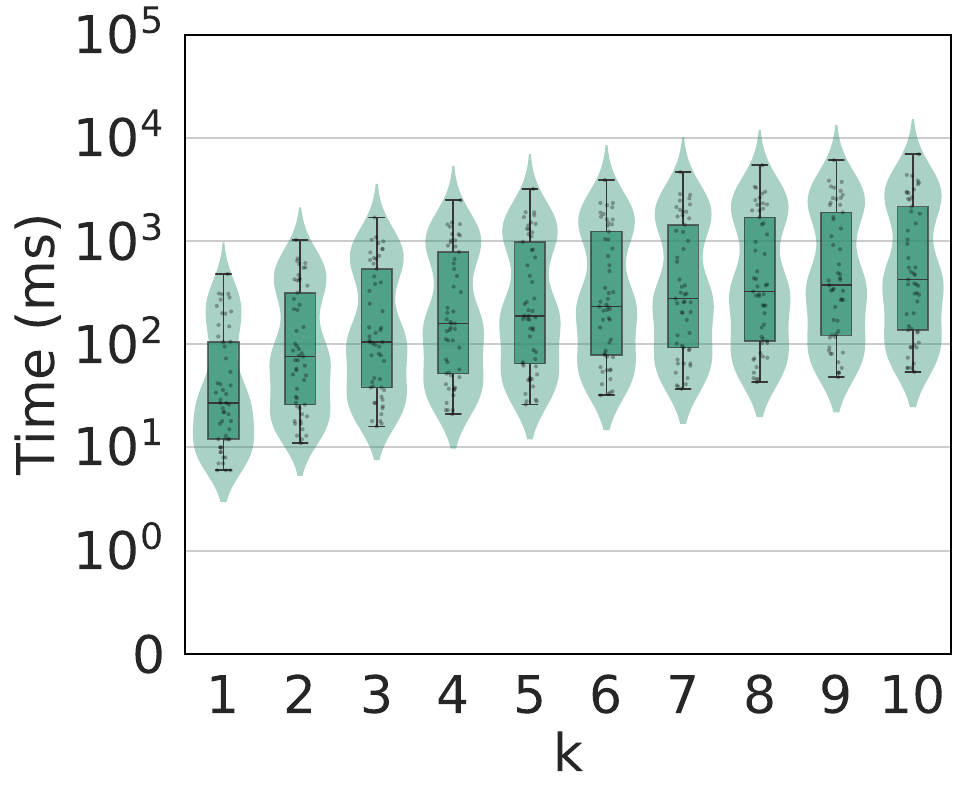}
			\caption{$\ilphypergraphfunc$}
		\end{subfigure}\hfill
		\begin{subfigure}[t]{0.30\linewidth}
			\centering
			\safeincludegraphics[height=2.5cm]{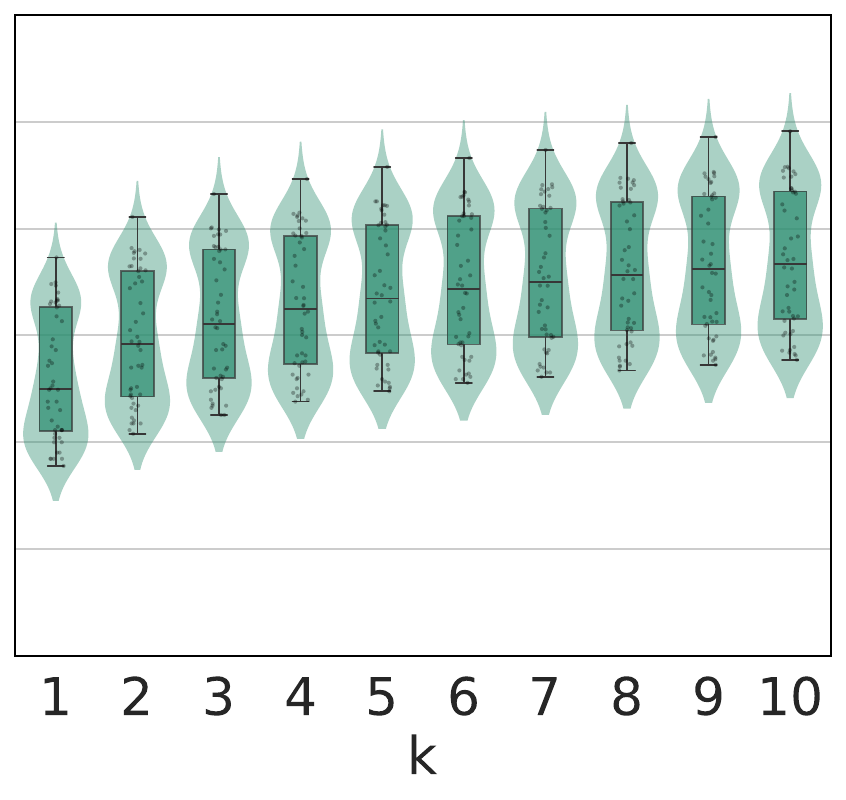}
			\caption{$\ilpunhypergraphfunc$}
		\end{subfigure}\hfill
		\begin{subfigure}[t]{0.30\linewidth}
			\centering
			\safeincludegraphics[height=2.5cm]{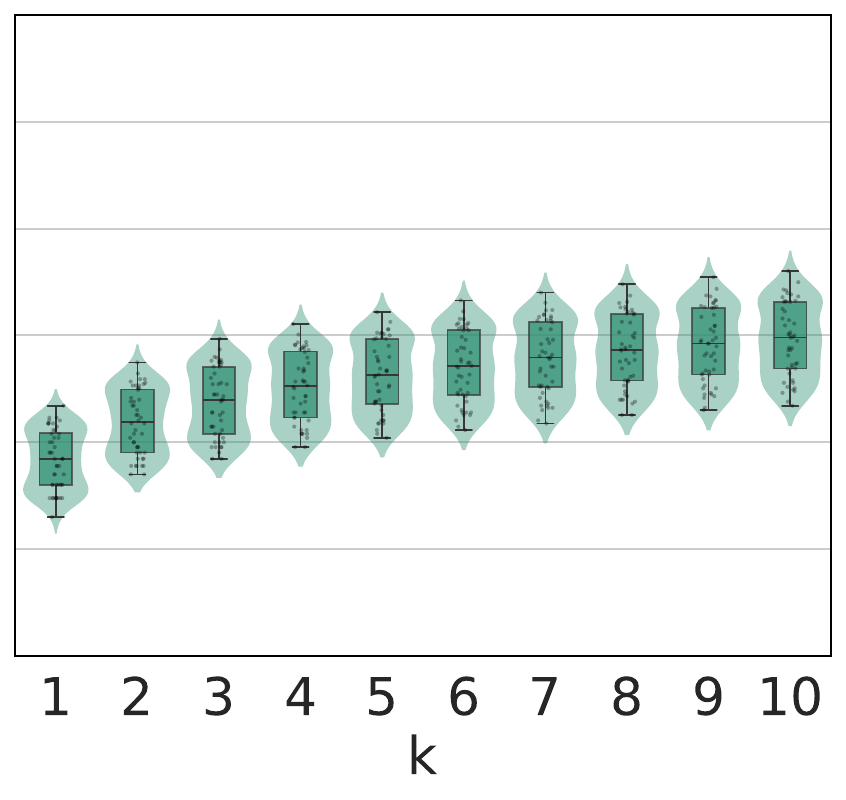}
			\caption{$\ilpunproofdagfunc$}
		\end{subfigure}
		
		\par\vspace{1em}
		\textbf{\csda (\linux)}
	\end{minipage}
	
	\caption{Runtimes for the \csda\ scenario using the approximation algorithms ($k$ up to 10).}
	\label{fig:k10-csda-approximate}
\end{figure}

\subsection{Runtime Analysis} 
\label{sec:approximate-runtime}

As already discussed in Section~\ref{sec:exact-experiments}, constructing the ILP encoding via $\ilpproofdagfunc$ requires more than the given 10-minutes timeout. Therefore, as in Section~\ref{sec:exact-experiments}, we only report plots for the remaining three approximation algorithms $\mathsf{Approx}[\ilpstargammafunc]$ with $(\gamma,\star) \in \{ (\mathsf{hyper},\mathsf{ALL}), (\mathsf{hyper}, \mathsf{UN}), (\mathsf{graph},\mathsf{UN})\}$.
Figures~\ref{fig:graph-analytics-approximate}-\ref{fig:csda-approximate} report the runtime of the above three approximation algorithms across all scenarios, following the same organization of Figures~\ref{fig:graph-analytics-exact}-\ref{fig:csda-exact}. Note that for the \galen\ scenario, most answer tuples have only $k=1$ explanations. Thus, we do not report the runtime of the approximation algorithms for the \galen\ scenario since, for $k=1$, the approximation algorithms solve the same maximization problem as their exact counterpart and the runtimes are identical.

As we can observe, in contrast to the exact computation (Section~\ref{sec:exact-experiments}), the approximate computation scales smoothly as $k$ increases and achieves substantial improvements in runtime.
The key observation is that the runtime is reduced by several orders of magnitude across all settings. Even as $k$ increases, runtimes remain low, with the vast majority of runs completing well below $10^3$ milliseconds (i.e., under one second). This represents a dramatic improvement over the exact computation and demonstrates that the approximate computation is suitable for practical, interactive settings that require near-real-time responses.

As expected, and similar to the exact algorithm, the runtime of the approximation algorithms increases with $k$, reflecting its incremental nature: it constructs the set of explanations by building one explanation at the time and by maximizing the marginal diversity at each step. However, this growth remains modest and controlled since the algorithms can avoid the combinatorial explosion of the search space of the exact algorithms that, at each of the $k$ iterations, it coincides (in the worst case) with the set of candidate explanations from $\starwhy{\bar t}{D}{Q}$.
A direct consequence of this improvement is that the vast majority of timeouts observed with exact algorithms disappear, and only a small number of more demanding instances still reach the timeout threshold.

Regarding differences across the family of proof trees and the ILP encodings considered, we observe similar trends to those discussed for the exact algorithms in Section~\ref{sec:exact-experiments}, although the differences are significantly less pronounced due to the reduced search space.

\subsection{Quality of the Approximation} \label{sec:quality}

We now assess the quality of the approximate solutions. Figures~\ref{fig:graph-analytics-rate}-\ref{fig:csda-rate} report the ratio between the optimal diversity and the diversity achieved by the approximation algorithms. The closer the ratio is to one the better the approximate solution, where a ratio equal to one means that the approximation algorithm managed to construct an optimal set of explanations; the number on the middle line in each box-plot refers to the average ratio.

As expected, the ratio is exactly one for $k=1$ since any singleton explanation provides optimal diversity, i.e., diversity equal to zero. For larger values of $k$, the ratio typically increases slightly above one, indicating a small divergence from the optimum. Moreover, all the three algorithms provide similar ratios across all scenarios, and thus, there is no clear preference w.r.t.~the quality of approximation.

A consistent pattern across all scenarios is that the ratio is highest for small values of $k>1$ (often at $k=2$), and then gradually decreases as $k$ increases. 
This behavior is due to the greedy nature of the approximation algorithm: early choices are made without knowledge of future choices, which may lead to suboptimal combinations when only a few explanations are selected. 
As $k$ grows, the solution space becomes more constrained, and the greedy choices align more closely with the optimum.

Summing up, the ratios remain quite close to one in all cases, with an average ratio of almost one for $k \ge 3$. The worst-case average ratio is $1.4$ for $k = 2$ in the \graphanalytics\ scenario with the \facebook\ database. This ratio, however, is still 30\% better than the theoretical approximation ratio established in Theorem~\ref{thm:approximation-quality}.

\subsection{Scalability to Larger Diversity Parameter}
\label{sec:large-k}

To further assess scalability, we evaluate the approximation algorithms for values of $k$ up to 10 on all scenarios except for \galen; \galen is excluded for the usual reason discussed above. In this evaluation, we keep the 10 minutes timeout and report the runtime required to complete the computation for the answer tuples that have at least 10 members of the why-provenance.

The results in Figures~\ref{fig:k10-graph-analytics-approximate}-\ref{fig:k10-csda-approximate} show that the approximation algorithms keep scaling gracefully even for larger $k$. Although the runtime increases as expected with $k$, the growth remains controlled and stable. This experiment demonstrates that all approximation algorithms provide a reliable and scalable solution to generate diverse why-provenance explanations. This, together with our analysis of the quality of the approximation, allows us to conclude that all three approximation algorithms provide practical solutions.

\section{Conclusion}\label{sec:conclusions}

In this work, we initiated the study of the problem of computing a small, sufficiently diverse subset of the set of why-provenance explanations of a certain answer of interest. After formalizing the problem in question as an optimization problem, and showing that it is intractable in data complexity, we focussed our efforts on the question whether ILP optimization solvers can be employed towards practical solutions.
Our thorough analysis and experimental evaluation showed that the latter is achievable as long as we keep the requested set of explanations small, or we employ approximation algorithms (with error guarantees).

An interesting direction for future research is to consider alternative diversity measures that are conceptually relevant and efficiently computable; the next natural choice is the min function.
Another interesting diection for future work is to study the problem of interest for more expressive query langauges, in particular, Datalog queries with negation interprested according to the standard well-founded or stable model semantics from Logic Programming. Note that the basic problem of computing why-provenance explanations for such expressive query languages has been recently studied in~\cite{BCPPE26}, and thus, we can build on the notions introduced in that paper.

%\begin{acks}
%\end{acks}

% \clearpage

\bibliographystyle{ACM-Reference-Format}
\bibliography{references}

% \clearpage

\appendix
\section{Proofs of Section~\ref{sec:diverse-why-provenance}}

\subsection{Proof of Theorem~\ref{the:diversity-complexity}}

We first introduce the decision problem of checking whether a fixed Datalog query has $k$ answers over a given database $D$ such that their diversity is above a certain threshold; we know from~\cite{MePS25} that this is an NP-hard problem. We then reduce this problem via a polynomial-time Turing reduction to our problems, which in turn establishes Theorem~\ref{the:diversity-complexity}.

\medskip
\noindent
\underline{\textbf{Diverse Answers}}

\smallskip
\noindent 
Consider an integer $n > 0$. For an $n$-tuple $\bar t = (t_1,\ldots,t_n)$, let $\bar t[i]$ be the term $t_i$, with $i \in [n]$. For two $n$-tuples $\bar t$ and $\bar u$, the \emph{Hamming distance between $\bar t$ and $\bar u$}, denoted $\hammd{\bar t}{\bar u}$, is defined as the integer $|\{i \in [n] \mid  \bar t[i] \neq \bar u[i]\}|$. Moreover, for a set $\mathcal{A} = \{\bar t_1,\ldots,\bar t_k\}$ of $n$-tuples, the \emph{diversity of $\mathcal{A}$}, denoted $\diversitytup{\mathcal{A}}$, is defined as%
\[
\sum\limits_{i,j \in [k] \text{ and }i<j} \hammd{\bar t_i}{\bar t_j}.
\]

For a Datalog query $Q = (\dep,R)$, a database $D$ over $\esch{\dep}$, and an integer $k > 0$, we define $\ansk{k}{D}{Q}$ as the collection of all sets $\mathcal{A} \subseteq Q(D)$ of tuples with $|\mathcal{A}| = k$. For a fixed Datalog query $Q = (\dep,R)$, we consider the following decision problem:

% \medskip

\begin{center}
	\fbox{\begin{tabular}{ll}
			{\small PROBLEM} : & $\mathsf{DiverseAnswersCheck}[Q]$
			\\[2mm]
			{\small INPUT} : & A database $D$ over $\esch{\dep}$,\\
			& an integer $k > 0$ encoded in unary, and \\
			& an integer $d \ge 0$\\[2mm]
			{\small QUESTION} : & Is there a set $\mathcal{A} \in \ansk{k}{D}{Q}$ with $\diversitytup{\mathcal{A}} \ge d?$
	\end{tabular}}
\end{center}

\medskip

We know that there is a Datalog query for which the above problem is NP-hard; this is a consequence of~\cite[Theorem 3.13]{MePS25}.

\begin{theorem}\label{the:diverse-answers}
	$\mathsf{DiverseAnswersCheck}[Q^*]$ is NP-complete, where $Q^* = (\dep^*,R)$ with $\dep^* = \{R(x_1,\ldots,x_5)\ \assign\ P(x_1,\ldots,x_5)\}$.
\end{theorem}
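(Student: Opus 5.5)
The plan is to prove membership in NP directly, and then transfer the NP-hardness result of~\cite[Theorem 3.13]{MePS25} to $Q^*$ by materializing answers. The key observation is that $\dep^*$ is a single non-recursive, projection-free rule, so for every database $D$ over $\{P\}$ we have $Q^*(D) = \{\bar t \mid P(\bar t) \in D\}$. Hence $\mathsf{DiverseAnswersCheck}[Q^*]$ asks: given a finite set of $5$-tuples (the $P$-facts), unary $k$, and $d$, are there $k$ distinct tuples whose pairwise Hamming distances sum to at least $d$?

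\emph{Membership in NP.} Since $k$ is given in unary, a certificate is a list of $k$ facts of $D$, which has size polynomial in the input. The verifier checks that the listed facts are distinct $P$-facts of $D$. It then computes the $\binom{k}{2}$ Hamming distances, each in constant time because the arity is $5$, and compares their sum with $d$. All of this takes polynomial time.

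\emph{NP-hardness.} I would use~\cite[Theorem 3.13]{MePS25}, which gives a fixed query $Q'$ whose diversity problem under sum-of-Hamming-distances is NP-hard in data complexity; I take $Q'$ to be a conjunctive query of output arity $m \le 5$. The reduction is as follows.
\begin{itemize}
\item Given an instance $(D',k,d)$, evaluate $Q'$ on $D'$. Because $Q'$ is fixed, this takes polynomial time and yields polynomially many answer tuples.
\item Build $D = \{P(\bar u,c,\ldots,c) \mid \bar u \in Q'(D')\}$, padding each tuple to length $5$ with a fresh constant $c$.
\item Keep $k$ and $d$ unchanged.
\end{itemize}
Padding maps distinct answers to distinct facts, so sets of $k$ answers correspond bijectively to sets of $k$ facts. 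The padded coordinates agree on every tuple, so every Hamming distance, and hence every diversity value, is preserved. Therefore $(D',k,d)$ is a yes-instance if and only if $(D,k,d)$ is.

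\emph{Main obstacle.} The delicate step is matching the exact form of the cited hardness result: its arity, whether $k$ is unary, and whether the measure is the sum of Hamming distances. If the cited query had arity larger than $5$, or the hardness needed binary $k$, I would instead give a direct reduction into $5$-tuples. The candidate source problem is independent set on cubic graphs. Each vertex becomes a tuple, and shared coordinate values encode adjacency, so that picking $k$ tuples reaches the maximum possible diversity $d = 5\binom{k}{2}$ minus a fixed quantity exactly when no two chosen vertices are adjacent. The first attempt would be to assign each vertex's at most $3$ incident edges to distinct coordinates, using the edge identifier as the value and fresh vertex-specific values elsewhere, and then verify that each adjacent pair loses exactly one unit of distance.
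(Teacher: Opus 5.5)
Your proposal is correct and follows the paper's route: the paper gives no argument beyond deriving the statement from \cite[Theorem 3.13]{MePS25}, and your NP-membership check and answer-materialization/padding transfer are exactly the routine steps it leaves implicit (your concern about unary versus binary $k$ is moot, since for a fixed query any $k > |Q'(D')|$ is a trivial no-instance and $|Q'(D')|$ is polynomial). If you ever need the fallback, each edge identifier must occupy the \emph{same} coordinate in both endpoint tuples, so you need a proper edge colouring rather than independent per-vertex assignments; a greedy colouring uses at most $2\cdot 3-1=5$ colours, which matches the arity, gives $\hammd{\bar t_u}{\bar t_v}=4$ for adjacent and $5$ for non-adjacent $u\neq v$, and makes the threshold $d=5\binom{k}{2}$.
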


\medskip
\noindent
\underline{\textbf{The Reduction}}

\smallskip
\noindent 
Consider now the Datalog query $\hat{Q} = (\hat{\dep},\hat{R})$, where $\hat{R}$ is a $0$-ary predicate and $\hat{\dep}$ consists of the single rule
\[
\hat{R}\ \assign\ P(x_1,\ldots,x_5),P_1(x_1),\ldots,P_5(x_5).
\]
We proceed to show that, for every $\star \in \{\mathsf{ALL},\mathsf{UN}\}$, there exists a polynomial-time Turing reduction from $\mathsf{DiverseAnswersCheck}[Q^*]$ to $\star\text{-}\mathsf{DiverseWhyProvenance}[\hat{Q}]$, i.e., there is a procedure with access to an oracle for the problem $\star\text{-}\mathsf{DiverseWhyProvenance}[\hat{Q}]$ that decides $\mathsf{DiverseAnswersCheck}[Q^*]$ in polynomial time.

%	First, we note that for the Datalog query $\hat{Q}$, $\unwhy{\bar t}{D}{\hat{Q}}$ and $\allwhy{\bar t}{D}{\hat{Q}}$ coincide, for every database $D$ over $\esch{\hat{\dep}}$, and (thee only) tuple $\bar t \in \adom{D}^{\arity{\hat{R}}}$. Hence, it is enough to prove the claim for $\star = \mathsf{ALL}$. We now proceed by describing the desired polynomial-time procedure.

%For a database $D$ over $\esch{\dep^*}$, we define the database $\hat{D}$ over $\esch{\hat{\dep}}$ as the set:
%
%$$ \hat{D} = D \cup \{P_1(t_1),\ldots,P_5(t_5) \mid P(t_1,\ldots,t_5) \in D\}.$$
%
%Fix any $\star \in \{\mathsf{ALL},\mathsf{UN}\}$, and 
Consider an input to $\mathsf{DiverseAnswersCheck}[Q^*]$, i.e., a database $D$ over $\esch{\dep^*}$, an integer $k > 0$ encoded in unary, and an integer $d \ge 0$.
The procedure deciding $\mathsf{DiverseAnswersCheck}[Q^*]$, which we dub $\mathsf{DecideDiverseAnswers}$, does the following:
\begin{enumerate}
	\item Construct the database 
	$$ \hat{D}\ =\ D \cup \{P_1(t_1),\ldots,P_5(t_5) \mid P(t_1,\ldots,t_5) \in D\}.$$
	%$\hat{D}$ over $\esch{\hat{\dep}}$.
	\item Ask the oracle to compute some $\mathcal{\hat{S}} \in \argmax\limits_{\mathcal{S} \in \starwhyk{k}{()}{\hat{D}}{\hat{Q}}} \diversity{\mathcal{S}}$. 
	\item If the oracle returns \textbf{fail}, then reject. 
	\item Accept if $\diversity{\mathcal{\hat{S}}} \ge k \cdot (k-1) + 2\cdot d$; otherwise, reject.
\end{enumerate} 

\medskip
\noindent
\textbf{Running Time.}
The procedure $\mathsf{DecideDiverseAnswers}$ clearly runs in polynomial time since constructing $\hat{D}$ only requires iterating over the facts in $D$. Moreover, the set $\mathcal{\hat{S}}$ returned by the oracle, when it exists, contains linearly many elements w.r.t.\ the size of the encoding of $k$, and each element in $\mathcal{\hat{S}}$ is no larger than $|\hat{D}|$. Finally, computing $\diversity{\mathcal{\hat{S}}}$ requires adding $k(k-1)$ symmetric distances, and the number $k(k-1) + 2\cdot d$ can be easily computed.

\medskip
\noindent
\textbf{Correctness.} We proceed to show that the procedure indeed decides $\mathsf{DiverseAnswersCheck}[Q^*]$.
We first make some observations. 	For a $5$-tuple $\bar t$, let $D_{\bar t} = \{P(\bar t),P_1(\bar t[1]),\ldots,P_5(\bar t[5])\}$.
It is not difficult to verify that by construction of $\hat{D}$ and by definition of $\hat{Q} = (\hat{\dep},\hat{R})$, we have that:
\begin{observation}\label{obs:why-ans}
	For each $\star \in \{\mathsf{ALL},\mathsf{UN}\}$,
	$$ \starwhy{()}{\hat{D}}{\hat{Q}}\ =\ \{D_{\bar t} \mid \bar t \in Q^*(D)\}.$$
\end{observation}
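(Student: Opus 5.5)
We need to prove Observation: for each star, $\starwhy{()}{\hat D}{\hat Q} = \{D_{\bar t}\mid \bar t\in Q^*(D)\}$.

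The plan is to analyse the shape of every proof tree of $\hat{R}$ w.r.t.~$\hat{D}$ and $\hat{\dep}$, and then read off the supports. Since $\hat{\dep}$ consists of a single rule whose body contains only extensional predicates ($P,P_1,\ldots,P_5$ never occur in a head), any proof tree $T$ of the $0$-ary fact $\hat{R}$ has depth one. Its root is labeled $\hat{R}$. By item (3) of Definition~\ref{def:proof-tree}, the root has exactly six children, labeled $P(h(x_1),\ldots,h(x_5)),P_1(h(x_1)),\ldots,P_5(h(x_5))$ for some $h$. These children cannot have children themselves, again by item (3), since no rule has an extensional head. So they are leaves and, by item (2), their labels belong to $\hat{D}$. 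Writing $\bar t=(h(x_1),\ldots,h(x_5))$, we get $P(\bar t)\in\hat{D}$. Since $P$-facts of $\hat{D}$ are exactly those of $D$, we have $P(\bar t)\in D$, i.e., $\bar t\in Q^*(D)$ because $\dep^*$ merely copies $P$ into $R$. The support of $T$ is then exactly $D_{\bar t}$. This proves the inclusion $\subseteq$ for $\star=\mathsf{ALL}$, and hence also for $\mathsf{UN}$, since $\unwhy{()}{\hat D}{\hat Q}\subseteq\allwhy{()}{\hat D}{\hat Q}$.

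For the converse inclusion, fix $\bar t\in Q^*(D)$. Then $P(\bar t)\in D\subseteq\hat{D}$, and by construction of $\hat{D}$ also $P_i(\bar t[i])\in\hat{D}$ for each $i\in[5]$. We take $h(x_i)=\bar t[i]$ and build the depth-one tree whose root is labeled $\hat{R}$ and whose six leaves are labeled with the facts of $D_{\bar t}$. This is a proof tree with support $D_{\bar t}$. It is moreover unambiguous: all six leaf labels are pairwise distinct (they use different predicates), and the root label $\hat{R}$ occurs once, so the condition of Definition~\ref{def:unambiguous-proof-tree} holds trivially. Hence $D_{\bar t}$ belongs to both $\allwhy{()}{\hat D}{\hat Q}$ and $\unwhy{()}{\hat D}{\hat Q}$.

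There is no real obstacle here. The only point needing care is the first step: checking that no deeper proof trees exist, which follows because the body predicates are extensional. The one subtlety is to note that $P_i(\bar t[i])\in\hat{D}$ can come from a different $P$-fact of $D$. This is harmless, because the leaf $P(\bar t)$ itself must be in $D$, and that suffices to conclude $\bar t\in Q^*(D)$.
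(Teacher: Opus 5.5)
Your argument is correct and is precisely the direct verification that the paper leaves implicit: it only remarks that the observation is ``not difficult to verify'' from the construction of $\hat{D}$ and the definition of $\hat{Q}$. One step is worth stating explicitly: the root cannot be a leaf, because $\hat{R}$ is intensional and so $\hat{R}() \notin \hat{D}$; this is what lets you apply item~(3) of Definition~\ref{def:proof-tree} at the root to obtain exactly six children.
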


Moreover, for two distinct databases $D_{\bar t},D_{\bar u} \in \starwhy{()}{\hat{D}}{\hat{Q}}$, the symmetric difference between $D_{\bar t}$ and $D_{\bar u}$ is of the form
\begin{multline*}
	\diff{D_{\bar t}}{D_{\bar u}}\ =\ \{P(\bar t),P(\bar u)\} \cup 
	\diff{D_{\bar t} \setminus \{P(\bar t)\}}{D_{\bar u} \setminus \{P(\bar u)\}},
\end{multline*}
and thus,
\[
\dist{D_{\bar t}}{D_{\bar u}}\ =\ |\diff{D_{\bar t}}{D_{\bar u}}|\ =\ 2 + 2 \cdot \hammd{\bar t}{\bar u}.
\]
The above allows us to conclude the following:

\begin{observation}\label{obs:diversity-ans}
	For each $\mathcal{S} = \{D_{\bar t_1},\ldots,D_{\bar t_k}\} \in \starwhyk{k}{()}{\hat{D}}{\hat{Q}}$, 
	%the diversity of $\mathcal{S}$ is:
	$$\diversity{\mathcal{{S}}} = \sum\limits_{i,j \in [k] \text{ and } i<j} (2 + 2 \cdot \hammd{\bar t_i}{\bar t_j}) = k \cdot (k-1) + 2 \cdot \diversitytup{\{\bar t_1,\ldots,\bar t_k\}}.$$
\end{observation}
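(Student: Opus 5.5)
The plan is to obtain the identity by summing, over all pairs, a closed-form expression for the distance between two explanations. By Observation~\ref{obs:why-ans}, every member of $\starwhy{()}{\hat{D}}{\hat{Q}}$ has the form $D_{\bar t}$ with $\bar t \in Q^*(D)$. So any $\mathcal{S} \in \starwhyk{k}{()}{\hat{D}}{\hat{Q}}$ can be written as $\{D_{\bar t_1},\ldots,D_{\bar t_k}\}$.

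First I will check that the map $\bar t \mapsto D_{\bar t}$ is injective, since $P(\bar t)$ determines $\bar t$. Hence the $k$ distinct databases in $\mathcal{S}$ correspond to $k$ pairwise distinct tuples $\bar t_1,\ldots,\bar t_k$. This also ensures that $\{\bar t_1,\ldots,\bar t_k\}$ really has $k$ elements, which is needed when the identity is later used to relate $\mathcal{S}$ to a member of $\ansk{k}{D}{Q^*}$.

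Next, the key step is to justify the distance formula $\dist{D_{\bar t}}{D_{\bar u}} = 2 + 2\cdot\hammd{\bar t}{\bar u}$ for distinct $\bar t,\bar u$. I would argue predicate by predicate. The predicates $P,P_1,\ldots,P_5$ are pairwise distinct, and each of $D_{\bar t}$ and $D_{\bar u}$ contains exactly one fact per predicate. So the symmetric difference splits as a disjoint union over the six predicates, and no fact with one predicate can cancel a fact with another.
\begin{itemize}
\item For $P$: since $\bar t \neq \bar u$, the facts $P(\bar t) \neq P(\bar u)$ both lie in $\diff{D_{\bar t}}{D_{\bar u}}$, contributing $2$.
\item For each $P_i$: the facts $P_i(\bar t[i])$ and $P_i(\bar u[i])$ coincide exactly when $\bar t[i] = \bar u[i]$. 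Otherwise both lie in the symmetric difference, contributing $2$.
\end{itemize}
Summing over $i \in [5]$ gives $2\cdot\hammd{\bar t}{\bar u}$, and adding the contribution of $P$ yields the formula. This careful bookkeeping, namely disjointness across predicates and the distinctness of $P(\bar t)$ and $P(\bar u)$, is the only point needing care; I expect no real obstacle.

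Finally, I will sum the formula over the $\binom{k}{2} = k(k-1)/2$ pairs $i<j$. The constant terms give $2\cdot\frac{k(k-1)}{2} = k(k-1)$. The remaining terms give $2\sum_{i<j}\hammd{\bar t_i}{\bar t_j} = 2\cdot\diversitytup{\{\bar t_1,\ldots,\bar t_k\}}$ by the definition of diversity of a set of tuples. This yields the claimed equality.
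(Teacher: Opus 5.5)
Your proposal is correct and follows the paper's argument: the paper likewise writes $\diff{D_{\bar t}}{D_{\bar u}}$ as $\{P(\bar t),P(\bar u)\}$ together with the symmetric difference of the unary facts, derives $\dist{D_{\bar t}}{D_{\bar u}} = 2 + 2\cdot\hammd{\bar t}{\bar u}$, and sums over the $k(k-1)/2$ pairs. Your explicit check that $\bar t \mapsto D_{\bar t}$ is injective, so that the $\bar t_i$ are pairwise distinct, spells out a step the paper leaves implicit.
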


We now proceed with the proof of correctness. Let $D$, $k$, and $d$ be an input to $\mathsf{DecideDiverseAnswers}$. 

\medskip

($\Rightarrow$) Assume first that there exists a set $\mathcal{A} \in \ansk{k}{D}{Q^*}$ with $\diversitytup{\mathcal{A}} \ge d$. We argue that $\mathsf{DecideDiverseAnswers}$ accepts with input $D$, $k$, and $d$. Note that by the existence of $\mathcal{A}$, there also exists $\mathcal{A}^* \in \argmax\limits_{\mathcal{A} \in \ansk{k}{D}{Q^*}} \diversitytup{\mathcal{A}}$, and all such $\mathcal{A}^*$ have the same diversity such that $\diversitytup{\mathcal{A}^*} \ge d$. Let $d^* = \diversitytup{\mathcal{A}^*}$.
By the existence of $\mathcal{A}$ and Observation~\ref{obs:why-ans}, the oracle call does not fail and returns a set $\mathcal{\hat{S}} \in \starwhyk{k}{()}{\hat{D}}{\hat{Q}}$. Moreover, by Observations~\ref{obs:why-ans} and 2, and the fact that $\mathcal{\hat{S}} \in \argmax\limits_{\mathcal{S} \in \starwhyk{k}{()}{\hat{D}}{\hat{Q}}} \diversity{\mathcal{S}}$, we conclude that
\[
\diversity{\mathcal{\hat{S}}}\ =\ k \cdot (k-1) + 2\cdot d^*.
\]
Since $d^* \ge d$, we have that $\diversity{\mathcal{\hat{S}}} \ge k \cdot (k-1) + 2\cdot d$, and thus, $\mathsf{DecideDiverseAnswers}$ accepts, as needed.

\medskip

($\Leftarrow$) Assume now that there is no set $\mathcal{A} \in \ansk{k}{D}{Q^*}$ with $\diversitytup{\mathcal{A}} \ge d$, i.e., either $\ansk{k}{D}{Q^*}$ is empty, or $\ansk{k}{D}{Q^*}$ is not empty but every set $\mathcal{A}^* \in \argmax\limits_{\mathcal{A} \in \ansk{k}{D}{Q^*}} \diversitytup{\mathcal{A}}$ is such that $\diversitytup{\mathcal{A}^*} < d$. We argue that $\mathsf{DecideDiverseAnswers}$ rejects with input $D$, $k$, and $d$. 
If $\ansk{k}{D}{Q^*}$ is empty then, by Observation~\ref{obs:why-ans}, $\starwhyk{k}{()}{\hat{D}}{\hat{Q}}$ is also empty and the oracle call returns \textbf{fail}, which in turn implies that $\mathsf{DecideDiverseAnswers}$ rejects, as needed. 
Assume now that $\ansk{k}{D}{Q^*}$ is not empty, but every $\mathcal{A}^* \in \argmax\limits_{\mathcal{A} \in \ansk{k}{D}{Q^*}} \diversitytup{\mathcal{A}}$ is such that $\diversitytup{\mathcal{A}^*} < d$. Let $d^* = \diversitytup{\mathcal{A}^*}$, for an arbitrarily chosen $\mathcal{A}^* \in \argmax\limits_{\mathcal{A} \in \ansk{k}{D}{Q^*}} \diversitytup{\mathcal{A}}$.
By Observation~\ref{obs:why-ans}, and the fact that $\ansk{k}{D}{Q^*}$ is not empty, the oracle call will succeed and the set $\mathcal{\hat{S}} \in \argmax\limits_{\mathcal{S} \in \starwhyk{k}{()}{\hat{D}}{\hat{Q}}} \diversity{\mathcal{S}}$ is constructed.
By Observations~\ref{obs:why-ans} and \ref{obs:diversity-ans}, and the fact that $\mathcal{\hat{S}}$, has maximal diversity, we get that
\[
\diversity{\mathcal{\hat{S}}}\ =\ k \cdot (k-1) + 2\cdot d^*.
\]
Since $d^* < d$, we have that $\diversity{\mathcal{\hat{S}}} < k(k-1) + 2\cdot d$, and thus, $\mathsf{DecideDiverseAnswers}$ rejects, as needed.\qed

\section{Proofs of Section~\ref{sec:exact-encoding}}

\subsection{Proof of Theorem~\ref{thm:generic-encoding}}
	For notational convenience, in the rest of the proof we will write $\am$ for the set
	\[
	\argmax\limits_{\mathcal{S} \in \starwhyk{k}{\bar t}{D}{Q}} \diversity{\mathcal{S}}.
	\]
	We proceed to prove the two directions separately.

	\medskip
	\noindent
	$(1) \Rightarrow (2)$.
	Consider a set $\{D_1,\ldots,D_k\}$ of databases over $\esch{\dep}$ such that $\{D_1,\ldots,D_k\} \in \am$.
	%$\{D_1,\ldots,D_k\} \in \argmax\limits_{\mathcal{S} \in \starwhyk{k}{\bar t}{D}{Q}} \diversity{\mathcal{S}}$. 
	We construct an optimal solution $\mu$ of the $(T,k)$-problem $(f,S)$ such that, for each $i \in [k]$, $D_i = \{\alpha_\ell \mid \ell \in [n] \text{ and } \mu(\copyof{x_{\alpha_\ell}}{i}) = 1\}$.

	Since $\{D_1,\ldots,D_k\} \in \am$,
	%\argmax\limits_{\mathcal{S} \in \starwhyk{k}{\bar t}{D}{Q}} \diversity{\mathcal{S}}$, 
	$D_1,\ldots,D_k \in \starwhy{\bar t}{D}{Q}$, which, by definition of the ILP encoding of $\starwhy{\bar t}{D}{Q}$ (Definition~\ref{def:ilp-encoding-why}), implies that there exist $\mu_1,\ldots,\mu_k \in \sol{T}$ such that $D_i = \{\alpha_\ell \mid \ell \in [n] \text{ and } \mu_i(x_{\alpha_\ell}) = 1\}$, for each $i \in [k]$.
	By exploiting $\mu_1,\ldots,\mu_k$, we construct a solution $\mu : \mathcal{X} \cup \mathcal{B} \rightarrow \mathbb{Z}$ of the $(T,k)$-problem $(f,S)$ as follows; recall that $(f,S)$ is over the integer variables $\mathcal{X} = \bigcup_{i \in [k]} X^{(i)}$ and the Boolean variables $\mathcal{B} = \bigcup_{i \in [k]} B^{(i)} \cup \{\auxvar^{i,j}_\ell \mid 1 \leq \ell \leq n \text{ and } 1 \leq i < j \leq k\}$: 
	%We define $\mu : \mathcal{X} \cup \mathcal{B} \rightarrow \mathbb{Z}$ such that, 
	for each variable $x \in \mathcal{X} \cup \mathcal{B}$,
	\begin{eqnarray*}
		&& \mu(x)\\
		&=&
		\hspace{-1mm} 
		\begin{cases}
			\mu_i(z) & \text{if } x = z^{(i)}, i \in [k] \text{ and } z \in X \cup B \\
			\mu_i(x_{\alpha_\ell}) \oplus \mu_j(x_{\alpha_\ell}) & \text{if } x = \auxvar^{i,j}_\ell, \ell \in [n] \text{ and } 1 \le i < j \le k.
		\end{cases}
	\end{eqnarray*}
	In other words, $\mu$ assigns to each copy variable $z^{(i)}$ of the $(T,k)$-problem $(f,S)$ the same value assigned by $\mu_i$ to $z$, while it assigns to each auxiliary variable $\auxvar^{i,j}_\ell$ of $(f,S)$ the xor of the values of the (Boolean) database variable $x_{\alpha_\ell}$ given by $\mu_i$ and $\mu_j$, respectively. The function $\mu$ is indeed a solution of $(f,S)$ by construction, and by the fact that $\mu_1,\ldots,\mu_k$ are all solutions of $T$; note, in particular, that the constraints of the form $\sum_{1 \leq \ell \leq n} \auxvar^{i,j}_\ell \ge 1$ are satisfied since $D_1,\ldots,D_k$ are pairwise distinct. 
	
	We now proceed to argue that $\mu$ is optimal. Recall that the objective function $f$ of the $(T,k)$-problem is:
	\[
	\sum\limits_{1 \leq i < j \leq k} \left(\sum\limits_{1 \leq \ell \leq n} \auxvar^{i,j}_\ell\right).
	\]
	By definition of $\mu$, and the fact that $D_i = \{\alpha_\ell \mid \ell \in [n] \text{ and } \mu_i(x_{\alpha_\ell}) = 1\}$, for each $i \in [k]$, we conclude that $\mu(f) = \delta(\{D_1,\ldots,D_k\})$. Assume, towards a contradiction, that $\mu$ is not optimal. Thus, there exists a solution $\rho : \mathcal{X} \cup \mathcal{B} \rightarrow \mathbb{Z}$ of $(f,S)$ with $\rho(f) > \mu(f)$. Since $\rho$ is a solution of $(f,S)$, it satisfies the set of constraints of $S$ of the form $T^{(1)},\ldots,T^{(k)}$.
	%	 i.e., the projection $\proj{\mu}{X^{(i)} \cup B^{(i)}}$ of $\mu$ satisfies $T^{(i)}$, for each $i \in [k]$. 
	%
	For each $i \in [k]$, let $\rho_i : X \cup B \rightarrow \mathbb{Z}$ be the assignment such that $\rho_i(x) = \rho(x^{(i)})$, for each $x \in X \cup B$. It follows that $\rho_i$ is a solution of the ILP encoding $T$, for each $i \in [k]$. Hence, by definition of ILP encoding, the database $D'_i = \{\alpha_\ell \mid \ell \in [n] \text{ and } \rho_i(x_{\alpha_\ell}) = 1\}$ belongs to $\starwhy{\bar t}{D}{Q}$, for each $i \in [k]$. From the latter, and the fact that $\rho$ is a solution of $(f,S)$, it follows that $\rho(f) = \delta(\{D'_1,\ldots,D'_k\})$, with $\{D'_1,\ldots,D'_k\} \in \starwhyk{k}{\bar t}{D}{Q}$. 
	Hence, from our assumption $\rho(f) > \mu(f)$, we conclude that $\delta(\{D'_1,\ldots,D'_k\}) > \delta(\{D_1,\ldots,D_k\})$, which contradicts the fact that $\{D_1,\ldots,D_k\} \in \am$, and the claim follows.
	%\argmax\limits_{\mathcal{S} \in \starwhyk{k}{\bar t}{D}{Q}} \diversity{\mathcal{S}}$.
	
	\medskip
	\noindent
	$(2) \Rightarrow (1)$. Assume now that there is an optimal solution $\mu$ of the $(T,k)$-problem $(f,S)$ such that, for each $i \in [k]$, $D_i = \{\alpha_\ell \mid \ell \in [n] \text{ and } \mu(\copyof{x_{\alpha_\ell}}{i}) = 1\}$. We need to show that $\{D_1,\ldots,D_k\} \in \am$.
	%\argmax\limits_{\mathcal{S} \in \starwhyk{k}{\bar t}{D}{Q}} \diversity{\mathcal{S}}$.
	
	For each $i \in [k]$, let $\mu_i : X \cup B \rightarrow \mathbb{Z}$ be the assignment such that $\mu_i(x) = \mu(x^{(i)})$, for each $x \in X \cup B$. Since $\mu$ is a solution of $(f,S)$, and $S$ contains all the constraints of $T^{(i)}$, we get that $\mu_i$ is a solution of the ILP encoding $T$, for each $i \in [k]$. Therefore, by definition of ILP encoding of $\starwhy{\bar t}{D}{Q}$ (see Definition~\ref{def:ilp-encoding-why}), $D_i = \db{\mu_i} \in \starwhy{\bar t}{D}{Q}$, for each $i \in [k]$.
	
	We now observe that, for every solution $\nu$ of $(f,S)$, and for all $1 \le i < j \le k$, the xor constraints of $S$ imply that $\nu(\auxvar^{i,j}_\ell) = 1$ iff $\nu(\copyof{x_{\alpha_\ell}}{i}) \neq \nu(\copyof{x_{\alpha_\ell}}{j})$, i.e., iff $\alpha_\ell$ belongs to exactly one of the two databases induced by the $i$-th and the $j$-th copy of the database variables. Consequently, with $D^\nu_i = \{\alpha_\ell \mid \ell \in [n] \text{ and } \nu(\copyof{x_{\alpha_\ell}}{i}) = 1\}$,
	\begin{equation}\label{eq:aux-distance}
		\sum\limits_{1 \leq \ell \leq n} \nu(\auxvar^{i,j}_\ell)\ =\ |\diff{D^\nu_i}{D^\nu_j}|\ =\ \dist{D^\nu_i}{D^\nu_j}.
	\end{equation}
	Since $S$ also contains the constraints $\sum_{1 \leq \ell \leq n} \auxvar^{i,j}_\ell \ge 1$, for all $1 \le i < j \le k$, we get that $\dist{D_i}{D_j} \ge 1$, i.e., $D_i \neq D_j$, for all $1 \le i < j \le k$. Hence, $\{D_1,\ldots,D_k\}$ consists of exactly $k$ distinct members of $\starwhy{\bar t}{D}{Q}$, that is, $\{D_1,\ldots,D_k\} \in \starwhyk{k}{\bar t}{D}{Q}$. Moreover, by~\eqref{eq:aux-distance} and the definition of the objective function $f$,
	\[
	\mu(f)\ =\ \sum\limits_{1 \leq i < j \leq k} \dist{D_i}{D_j}\ =\ \diversity{\{D_1,\ldots,D_k\}}.
	\]
	
	It remains to show that $\{D_1,\ldots,D_k\}$ maximizes the diversity among the members of $\starwhyk{k}{\bar t}{D}{Q}$. Assume, towards a contradiction, that this is not the case. Since $\starwhyk{k}{\bar t}{D}{Q}$ is non-empty, as it contains $\{D_1,\ldots,D_k\}$, and finite, there exists a set $\{D'_1,\ldots,D'_k\} \in \starwhyk{k}{\bar t}{D}{Q}$ such that $\delta(\{D'_1,\ldots,D'_k\}) > \delta(\{D_1,\ldots,D_k\})$. By definition of ILP encoding, there are $\rho_1,\ldots,\rho_k \in \sol{T}$ such that $D'_i = \{\alpha_\ell \mid \ell \in [n] \text{ and } \rho_i(x_{\alpha_\ell}) = 1\}$, for each $i \in [k]$. Let $\rho$ be the assignment defined from $\rho_1,\ldots,\rho_k$ exactly as $\mu$ has been defined from $\mu_1,\ldots,\mu_k$ in the direction $(1) \Rightarrow (2)$ above. As in that direction, $\rho$ is a solution of $(f,S)$; in particular, the constraints $\sum_{1 \leq \ell \leq n} \auxvar^{i,j}_\ell \ge 1$ are satisfied since the databases $D'_1,\ldots,D'_k$ are pairwise distinct. But then, by~\eqref{eq:aux-distance}, $\rho(f) = \delta(\{D'_1,\ldots,D'_k\}) > \delta(\{D_1,\ldots,D_k\}) = \mu(f)$, which contradicts the fact that $\mu$ is an optimal solution of $(f,S)$. Therefore, $\{D_1,\ldots,D_k\} \in \am$, as needed.\qed %\argmax\limits_{\mathcal{S} \in \starwhyk{k}{\bar t}{D}{Q}} \diversity{\mathcal{S}}$, as needed.
	
\section{Proofs of Section~\ref{sec:characterizations}}

\subsection{Proof of Theorem~\ref{thm:hyper-correct}}
	Let $\downc{R(\bar t)}{D}{\dep} = (V,E)$, and let $T$ be the set of constraints $\ilphypergraph{\bar t}{D}{Q}$. Consider a database $D'$ over $\esch{\dep}$. Since every member of $\allwhy{\bar t}{D}{Q}$ is a subset of $D$, and $\{\alpha_i \mid i \in [n] \text{ and } \mu(z_{\alpha_i}) = 1\} \subseteq D$ for every assignment $\mu$, we can safely assume that $D' \subseteq D$; otherwise, both items of Definition~\ref{def:ilp-encoding-why} are false, and the statement holds trivially.
	Moreover, by Theorem~\ref{thm:arbitrary-hypergraph}, $D' \in \allwhy{\bar t}{D}{Q}$ iff there is a description of a proof tree of $R(\bar t)$ w.r.t.~$D$ and $\dep$ with support $D'$. Hence, it suffices to show that the following statements are equivalent:
	\begin{enumerate}
		\item There is a description $E^*$ of a proof tree of $R(\bar t)$ w.r.t.~$D$ and $\dep$ with support $D'$.
		\item There is $\mu \in \sol{T}$ with $\db{\mu} = D'$.
	\end{enumerate}
	
	Before we proceed further, let us introduce some auxiliary notation and present a couple of simple observations.
	For an assignment $\mu$ and a linear expression $\mathcal{E}$ over the variables of $T$, we write $\mu(\mathcal{E})$ for the value of $\mathcal{E}$ under $\mu$.
	For a multiset $E^*$ of hyperedges from $\downc{R(\bar t)}{D}{\dep}$, we write $\hat{E}^*$ for the set of (standard) edges induced by the hyperedges occurring in $E^*$, i.e., $\hat{E}^* = \{(\alpha,\beta) \mid (\alpha,T) \in E^* \text{ and } \beta \in T\}$; clearly, $\hat{E}^* \subseteq \hat{E}$. By the definition of reachability in a hypergraph, for all nodes $\alpha,\beta \in \nodes{E^*}$, $\alpha$ reaches $\beta$ in $\hypergraphof{E^*}$ iff there is a directed path from $\alpha$ to $\beta$ in the directed graph $(\nodes{E^*},\hat{E}^*)$.
	Furthermore, since $\esch{\dep} \cap \isch{\dep} = \emptyset$, no fact of $D$ occurs as the source of a hyperedge of $\gri{D}{\dep}$, and hence, for each $\alpha \in V \cap D$, the summation $\mathcal{O}_\alpha$ has no terms, and $\alpha$ has no outgoing edge in $\hat{E}$.
	We now show the equivalence claimed above.
	
	\medskip
	\noindent
	$(1) \Rightarrow (2)$. Let $E^*$ be a description of a proof tree of $R(\bar t)$ w.r.t.~$D$ and $\dep$ with support $D'$. By property~(4) of Definition~\ref{def:hypergraph-properties}, every node of $\nodes{E^*}$ is reachable from $R(\bar t)$ in the directed graph $(\nodes{E^*},\hat{E}^*)$. Hence, we can fix a spanning tree $\mathcal{T}$ of $(\nodes{E^*},\hat{E}^*)$ rooted at $R(\bar t)$, e.g., a spanning tree constructed via breadth-first search. Thus, the set of nodes of $\mathcal{T}$ is $\nodes{E^*}$, and its set of edges is a subset of $\hat{E}^*$. For a node $\beta$ of $\mathcal{T}$, let $n_\beta$ be the number of nodes of the subtree of $\mathcal{T}$ rooted at $\beta$. We define the assignment $\mu$ for $X$ and $B$ as follows:
	\begin{itemize}
		\item[-] $\mu(x_{(\alpha,T)}) = \card{(\alpha,T)}{E^*}$, for each $(\alpha,T) \in E$,
		\item[-] $\mu(z_\alpha) = 1$ if $\alpha \in \nodes{E^*}$, and $\mu(z_\alpha) = 0$ otherwise, for each $\alpha \in V \cup D$, and
		\item[-] $\mu(y_{(\alpha,\beta)}) = n_\beta$ if $(\alpha,\beta)$ is an edge of $\mathcal{T}$, and $\mu(y_{(\alpha,\beta)}) = 0$ otherwise, for each $(\alpha,\beta) \in \hat{E}$.
	\end{itemize}
	We proceed to show that $\mu$ is a solution of $T$ with $\db{\mu} = D'$.
	
	By construction, $\mu(\mathcal{I}_\alpha) = \indegree{\alpha}{E^*}$ and $\mu(\mathcal{O}_\alpha) = \outdegree{\alpha}{E^*}$, for each $\alpha \in V$, and all the values assigned to the variables of $X$ are non-negative, i.e., the constraints of $T_\mathsf{pos}$ are satisfied.
	Concerning $T_\mathsf{degree}$, the constraint $\mathcal{O}_{R(\bar t)} - \mathcal{I}_{R(\bar t)} = 1$ is satisfied due to property~(1) of Definition~\ref{def:hypergraph-properties}. For a node $\alpha \in V \setminus (D \cup \{R(\bar t)\})$, if $\alpha \in \nodes{E^*}$, then $\mu(\mathcal{O}_\alpha) - \mu(\mathcal{I}_\alpha) = 0$ due to property~(2), whereas, if $\alpha \not\in \nodes{E^*}$, then $\mu(\mathcal{O}_\alpha) = \mu(\mathcal{I}_\alpha) = 0$ since $\alpha$ occurs in no hyperedge of $E^*$; in both cases, the corresponding constraint is satisfied. Finally, since $\indegree{\alpha}{E^*} = 0$ for each $\alpha \in V \setminus \nodes{E^*}$, property~(3) implies that $\mu$ satisfies the constraint $\sum_{\alpha \in V \cap D} \mathcal{I}_\alpha \ge 1$.
	
	Concerning $T_\mathsf{bind}$, for a node $\alpha \in V$, $\alpha \in \nodes{E^*}$ iff $\alpha$ occurs in some hyperedge of $E^*$ iff $\indegree{\alpha}{E^*} + \outdegree{\alpha}{E^*} \ge 1$. Hence, $\mu(z_\alpha) = 1$ iff $\mu(\mathcal{I}_\alpha + \mathcal{O}_\alpha) \ge 1$, i.e., $\mu$ satisfies the indicator constraints of $T_\mathsf{bind}$. Moreover, $\mu(z_\alpha) = 0$ for each $\alpha \in D \setminus V$ since $\nodes{E^*} \subseteq V$.
	
	It remains to argue for $T_\mathsf{reach}$. Consider an edge $(\alpha,\beta) \in \hat{E}$ with $\mu(y_{(\alpha,\beta)}) \neq 0$, and thus, $(\alpha,\beta)$ is an edge of $\mathcal{T}$, which implies that $(\alpha,\beta) \in \hat{E}^*$, i.e., there is a hyperedge $(\alpha,T) \in E^*$ with $\beta \in T$, and therefore $\mu(\mathcal{C}_{(\alpha,\beta)}) \ge 1$. Since the subtree of $\mathcal{T}$ rooted at $\beta$ does not contain $R(\bar t)$, we get that $\mu(y_{(\alpha,\beta)}) = n_\beta \le |\nodes{E^*}| - 1 \le |V| - 1$ and the constraint $y_{(\alpha,\beta)} \le (|V|-1) \cdot \mathcal{C}_{(\alpha,\beta)}$ is satisfied; the same constraint is satisfied when $\mu(y_{(\alpha,\beta)}) = 0$ since $\mu(\mathcal{C}_{(\alpha,\beta)}) \ge 0$. Concerning the remaining constraints of $T_\mathsf{reach}$, observe that, for a node $\alpha$ of $\mathcal{T}$, the total incoming flow of $\alpha$ is $n_\alpha$ if $\alpha \neq R(\bar t)$, since the only edge of $\mathcal{T}$ entering $\alpha$ is the one coming from its parent, and $0$ if $\alpha = R(\bar t)$, while its total outgoing flow is $\sum_{\beta \text{ child of } \alpha \text{ in } \mathcal{T}} n_\beta = n_\alpha - 1$. Hence,
	\[
	\mu(\outflow{R(\bar t)}) - \mu(\inflow{R(\bar t)})\ =\ n_{R(\bar t)} - 1\ =\ |\nodes{E^*}| - 1\ =\ \mu\left(\sum\limits_{\alpha \in V} z_\alpha\right) - 1,
	\]
	as needed, and, for each $\alpha \in \nodes{E^*} \setminus \{R(\bar t)\}$, 
	\[
	\mu(\outflow{\alpha})\ =\ n_\alpha - 1\ =\ \mu(\inflow{\alpha}) - \mu(z_\alpha),
	\]
	as needed. 
	For $\alpha \in V \setminus \nodes{E^*}$, no edge of $\mathcal{T}$ is incident to $\alpha$, and thus,
	\[
	\mu(\outflow{\alpha})\ =\ \mu(\inflow{\alpha})\ =\ \mu(z_\alpha)\ =\ 0, 
	\]
	and the corresponding constraint is satisfied as well.
	
	From the above discussion, we get that $\mu \in \sol{T}$. Finally, by property~(5) of Definition~\ref{def:hypergraph-properties}, and since $\mu(z_\alpha) = 0$ for each $\alpha \in D \setminus V$, $\db{\mu} = \{\alpha \in D \mid \mu(z_\alpha) = 1\} = \nodes{E^*} \cap D = D'$, and item (2) follows.
	
	\medskip
	\noindent
	$(2) \Rightarrow (1)$. Let $\mu \in \sol{T}$ with $\db{\mu} = D'$, and let $E^*$ be the multiset of hyperedges from $\downc{R(\bar t)}{D}{\dep}$ such that $\card{(\alpha,T)}{E^*} = \mu(x_{(\alpha,T)})$, for each $(\alpha,T) \in E$; this is well-defined since, due to $T_\mathsf{pos}$, $\mu(x_{(\alpha,T)}) \ge 0$. We show that $E^*$ is a description of a proof tree of $R(\bar t)$ w.r.t.~$D$ and $\dep$ with support $D'$, which implies item (1).
	
	As above, $\mu(\mathcal{I}_\alpha) = \indegree{\alpha}{E^*}$ and $\mu(\mathcal{O}_\alpha) = \outdegree{\alpha}{E^*}$, for each $\alpha \in V$. Moreover, since all the multiplicities are non-negative,
	\[
	\nodes{E^*}\ =\ \left\{\alpha \in V \mid \indegree{\alpha}{E^*} + \outdegree{\alpha}{E^*} \ge 1 \right\} = \left\{\alpha \in V \mid \mu(z_\alpha) = 1\right\};
	\]
	the second equality holds due to the indicator constraints of $T_\mathsf{bind}$.
	
	Property~(1) of Definition~\ref{def:hypergraph-properties} holds since $\mu$ satisfies the constraint $\mathcal{O}_{R(\bar t)} - \mathcal{I}_{R(\bar t)} = 1$, which also implies that $\outdegree{R(\bar t)}{E^*} \ge 1$, i.e., $R(\bar t) \in \nodes{E^*}$. Property~(2) holds since $\nodes{E^*} \setminus (D \cup \{R(\bar t)\}) \subseteq V \setminus (D \cup \{R(\bar t)\})$, and $\mu$ satisfies the second family of constraints of $T_\mathsf{degree}$. Property~(3) holds since $\indegree{\alpha}{E^*} = 0$ for each node $\alpha \in (V \cap D) \setminus \nodes{E^*}$, and $\mu$ satisfies the constraint $\sum_{\alpha \in V \cap D} \mathcal{I}_\alpha \ge 1$. Property~(5) holds since, by the above characterization of $\nodes{E^*}$, and the constraints of the form $z_\alpha = 0$, with $\alpha \in D \setminus V$, of $T_\mathsf{bind}$, we get that $\nodes{E^*} \cap D = \{\alpha \in D \mid \mu(z_\alpha) = 1\} = \db{\mu} = D'$.

	The rest of the proof is devoted to establishing property~(4). Let $F = \{(\alpha,\beta) \in \hat{E} \mid \mu(y_{(\alpha,\beta)}) > 0\}$, and let $U$ be the set of nodes of $V$ that are \emph{not} reachable from $R(\bar t)$ in the directed graph $(V,F)$; clearly, $R(\bar t) \not\in U$. By summing up the constraints of the third family of $T_\mathsf{reach}$ over all the nodes of $U$, we get that
	\[
	\sum\limits_{\alpha \in U} \mu(\outflow{\alpha})\ =\ \sum\limits_{\alpha \in U} \mu(\inflow{\alpha}) - \sum\limits_{\alpha \in U} \mu(z_\alpha).
	\]
	By definition of $U$, there is no edge $(\beta,\alpha) \in F$ with $\beta \in V \setminus U$ and $\alpha \in U$, since otherwise $\alpha$ would be reachable from $R(\bar t)$ in $(V,F)$. Hence, letting $W$ (resp., $W'$) be the sum of $\mu(y_{(\alpha,\beta)})$ over all the edges $(\alpha,\beta) \in \hat{E}$ with $\alpha,\beta \in U$ (resp., with $\alpha \in U$ and $\beta \not\in U$), the above equality can be rewritten as $W + W' = W - \sum_{\alpha \in U} \mu(z_\alpha)$, i.e., $\sum_{\alpha \in U} \mu(z_\alpha) = -W'$. Since $W' \ge 0$ due to $T_\mathsf{pos}$, and the variables of the form $z_\alpha$ are binary, we conclude that $\mu(z_\alpha) = 0$ for each $\alpha \in U$. Therefore, every node of $\nodes{E^*}$ is reachable from $R(\bar t)$ in $(V,F)$.
	
	We now observe that $F \subseteq \hat{E}^*$. Indeed, for an edge $(\alpha,\beta) \in F$, the first family of constraints of $T_\mathsf{reach}$ gives $0 < \mu(y_{(\alpha,\beta)}) \le (|V|-1) \cdot \mu(\mathcal{C}_{(\alpha,\beta)})$, and thus $\mu(\mathcal{C}_{(\alpha,\beta)}) \ge 1$, i.e., $\card{(\alpha,T)}{E^*} \ge 1$ for some hyperedge $(\alpha,T) \in E$ with $\beta \in T$, which in turn means that $(\alpha,\beta) \in \hat{E}^*$. Consequently, every node of $\nodes{E^*}$ is reachable from $R(\bar t)$ via a directed path consisting of edges of $\hat{E}^*$. Since both endpoints of an edge of $\hat{E}^*$ belong to $\nodes{E^*}$, such a path is a path of the directed graph $(\nodes{E^*},\hat{E}^*)$, and property~(4) follows.\qed
	
\subsection{Proof of Theorem~\ref{thm:un-hyper-correct}}
	Let $\downc{R(\bar t)}{D}{\dep} = (V,E)$, and let $T$ be the set of constraints $\ilpunhypergraph{\bar t}{D}{Q}$. As in the proof of Theorem~\ref{thm:hyper-correct}, we can safely assume that the database $D'$ over $\esch{\dep}$ under consideration is a subset of $D$, and, by Theorem~\ref{thm:arbitrary-hypergraph}, it suffices to show that the following statements are equivalent:
	\begin{enumerate}
		\item There is a description $E^*$ of an unambiguous proof tree of $R(\bar t)$ w.r.t.~$D$ and $\dep$ with support $D'$.
		\item There is $\mu \in \sol{T}$ with $\db{\mu} = D'$.
	\end{enumerate}
	
	Recall that $T = \ilphypergraph{\bar t}{D}{Q} \cup T_\mathsf{UN}$, and that the variables of $T$ are those of $\ilphypergraph{\bar t}{D}{Q}$ together with the binary variables of the form $\hat{x}_{(\alpha,T)}$, with $(\alpha,T) \in E$, which occur only in $T_\mathsf{UN}$. Consider an assignment $\mu$ for the variables of $T$ that satisfies $T_\mathsf{pos}$, and let $E^*$ be the multiset of hyperedges with $\card{(\alpha,T)}{E^*} = \mu(x_{(\alpha,T)})$, for each $(\alpha,T) \in E$. We claim that $\mu$ satisfies $T_\mathsf{UN}$ iff the following two conditions hold:
	\begin{enumerate}
		\item[(i)] for each $(\alpha,T) \in E$, $\mu(\hat{x}_{(\alpha,T)}) = 1$ iff $(\alpha,T) \in E^*$, and
		\item[(ii)] for every pair $(\alpha,T),(\alpha,T') \in E^*$, it holds that $T = T'$.
	\end{enumerate}
	Indeed, since $\card{(\alpha,T)}{E^*} = \mu(x_{(\alpha,T)}) \ge 1$ iff $(\alpha,T) \in E^*$, the first family of constraints of $T_\mathsf{UN}$ states precisely condition~(i). Further, assuming condition~(i), the second family of constraints of $T_\mathsf{UN}$ states that, for each $\alpha \in V$, at most one hyperedge with source $\alpha$ occurs in $E^*$, which is precisely condition~(ii). The claim implies that an assignment for the variables of $\ilphypergraph{\bar t}{D}{Q}$ can be extended to a solution of $T$ iff it is a solution of $\ilphypergraph{\bar t}{D}{Q}$ and the induced multiset $E^*$ enjoys condition~(ii). Note also that such an extension does not affect the values of the variables $z_{\alpha_1},\ldots,z_{\alpha_n}$.
	
	The desired equivalence now follows from the proof of Theorem~\ref{thm:hyper-correct}. In particular, for $(1) \Rightarrow (2)$, let $E^*$ be a description of an unambiguous proof tree of $R(\bar t)$ w.r.t.~$D$ and $\dep$ with support $D'$. Since $E^*$ is, in particular, a description of a proof tree of $R(\bar t)$ w.r.t.~$D$ and $\dep$ with support $D'$, the assignment $\mu$ constructed in the direction $(1) \Rightarrow (2)$ of the proof of Theorem~\ref{thm:hyper-correct} is a solution of $\ilphypergraph{\bar t}{D}{Q}$ with $\db{\mu} = D'$ that induces the multiset $E^*$. By Definition~\ref{def:hypergraph-properties}, $E^*$ enjoys condition~(ii), and thus, by the above claim, $\mu$ can be extended to a solution of $T$ with $\db{\mu} = D'$.
	For $(2) \Rightarrow (1)$, let $\mu \in \sol{T}$ with $\db{\mu} = D'$. Since $T \supseteq \ilphypergraph{\bar t}{D}{Q}$, $\mu$ is also a solution of $\ilphypergraph{\bar t}{D}{Q}$, and thus, by the direction $(2) \Rightarrow (1)$ of the proof of Theorem~\ref{thm:hyper-correct}, the multiset $E^*$ induced by $\mu$ is a description of a proof tree of $R(\bar t)$ w.r.t.~$D$ and $\dep$ with support $D'$. Moreover, by the above claim, $E^*$ enjoys condition~(ii); thus, $E^*$ is a description of an unambiguous proof tree of $R(\bar t)$ w.r.t.~$D$ and $\dep$ with support $D'$, as needed.\qed

\subsection{Proof of Theorem~\ref{thm:dag-correct}}
	Let $\downc{R(\bar t)}{D}{\dep} = (V,E)$, let $T$ be the set $\ilpproofdag{\bar t}{D}{Q}$. Recall that $\rho$ denotes the amplified copy $R(\bar t)^1$.
	Recall further that $T_\mathsf{acyclic}$ is a system of constraints over the binary variables of $B_E$, together with auxiliary binary variables that do not occur in the other constraints of $T$, such that an assignment $\nu$ for $B_E$ can be extended to a solution of $T_\mathsf{acyclic}$ iff the directed graph whose edges are $\{(\alpha^i,\beta^j) \mid \nu(z_{(\alpha^i,\beta^j)}) = 1\}$ is acyclic; this is precisely the guarantee provided by the vertex elimination encoding of~\cite{RankoohR22}.
	As in the proof of Theorem~\ref{thm:hyper-correct}, we can safely assume that the database $D'$ over $\esch{\dep}$ under consideration is a subset of $D$.
	Finally, let us discuss two simple observations that will be used throughout the proof: (i) every fact of $V \setminus D$ is intensional since, by definition of $\gri{D}{\dep}$, each of its nodes is either a fact of $D$ or a fact derived via a rule of $\dep$, and (ii) no fact of $D$ occurs as the source of a hyperedge of $\downc{R(\bar t)}{D}{\dep}$ since $\esch{\dep} \cap \isch{\dep} = \emptyset$, which implies that $\hat{E}$ contains no edge of the form $(\alpha,\beta)$ with $\alpha \in D$.
	
	We start by showing that Theorem~\ref{thm:arbitrary-proof-dag} can be strengthened by additionally requiring that the proof DAG in question has at most one node labeled with each database fact.
	
	\begin{lemma}\label{lem:auxiliary-1}
		The following are equivalent:
		\begin{enumerate}
			\item $D' \in \allwhy{\bar t}{D}{Q}$. 
			\item There is a proof DAG $G = (V_G,E_G,\lambda)$ of $R(\bar t)$ w.r.t.~$D$ and $\dep$ with support $D'$ that enjoys conditions (a), (b, (c) of item (2) of Theorem~\ref{thm:arbitrary-proof-dag}, as well as the following condition: for each fact $\alpha \in D$, $|\{v \in V_G \mid \lambda(v) = \alpha\}| \le 1$.
		\end{enumerate}
	\end{lemma}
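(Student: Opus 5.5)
The direction $(2) \Rightarrow (1)$ is immediate. A proof DAG as in item (2) already satisfies conditions (a)--(c) of item (2) of Theorem~\ref{thm:arbitrary-proof-dag} and has support $D'$. So that theorem yields a proof tree of $R(\bar t)$ w.r.t.~$D$ and $\dep$ with support $D'$, i.e., $D' \in \allwhy{\bar t}{D}{Q}$.

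For $(1) \Rightarrow (2)$, I apply Theorem~\ref{thm:arbitrary-proof-dag} to obtain a proof DAG $G = (V_G,E_G,\lambda)$ with support $D'$ satisfying (a)--(c), and then merge nodes labeled by the same database fact. I first note that every node $v$ with $\lambda(v) \in D$ is a leaf of $G$. Indeed, if $v$ had outgoing edges, condition~(3) of Definition~\ref{def:proof-dag} would make $\lambda(v)$ the head of a rule instance. Then $\lambda(v)$ would be intensional, contradicting $\esch{\dep} \cap \isch{\dep} = \emptyset$ (observation (ii) above). Also, since $R$ is intensional, the root is not labeled by a database fact.

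For each $\alpha \in D$ labeling some node of $G$, fix one representative node $v_\alpha$ with $\lambda(v_\alpha)=\alpha$. Define $G'$ by deleting every other node labeled $\alpha$ and redirecting each edge $(u,w)$ with $\lambda(w)=\alpha$ to $(u,v_\alpha)$. The edge set is taken as a set, so duplicates collapse. I then check the following properties of $G'$.
\begin{enumerate}
\item $G'$ is still a DAG. All merged nodes are sinks, so any cycle in $G'$ avoids them, and thus would already be a cycle in $G$.
\item It is rooted with the same root. The root is not merged, and every non-root node still has an incoming edge, since merged representatives inherit all incoming edges.
\item Leaves are still labeled by facts of $D$, and the set of leaf labels is unchanged, so the support is still $D'$.
\item Condition (3) of Definition~\ref{def:proof-dag} still holds, and with it conditions (a), (b), (c) of Theorem~\ref{thm:arbitrary-proof-dag}. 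Labels are preserved, and the multiset of child labels of a node can only lose duplicates, so the set of child labels is unchanged.
\item Node counts per label do not increase, so (c) is preserved, and each database fact now labels at most one node.
\end{enumerate}

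The step I expect to need care is item 4, the preservation of condition~(3). That condition speaks of $n$ outgoing edges matching the $n$ body atoms of a rule. If a node $u$ had two children $w,w'$ with the same database label, redirecting would collapse the two edges into one and break the count. This cannot happen, because rule bodies are assumed self-join-free. The body atoms $R_1,\ldots,R_n$ have distinct predicates, so the children of any node carry pairwise distinct labels. Hence redirecting never merges two edges leaving the same node, and the same rule and homomorphism $h$ witness condition~(3) in $G'$.

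Condition~(b) then follows because the set $\{\lambda(u_1),\ldots,\lambda(u_n)\}$ is literally unchanged.
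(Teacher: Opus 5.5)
Your proof is correct and follows essentially the same route as the paper. You merge all nodes labeled with the same database fact, which are necessarily leaves; the paper phrases this as a quotient by an equivalence relation rather than as redirection to a chosen representative, and like the paper you use self-join-free rule bodies to keep the outgoing-edge count intact for condition~(3) of Definition~\ref{def:proof-dag}. Your remark that the root cannot carry a database fact, because $R$ is intensional by the definition of a Datalog query, is valid and makes the paper's case distinction on the root unnecessary.
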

	
	%\medskip
	%\noindent
	%\textbf{Claim 1.} $D' \in \allwhy{\bar t}{D}{Q}$ if and only if there is a proof DAG $G = (V_G,E_G,\lambda)$ of $R(\bar t)$ w.r.t.~$D$ and $\dep$ with support $D'$ that enjoys conditions (a), (b), and (c) of item (2) of Theorem~\ref{thm:arbitrary-proof-dag}, as well as the condition
	%\begin{itemize}
	%	\item[(d)] $|\{v \in V_G \mid \lambda(v) = \alpha\}| \le 1$, for each fact $\alpha \in D$.
	%\end{itemize}
	
	%\smallskip
	%\noindent
	%\emph{Proof of Claim 1.} 
	\begin{proof}[Proof of Lemma~\ref{lem:auxiliary-1}]
		The direction $(2) \Rightarrow (1)$ is an immediate consequence of Theorem~\ref{thm:arbitrary-proof-dag}. 
		We now focus on direction $(1) \Rightarrow (2)$. For brevity, we will refer to the additional condition given in item (2) of Lemma~\ref{lem:auxiliary-1} as condition (d). 
		Clearly, Theorem~\ref{thm:arbitrary-proof-dag} provides a proof DAG $G = (V_G,E_G,\lambda)$ of $R(\bar t)$ w.r.t.~$D$ and $\dep$ with support $D'$ that enjoys conditions (a), (b), and (c). We proceed to convert $G$ into a proof DAG $G' = (V_{G'},E_{G'},\lambda')$ of $R(\bar t)$ w.r.t.~$D$ and $\dep$ with support $D'$ that enjoys all the conditions (a), (b), (c), and (d).

		We first observe that every node $v \in V_G$ with $\lambda(v) \in D$ is a leaf of $G$; if $v$ had outgoing edges, then, by item (3) of Definition~\ref{def:proof-dag}, the predicate of $\lambda(v)$ would 
		%occur in the head of a rule of $\dep$, i.e., it would 
		be intensional, which contradicts the fact that $\lambda(v) \in D$.
		Let $\sim$ be the equivalence relation over $V_G$ such that $u \sim v$ iff either $u = v$ or $\lambda(u) = \lambda(v) \in D$; let $[v]$ be the equivalence class of a node $v \in V_G$. We define the labeled directed graph $G' = (V_{G'},E_{G'},\lambda')$, where
		\[
		V_{G'} = \{[v] \mid v \in V_G\} \qquad E_{G'} = \{([u],[v]) \mid (u,v) \in E_G\},
		\]
		and $\lambda'([v]) = \lambda(v)$, which is well-defined since equivalent nodes carry the same label. In other words, $G'$ is obtained from $G$ by merging, for each fact $\alpha \in D$, all the nodes labeled with $\alpha$ into a single node. We proceed to first show that $G'$ is indeed a proof DAG of $R(\bar t)$ w.r.t.~$D$ and $\dep$ with support $D'$, and then argue that it enjoys the desired conditions (a), (b), (c), and (d).
			
			\medskip
			\noindent \underline{$G'$ is a proof DAG of $R(\bar t)$ w.r.t.~$D$ and $\dep$ with support $D'$}
			\smallskip
			
			\noindent Since every node $v \in V_G$ with $|[v]| > 1$ is labeled with a fact of $D$, and thus it is a leaf of $G$, we get that
			\begin{itemize}
				\item[$(\dagger)$] for every node $v \in V_G$ with at least one outgoing edge, $[v] = \{v\}$, and the outgoing edges of $[v]$ in $G'$ are precisely the edges $([v],[u])$ with $(v,u) \in E_G$.
			\end{itemize}
			Hence, a node $[v]$ of $G'$ is a leaf of $G'$ iff all the nodes of $[v]$ are leaves of $G$, which implies that $\lambda'$ maps every leaf of $G'$ to a fact of $D$, i.e., item (2) of Definition~\ref{def:proof-dag} holds, and that $\support{G'} = \support{G} = D'$.
			Moreover, $G'$ is acyclic: by $(\dagger)$, a cycle of $G'$ can mention only singleton classes, and thus, it would give rise to a cycle of $G$, which is not possible; in particular, $G'$ has no self-loops.

			Let now $r$ be the root of $G$. We have that $[r] = \{r\}$: if $\lambda(r) = R(\bar t)$ and $R \in \isch{\dep}$, then $[r] = \{r\}$ by definition of the equivalence relation $\sim$, while if $\lambda(r) = R(\bar t)$ and $R \in \esch{\dep}$, then $G$ mentions only the node $r$, by definition of proof DAG, and thus $[r] = \{r\}$ as well. We show that $[r]$ is the only node of $G'$ without incoming edges, and thus, $G'$ is a rooted DAG with root $[r]$ such that $\lambda'([r]) = R(\bar t)$, i.e., item (1) of Definition~\ref{def:proof-dag} holds. Assume first that $([u],[r]) \in E_{G'}$; hence, there is $(u',v') \in E_G$ with $[u'] = [u]$ and $[v'] = [r] = \{r\}$, i.e., $v' = r$, which contradicts the fact that $r$ has no incoming edges in $G$. Consider now a node $[v] \neq [r]$ of $G'$, and let $u \in [v]$. Since $[v] \neq [r]$, $u \neq r$, and thus, as $G$ is rooted, there is an edge $(w,u) \in E_G$, which implies that $([w],[v]) \in E_{G'}$ is an incoming edge of $[v]$; note that $[w] \neq [v]$ since $G'$ has no self-loops.
			It remains to argue for item (3) of Definition~\ref{def:proof-dag}.
			
			Consider a node $[v]$ of $G'$ with at least one outgoing edge. By $(\dagger)$, $[v] = \{v\}$ and $v$ has outgoing edges $(v,u_1),\ldots,(v,u_n)$ in $G$, for some $n \ge 1$. Since $G$ is a proof DAG, there are a rule $R_0(\bar x_0)\ \assign\ R_1(\bar x_1),\ldots,R_n(\bar x_n) \in \dep$ and a function $h : \bigcup_{\ell \in [n]} \bar x_\ell \ra \ins{C}$ such that $\lambda(v) = R_0(h(\bar x_0))$, and $\lambda(u_\ell) = R_\ell(h(\bar x_\ell))$ for each $\ell \in [n]$. Since rule-bodies are self-join-free, the predicates $R_1,\ldots,R_n$ are pairwise distinct, and thus the facts $\lambda(u_1),\ldots,\lambda(u_n)$ are pairwise distinct, which in turn implies that the nodes $[u_1],\ldots,[u_n]$ of $G'$ are pairwise distinct. Therefore, by $(\dagger)$, the outgoing edges of $[v]$ in $G'$ are precisely $([v],[u_1]),\ldots,([v],[u_n])$, and the above rule together with $h$ witness item (3) of Definition~\ref{def:proof-dag} for $[v]$.
			
			\medskip
			\noindent \underline{$G'$ enjoys the desired conditions}
			\smallskip
			
			\noindent We now argue that the proof DAG $G'$ of $R(\bar t)$ w.r.t.~$D$ and $\dep$ devised above enjoys the conditions (a), (b), (c), and (d):
			\begin{itemize}
				\item[-] It enjoys (a) since $\lambda'(V_{G'}) = \lambda(V_G)$. 
				
				\item[-] It enjoys (b) since 
				$(\lambda'([v]),\{\lambda'([u_i])\}_{i \in [n]})$ coincides with $(\lambda(v),\{\lambda(u_i)\}_{}i \in [n])
				$ and $G$ also enjoys (b).
				
				\item[-] It enjoys (c) since $|\{[v] \in V_{G'} \mid \lambda'([v]) = \alpha\}| \le |\{v \in V_G \mid \lambda(v) = \alpha\}|$, for every fact $\alpha$.
				
				\item[-] Finally, condition (d) holds by construction.
			\end{itemize}
			This completes the proof of Lemma~\ref{lem:auxiliary-1}.  
	\end{proof}

	%\medskip
	
	Having Lemma~\ref{lem:auxiliary-1} in place, to complete the proof of Theorem~\ref{thm:dag-correct}, it suffices to show the following technical lemma:
	
	\begin{lemma}\label{lem:auxiliary-2}
		For a database $D' \subseteq D$, the following are equivalent:
		\begin{enumerate}
			\item There is a proof DAG $G$ of $R(\bar t)$ w.r.t.~$D$ and $\dep$ with support $D'$ that enjoys conditions (a), (b), (c), and (d).
			\item There is $\mu \in \sol{T}$ with $\db{\mu} = D'$.
		\end{enumerate}
	\end{lemma}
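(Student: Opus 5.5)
We prove the two directions separately. Throughout we use the fact from the proof of Theorem~\ref{thm:dag-correct} that $T_\mathsf{acyclic}$ extends an assignment of $B_E$ exactly when the selected edges form an acyclic graph. We also use observations (i) and (ii): facts of $V\setminus D$ are intensional, and facts of $D$ have no outgoing edges in $\hat E$.

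\medskip\noindent$(1)\Rightarrow(2)$. Let $G=(V_G,E_G,\lambda)$ satisfy (a)--(d). For each fact $\alpha$, fix an injective numbering $\iota$ of the nodes labeled $\alpha$ into $[\boundall{R(\bar t)}{D}{\dep}]$. This is possible by (c), and by (d) it uses only index $1$ for database facts. Choose the numbering so that the root $r$ gets index $1$. Since $R(\bar t)$ is intensional when $G$ is not a single node, no other node is labeled $R(\bar t)$ in a way that conflicts with root constraints. If another node labeled $R(\bar t)$ exists, it has incoming edges, so that is fine. Define $\mu$ as follows:
\begin{itemize}
\item[] set $x_{\alpha^i}=1$ iff some node $v$ has $\lambda(v)=\alpha$ and $\iota(v)=i$;
\item[] set $z_{(\alpha^i,\beta^j)}=1$ iff the corresponding edge is in $E_G$;
\item[] set $y_{(\alpha^i,T')}=1$ iff $\alpha^i$ is an internal node whose children are exactly the amplified copies in $T'$.
\end{itemize}
This is well defined by (b) and (a); in particular, the relevant hyperedge and edge variables exist, because by (b) the labels of the children form a hyperedge of the downward closure.

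We then check each group of constraints. $T_\mathsf{graph}$ holds trivially. For $T_\mathsf{root}$, the root has no incoming edges, and every other node of a rooted DAG has a parent. For $T_\mathsf{proof}$:
\begin{itemize}
\item[] facts in $D\setminus V$ never appear, by (a);
\item[] every present intensional node is internal, because leaves are labeled by $D$-facts and intensional facts are not in $D$, so it has exactly one chosen $y$;
\item[] the chosen $y_{(\alpha^i,T')}$ forces $z$ to be $1$ exactly on the children in $T'$.
\end{itemize}
Acyclicity of $G$ gives $T_\mathsf{acyclic}$. Finally, $\db{\mu}$ consists of the $D$-facts labeling nodes, which are exactly the leaves, so $\db{\mu}=\support{G}=D'$.

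\medskip\noindent$(2)\Rightarrow(1)$. Given $\mu$, let $G$ have as nodes the copies $\alpha^i$ with $\mu(x_{\alpha^i})=1$, as edges the pairs with $\mu(z)=1$, and labels $\lambda(\alpha^i)=\alpha$. $T_\mathsf{graph}$ makes this a graph, and $T_\mathsf{acyclic}$ makes it acyclic. $T_\mathsf{root}$ makes $\rho$ present with no incoming edges, while every other present node labeled by a fact of $V$ has an incoming edge. Present $D$-facts outside $V$ are excluded by $T_\mathsf{proof}$. Hence $G$ is rooted at $\rho$.

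For each present intensional $\alpha^i$, $T_\mathsf{proof}$ picks some $y_{(\alpha^i,T')}=1$. The third family then forces the out-neighbours of $\alpha^i$ to be exactly $T'$. Since $T'$ is a copy of a hyperedge $(\alpha,T)$ of $\downc{R(\bar t)}{D}{\dep}$, which arises from a rule instance, item (3) of Definition~\ref{def:proof-dag} holds. Self-join-freeness makes the children pairwise distinct with the right count, so (b) also holds. Database facts have no outgoing edges by (ii), so they are leaves. Every intensional node has children, so every leaf is a $D$-fact.

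Conditions (a), (c) and (d) hold by how amplified copies are indexed.

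\medskip\noindent The main obstacle is the support equality $\support{G}=D'$ in this direction. $D'=\db{\mu}$ contains every $D$-fact with $x_{\alpha^1}=1$. Any such fact in $V$ has an incoming edge by $T_\mathsf{root}$, so it is a reachable leaf. A present fact in $D\setminus V$ is impossible. Hence $\support{G}=D'$.

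There is one subtle point to check: a present node might be unreachable from $\rho$ yet still lie in $G$. Rootedness rules this out. Every non-root node has an incoming edge and $G$ is finite and acyclic, so walking backwards from any node must terminate at $\rho$, the only source.
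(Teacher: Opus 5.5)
Your proof is correct and follows the paper's argument essentially step by step: in $(1)\Rightarrow(2)$ you fix an injective numbering $\iota$ of the nodes into amplified copies (using (c), (d), and sending the root to $\rho$) and read off $\mu$, and in $(2)\Rightarrow(1)$ you build the graph from the true variables and verify rootedness, the child structure via $T_\mathsf{proof}$ and self-join-freeness, and the support equality, exactly as the paper does. Two cosmetic remarks: $R(\bar t)$ is always intensional (since $R \in \isch{\dep}$), so your caveat about $G$ being a single node is unnecessary; and in $(1)\Rightarrow(2)$ the third family of $T_\mathsf{proof}$ is a constraint you check (the $z$-values already agree with $T'$), not one that forces $z$.
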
    
	
	\begin{proof}[Proof of Lemma~\ref{lem:auxiliary-2}]
		We start with the direction $(1)\Rightarrow (2)$.
		By hypothesis, there is a proof DAG $G =(V_G,E_G,\lambda)$ of $R(\bar t)$ w.r.t.~$D$ and $\dep$ with support $D'$ that enjoys conditions (a), (b), (c), and (d). Let $r$ be the root of $G$. By condition (a), $\lambda(v) \in V$ for each $v \in V_G$. Hence, we can fix an injective function $\iota$ that maps each node $v \in V_G$ to an amplified copy $\lambda(v)^i$ of $\lambda(v)$ in such a way that $\iota(r) = \rho$. Such a function exists: for a node $v$ with $\lambda(v) \in D$ we are forced to let $\iota(v) = \lambda(v)^1$, since $\lambda(v)$ admits a single amplified copy, and this is compatible with injectivity because of condition (d); for the remaining nodes, injectivity is guaranteed by condition (c), and we can always exchange the indices assigned to $r$ and to the node of $G$ mapped to $R(\bar t)^1$, if such a node exists, since $\lambda(r) = R(\bar t) \not\in D$.
		We define the assignment $\mu$ for the variables of $T$ as follows:
		\begin{itemize}
			\item[-] $\mu(x_{\alpha^i}) = 1$ iff $\alpha^i \in \iota(V_G)$,
			\item[-] $\mu(z_{(\alpha^i,\beta^j)}) = 1$ iff $\alpha^i,\beta^j \in \iota(V_G)$ and $(\iota^{-1}(\alpha^i),\iota^{-1}(\beta^j)) \in E_G$, and
			\item[-] $\mu(y_{(\alpha^i,T')}) = 1$ iff there is $v \in V_G$ with $\iota(v) = \alpha^i$ and $T' = \{\iota(u) \mid (v,u) \in E_G\}$.
		\end{itemize}
		Note that $\mu$ is well-defined since, for every $v \in V_G$ with outgoing edges $(v,u_1),\ldots,(v,u_n)$, condition (b) implies that the hyperedge $(\lambda(v),\{\lambda(u_1),\ldots,\lambda(u_n)\}) \in E$, and thus $(\iota(v),\{\iota(u_1),\ldots,\iota(u_n)\})$ is an amplified copy of a hyperedge of $\downc{R(\bar t)}{D}{\dep}$, i.e., it holds that $y_{(\iota(v),\{\iota(u_1),\ldots,\iota(u_n)\})} \in B_H$. Moreover, since the directed graph whose edges are the pairs $(\alpha^i,\beta^j)$ with $\mu(z_{(\alpha^i,\beta^j)}) = 1$ is isomorphic, via $\iota$, to the DAG $G$, it is acyclic, and thus $\mu$ can be extended to the auxiliary variables of $T_\mathsf{acyclic}$ in such a way that all the constraints of $T_\mathsf{acyclic}$ are satisfied.
		
		We now show that $\mu$ satisfies the remaining constraints of $T$. The constraints of $T_\mathsf{graph}$ are satisfied since, by construction, $\mu(z_{(\alpha^i,\beta^j)}) = 1$ implies $\alpha^i,\beta^j \in \iota(V_G)$, i.e., $\mu(x_{\alpha^i}) = \mu(x_{\beta^j}) = 1$.
		Concerning $T_\mathsf{root}$, $\mu(x_\rho) = 1$ since $\iota(r) = \rho$. Moreover, for every $z_{(\alpha^i,\rho)} \in B_E$, $\mu(z_{(\alpha^i,\rho)}) = 0$; otherwise, $r = \iota^{-1}(\rho)$ would have an incoming edge in $G$, which contradicts the fact that $r$ is the root of $G$. Finally, for every $x_{\alpha^i} \in B_N$ with $\alpha \in V$ and $\alpha^i \neq \rho$ such that $\mu(x_{\alpha^i}) = 1$, the node $v = \iota^{-1}(\alpha^i)$ is different from $r$, and thus, since $G$ is rooted, there is an edge $(u,v) \in E_G$, which implies that $\mu(z_{(\iota(u),\alpha^i)}) = 1$; hence, the third family of constraints of $T_\mathsf{root}$ is satisfied.
		Concerning $T_\mathsf{proof}$, its first family of constraints is satisfied since, by condition (a), none of the nodes of $G$ is labeled with a fact of $D \setminus V$. For the second family, consider $x_{\alpha^i} \in B_N$ with $\alpha \in V$ being intensional and $\mu(x_{\alpha^i}) = 1$, and let $v = \iota^{-1}(\alpha^i)$. Since $\lambda(v) = \alpha \not\in D$, $v$ is not a leaf of $G$, and thus, by construction, $\mu(y_{(\alpha^i,T')}) = 1$ with $T' = \{\iota(u) \mid (v,u) \in E_G\}$. For the third family, consider $y_e \in B_H$, with $e = (\alpha^i,T)$, such that $\mu(y_e) = 1$; hence, $T = \{\iota(u) \mid (v,u) \in E_G\}$, where $v = \iota^{-1}(\alpha^i)$. By construction, for every $z_{(\alpha^i,\beta^j)} \in B_E$, $\mu(z_{(\alpha^i,\beta^j)}) = 1$ iff $\beta^j \in T$, and thus, $\mu(\ell_{e,\beta^j}) = 1$ for every $z_{(\alpha^i,\beta^j)} \in B_E$, as needed. If instead $\mu(y_e) = 0$, then the corresponding constraints are trivially satisfied since $\mu(\ell_{e,\beta^j}) \ge 0$.
		From the above discussion, $\mu \in \sol{T}$. 
		
		To conclude the proof of $(1)\Rightarrow (2)$, it remains to observe that $\db{\mu} = \support{G}$. As shown in the proof of Lemma~\ref{lem:auxiliary-1}, a node $v$ of $G$ is labeled with a fact of $D$ iff $v$ is a leaf of $G$. Hence,
		\begin{eqnarray*}
			\db{\mu} &=& \{\alpha \in D \mid \mu(x_{\alpha^1}) = 1\}\\
			&=& \{\lambda(v) \mid v \in V_G \text{ and } \lambda(v) \in D\}\\ 
			&=& \support{G},
		\end{eqnarray*}
		and direction $(1)\Rightarrow (2)$ follows.

		\medskip
		
		We now establish the direction $(2)\Rightarrow (1)$. By hypothesis, there is $\mu \in \sol{T}$ with $\db{\mu} = D'$. We define the labeled directed graph $G_\mu = (V_\mu,E_\mu,\lambda_\mu)$, where
			\begin{eqnarray*}
				V_\mu &=& \{\alpha^i \mid x_{\alpha^i} \in B_N \text{ and } \mu(x_{\alpha^i}) = 1\},\\
				E_\mu &=& \{(\alpha^i,\beta^j) \mid z_{(\alpha^i,\beta^j)} \in B_E \text{ and } \mu(z_{(\alpha^i,\beta^j)}) = 1\},
			\end{eqnarray*}
			and $\lambda_\mu(\alpha^i) = \alpha$ for each $\alpha^i \in V_\mu$.
			
			We proceed to first show that $G_\mu$ is indeed a proof DAG of $R(\bar t)$ w.r.t.~$D$ and $\dep$ with support $\db{\mu}$, and then argue that it enjoys the desired conditions (a), (b), (c), and (d).
			
			\medskip
			\noindent \underline{$G_\mu$ is a proof DAG of $R(\bar t)$ w.r.t.~$D$ and $\dep$ with support $\db{\mu}$}
			\smallskip
			
			\noindent By the constraints of $T_\mathsf{graph}$, $E_\mu \subseteq V_\mu \times V_\mu$, i.e., $G_\mu$ is indeed a (finite) labeled directed graph, and, by the constraints of $T_\mathsf{acyclic}$, it is acyclic.
			By the constraints of $T_\mathsf{root}$, $\rho \in V_\mu$ and $\rho$ has no incoming edges in $G_\mu$, while every node $\alpha^i \in V_\mu$ with $\alpha^i \neq \rho$ has at least one incoming edge in $G_\mu$; note that the third family of constraints of $T_\mathsf{root}$ indeed applies to all the nodes of $V_\mu$ different from $\rho$ since they are all of the form $\alpha^i$ with $\alpha \in V$. Therefore, $G_\mu$ is a finite rooted DAG with root $\rho$, and $\lambda_\mu(\rho) = R(\bar t)$, i.e., item (1) of Definition~\ref{def:proof-dag} holds.
			
			We proceed with the following auxiliary statement:
			\begin{itemize}
				\item[$(\ddagger)$] for every node $\alpha^i \in V_\mu$ with $\alpha$ intensional, there are a hyperedge $(\alpha,T) \in E$ and an amplified copy $(\alpha^i,T')$ of $(\alpha,T)$ such that the outgoing edges of $\alpha^i$ in $G_\mu$ are precisely the edges $(\alpha^i,\beta^j)$ with $\beta^j \in T'$.
			\end{itemize}
			Since $\mu(x_{\alpha^i}) = 1$, with $\alpha \in V$ intensional, the second family of constraints of $T_\mathsf{proof}$ implies that $\mu(y_{(\alpha^i,T')}) = 1$ for some amplified copy $(\alpha^i,T')$ of a hyperedge $(\alpha,T) \in E$. Then, the third family of constraints of $T_\mathsf{proof}$ implies that, for every $z_{(\alpha^i,\beta^j)} \in B_E$, $\mu(z_{(\alpha^i,\beta^j)}) = 1$ if $\beta^j \in T'$, and $\mu(z_{(\alpha^i,\beta^j)}) = 0$ otherwise; since $\beta^j \in T'$ implies $(\alpha,\beta) \in \hat{E}$, and thus $z_{(\alpha^i,\beta^j)} \in B_E$, statement $(\ddagger)$ follows.
			
			Consider now a leaf $\alpha^i$ of $G_\mu$. If $\alpha$ were intensional, then, by $(\ddagger)$, $\alpha^i$ would have at least one outgoing edge, since the target of a hyperedge is non-empty. Hence, $\alpha \in D$, as every fact of $V \setminus D$ is intensional, and item (2) of Definition~\ref{def:proof-dag} holds. Conversely, every node $\alpha^i \in V_\mu$ with $\alpha \in D$ is a leaf of $G_\mu$ since $\hat{E}$, and thus $B_E$, contains no edge with $\alpha^i$ as the source. Summing up, the leaves of $G_\mu$ are precisely the nodes of the form $\alpha^1$ with $\alpha \in D$ and $\mu(x_{\alpha^1}) = 1$, and therefore $\support{G_\mu} = \{\alpha \in D \mid \mu(x_{\alpha^1}) = 1\} = \db{\mu}$.
			
			It remains to establish item (3) of Definition~\ref{def:proof-dag}. Consider a node $\alpha^i \in V_\mu$ with at least one outgoing edge; as discussed above, $\alpha$ is intensional. Let $(\alpha,T) \in E$ and $(\alpha^i,T')$ be as in $(\ddagger)$. By definition of $\gri{D}{\dep}$, there are a rule $R_0(\bar x_0)\ \assign\ R_1(\bar x_1),\ldots,R_n(\bar x_n) \in \dep$ and a function $h : \bigcup_{\ell \in [n]} \bar x_\ell \ra \ins{C}$ such that $\alpha = R_0(h(\bar x_0))$ and $T = \{R_1(h(\bar x_1)),\ldots,R_n(h(\bar x_n))\}$. Since rule-bodies are self-join-free, the facts $\beta_\ell = R_\ell(h(\bar x_\ell))$, for $\ell \in [n]$, are pairwise distinct, and thus $T = \{\beta_1,\ldots,\beta_n\}$ with $|T| = n$, and $T' = \{\beta^{k_1}_1,\ldots,\beta^{k_n}_n\}$ with $|T'| = n$. Therefore, by $(\ddagger)$, the node $\alpha^i$ has exactly $n$ outgoing edges, namely $(\alpha^i,\beta^{k_1}_1),\ldots,(\alpha^i,\beta^{k_n}_n)$ and, since $\lambda_\mu(\alpha^i) = \alpha = R_0(h(\bar x_0))$ and $\lambda_\mu(\beta^{k_\ell}_\ell) = \beta_\ell = R_\ell(h(\bar x_\ell))$ for each $\ell \in [n]$, the above rule together with $h$ witness item (3) of Definition~\ref{def:proof-dag}.
			
			\medskip
			\noindent \underline{$G_\mu$ enjoys the desired conditions}
			\smallskip
			
			\noindent By the first family of constraints of $T_\mathsf{proof}$, every node of $V_\mu$ is of the form $\alpha^i$ with $\alpha \in V$; hence, condition (a) holds. Condition (c) holds since a fact has at most $\boundall{R(\bar t)}{D}{\dep}$ amplified copies, whereas condition (d) holds since a fact of $D$ has a single amplified copy, and thus at most one node of $G_\mu$ can be labeled with it. Condition (b) follows from the fact that $G_\mu$ is a proof DAG of $R(\bar t)$ w.r.t.\ $D$ and $\dep$, as established above, and the fact that all its labels are facts of $\downc{\bar t}{D}{Q}$ by construction. This completes the proof of $(2)\Rightarrow (1)$.
	\end{proof}
	Theorem~\ref{thm:dag-correct} follows from Lemmas~\ref{lem:auxiliary-1} and~\ref{lem:auxiliary-2}.\qed
	
\subsection{Proof of Theorem~\ref{thm:unambiguous-dag-correct}}
		Let $\downc{R(\bar t)}{D}{\dep} = (V,E)$, and let $T$ be the set of constraints $\ilpunproofdag{\bar t}{D}{Q}$. Recall that $T$ is obtained via the same construction that gives $\ilpproofdag{\bar t}{D}{Q}$, with the only difference that, since $\boundun{R(\bar t)}{D}{\dep} = 1$, \emph{every} fact $\alpha$, and not only a database fact, has a single amplified copy, namely $\alpha^1$; in particular, $\rho = R(\bar t)^1$ is the only amplified copy of $R(\bar t)$. As in the proof of Theorem~\ref{thm:dag-correct}, we can safely assume that the database $D'$ over $\esch{\dep}$ under consideration is a subset of $D$.
		
		By Theorem~\ref{thm:arbitrary-proof-dag}, $D' \in \unwhy{\bar t}{D}{Q}$ iff there is an unambiguous proof DAG $G$ of $R(\bar t)$ w.r.t.~$D$ and $\dep$ with support $D'$ that enjoys conditions (a), (b), and (c) of item (2) of Theorem~\ref{thm:arbitrary-proof-dag}, where the bound of condition (c) is $\boundun{R(\bar t)}{D}{\dep} = 1$. Note that, in this case, condition (c) states that no two distinct nodes of $G$ have the same label, which is precisely the additional condition in the definition of unambiguous proof DAG (see Definition~\ref{def:proof-dag}); in particular, condition (c) subsumes condition (d) of Lemma~\ref{lem:auxiliary-1} in the proof of Theorem~\ref{thm:dag-correct}, and thus, no analogue of that lemma is needed here.
		Hence, it suffices to show that the following holds:
		\begin{lemma}\label{lem:equiv-unambiguous-graph}
		The following are equivalent:
		\begin{enumerate}
			\item There is an unambiguous proof DAG $G$ of $R(\bar t)$ w.r.t.~$D$ and $\dep$ with support $D'$ that enjoys conditions (a), (b), and (c) with bound $1$.
			
			\item There is $\mu \in \sol{T}$ with $\db{\mu} = D'$.
		\end{enumerate}
		\end{lemma}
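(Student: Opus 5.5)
The plan is to mirror the proof of Lemma~\ref{lem:auxiliary-2}, specialised to the case in which every fact has exactly one amplified copy. In this setting the map from nodes of a proof DAG to amplified copies is forced, namely $v \mapsto \lambda(v)^1$, so no reshuffling of indices is required.

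\emph{Direction $(1)\Rightarrow(2)$.} Let $G=(V_G,E_G,\lambda)$ be an unambiguous proof DAG with root $r$ that satisfies conditions (a), (b) and (c).
\begin{enumerate}
\item Define $\iota(v)=\lambda(v)^1$. By condition (a) this is an amplified copy of a node of $V$, and by unambiguity it is injective. Since $\lambda(r)=R(\bar t)$, we get $\iota(r)=\rho$.
\item Define $\mu$ exactly as in the proof of Lemma~\ref{lem:auxiliary-2}. Set $x_{\alpha^1}=1$ iff $\alpha^1\in\iota(V_G)$. Set $z_{(\alpha^1,\beta^1)}=1$ iff $(\iota^{-1}(\alpha^1),\iota^{-1}(\beta^1))\in E_G$. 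Set $y_{(\alpha^1,T')}=1$ iff $T'$ is the set of $\iota$-images of the children of $\iota^{-1}(\alpha^1)$. Condition (b) guarantees that this last variable exists.
\item Acyclicity of the $z$-graph follows because it is isomorphic to $G$ via $\iota$. Hence $\mu$ extends to the auxiliary variables of $T_{\mathsf{acyclic}}$.
\item The checks for $T_{\mathsf{graph}}$, $T_{\mathsf{root}}$ and $T_{\mathsf{proof}}$ are verbatim those already carried out in the proof of Lemma~\ref{lem:auxiliary-2}.
\item For $\db{\mu}=D'$, use the fact that nodes labelled with database facts are exactly the leaves, because extensional predicates never occur as heads.
\end{enumerate}

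\emph{Direction $(2)\Rightarrow(1)$.} Build $G_\mu$ from a solution $\mu$ exactly as in the proof of Lemma~\ref{lem:auxiliary-2}.
\begin{enumerate}
\item $G_\mu$ is a rooted DAG with root $\rho$: acyclicity comes from $T_{\mathsf{acyclic}}$ and rootedness from $T_{\mathsf{root}}$.
\item Statement $(\ddagger)$ holds by the same argument, so every intensional node has precisely the children given by one hyperedge of $\downc{R(\bar t)}{D}{\dep}$.
\item Self-join-freeness gives item (3) of Definition~\ref{def:proof-dag}.
\item The leaves are exactly the database facts in $V_\mu$, hence $\support{G_\mu}=\db{\mu}=D'$.
\item Conditions (a) and (b) hold as before.
\item Condition (c) with bound $1$, and hence unambiguity, is immediate: each fact has only the copy $\alpha^1$, so each label is carried by at most one node.
\end{enumerate}

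Since nearly all steps can be cited directly from the proof of Lemma~\ref{lem:auxiliary-2}, the only point needing genuine care is in $(1)\Rightarrow(2)$. Injectivity of $\iota$, and the requirement $\iota(r)=\rho$, must now come from unambiguity itself rather than from the freedom to permute indices. In particular I need to check that no node other than $r$ is labelled $R(\bar t)$, which is exactly what unambiguity guarantees.
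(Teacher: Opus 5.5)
Your proposal is correct and follows the paper's proof essentially step for step. The paper likewise takes $\iota(v)=\lambda(v)^1$, gets injectivity and $\iota(r)=\rho$ directly from unambiguity, and reuses the proof of Lemma~\ref{lem:auxiliary-2} verbatim in both directions. It concludes unambiguity of $G_\mu$ from the fact that every node has the form $\alpha^1$.
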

		\begin{proof}[Proof of Lemma~\ref{lem:equiv-unambiguous-graph}]
		This is shown as Lemma~\ref{lem:auxiliary-2} used in the proof of Theorem~\ref{thm:dag-correct}; we only need to discuss how the proof of Lemma~\ref{lem:auxiliary-2} specializes to the case where each fact has a single amplified copy.
		
		\medskip
		\noindent
		$(1) \Rightarrow (2)$. By hypothesis, there is an unambiguous proof DAG $G = (V_G,E_G,\lambda)$ of $R(\bar t)$ w.r.t.~$D$ and $\dep$ with support $D'$ that enjoys conditions (a), (b), and (c) with bound $1$. Let $r$ be the root of $G$, and let $\iota$ be the function that maps each node $v \in V_G$ to the amplified copy $\lambda(v)^1$. Note that $\iota$ is the function used in the proof of Lemma~\ref{lem:auxiliary-2}; it is injective since $G$ is unambiguous, it trivially maps every node labeled with a fact of $D$ to the amplified copy with index $1$, and $\iota(r) = R(\bar t)^1 = \rho$ since $\lambda(r) = R(\bar t)$. The rest of the proof of $(1) \Rightarrow (2)$ of Lemma~\ref{lem:auxiliary-2} goes through verbatim, and provides an assignment $\mu \in \sol{T}$ with $\db{\mu} = \support{G} = D'$, as needed.
		
		\medskip
		\noindent
		$(2) \Rightarrow (1)$.
		By hypothesis, there is $\mu \in \sol{T}$ with $\db{\mu} = D'$. Let $G_\mu$ be the labeled directed graph defined in the proof of $(2) \Rightarrow (1)$ of Lemma~\ref{lem:auxiliary-2}. 
		The proof of $(2) \Rightarrow (1)$ of Lemma~\ref{lem:auxiliary-2} goes through verbatim, and thus, $G_\mu$ is a proof DAG of $R(\bar t)$ w.r.t.~$D$ and $\dep$ with support $\db{\mu} = D'$ that enjoys conditions (a), (b), and (c). Moreover, since every node of $G_\mu$ is of the form $\alpha^1$ and $\lambda_\mu(\alpha^1) = \alpha$, no two distinct nodes of $G_\mu$ have the same label, i.e., $G_\mu$ is an unambiguous proof DAG, and the claim follows.
		\end{proof}
		
		Theorem~\ref{thm:unambiguous-dag-correct} follows from Lemma~\ref{lem:equiv-unambiguous-graph}.\qed

\section{Proofs of Section~\ref{sec:approximations}}

\subsection{Proof of Theorem~\ref{thm:approximation-quality}}
	Consider a Datalog query $Q = (\dep,R)$, a database $D = \{\alpha_1,\ldots,\alpha_n\}$ over $\esch{\dep}$, a tuple $\bar t \in \adom{D}^{\arity{R}}$, and an integer $k > 0$, and consider the execution of $\mathsf{Approx}[\ConstructILPEncoding]$ on input $Q$, $D$, $\bar t$, and $k$. For $i \in [k]$, we write $T_i$ and $f_i$ for the system of constraints and the objective function, respectively, at the beginning of the $i$-th iteration of the for loop, assuming that the algorithm reaches that iteration, and $\mu_i$ for the assignment computed during the $i$-th iteration, assuming it exists; recall that $D_i = \db{\mu_i}$. Note that $T_1 = \ConstructILPEncoding(\bar t,D,Q)$ is an ILP encoding of $\starwhy{\bar t}{D}{Q}$ with database variables $x_{\alpha_1},\ldots,x_{\alpha_n}$, and $f_1 = 0$.
	
	We start with a simple observation. Recall that, for an assignment $\mu$ for the variables of $T_1$ and $\ell \in [n]$, $g_{\mu,\ell}$ denotes the expression $1 - x_{\alpha_\ell}$ if $\mu(x_{\alpha_\ell}) = 1$, and the expression $x_{\alpha_\ell}$ if $\mu(x_{\alpha_\ell}) = 0$. Hence, for every assignment $\nu$, $\nu(g_{\mu,\ell}) = 1$ if $\nu(x_{\alpha_\ell}) \neq \mu(x_{\alpha_\ell})$, and $\nu(g_{\mu,\ell}) = 0$ otherwise, i.e., $\nu(g_{\mu,\ell}) = 1$ precisely when $\alpha_\ell \in \diff{\db{\nu}}{\db{\mu}}$. Therefore, we get the following equation, which we dub ($\lozenge$):	%\begin{equation}\label{eq:marginal}
	\[
	\nu\left(\sum\limits^n_{\ell = 1} g_{\mu,\ell}\right)\ =\ |\diff{\db{\nu}}{\db{\mu}}|\ =\ \dist{\db{\nu}}{\db{\mu}}.
	\]
	%\end{equation}
	%
	%	Let us also observe that $\sum^n_{\ell=1} g_{\mu,\ell}$ is a linear expression of the form $c_1 z_1 + \cdots + c_m z_m + c$, for some constant $c$; since an additive constant affects neither the set of solutions of a constraint, as it can be moved to the right-hand side, nor the set of maximizers of an objective function, such expressions can be safely used in the algorithm.
	
	We now prove our theorem by first proving two auxiliary results. The first result establishes a loop invariant of the algorithm.
	
	%\medskip
	%\noindent
	%\textbf{Claim 1.} 
	
	\begin{lemma}\label{lem:apx-1}
		For each $i \in [k]$ such that the algorithm reaches the $i$-th iteration, the following hold:
		\begin{enumerate}
			\item $\left\{\db{\mu} \mid \mu \in \sol{T_i}\right\} = \starwhy{\bar t}{D}{Q} \setminus \{D_1,\ldots,D_{i-1}\}$,
			\item $\mu(f_i) = \sum_{j < i} \dist{\db{\mu}}{D_j}$, for every $\mu \in \sol{T_i}$.
		\end{enumerate}
	\end{lemma}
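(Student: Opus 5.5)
The plan is to prove both items simultaneously by induction on $i$, using the structure of the algorithm. By construction, $T_{i+1} = T_i \cup \{\sum_{\ell=1}^n g_{\mu_i,\ell} \ge 1\}$ and $f_{i+1} = f_i + \sum_{\ell=1}^n g_{\mu_i,\ell}$. Here $\mu_i$ is the optimal solution of $(f_i,T_i)$ computed in the $i$-th iteration, and $D_i = \db{\mu_i}$. The equation ($\lozenge$) proved just above will be the only nontrivial ingredient.

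For the base case $i=1$, we have $T_1 = \ConstructILPEncoding(\bar t,D,Q)$, which is an ILP encoding of $\starwhy{\bar t}{D}{Q}$. By Definition~\ref{def:ilp-encoding-why}, the set $\{\db{\mu} \mid \mu \in \sol{T_1}\}$ is exactly $\starwhy{\bar t}{D}{Q}$, and this is item (1) with an empty set removed. One small point needs care. Definition~\ref{def:ilp-encoding-why} quantifies over databases $D'$ over $\esch{\dep}$, and every $\db{\mu}$ is such a database (a subset of $D$), so both inclusions follow. Item (2) holds trivially, since $f_1 = 0$ and the sum over $j<1$ is empty.

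For the inductive step, suppose the algorithm reaches iteration $i+1$. Then $\mu_i \in \sol{T_i}$ exists. By ($\lozenge$), an assignment $\nu$ satisfies the new constraint $\sum_\ell g_{\mu_i,\ell} \ge 1$ iff $\dist{\db{\nu}}{D_i} \ge 1$, that is, iff $\db{\nu} \neq D_i$. The new constraint mentions only database variables, so $\sol{T_{i+1}} = \{\nu \in \sol{T_i} \mid \db{\nu} \neq D_i\}$. Applying $\db{\cdot}$ and the induction hypothesis for item (1) gives $\starwhy{\bar t}{D}{Q} \setminus \{D_1,\ldots,D_i\}$. For the reverse inclusion, take a database $D''$ in that difference; by the induction hypothesis there is some $\nu \in \sol{T_i}$ with $\db{\nu} = D''$, and since $D'' \neq D_i$ this $\nu$ also lies in $\sol{T_{i+1}}$. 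For item (2), take $\nu \in \sol{T_{i+1}} \subseteq \sol{T_i}$. Then $\nu(f_{i+1}) = \nu(f_i) + \nu(\sum_\ell g_{\mu_i,\ell})$. The first term is $\sum_{j<i}\dist{\db{\nu}}{D_j}$ by the induction hypothesis, and the second is $\dist{\db{\nu}}{D_i}$ by ($\lozenge$), which together give the claim.

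I do not expect a genuine obstacle. The only delicate points are bookkeeping: the $g_{\mu,\ell}$ expressions carry additive constants, which are harmless for both constraints and objectives, and each solution must correctly induce a database over $\esch{\dep}$ when we appeal to Definition~\ref{def:ilp-encoding-why}.
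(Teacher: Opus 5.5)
Your proposal is correct and follows essentially the same induction as the paper. The base case uses Definition~\ref{def:ilp-encoding-why} together with $f_1 = 0$, and the inductive step uses ($\lozenge$) twice: to show $\sol{T_{i+1}} = \{\nu \in \sol{T_i} \mid \db{\nu} \neq D_i\}$, and to add the term $\dist{\db{\nu}}{D_i}$ to the objective. Your extra care about the reverse inclusion, and about each $\db{\mu}$ being a database over $\esch{\dep}$, only makes explicit what the paper leaves implicit.
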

	
	%\smallskip
	%\noindent
	%\emph{Proof of Claim 1.} 
	\begin{proof}[Proof of Lemma~\ref{lem:apx-1}]
		We proceed by induction on $i$. For $i=1$, since $T_1$ is an ILP encoding of $\starwhy{\bar t}{D}{Q}$, Definition~\ref{def:ilp-encoding-why} implies that $\{\db{\mu} \mid \mu \in \sol{T_1}\} = \starwhy{\bar t}{D}{Q}$ (that is, item (1)), whereas item (2) holds trivially since $f_1 = 0$.

		Assume now that the claim holds for $i \in [k-1]$, and the algorithm reaches the $(i+1)$-th iteration. By construction, $T_{i+1} = T_i \cup \{\sum^n_{\ell=1} g_{\mu_i,\ell} \ge 1\}$ and $f_{i+1} = f_i + \sum^n_{\ell=1} g_{\mu_i,\ell}$. By ($\lozenge$), for an assignment $\mu$, $\sum^n_{\ell=1} g_{\mu_i,\ell} \ge 1$ iff $\dist{\db{\mu}}{D_i} \ge 1$ iff $\db{\mu} \neq D_i$. Hence, $\sol{T_{i+1}} = \{\mu \in \sol{T_i} \mid \db{\mu} \neq D_i\}$, which, together with the induction hypothesis, implies that $\{\db{\mu} \mid \mu \in \sol{T_{i+1}}\} = \starwhy{\bar t}{D}{Q} \setminus \{D_1,\ldots,D_i\}$, and item (1) follows. 
		Now, for every $\mu \in \sol{T_{i+1}}$, by the induction hypothesis and ($\lozenge$), $\mu(f_{i+1}) = \mu(f_i) + \dist{\db{\mu}}{D_i} = \sum_{j < i+1} \dist{\db{\mu}}{D_j}$, and item (2) follows.
	\end{proof}
	
	Observe that Lemma~\ref{lem:apx-1} implies that $D_1,\ldots,D_{i}$ are pairwise distinct members of $\starwhy{\bar t}{D}{Q}$, for every $i \in [k]$ such that the algorithm reaches the $i$-th iteration and does not fail during its execution.
	Let us also observe that, whenever $T_i$ is feasible, an optimal solution of $(f_i,T_i)$ exists. Indeed, by Lemma~\ref{lem:apx-1}, the value $\mu(f_i)$ depends only on $\db{\mu}$; hence, the set $\{\mu(f_i) \mid \mu \in \sol{T_i}\}$ is a finite non-empty set of integers, and every $\mu \in \sol{T_i}$ that attains its maximum is an optimal solution of $(f_i,T_i)$. We now move to the second auxiliary result that characterizes when the algorithm fails.

	%\medskip
	%\noindent
	%\textbf{Claim 2.} 
	\begin{lemma}\label{lem:apx-2}
		The algorithm fails iff $\starwhyk{k}{\bar t}{D}{Q} = \emptyset$.
	\end{lemma}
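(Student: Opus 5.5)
I will prove the two directions separately, relying on Lemma~\ref{lem:apx-1} and the remark that an optimal solution of $(f_i,T_i)$ exists whenever $T_i$ is feasible. First I note that the algorithm can fail only in line ``$\mu :=$ an optimal solution of $(f,T)$ if exists, else fail''. By the preceding observation, this happens at iteration $i$ exactly when $T_i$ is unfeasible, i.e., $\sol{T_i} = \emptyset$. By item~(1) of Lemma~\ref{lem:apx-1}, that is equivalent to $\starwhy{\bar t}{D}{Q} \setminus \{D_1,\ldots,D_{i-1}\} = \emptyset$. Hence the failure question reduces to counting the distinct members of $\starwhy{\bar t}{D}{Q}$.

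($\Leftarrow$) Assume $\starwhyk{k}{\bar t}{D}{Q} = \emptyset$, i.e., $N = |\starwhy{\bar t}{D}{Q}| < k$. Suppose towards a contradiction that the algorithm does not fail. Then all $k$ iterations succeed, and by the remark after Lemma~\ref{lem:apx-1} the databases $D_1,\ldots,D_k$ are pairwise distinct members of $\starwhy{\bar t}{D}{Q}$. This gives $N \ge k$, a contradiction. More concretely, failure occurs at iteration $N+1 \le k$ at the latest, since by then $\{D_1,\ldots,D_N\}$ exhausts the why-provenance and $T_{N+1}$ is unfeasible.

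($\Rightarrow$) Assume the algorithm fails, say at iteration $i \in [k]$. Then $\sol{T_i} = \emptyset$, so by Lemma~\ref{lem:apx-1}(1) we have $\starwhy{\bar t}{D}{Q} \subseteq \{D_1,\ldots,D_{i-1}\}$. Thus $|\starwhy{\bar t}{D}{Q}| \le i-1 < k$, so no $k$-element subset exists and $\starwhyk{k}{\bar t}{D}{Q} = \emptyset$.

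The only delicate point is making sure failure is caused solely by unfeasibility and not by the absence of an optimum on a feasible system, since the ILP could in principle be unbounded. This is settled by the earlier observation that $\mu(f_i)$ depends only on $\db{\mu}$, which ranges over a finite set, so a maximum is always attained. Everything else follows directly from Lemma~\ref{lem:apx-1}.
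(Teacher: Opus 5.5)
Your proof is correct and follows essentially the same route as the paper. Failure at iteration $i$ is identified with unfeasibility of $T_i$ (via the observation that an optimum exists whenever $T_i$ is feasible), and Lemma~\ref{lem:apx-1}(1) then reduces everything to counting the members of $\starwhy{\bar t}{D}{Q}$. The only difference is in ($\Leftarrow$): the paper proves the contrapositive by showing each $T_i$ is feasible when $|\starwhy{\bar t}{D}{Q}| \ge k$, whereas you derive $|\starwhy{\bar t}{D}{Q}| \ge k$ from the $k$ pairwise distinct outputs of a non-failing run, which is equally immediate from Lemma~\ref{lem:apx-1}.
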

	
	%\smallskip
	%\noindent
	%\emph{Proof of Claim 2.} 
	\begin{proof}[Proof of Lemma~\ref{lem:apx-2}]
		We first assume that the algorithm fails. By the observation above, the algorithm fails during the $i$-th iteration, for $i \in [k]$, precisely when $T_i$ is unfeasible. By Lemma~\ref{lem:apx-1}, $T_i$ is unfeasible implies $\starwhy{\bar t}{D}{Q} \setminus \{D_1,\ldots,D_{i-1}\} = \emptyset$. Thus, since $D_1,\ldots,D_{i-1}$ are pairwise distinct members of $\starwhy{\bar t}{D}{Q}$, we get that $|\starwhy{\bar t}{D}{Q}| = i - 1 < k$, which in turn implies that $\starwhyk{k}{\bar t}{D}{Q} = \emptyset$.
		Conversely, assume that $\starwhyk{k}{\bar t}{D}{Q} = \emptyset$, and thus, $|\starwhy{\bar t}{D}{Q}| \ge k$. Then, for each $i \in [k]$, $|\starwhy{\bar t}{D}{Q} \setminus \{D_1,\ldots,D_{i-1}\}| \ge k - (i-1) \ge 1$. Hence, by Lemma~\ref{lem:apx-1}, $T_i$ is feasible and the algorithm does not fail, as needed. 
		%We first assume that the algorithm fails. By the observation above, the algorithm fails during the $i$-th iteration, for some $i \in [k]$, precisely when $T_i$ is unfeasible. By Lemma~\ref{lem:apx-1}, $T_i$ is unfeasible iff $\starwhy{\bar t}{D}{Q} \setminus \{D_1,\ldots,D_{i-1}\} = \emptyset$. The latter, since $D_1,\ldots,D_{i-1}$ are pairwise distinct members of $\starwhy{\bar t}{D}{Q}$, holds iff $|\starwhy{\bar t}{D}{Q}| = i - 1 < k$. Conversely, if $|\starwhy{\bar t}{D}{Q}| \ge k$, then, for each $i \in [k]$, $|\starwhy{\bar t}{D}{Q} \setminus \{D_1,\ldots,D_{i-1}\}| \ge k - (i-1) \ge 1$, and thus, by Claim 1, $T_i$ is feasible and the algorithm does not fail. Summing up, the algorithm fails iff $|\starwhy{\bar t}{D}{Q}| < k$, i.e., iff $\starwhyk{k}{\bar t}{D}{Q} = \emptyset$.
	\end{proof}
	
	%\medskip
	
	We now exploit Lemmas~\ref{lem:apx-1} and~\ref{lem:apx-2} to complete the proof of Theorem~\ref{thm:approximation-quality}.
	Since $\argmax_{\mathcal{S} \in \starwhyk{k}{\bar t}{D}{Q}} \diversity{\mathcal{S}}$ is non-empty precisely when $\starwhyk{k}{\bar t}{D}{Q}$ is non-empty, Lemma~\ref{lem:apx-2} tells us that the algorithm fails exactly when it is allowed to fail. Hence, for the rest of the proof, we assume that $\starwhyk{k}{\bar t}{D}{Q} \neq \emptyset$, and the algorithm outputs the set $\hat{\mathcal{S}} = \{D_1,\ldots,D_k\}$ that, by Lemma~\ref{lem:apx-1}, consists of $k$ pairwise distinct members of $\starwhy{\bar t}{D}{Q}$; in other words, $\hat{\mathcal{S}} \in \starwhyk{k}{\bar t}{D}{Q}$. It remains to show that $\diversity{\hat{\mathcal{S}}} \ge \frac{1}{2} \cdot \diversity{\mathcal{S}_\mathsf{opt}}$, where $\mathcal{S}_\mathsf{opt} \in \argmax_{\mathcal{S} \in \starwhyk{k}{\bar t}{D}{Q}} \diversity{\mathcal{S}}$.
	
	Let $G = (V,E,w)$ be the edge-weighted, complete, undirected graph with $V = \starwhy{\bar t}{D}{Q}$ and $w(D_1,D_2) = |\diff{D_1}{D_2}|$, for $D_1,D_2 \in V$, i.e., the graph obtained from $Q$, $D$, and $\bar t$ via the reduction to $\mathsf{Remote\text{-}Clique}$ discussed above; recall that $w$ satisfies the triangle inequality. By construction, $\cliques{k}{G} = \starwhyk{k}{\bar t}{D}{Q}$, and, for every $V' \in \cliques{k}{G}$, the diversity of $V'$ as a set of nodes of $G$ coincides with the diversity of $V'$ as a set of databases.
	We claim that $D_1,\ldots,D_k$ is a possible sequence of nodes selected by the $2$-approximation algorithm for $\mathsf{Remote\text{-}Clique}$ on input $G, k$.

	Consider an integer $i \in [k]$. By Lemma~\ref{lem:apx-1}, and the fact that $\mu_i$ is an optimal solution of $(f_i,T_i)$, we get that $D_i = \db{\mu_i}$ belongs to $\starwhy{\bar t}{D}{Q} \setminus \{D_1,\ldots,D_{i-1}\}$, and
	\begin{eqnarray*}
		&& \sum\limits_{j<i} \dist{D_i}{D_j}\\
		&=&\max\left\{\sum\limits_{j<i} \dist{D'}{D_j}\ \Big|\ D' \in \starwhy{\bar t}{D}{Q} \setminus \{D_1,\ldots,D_{i-1}\}\right\}.
	\end{eqnarray*}
	Here, we exploit the fact that, by Lemma~\ref{lem:apx-1}, the value $\mu(f_i)$ of the objective function depends only on the database $\db{\mu}$ induced by $\mu$. Since, for every set $\mathcal{S}$ of databases and every database $D'$,
	\[
	\sum\limits_{D'' \in \mathcal{S}} \dist{D'}{D''}\ =\ \diversity{\mathcal{S} \cup \{D'\}} - \diversity{\mathcal{S}},
	\]
	we conclude that $D_i$ is a node of $G$ that is different from the previously selected nodes $D_1,\ldots,D_{i-1}$ and maximizes the marginal diversity $\diversity{\{D_1,\ldots,D_{i}\}} - \diversity{\{D_1,\ldots,D_{i-1}\}}$, which is precisely the choice performed by the greedy algorithm at its $i$-th iteration. Note that this is trivially true for $i=1$, where $f_1 = 0$ and every node of $G$ has marginal diversity $0$.
	
	From the above discussion, $\hat{\mathcal{S}}$ is a possible output of the greedy $2$-approximation algorithm for $\mathsf{Remote\text{-}Clique}$ on input $G, k$, and thus, $\diversity{\hat{\mathcal{S}}} \ge \frac{1}{2} \cdot \diversity{V_\mathsf{opt}}$, where $V_\mathsf{opt} \in \argmax_{V' \in \cliques{k}{G}} \diversity{V'}$. Since $\cliques{k}{G} = \starwhyk{k}{\bar t}{D}{Q}$, and the two notions of diversity coincide, $\diversity{\hat{\mathcal{S}}} \ge \frac{1}{2} \cdot \diversity{\mathcal{S}_\mathsf{opt}}$, and the claim follows.\qed

\end{document}